\DeclareMathAlphabet{\mathcal}{OMS}{cmsy}{m}{n} %% Avoir une police moins incurvée et stylisée avec \mathcal
\newcommand{\ba}{\textbf{a}}
\newcommand{\bA}{\textbf{A}}
\newcommand{\bb}{\textbf{b}}
\newcommand{\bB}{\textbf{B}}
\newcommand{\bc}{\textbf{c}}
\newcommand{\bH}{\textbf{H}}
\newcommand{\bx}{\textbf{x}}
\newcommand{\br}{\textbf{r}}
\newcommand{\bJ}{\textbf{J}}
\newcommand{\bW}{\textbf{W}}
\newcommand{\bD}{\textbf{D}}
\newcommand{\bC}{\textbf{C}}
\newcommand{\be}{\textbf{e}}
\newcommand{\bu}{\textbf{u}}
\newcommand{\bv}{\textbf{v}}
\newcommand{\bj}{\textbf{j}}
\newcommand{\bp}{\textbf{p}}
\newcommand{\bP}{\textbf{P}}
\newcommand{\bM}{\textbf{M}}
\newcommand{\bh}{\textbf{h}}
\newcommand{\bs}{\textbf{s}}
\newcommand{\bL}{\textbf{L}}
\newcommand{\bk}{\textbf{k}}
\newcommand{\bq}{\textbf{q}}
\newcommand{\bS}{\textbf{S}}
\newcommand{\bU}{\textbf{U}}
\newcommand{\bV}{\textbf{V}}
\newcommand{\bR}{\textbf{R}}
\newcommand{\boldeta}{\boldsymbol{\eta}}
\newcommand{\bLambda}{\boldsymbol{\Lambda}}
\newcommand{\bmu}{\boldsymbol{\mu}}
\newcommand{\bphi}{\boldsymbol{\phi}}
\newcommand{\bpsi}{\boldsymbol{\psi}}
\newcommand{\bomega}{\boldsymbol{\omega}}
\newcommand{\balpha}{\boldsymbol{\alpha}}
\newcommand{\bbeta}{\boldsymbol{\beta}}
\newcommand{\blambda}{\boldsymbol{\lambda}}
\newcommand{\bgamma}{\boldsymbol{\gamma}}
\newcommand{\bzeta}{\boldsymbol{\zeta}}
\newcommand{\mbd}{\mathbb{d}}
\newcommand{\rmd}{\mathrm{d}}
\newcommand{\mbF}{\mathbb{F}}
\newcommand{\mbJ}{\mathbb{J}}
\newcommand{\mbR}{\mathbb{R}}
\newcommand{\mbN}{\mathbb{N}}
\newcommand{\mbL}{\mathbb{L}}
\newcommand{\mbH}{\mathbb{H}}
\newcommand{\mbT}{\mathbb{T}}
\newcommand{\mbA}{\mathbb{A}}
\newcommand{\mcF}{\mathcal{F}}
\newcommand{\mcM}{\mathcal{M}}
\newcommand{\mcN}{\mathcal{N}}
\newcommand{\mcD}{\mathcal{D}}
\newcommand{\mcP}{\mathcal{P}}
\newcommand{\mcB}{\mathcal{B}}
\newcommand{\mcA}{\mathcal{A}}
\newcommand{\mcS}{\mathcal{S}}
\newcommand{\mcC}{\mathcal{C}}
\newcommand{\mcT}{\mathcal{T}}
\newcommand{\mcL}{\mathcal{L}}
\newcommand{\bbphi}{\textbf{b}_{\bphi}}
\newcommand{\bvss}{\bv_{\rm ss}}    
\newcommand{\bJss}{\bJ_{\rm ss}}
\newcommand{\Jss}{J_{\rm ss}}
\newcommand{\Pss}{P_{\rm ss}}
\newcommand{\tPss}{\tilde{P}_{\rm ss}}
\newcommand{\bdelta}{\boldsymbol{\delta}}
\newcommand{\rhoss}{\rho_{\rm ss}}    
\newcommand{\IM}{\mathrm{Im}}
\newcommand{\ie}{\textit{i.e. }}
\newcommand{\eg}{\textit{e.g. }}
\newcommand{\via}{\textit{via }}
\newcommand{\DDiv}{\mathbb{d}\mathrm{iv}}
\newcommand{\tbphi}{\delta\tilde{\bphi}}
\newcommand{\tba}{\widetilde{\ba}}
\newcommand{\tbh}{\widetilde{\bh}}
\newcommand{\tblambda}{\widetilde{\blambda}}
\newcommand{\tbb}{\widetilde{\bb}}
\newcommand{\tbs}{\widetilde{\bs}}
\newcommand{\teta}{\widetilde{\eta}}
\newcommand{\tbD}{\widetilde{\bD}}
\newcommand{\whomega}{\widehat{\bomega}}
\newcommand{\whSigma}{\widehat{\Sigma}}
\newcommand{\hrho}{\hat{\rho}}
\newcommand{\hpsi}{\hat{\psi}}
\newcommand{\halpha}{\hat{\alpha}}
\newcommand{\hbeta}{\hat{\beta}}
\newcommand{\hbj}{\hat{\bj}}
\newcommand{\bbb}{\mathfrak{b}}
\newcommand{\ant}{\bk}
\newcommand{\whgamma}{\widehat{\bgamma}}
\newcommand{\mkh}{\mathfrak{h}}
\newcommand{\mks}{\mathfrak{s}}
\newcommand{\mkm}{\mathfrak{m}}
\newcommand{\mke}{\mathfrak{e}}
\newcommand{\mkt}{\mathfrak{t}}
\newcommand{\pmcA}{\widehat{\mcA}}
\newcommand{\pmcS}{\widehat{\mcS}}
\newcommand{\mmcA}{\bar{\mcA}}
\newcommand{\mmcS}{\bar{\mcS}}
\newcommand{\llangle}{\left\langle}
\newcommand{\rrangle}{\right\rangle}
\newcommand{\Det}{\mathrm{det}}
\newcommand{\wt}[1]{\widetilde{#1}}
\newcommand{\barev}{\ba^{\rm rev}}
\newcommand{\bairrev}{\ba^{\rm irrev}}
\definecolor{monVert}{RGB}{0,200,0}
\definecolor{monOrange}{RGB}{250,150,0}
\definecolor{lavande}{RGB}{170,150,255}
\newcommand{\Magenta}[1]{{\leavevmode\color{magenta}{#1}}}
\definecolor{myPink}{RGB}{255,153,204}
\newcommand{\MyPink}[1]{{\leavevmode\color{myPink}{#1}}}
\definecolor{monVioletrose}{RGB}{100,0,255}
\definecolor{grayy}{RGB}{128,128,128}
\definecolor{colorL}{RGB}{255,50,127}%{0,204,204}
\definecolor{colorUp}{RGB}{0,128,204}%{204,102,0}
\definecolor{colorR}{RGB}{178,102,255}%{180,180,0}
\newtheorem{Lemme}{Lemma}
\title{Geometric theory of (extended) time-reversal symmetries in stochastic processes -- Part II: field theory}
\author[1,2]{J. O'Byrne}
\author[1]{M.E. Cates}
\affil[1]{DAMTP, Centre for Mathematical Sciences, University of Cambridge, Wilberforce Road, Cambridge CB3 0WA, United Kingdom}
\affil[2]{Laboratoire Jean Perrin, Sorbonne Universit\'e, 4 Place Jussieu, Paris 75005, France}
\date{}
\begin{document}
\maketitle

\begin{abstract}
In this article, we study the time-reversal properties of a generic Markovian stochastic field dynamics with Gaussian noise. 
We introduce a convenient functional geometric formalism that allows us to straightforwardly generalize known results from finite dimensional systems to the case of continuous fields.
We give, at field level, full reversibility conditions for three notions of time-reversal defined in the first article of this two-part series, namely T-, MT-, and EMT-reversibility. When the noise correlator is invertible, these reversibility conditions do not make reference to any generically unknown function like the stationary probability, and can thus be verified systematically.
Focusing on the simplest of these notions, where only the time variable is flipped upon time reversal, we show that time-reversal symmetry  breaking is quantified by a single geometric object: the vorticity two-form, which is a two-form over the functional space $\mbF$ to which the field belongs. Reversibility then amounts to the cancellation of this vorticity two-form. This condition applies at distributional level and can thus be difficult to use in practice. For fields that are defined on a spatial domain of dimension $d=1$, we overcome this problem by building a basis of the space of two-forms $\Omega^2(\mbF)$. Reversibility is then equivalent to the vanishing of the vorticity's coordinates in this basis, a criterion that is readily applicable to concrete examples. Furthermore, we show that this basis provides a natural direct-sum decomposition of $\Omega^2(\mbF)$, each subspace of which is associated with a distinctive kind of phenomenology.
This decomposition enables a classification of celebrated out-of-equilibrium phenomena, ranging from non-reciprocal (chaser/chased) interactions to the flocking of active agents, dynamical reaction-diffusion patterns, and interface-growth dynamics. We then partially extend these results to dimensions $d>1$.
Furthermore, we study several notions of entropy production and show, in particular, the entropy production rate to be a linear functional of the vorticity two-form, which implies that the factors in our decomposition of $\Omega^2(\mbF)$ can be interpreted as independent sources of entropy production. Finally, we discuss how extending our results to more general situations could provide a natural framework for the generic study of the notoriously diverse and surprising behavior of active systems at their boundaries.
% and that of their topological defects.
The geometric framework offered in this paper is illustrated throughout by reference to particular models that break time-reversal symmetry, such as Active Model B. 
\end{abstract}

%While reversible dynamics were the original playground of statistical field theory, recent decades have seen a blossom of studies devoted to out-of-equilibrium field theories. Breaking time-reversal symmetry allows describing an immense phenomenological territory, but it comes at the price of loosing the systematic framework of (near-) equilibrium systems.
%In this article, we introduce a new theoretical framework that allows putting order into the vast zoo of nonequilirbium theories. 

\newpage

\tableofcontents

\newpage
\section{Introduction}

Spatially extended physical systems are often best described by field theory, an approach that has led to a wide range of applications, from the description of the superconducting phase transition to weather forecasting and the design of large-scale civil-engineering infrastructure. When field theory serves as a large-scale, long-time representation of underlying degrees of freedom,
fluctuations naturally arise alongside the deterministic trend and taking them into account is the purpose of statistical field theory. Such an emergent field theory can be derived either through explicit coarse-graining -- often a tedious and inexact process -- or phenomenologically postulated based on the system's symmetries. Notably, when a system's statistics exhibit time-reversal symmetry (TRS) -- such as when it is at or near thermodynamic equilibrium -- the construction and analysis of the field theory are greatly simplified in both approaches.
While early applications of statistical field theory generally obeyed this symmetry, recent decades have witnessed growing interest in irreversible systems, often driven by biological motivations~\cite{kardar1986dynamic,toner1995long,marchetti2013hydrodynamics}. Bridging the theoretical gap between equilibrium and out-of-equilibrium systems has sparked a vast body of literature focused on understanding the breakdown of TRS itself~\cite{graham1971generalized,haussmann1986time,jiang2004mathematical,seifert2005entropy,seifert2012stochastic,yang2021bivectorial,
schnakenberg1976network,dal2023geometry,o2020lamellar,dinelli2023non,o2023nonequilibrium}, although most of the attention has been dedicated so far to the simpler case of finite-dimensional systems.

In this context, two key questions arise: (1) How to determine whether a given field theory is reversible? (2) How does TRS breakdown manifest phenomenologically? These questions represent critical milestones that remain incompletely understood. 
This article is the second in a two-part series. 
The first part~\cite{o2024geometric} was devoted to systematically exploring various types of time-reversal symmetries in stochastic systems with a finite number of degrees of freedom. This second part extends and adapts these results to field theory, with a focus on addressing the aforementioned questions (1) and (2), and in doing so, it notably provides the foundation for a classification of out-of-equilibrium field theories.

%\blue{
%Second question: say that, of course, the best would be to be able to describe every detail of the phenomenology of the theory. But an alternative, would be to say what does TRS breakdown bring to the phenomeno, sous-entedu when compared to the TRS case. $\Rightarrow$ this suggest being able to describe the closest (Set of) reversible field theory...
%}

Question (1) is notoriously related to determining whether a (possibly vector-valued) map $\bzeta=(\zeta_i(\br,[\bphi]))_i$ -- that is both a function of a spatial variable $\br$ and a functional of the considered field $\bphi$ -- is the functional derivative of a free energy $\mcF[\bphi]$, \ie if $\bzeta = \delta\mcF/\delta \bphi$. However, the first issue is that, unlike in systems with finitely-many degrees of freedom, we are not aware of any systematic approach in the literature that allows identifying, in a generic stochastic PDE, the functional $\bzeta$ that needs to derive from a free-energy for the dynamics to be time-reversible.
We fill this gap here by introducing convenient notations that allow to straightforwardly extend the knowledge from finite dimension~\cite{o2024geometric}.
Furthermore, the popular approach to tackle the aforementioned integrability issue has consisted -- until quite recently -- in guessing a fairly general form of the free energy $\mcF[\bphi]$, then taking its functional derivative, and finally deducing the corresponding constraints over $\bzeta$. While this approach works well when $\bzeta$ and $\mcF$ are sufficiently simple~\cite{wittkowski2014scalar,nardini2017entropy}, it can quickly become impractical. 
In recent years, it was understood that a more systematic criterion to answer this question was given by a functional version of the Schwarz condition~\cite{grafke2017spatiotemporal,o2020lamellar,dinelli2023non}: $\bzeta$ is integrable iff $\delta\zeta_i(\br,[\bphi])/\delta\phi^j(\br')=\delta\zeta_j(\br',[\bphi])/\delta\phi^i(\br)$ in a distributional sense, \ie iff, for all $\delta\bphi_1(\br), \delta\bphi_2(\br)$,
\begin{equation}
\sum_{i,j}\int\left[ \frac{\delta\zeta_i(\br,[\bphi])}{\delta\phi^j(\br')}-\frac{\delta\zeta_j(\br',[\bphi])}{\delta\phi^i(\br)}\right]\delta\phi_1^j(\br')\delta\phi_2^i(\br) d\br d\br' =0 \ .
\label{intro_Schwarz}
\end{equation} 
Despite its systematic nature, since this criterion is distributional, it can be highly non-trivial to deduce the constraints that follow from~\eqref{intro_Schwarz} on \eg the coefficients of $\bzeta$ in cases when the latter is a gradient expansion in $\bphi$. 
%Besides, note that we are not aware of any systematic approach in the literature to identify, in a generic stochastic PDE, the functional $\bmu$ that needs to derive from a free-energy for the dynamics to be time-reversible.
To solve this problem, we here reformulate condition~\eqref{intro_Schwarz} in a more geometric form, generalizing the approach developed for two specific examples in a previous paper by one of us~\cite{o2023nonequilibrium}. The resulting integrability condition consists in the cancellation of a functional two-form, the \textit{vorticity two-form}, which is a vector (with additional properties) that lives in a certain functional space $\Omega^2(\mbF)$. Exhibiting a basis of the corresponding space then makes the distributional condition~\eqref{intro_Schwarz} equivalent to the cancellation of the components of this vector on that basis, a more algebraic formulation of the integrability condition that turns out to be readily applicable. In this article, we exhibit such a basis for the case of a local functional\footnote{See below eq.~\eqref{AMB_chemical_potential} for the definition of local functionals.}
%\footnote{We say that a functional $\bzeta(\br,[\bphi])$ is \textit{local} if, for any $\br$, it functionally depends on $\bphi$ only through $\bphi(\br)$ and its derivatives at $\br$ up to a certain finite order. If this order is zero, \ie $\bzeta(\br,[\bphi])$ only depends on $\bphi(\br)$, we say that $\bzeta$ is \textit{strictly local} in $\bphi$. (Note that the strictly local case is always integrable in the sense of (1) when $d_2=1$, \ie for a scalar field theory.)}
 $\bzeta$ in one spatial dimension, and partially generalize our result to higher dimensions.

Besides, if the existence of a free energy in reversible systems grants access to generic methods for analyzing their phenomenology,
% -- by \eg determining when the topology of the set of ground states and their symmetries change as a parameter is varied -- 
that of out-of-equilibrium systems are generally examined case by case, hence addressing only anecdotally the generic aspect of question (2).
Remarkably, the approach we adopt to solve problem (1) provides a partial yet interesting answer to question (2) as well, since we show that a phenomenology can be attributed to each element of the aforementioned basis. 
Further, in one spatial dimension, this basis can be split into three subfamilies,
each generating a subspace of $\Omega^2(\mbF)$ that we respectively call the antisymmetric, self-symmetric, and inter-symmetric subspace, and denote by $\mcA$, $\mcS_{\rm self}$, and $\mcS_{\rm  inter}$. The combination of these three subspace hence provides a direct-sum decomposition of the space of two-forms: 
\begin{equation}
\Omega^2(\mbF)= \mcA\oplus\mcS_{\rm self} \oplus\mcS_{\rm inter} \ .
\label{vorticity_space_decomp_intro}
\end{equation}
The phenomenology within each of these three subspaces turns out to be quite robust and generic: the antisymmetric subspace typically corresponds to systems with non-reciprocal interactions, while the inter-symmetric one is related to flocking, and the self-symmetric subspace gathers field dynamics like that of Active Model B (AMB) or the Kardar-Parisi-Zang (KPZ) equation, which display anisotropic propagation of fluctuations.
The decomposition~\eqref{vorticity_space_decomp_intro} -- and its extension~\eqref{general_decomposition_vorticity_space} to more general situations -- thus paves the way for a classification of out-of-equilibrium  stochastic field theories.

Among the set of observables quantifying irreversibility at field level, entropy production plays a special role~\cite{o2023nonequilibrium,nardini2017entropy}. The question of its relation with the vorticity two-form hence arises naturally. In this article we prove that, as in finite dimension~\cite{yang2021bivectorial,o2024geometric}, the loop-wise entropy production\footnote{This is the entropy produced along a loop in function space, \ie during one period of a cyclic evolution of the field configuration.} and the entropy production rate are both linear functional of the vorticity two-form, hence emphasizing the central role played by the latter in the analysis of TRS breakdown. Interestingly, this also implies that the elements of the decomposition~\eqref{vorticity_space_decomp_intro} -- and its generalisation~\eqref{general_decomposition_vorticity_space} -- can be interpreted as \textit{independent sources of entropy production}.
These results hold for the usual notion of entropy production -- that we here call the \textit{bare entropy production} -- that compares the statistics of field trajectories to their time-reversal. Nevertheless, we show that in certain situations, like when the field follows a conservation equation, there exists an extended notion of entropy production, which directly takes into account the statistics of the current associated to $\bphi$, and that is greater or equal to the bare one, the difference between the two being invisible from observations solely of the field statistics and hence from the vorticity two-form.

Time reversal, for a given stochastic dynamics, is a context dependent notion~\cite{o2024geometric}. In the simplest cases it solely amounts to reversing the time variable. In other dynamics it also requires us to flip some degrees of freedom, as is the case for \eg momentum variables in underdamped models. Finally, it can also require us to directly reverse a part of the ``force field'' that is applied to the system, like a magnetic field for instance.
In the first article~\cite{o2024geometric} of this two-part series, we respectively called these three types of time reversal T-, MT-, and EMT-reversals, the letters T, M, and E standing for ``time'', ``mirror'', and ``extended'' (see~\cite{o2024geometric} for details). 
Just as in the finite dimensional setting, MT- and EMT- reversibility do not reduce to a single integrability condition, as opposed to the case of T-reversibility. 
In this article, we give for the first time general reversibility conditions for a stochastic field dynamics to be EMT-reversible (this notion of time reversal subsuming the two others).
%
%
%When some components of the field dynamics, as \eg momentum, are odd under time-reversal, then reversibility does not reduce to an integrability condition anymore.
% This is a well-known fact for systems with finitely many degrees of freedom, where general reversibility conditions are already known. We here generalize these generic reversility conditions to field dynamics.

%\MyPink{
%Maybe also mention that some work have already felt the generic importance of antisymmetric operators \red{[Vitelli, non-reciprocal; Odd elasticity and odd elasticity; others ?]}. We give here a broader and more complete picture.
%}

The article is structured as follows.
We start by describing in section~\ref{sec:framework} the type of field dynamics we will focus on, together with the notations that will be used to extend known finite-dimensional results -- in particular those of our companion paper~\cite{o2024geometric}.
We then progressively introduce in section~\ref{sec:quantifyingTRSbreaking} the theoretical machinery -- in particular a functional exterior calculus -- that we will use later to study TRS breakdown in the considered field theories. Throughout the development of our theoretical framework, we illustrate its various components on a fixed example: a nonequilibrium model of phase separation.
In section~\ref{sec:basis}, which is the culmination of this article, we uncover a basis of vorticity two-forms in one spatial dimension. As announced above, the latter first allows us to turn the previously distributional reversibility condition into a readily-applicable, algebraic one. We then show that the resulting basis can be divided into three subfamilies, each one having a particular kind of phenomenology. This allows a classification of out-of-equilibrium field theories in one spatial dimension, going from non-reciprocally interacting systems, to irreversible reaction-diffusion systems, flocking, and interface growth. We then partially extend these results for spatial dimensions higher than one. We also discuss how generalizing the decomposition of the space $\Omega^2(\mbF)$ could account for some of the notoriously surprising behaviors of active systems at their boundaries.
%, and that of their topological defects.
%
In section~\ref{sec:entropy}, we connect our work to various notions of entropy production considered in the previous literature~\cite{maes2000definition,jiang2004mathematical,seifert2005entropy,yang2021bivectorial}
and some new ones.
Finally, while the main focus of sections~\ref{sec:quantifyingTRSbreaking}-\ref{sec:entropy} is on T-reversal,
we give in section~\ref{sec:MTreversal} reversibility conditions for the more general notions of MT- and EMT-reversibility and for dynamics in which the space to which the field belongs is not simply-connected.

%%%%%%%%%%%%%%%%%%%%%%%%%%%%%%%%%%%%%%%%%%%%%%%%%%%%%%%%%%%%%%%%%%%%%%%%%%%%%%%%%%%

\section{General context and notations}
\label{sec:framework}
This section describes the class of stochastic partial differential equations (SPDEs) whose irreversibility we shall study in the following. These SPDEs will consist of fairly general PDEs perturbed by a Gaussian random field with zero mean, delta-correlated in time, and arbitrarily correlated in space.

SPDEs are notoriously difficult to define in general~\cite{holden1996stochastic,prevot2007concise,hairer2014theory}, and we will not attempt to provide a comprehensive definition here. However, we will construct our framework so as to avoid the pitfalls already encountered in finite dimensions -- as identified in Part I~\cite{o2024geometric} -- in the geometric study of the irreversibility of stochastic differential equations (SDEs).

We begin in section~\ref{subsec:finiteDimFramework} by briefly summarizing the main issues that arise in the physical description of SDEs. In section~\ref{subsec:SPDEframework}, we introduce notations that will facilitate extending the finite-dimensional approach developed in Part I to the infinite-dimensional setting of field theory, by analogy.

\subsection{Physical description of SDEs}
\label{subsec:finiteDimFramework}

In this section, we recapitulate how to describe a SDE in a way that is physical, covariant, and independent of the stochastic prescription (see Part I~\cite{o2024geometric} for more details).

Let us consider a stochastic process $\bx_t$ on a finite-dimensional manifold $\mcM$, satisfying the following SDE for a given stochastic integral prescription $\varepsilon_0 \in [0,1]$ (\eg $\varepsilon_0 = 0$ and $1/2$ correspond to the Ito and Stratonovich conventions, respectively): 
\begin{equation}
 \dot{\bx}_t = \bA_{(\varepsilon_0)} + \bb_{\alpha} \eta_t^\alpha \ . 
\label{EDS0} 
\end{equation}
This dynamics may, for instance, describe the evolution of the positions of $N$ particles in $\mbR^3$ -- in which case $\bx_t = (\br_1(t), \dots, \br_N(t))$ and $\mcM = \mbR^{3N}$ -- interacting with each other and in contact with a thermal bath.
In equation~\eqref{EDS0}, the $\eta_t^\alpha$ are real Gaussian white noises with zero mean and correlations $\langle \eta_t^\alpha \eta_{t'}^\beta \rangle = 2\delta^{\alpha\beta}\delta(t - t')$; the $\bb_\alpha(\bx_t)$ are vector fields on $\mcM$, and $\bA_{(\varepsilon_0)}(\bx_t)$ -- which is not generally a vector field\footnote{The transformation law of $\bA_{(\varepsilon_0)}$ under coordinate changes is determined by the chain rule associated with the prescription $\varepsilon_0$. It is therefore a genuine vector field only when $\varepsilon_0 = 1/2$ (Stratonovich).} -- is the so-called ``$\varepsilon_0$-drift''.

The dynamics~\eqref{EDS0} can equivalently be rewritten in any stochastic prescription $\varepsilon$, in which case the new drift takes the form $A^i_{(\varepsilon)} = A^i_{(\varepsilon_0)} + 2(\varepsilon_0 - \varepsilon)b^j_\alpha \partial_j b^i_\alpha$. The true physical drift of the process -- which can be interpreted as a force field (up to mobility) -- should not depend on the chosen prescription.
Building on the case of a Langevin equation at uniform temperature but with inhomogeneous mobility, earlier works~\cite{lau2007state, cates2022stochastic} identified a suitable candidate (in an orthonormal frame) as $\bA \equiv \bA_{(\varepsilon)} - \bs_{(\varepsilon)}$, where
\begin{equation}
s^i_{(\varepsilon)}\equiv \partial_j b^j_\alpha b^i_\alpha - 2\varepsilon b^j_\alpha\partial_j b^i_\alpha \ ,
\label{finiteDimSpurious}
\end{equation}
is the so-called ``$\varepsilon$-spurious drift''. 

Although constructed to be independent of $\varepsilon$, one can show that $\bA$ is not a contravariant vector field, and therefore cannot be interpreted as a physical force field.
In Part I~\cite{o2024geometric}, we showed that in order to correct this lack of covariance, one must choose a (positive) reference measure on $\mcM$, denoted $\blambda = \lambda(\bx)dx^1 \dots dx^n$, and define a ``$\blambda$-drift'' $\ba_{\blambda} \equiv \bA - \bh_{\blambda}$, where
\begin{equation}
h^i_{\blambda}\equiv b^i_\alpha b^j_\alpha \partial_j \lambda \ ,
\label{finiteDimPriorDrift}
\end{equation}
is called the ``$\blambda$-gauge drift''. This allows one to rewrite the SDE~\eqref{EDS0} in the following form: 
\begin{equation}
\dot{\bx}_t = \ba_{\blambda} + \bs_{(\varepsilon)} + \bh_{\blambda} + \bb_{\alpha} \eta_t^\alpha \ , 
\label{EDS}
\end{equation}
where $\ba_{\blambda}$ is now indeed a (contravariant) vector field on $\mcM$, independent of the stochastic prescription $\varepsilon$, and can thus be interpreted as a force field.

This is done at the cost of choosing a reference measure $\blambda$.
The latter plays the role of a prior or a gauge, depending on whether one aims to construct an SDE from a given ``force field'' $\ba$ and noise $\bb_\alpha \eta^\alpha$, or to analyze a given SDE. In the former case, our work suggests that in addition to specifying the drift $\ba$ and the noise $\bb_\alpha \eta^\alpha$, it is necessary to choose a ``prior'' measure $\blambda$ corresponding to the physicist’s \textit{a priori} assumption about the system’s state of maximal disorder (or minimal information), and then impose $\ba_{\blambda} \equiv \ba$. In particular, in the absence of a ``force field'', \ie when $\ba = 0$, the stationary distribution is then given by $\blambda / \int_{\mcM} d\blambda$.
In the latter case, $\blambda$ is simply a gauge choice that affects the $\blambda$-drift $\ba_{\blambda}$ but not physical observables such as entropy production.

%In this article, we will assume the reference measure $\blambda$ fixed once and for all and hence safely drop the $\blambda$-subscript of $\ba_{\blambda}$.

Throughout the remainder of this article, we fix the reference measure $\blambda$ once and for all, and will therefore omit the $\blambda$ subscript on $\ba_{\blambda}$ for clarity.

\subsection{Extension to SPDEs}
\label{subsec:SPDEframework}

In order to generalize the finite dimensional results of~\cite{o2024geometric}, we first need to identify the field-theoretic counterpart of each finite-dimensional geometric object involved in studying the time-irreversibility of~\eqref{EDS}.

\paragraph{General setting.}
Let us consider a time-dependent field $\bphi_t : \mbR^{d_1}\rightarrow\mbR^{d_2}$ that evolves on a time interval $t\in\mbT\equiv[0,\mcT]$ according to a SPDE~\eqref{EDPS01}, which is Stratonovich-discretized in space\footnote{Stating that each spatial differential operator is defined as a limit of a midpoint-discretized operator is actually not sufficient to fully characterize the limit SPDE~\cite{cates2022stochastic}, but we will disregard such questions in this article. In particular, we will assume that the usual chain rule applies to spatial derivatives, hence the term ``Stratonovich-discretized in space''.}, and $\varepsilon\in[0,1]$-discretized in time. Having the finite-dimensional formulation~\eqref{EDS} in mind, we write this SPDE as
\begin{equation}
\partial_t\bphi(\br,t) = \ba(\br,[\bphi]) + \bh_{\blambda}(\br,[\bphi]) + \bs_{(\varepsilon)}(\br,[\bphi]) + \sum_{i=1}^{d_3} \int \rmd\br'\bb_i(\br,\br',[\phi])\eta^i(\br',t) \ ,
\label{EDPS01}
\end{equation}
where $\boldeta(\br,t)\equiv(\eta^1(\br,t),\dots\eta^{d_3}(\br,t))^\top$ is a random Gaussian field on $\mbR^{d_1}$, taking values in $\mbR^{d_3}$, of zero mean and correlations given by $\langle\eta^i(\br,t)\eta^j(\br',t')\rangle = 2\delta^{ij}\delta(\br-\br')\delta(t-t')$, and $\bb(\br,\br',[\bphi])\equiv(\bb_1(\br,\br',[\bphi]),\dots, \bb_{d_3}(\br,\br',[\bphi]))$ is a kernel operator sending $\mbR^{d_3}$-valued fields to $\mbR^{d_2}$-valued fields over $\mbR^{d_1}$.
In order to study dynamics~\eqref{EDPS01} in both a covariant\footnote{Here the covariance is with respect to a change of \textit{chart over the function space} $\mbF$ to which $\bphi$ belongs. For instance, going from Cartesian to spherical coordinates in $\mbR^{d_1}$ amounts to a change of chart on $\mbF$, since it gives a different coordinate expression of a field $\bphi(r^1,\dots,r^{d_1})\to\tilde{\bphi}(r, \theta_1,\dots,\theta_{d_1-1})$.} and discretization-free way, just as in the finite-dimensional situation~\eqref{EDS}, we have split the deterministic drift of eq.~\eqref{EDPS01} into three terms: a time-discretization-dependent spurious drift $\bs_{(\varepsilon)}(\br,[\bphi])$, a $\blambda$-gauge drift $\bh_{\blambda}(\br,[\bphi])$, and a $\blambda$-drift  $\ba(\br,[\bphi])$ (the latter two being independent of $\varepsilon$), where $\blambda$ is a reference-measure over the functional space -- denoted by $\mbF$ -- to which $\bphi$ belongs. 
We will denote by $\bbphi$,  $\bb[\bphi]$, or simply $\bb$ when there is no ambiguity, the operator associated to the kernel $\bb(\br,\br',[\bphi])$, \ie $\bbphi\boldeta (\br)\equiv \int \bb(\br,\br',[\bphi])\cdot\boldeta(\br')\rmd \br' =\sum_i\int \bb_i(\br,\br',[\bphi])\eta^i(\br')\rmd \br' $, where $\rmd \br\equiv \rmd r^1\dots \rmd r^{d_1}$ denotes the Lebesgue measure on $\mbR^{d_1}$.

To generalize more easily the finite-dimensional results from~\cite{o2024geometric} to this field-theoretic context, we are going to formally ``geometrize'' the dynamics~\eqref{EDPS01}. This first means that we regard the functional space $\mbF$ as an infinite dimensional manifold on which the dynamics takes place. We further denote by $T_{\bphi}\mbF$ its tangent space at any $\bphi$, which can be thought of as the space of small fluctuations around $\bphi$, and whose structure will be further detailed shortly. In turn, $\ba$ is a vector field over $\mbF$, \ie for all $\bphi\in\mbF$, $\ba[\bphi]\equiv \ba(\cdot,[\bphi])$ is a tangent vector belonging to $T_{\bphi}\mbF$. In this perspective, the spatial variable $\br$ plays the role of a continuous coordinate index so that, for instance, the vector $\ba[\bphi]\in T_{\bphi}\mbF$ has coordinates $a^{i\br}_{\bphi}=a^{i\br}[\bphi]\equiv a^i(\br,[\bphi])$, where $i\in\llbracket 1,d_2 \rrbracket$ and $\br\in\mbR^{d_1}$.
Furthermore, we will use the following generalized Einstein convention: in addition to summing over repeated discrete indices, we will also implicitly integrate over repeated continuous variables, when written as indices.
Dynamics~\eqref{EDPS01} then reads
\begin{equation}
\partial_t\phi^{i\br} = a^{i\br}[\bphi] + h_{\blambda}^{i\br}[\bphi] + s^{i\br}_{(\varepsilon)}[\bphi] + b^{i\br}_{j\br'}[\bphi]\eta^{j\br'} \ ,
\label{EDPS02}
\end{equation}
or, if we stack together the discrete indices:
\begin{equation}
\partial_t\bphi^\br = \ba^\br[\bphi] + \bh_{\blambda}^{\br}[\bphi] + \bs^{\br}_{(\varepsilon)}[\bphi] + \bb^\br_{\br'}[\bphi]\cdot \boldeta^{\br'} \ .
\end{equation}
These conventions allow us to deduce the coordinate expression of the spurious and $\blambda$-gauge drift directly from their finite dimensional counterparts~\eqref{finiteDimSpurious} \&~\eqref{finiteDimPriorDrift}: they respectively read
\begin{equation}
s^{i_1\br_1}_{(\varepsilon)} \equiv \frac{\delta}{\delta \phi^{i_2\br_2}} b^{i_2\br_2}_{i_3\br_3} b^{i_1\br_1}_{i_3\br_3} - 2\varepsilon	 b^{i_2\br_2}_{i_3\br_3} \frac{\delta}{\delta \phi^{i_2\br_2}} b^{i_1\br_1}_{i_3\br_3} \ ,
\label{functionalSpuriousDrift}
\end{equation} 
and
\begin{equation}
h^{i_1\br_1} = b^{i_1\br_1}_{i_3\br_3}b^{i_2\br_2}_{i_3\br_3}\frac{\delta}{\delta \phi^{i_2\br_2}}\lambda \ .
\label{functionalPriorDrift}
\end{equation}
The only kind of (functional) coordinate charts on the manifold $\mbF$ that we consider here correspond to choosing an arbitrary coordinate system on $\mbR^{d_1}$ and any basis in $\mbR^{d_2}$. In particular, this implies that if we choose the Lebesgue measure on $\mbF$ as the reference measure $\blambda$, then whatever the functional coordinate chart, $\blambda$ is independent of $\bphi$ and hence the $\blambda$-gauge drift $\bh_{\blambda}$ vanishes\footnote{What we say about the ``functional Lebesgue measure'' $\blambda\equiv\mcD\bphi$ should be understood in a discretized setting. For instance, discretizing $(\phi^i(\br))_{i=1\dots d_2}$ at given sites $\br_{\alpha}\in\mbR^{d_1}$ gives $(\phi^{i\alpha})_{i,\alpha}$, and the ``functional Lebesgue measure'' approximately reads $\blambda \simeq \widetilde{\blambda}\equiv \prod_{i,\alpha} d\phi^{i\alpha}$ in the canonical basis of $\mbR^{d_2}$. If we now change basis in $\mbR^{d_2}$, a prefactor appears in the expression of $\widetilde{\blambda}$. But the latter is independent of $\bphi$, hence the discrete version of the $\blambda$-gauge drift vanishes whatever the spatial discretization and we can consider its continuous limit $\bh_{\blambda}$ to be equally vanishing.}.

Similarly, if the operator $\bb$ does not depend on $\bphi$, the spurious drift $\bs_{(\varepsilon)}$ is identically zero.
When this is not the case, some diverging terms may appear that should then be taken care of by a proper spatial-discretization  scheme~\cite{cates2022stochastic}.
For instance if $\bphi$ is a scalar field, \ie $d_2=1$, and 
\begin{equation}
b^{\br}_{j\br'} =\sqrt{M(\bphi(\br'))}\partial_{r'_j}\delta(\br-\br') \ ,
\label{ExampleOfB}
\end{equation}
with $M$ a mobility, then
\begin{eqnarray}
\bb^\br_{\br'}[\phi]\cdot\boldeta^{\br'}  =  b^{\br}_{j\br'}[\phi]\eta^{j\br'} = \int \rmd \br' [\sqrt{M(\bphi(\br'))}\partial_{r'_j}\delta(\br-\br')] \eta^j(\br') = -\nabla\cdot \sqrt{M(\bphi(\br))}\boldeta(\br) \ ,
\end{eqnarray}
while the spurious drift formally reads (see appendix~\ref{AppIllDefSpurious})
\begin{eqnarray}
s^{\br_1}_{(\varepsilon)} = (1-\varepsilon) \sum_i \int \rmd \br_2 \left[ \partial_{r_2^i}\delta(\br_1-\br_2)\right] \left.\left[\partial_{r_3^i}\frac{\delta M(\phi(\br_2))}{\delta \phi^{\br_3}}\right]\right|_{\br_3=\br_2} \ .
\label{spurious_temp}
\end{eqnarray}
The issue is that the last term in squared brackets in eq.~\eqref{spurious_temp} is ill defined.
%which reads
%\begin{equation}
%\left.\left[\partial_{r_3^i}\frac{\delta M(\phi(\br_2))}{\delta \phi^{\br_3}}\right]\right|_{\br_3=\br_2} = M'(\phi(\br_2)) \left.\left[\partial_{r_3^i}\delta(\br_3-\br_2)\right]\right|_{\br_3=\br_2}
%\end{equation}
But if we go back to its discretized version, assuming $d_1=1$ to simplify notation, it takes the form: 
\begin{eqnarray}
\left.\left[\partial_{r_3^i}\frac{\delta M(\phi(\br_2))}{\delta \phi^{\br_3}}\right]\right|_{\br_3=\br_2} =  M'(\phi(\br_2)) \left.\left[\partial_{r_3^i}\frac{\delta\phi^{\br_2}}{\delta\phi^{\br_3}}\right]\right|_{\br_3=\br_2}\to  M'(\phi(k_2)) \frac{1}{2}\left[\frac{\partial \phi(k_2)}{\partial \phi(k_2+1)}-\frac{\partial \phi(k_2) }{\partial \phi(k_2-1)} \right] \ ,
\end{eqnarray}
hence it clearly vanishes.

%
%Note that, in the previous integrals, $\rmd \br\equiv \rmd r^1\dots \rmd r^{d_1}$ denotes the Lebesgue measure on $\mbR^{d_1}$ written in a Cartesian coordinate system. Throughout this article, to lighten notations, we will carry on writing down the equations in Cartesian coordinates, but everything extends straightforwardly to arbitrary coordinate systems (see appendix~\ref{App_sec:functional_geom}).

Using the conventions introduced above, the Fokker-Planck equation associated to~\eqref{EDPS02} reads
\begin{equation}
\partial_tP_t[\bphi] = - \frac{\delta}{\delta\phi^{i\br}} J_t^{i\br} \ ,
\label{FokkerPlanck01}
\end{equation}
where $P_t[\bphi]$ is the probability density (with respect to the Lebesgue measure on $\mbF$) of the solution to eq.~\eqref{EDPS02} at time $t$, and $\bJ_t$ is the corresponding functional probability current, which is a vector field over $\mbF$ and whose coordinate expression is
\begin{equation}
J_t^{i\br} \equiv a^{i\br}[\bphi] P_t - D^{i\br,j\br'}\frac{\delta}{\delta \phi^{j\br'}}P_t \ .
\label{ProbCurrent01}
\end{equation}
In the expression~\eqref{ProbCurrent01} of the probability current, $\bD$ is the diffusion operator, defined by
\begin{equation}
D^{i\br,j\br'} \equiv b^{i\br}_{k\br''}b^{j\br'}_{k\br''} \ ,
\end{equation}
or, in operator notations, $\bD = \bb \bb^\dagger$, were $\bb^\dagger$ stands for the $L^2$-adjoint of $\bb$ for the Lebesgue measure on $\mbR^{d_1}$ and the canonical scalar products of $\mbR^{d_2}$ and $\mbR^{d_3}$.
Throughout this article, we assume that the stationary probability measure $\Pss[\bphi]$ exists and is unique.

%Let us now briefly discuss the geometric and (global) topological structure of the functional space $\mbF$.
 A pivotal object in our study of the time-reversal properties of dynamics~\eqref{EDPS02} will be $\bD^{-1}\ba$. Note that our results can be (at least partially) extended when $\bD$ is not invertible, as we did in the finite dimensional setting in~\cite{o2024geometric}. But for simplicity, up to section~\ref{sec:MTreversal}, we will assume below that $\bD^{-1}\ba$ is always well defined. In particular, this requires that $\ba_{\bphi}$ belongs to the image of $\bb_{\bphi}$, $\IM(\bb_{\bphi})$. Consequently all the drifts applied to $\bphi$ in eq.~\eqref{EDPS02} are included\footnote{Note that the whole right-hand side of a stochastic dynamics~\eqref{EDS} is generically a tangent vector field only in the Stratonovitch prescription, for which we see that $\bs_{(1/2)}+\bh_{\blambda}$ indeed belongs to $\IM(\bb_{\bx})$. We here extrapolate this phenomenon, known for SDEs~\cite{elworthy1998stochastic,hsu2002stochastic}, to SPDEs.} in $\IM(\bb_{\bphi})$ and hence the smallest functional manifold $\mbF$ in which the solutions of eq.~\eqref{EDPS02} are confined is imposed by $\IM(\bbphi)$, \ie $T_{\bphi}\mbF=\IM(\bb_{\bphi})$.
% To this end, we assume that $\ba[\bphi]$ belongs to the image $\IM(\bbphi)$ of $\bb_{\bphi}$ and that $\IM(\bbphi)$ spans all $T_{\bphi}\mbF$ (\Magenta{see appendix~\ref{sec:invertOfD} for more details}).
  Regarding the global topology of $\mbF$, we likewise assume, up to section~\ref{sec:MTreversal}, it is \textit{simply-connected}.
We stress that, although the field $\bphi$ is defined on $\mbR^{d_1}$ for simplicity, our approach can be adapted to any boundary-less smooth manifold without major difficulty.

\paragraph{A running example: the Active Model B.}
For the sake of concreteness, as we develop our somewhat abstract theoretical framework in sections~\ref{sec:quantifyingTRSbreaking}-\ref{sec:entropy}, we will repeatedly illustrate most of its various components on a fixed example,
the so-called Active Model B (AMB), which is arguably the simplest field theory to show out-of-equilibrium phase separation~\cite{wittkowski2014scalar,nardini2017entropy}. The AMB dynamics takes the form:
\begin{equation}
\partial_t\rho(\br) = \nabla\cdot \left[ M\nabla\mu(\br,[\rho]) + \sqrt{M}\boldeta \right] .
\label{AMB}
\end{equation}
In eq.~\eqref{AMB}, $M$ is a constant scalar mobility, $\boldeta(\br,t)$ a Gaussian random field identical to that of eq.~\eqref{EDPS02}, and $\mu(\br,[\rho])$ is a chemical potential defined, in the spirit of Landau-Ginzburg theory, as a second order expansion in the gradient of $\rho$:
\begin{equation}
\mu(\br,[\rho])\equiv \alpha\rho(\br) + \beta\rho(\br)^3 +\lambda(\rho(\br))|\nabla\rho(\br)|^2 - \kappa(\rho(\br)) \Delta\rho(\br),
\label{AMB_chemical_potential}
\end{equation}
where $\alpha$ and $\beta$ are constants, while, generalizing slightly~\cite{wittkowski2014scalar}, we choose $\lambda$ and $\kappa$ to be strictly local functionals of $\rho$. 
In this article, a map $\mu(\br,[\rho])$, which is both a function of a spatial variable $\br$ and a functional of a field $\rho$, will be said to be \textit{local} if it depends on the value of $\rho$ and of its derivatives up to a finite order $q$ at $\br$, $\mu(\br,[\rho])=\mu\left(\rho(\br),\frac{\partial\rho(\br)}{\partial r^i},\dots, \frac{\partial^q \rho(\br)}{\partial r^{i_1}\dots\partial r^{i_q}}\right)$, and \textit{strictly local} if it is local with $q=0$.
With the notations used to describe dynamics~\eqref{EDPS02} above, of which the AMB dynamics is a particular instance, the reference measure is taken to be the Lebesgue measure so that $\bh_{\blambda}=0$. The $\blambda$-drift reads $a^{\br}[\rho]=\nabla\cdot M\nabla\mu(\br,[\rho])$, while the operator $\bb$ takes the form $b^{\br'}_{i\br}=-\sqrt{M}\frac{\partial}{\partial r^i}\delta(\br-\br')$, which in particular implies that the spurious drift vanishes for any time-discretization scheme, $\bs_{(\varepsilon)}=0$, and that the diffusion operator reads $D^{\br\br'}=-M\Delta \delta(\br-\br')$.
Note that, in this case, as assumed for the more general dynamics~\eqref{EDPS02}, $\bD^{-1}\ba=-\Delta^{-1}\Delta\mu$ is well defined and coincides with minus the chemical potential $\mu$. 
%Note that, in this case, as assumed for the more general dynamics~\eqref{EDPS02}, $\ba$ indeed belongs to the image of the operator $\bb$, and $\bD^{-1}\ba=-\Delta^{-1}\Delta\mu$ coincides with minus the chemical potential $\mu$ (\Magenta{see appendix~\ref{sec:invertOfD} for details}). 
Finally, we stress that this example also satisfies the general hypothesis on the simply-connectedness of $\mbF$. Indeed, the constraint over the space $\mbF$ in this case is essentially the conservation of a given total ``mass'' $m=\int_{\mbR^{d_1}}\rho$. Hence, if we take $\rho_1,\rho_2\in\mbF$, any convex linear superposition shares the same global mass $m$, and thus also belongs to $\mbF$. It follows that $\mbF$ is convex, and consequently simply-connected.

\newpage

\section{Quantifying time-irreversibility through the vorticity two-form}
\label{sec:quantifyingTRSbreaking}

In this section, we study the presence or absence of time-reversal symmetry (TRS) in the dynamics~\eqref{EDPS01} or equivalently~\eqref{EDPS02}.
In~\cite{o2023nonequilibrium}, one of us introduced for this purpose a functional exterior derivative for scalar field, which generalizes the finite-dimensional exterior derivative -- the latter being itself a generalization of the curl operator of vector analysis to dimensions higher than three.
In section~\ref{subsec:RevCondFuncExtDer}, we start by extending this construction to the more general context of the vector-valued field $\bphi$ obeying eq.~\eqref{EDPS02}.
This then allows us to define a functional vorticity two-form (alternatively called a functional cycle-affinity two-form), denoted by $\bomega$, that generalizes to field theory the corresponding objects already introduced for Markov chains~\cite{kolmogorov1936theorie,schnakenberg1976network} and finite dimensional stochastic processes~\cite{yang2021bivectorial}.
In section~\ref{subsec:vorticityOperator}, we introduce two differential operators, the cycle-affinity operator $\whomega$ and the vorticity operator $\bW$, that are both dual to the vorticity two-form $\bomega$.
While the benefits of introducing the former operator will only appear in section~\ref{subsec:skewsym_operators}, we show below that the latter can be used to gain insight into the out-of-equilibrium phenomenology of the dynamics~\eqref{EDPS02} under study.

\subsection{Reversibility condition and functional exterior derivative}
\label{subsec:RevCondFuncExtDer}
Up until section~\ref{sec:MTreversal}, we assume that all the components of the field $\bphi$ \textit{are even under time-reversal}.
Dynamics~\eqref{EDPS02} is then said to be time-reversal symmetric when, in steady state, it has the same probability to travel any trajectory $(\bphi_t)_{t\in\mbT}$ forward and backward in time\footnote{We also assume that the time-reversal of dynamics~\eqref{EDPS02} does not involve any direct modification of the drift and diffusion operator, as it would happen \eg in the presence of an external magnetic field. In the language used in paper I~\cite{o2024geometric}, this means that we only focus on T-reversal until section~\ref{sec:MTreversal} where we consider EMT-reversal. Besides, at least when the field $\bphi$ is conserved, one can also define a resolved notion of reversibility where, together with the statistics of the field, those of the corresponding current are explicitly considered (see section~\ref{subsec:hiddenIrrev}).}. This reads $\mcP[(\bphi_t)_{t\in\mbT}] = \mcP[(\bphi_{\mcT-t})_{t\in\mbT}]$, where $\mcP$ is the (stationary) path probability associated to dynamics~\eqref{EDPS02}.
Just as in finite dimension~\cite{risken1996fokker,van1992stochastic,gardiner1985handbook,yang2021bivectorial,jiang2004mathematical}, this time-reversibility property is equivalent to the cancellation of the stationary probability current $\bJss=\ba\Pss - \bD\delta \Pss/\delta\bphi$, a property that can be straightforwardly reformulated as
\begin{equation}
\bD^{-1}\ba = \frac{\delta \ln \Pss}{\delta\bphi} \ .
\label{TRScond01}
\end{equation}
In other words, time-reversibility of dynamics~\eqref{EDPS02} amounts to a (functional) integrability constraint\footnote{There is also an normalizability constraint over the potential from which $\bD^{-1}\ba$ derives. But in practice, just as in finite dimension~\cite{o2024geometric}, we only need the kernel of the Fokker-Planck operator, $P\mapsto -\frac{\delta}{\delta \phi^{i\br}}[a^{i\br}P - D^{i\br j\br'}\frac{\delta}{\delta \phi^{j\br'}}P]$, to be one-dimensional, not that any of these stationary measures integrates to one. Thus, without loss of generality, we will disregard this normalizability constraint throughout this article.} over $\bD^{-1}\ba$. 
The reversibility condition~\eqref{TRScond01} has been argued~\cite{dinelli2023non} to be equivalent to $\bD^{-1}\ba$ satisfying the functional integrability condition:
\begin{equation}
\frac{\delta[D^{-1}a_{\bphi}]_{i\br}}{\delta\phi^{j\br'}}-\frac{\delta[D^{-1}a_{\bphi}]_{j\br'}}{\delta\phi^{i\br}} = 0 \ ,
\label{TRScond02}
\end{equation}
in a distributional sense, for all $\bphi\in\mbF$, with $[D^{-1}a_{\bphi}]_{i\br}$ standing for $[D^{-1}a]_{i\br}[\bphi]=[D^{-1}_{\bphi}]_{i\br j\br'}a_{\bphi}^{j\br'}$. While the authors of~\cite{dinelli2023non} proved that eq.~\eqref{TRScond01} implies eq.~\eqref{TRScond02}, the simply-connectedness of $\mbF$ was assumed to be enough for the reciprocal implication to be valid, as in finite dimension. Among other things, the formalism introduced below allows us to prove this is indeed the case.

We here emphasize that our analysis of functional integrability is conducted with significantly greater mathematical rigor than our treatment of stochastic PDEs in this article. This level of rigor -- necessary to draw strong physical conclusions later on -- justifies the formal nature of some of the sections that follow.

%We here emphasize that our analysis of functional integrability is much more rigorous mathematically than the way we treat stochastic PDEs in this article. 

\paragraph{Functional one-forms.}

In order to fully prove and interpret the equivalence between eqs.~\eqref{TRScond01} and~\eqref{TRScond02}, and also to exploit the information content of the object on the left-hand side of eq.~\eqref{TRScond02} when it is not identically zero, 
%Before discussing the equivalence between eqs.~\eqref{TRScond01} and~\eqref{TRScond02},
 we are going to recast that object in a more geometric form.
 
We start by noting that, from our geometrical point of view, since $\bD$ is a functional contravariant tensor field of order two over $\mbF$, its inverse $\bD^{-1}$ -- which is defined by $[D_\phi]^{i\br j\br'}[D^{-1}_\phi]_{j\br' k\br''} =\delta^{i\br}_{k\br''}$, with $\delta^{i\br}_{k\br''}$ the identity map on $T_{\bphi}\mbF$ -- is naturally a covariant functional tensor field of order two. Hence $\bD^{-1}\ba$ is a covariant vector field over $\mbF$, \ie a functional one-form.
%
%being the inverse of $\bD$ (which is a functional contravariant tensor field of order two, from our geometric view-point), $\bD^{-1}$ is naturally a covariant functional tensor field of order two. Hence $\bD^{-1}\ba$ is actually a covariant vector field over $\mbF$, \ie a functional one-form. 
It acts on perturbations $\delta\bphi\in T_{\bphi}\mbF$ as
\begin{equation}
\bD^{-1}\ba_{\bphi} (\delta\bphi) \equiv [D^{-1}a_{\bphi}]_{i\br} \delta\phi^{i\br} = [D^{-1}_{\bphi}]_{i\br j\br'}a_{\bphi}^{j\br'}\delta\phi^{i\br} \ .
\end{equation}
Upon introducing $\delta^{i\br}$, the Dirac delta at point $\br$ applied to the $i^{th}$ component of $\delta\bphi$, \ie 
%$\delta^{i\br}(\delta\bphi)\equiv \delta\phi^i(\br)$, 
\begin{equation}
\delta^{i\br}(\delta\bphi)\equiv \delta\phi^i(\br) \ ,
\end{equation}
the one-form $\bD^{-1}\ba$ can be written as
\begin{equation}
\bD^{-1}\ba= [\bD^{-1}\ba]_{i\br}\delta^{i\br} \ .
\label{functOneFormsUsingDelta}
\end{equation}
%where we used the same notation for the one-form and the functio(nal) $[\bD^{-1}\ba](\br,[\bphi])$.
We denote by $\Omega^1(\mbF)$ the space of functional one-forms $\bzeta_{\bphi}=\zeta_{i\br}[\bphi]\delta^{i\br}$ over $\mbF$.

\paragraph{Functional two-forms and exterior derivative.}
A functional two-form $\bgamma$ over $\mbF$ is a field of maps $\bgamma_{\bphi}:(\delta\bphi_1,\delta\bphi_2)\in T_{\bphi}\mbF\times T_{\bphi}\mbF \to \bgamma_{\bphi}(\delta\bphi_1,\delta\bphi_2)\in \mbR$ that are bilinear and antisymmetric for all $\bphi$. We denote by $\Omega^2(\mbF)$ the space of functional two-forms over $\mbF$.
As promised above, we now generalize to vector-valued fields the functional exterior derivative introduced in~\cite{o2023nonequilibrium} for scalar field: the exterior derivative of a one-form $\bzeta_{\bphi}=\zeta_{i\br}[\bphi] \delta^{i\br}$ is a two-form, denoted by $\mbd \bzeta$, which acts on any pair of perturbations $\delta\bphi_1(\br),\delta\bphi_2(\br) \in T_{\bphi}\mbF$ as:
% whose value at $\bphi\in\mbF$ is given by integrating the left-hand side of eq.~\eqref{TRScond02}, with $\bD^{-1}\ba$ replaced by $\bzeta$, against a pair of perturbations $\delta\bphi_1(\br),\delta\bphi_2(\br) \in T_{\bphi}\mbF$:
\begin{equation}
\mbd \bzeta_{\bphi}(\delta\bphi_1,\delta\bphi_2) \equiv \sum_{i,j=1}^{d_2}\int \left\{ \frac{\delta \zeta_{i\br}[\bphi]}{\delta\phi^{j\br'}}-\frac{\delta\zeta_{j\br'}[\bphi]}{\delta\phi^{i\br}} \right\} \delta\phi_1^{j\br'}\delta\phi_2^{i\br} \rmd\br\rmd\br' \ .
\label{functExtder01}
\end{equation}
The space $\mbd \Omega^1(\mbF)\subset\Omega^2(\mbF)$ of exterior derivatives of one-forms is the vector space of \textit{exact} functional two-forms.
We now defined the wedge product $\mcD_1\wedge\mcD_2$ of two distributions $\mcD_1,\mcD_2:T_{\bphi}\mbF\to\mbR$ as the antisymmetric bilinear map from $T_{\bphi}\mbF\times T_{\bphi}\mbF$ to $\mbR$ which reads 
\begin{equation}
\mcD_1\wedge\mcD_2 (\delta\bphi_1,\bphi_2)\equiv \mcD_1(\delta\phi_1)\mcD_2(\delta\bphi_2)-\mcD_1(\delta\phi_2)\mcD_2(\delta\bphi_1) \ .
\end{equation}
This product is antisymmetric: $\mcD_1\wedge \mcD_2 = -\mcD_2\wedge\mcD_1$.
The wedge product allows us to re-write the exterior derivative~\eqref{functExtder01} without its arguments $\delta\bphi_1,\delta\bphi_2$ as
\begin{equation}
\mbd \bzeta =  \frac{1}{2}\left\{ \frac{\delta\zeta_{i\br}}{\delta\phi^{j\br'}}-\frac{\delta\zeta_{j\br'}}{\delta\phi^{i\br}} \right\}\delta^{j\br'}\wedge\delta^{i\br} \ ,
\label{functExtder02}
\end{equation}
which emphasizes the fact that it is a geometrical object on its own.
Note that the wedge product of Dirac deltas also allows to write any two-form $\bgamma$ as $\gamma_{\bphi}=\gamma_{i\br j\br'}[\bphi]\delta^{i\br}\wedge \delta^{j\br'}$, where the $\gamma_{i\br j\br'}[\bphi]$ play the role of components of $\bgamma$ along the elementary two-forms $\delta^{i\br}\wedge \delta^{j\br'}$, the latter being antisymmetric bilinear maps on each tangent space $T_{\bphi}\mbF$.

Just as in finite dimension, this functional exterior derivative can be extended to differential forms of any order. In this article, in addition to the exterior derivative~\eqref{functExtder01} of one-forms, we will only need that of zero-forms, \ie of functionals $\mcF[\bphi]$. The latter is simply defined as the differential: 
\begin{equation}
\mbd \mcF \equiv \frac{\delta\mcF}{\delta \phi^{i\br}} \delta^{i\br} \ .
\end{equation}
Importantly, this functional exterior derivative retains a crucial property of its finite-dimensional counterpart: it squares to zero, \ie 
\begin{equation}
\mbd(\mbd\mcF)=0 \ ,
\label{function_ext_squared}
\end{equation}
thanks to the functional Schwarz theorem~\cite{dinelli2023non} of permutation of functional derivatives: $\frac{\delta \mcF}{\delta \phi^{i\br}\delta\phi^{j\br'}}=\frac{\delta \mcF}{\delta \phi^{j\br'}\delta\phi^{i\br}}$.
In other words, exact functional one-forms are closed, as in finite dimension.
This a geometric reformulation of the fact that eq.~\eqref{TRScond01} implies eq.~\eqref{TRScond02}, as proven in~\cite{dinelli2023non}, since the fact that eq.~\eqref{TRScond02} is distributional precisely means that $\mbd\bD^{-1}\ba_{\bphi}(\delta\bphi_1,\delta\bphi_2)$ must vanish for all $\delta\bphi_1,\delta\bphi_2$.
Before showing that the reciprocal also holds, let us make a remark that will turn out to be useful in practice. 
Using the antisymmetry of the wedge product, the exterior derivative~\eqref{functExtder02} can be re-written as
\begin{equation}
\mbd \bzeta =  \frac{\delta\zeta_{i\br}}{\delta\phi^{j\br'}}\delta^{j\br'}\wedge\delta^{i\br} \ .
\end{equation}
In turn, this relates the functional derivatives of two- and one- forms through the formula,
\begin{equation}
\mbd \bzeta = \mbd \left[ \zeta_{i\br}\delta^{i\br} \right] = \left[\mbd \zeta_{i\br} \right] \wedge \delta^{i\br} = \frac{\delta\zeta_{i\br}}{\delta\phi^{j\br'}}\delta^{j\br'}\wedge\delta^{i\br} \ 
\label{extDer_convenient}
\end{equation}
where $\mbd \zeta_{i\br} $ stands for the exterior derivative of $\zeta_{i\br}$ seen as a 0-form with $i$ and $\br$ fixed.
Formula~\eqref{extDer_convenient}, which is the infinite-dimensional counterpart of a well-known relation in finite dimension~\cite{lee2003introduction}, 
is very convenient for practical computations. 
We will use this relation later on in this article to compute functional exterior derivatives (see appendices~\ref{app:NR_flock} \&~\ref{app:active_Ising} for detailed calculations).
%Later on in this article, we discuss examples for which we compute functional exterior derivatives using relation~\eqref{extDer_convenient} (see appendices~\ref{app:NR_flock} \&~\ref{app:active_Ising}).

\paragraph{Functional Stokes theorem and Poincaré lemma.}
We are now in position to show that, when $\mbF$ is simply-connected, the converse of eq.~\eqref{function_ext_squared} also holds, \ie that closed functional one-forms are exact. This represents a functional version of the Poincar\'e lemma.

We show in appendix~\ref{App:subsec:Stokes} that our functional exterior derivative satisfies a functional version of the Stokes theorem: if $S\subset \mbF$ is a smooth oriented surface in $\mbF$ with smooth boundary $\partial S$, then for any one-form $\bzeta$ (see appendix~\ref{App:subsec:Stokes} for details), we have
\begin{equation}
\int_{\partial S} \bzeta = \int_S \mbd \bzeta \ .
\label{functional_Stokes}
\end{equation}
In section~\ref{subsubsec:loopWiseEP_AMB}, we will use this functional Stokes theorem to compute the entropy production around a loop\footnote{A loop in $\mbF$ corresponds to an evolution $(\bphi_{\tau})_{\tau\in [0,T]}$ of the field that ends up in the same state it started from, $\bphi_0=\bphi_T$.} in $\mbF$ for AMB and relate it to the phenomenology of this out-of-equilibrium model.
Most importantly, eq.~\eqref{functional_Stokes} allows us to prove that, when $\mbF$ is simply connected, the converse of relation~\eqref{function_ext_squared} also holds, \ie that a closed functional one-form is also exact. Indeed, let us consider a one-form $\bzeta$ that is closed, \ie $\mbd \bzeta=0$. Then, we define the functional
\begin{equation}
\mcF[\bphi] \equiv -\int_\mcC\bzeta \equiv -\int_{0}^{1} \zeta_{i\br}[\bphi_\tau] \partial_\tau\phi_\tau^{i\br} \rmd \tau \ ,
\label{GenericfreeEnergy}
\end{equation}
where $\bphi_0\in\mbF$ is arbitrary, and $\mcC\equiv (\bphi_\tau)_{\tau\in[0,1]}$ is an arbitrary smooth path in $\mbF$ connecting $\bphi_0$ to $\bphi_1\equiv \bphi$. 
First, the functional $\mcF[\bphi]$ is well defined, since picking another path $\mcC'\subset\mbF$ joining $\bphi_0$ to $\bphi$ in the definition~\eqref{GenericfreeEnergy} of $\mcF$ creates a difference with the right-hand side of~\eqref{GenericfreeEnergy} that is equal to the integral of $\bzeta$ along the loop $\mcC\cup\mcC'$. Since $\mbF$ is simply connected, there exist a surface $S\subset\mbF$ such that $\partial S =\mcC\cup\mcC'$. Using Stokes theorem~\eqref{functional_Stokes} and the fact that $\bzeta$ is closed, we conclude that this difference vanishes.
Furthermore, using the continuity of $\bzeta[\bphi]$ we have $\mcF[\bphi+\varepsilon\delta\bphi] \simeq \mcF[\bphi] - \varepsilon \zeta_{i\br}[\bphi] \delta\phi^{i\br}$ to first order in $\varepsilon$, \ie $\bzeta =-\mbd \mcF$, which is the functional version of the Poincar\'e lemma.

\paragraph{Irreversibility and vorticity.}

The functional Poincaré lemma implies that eq.~\eqref{TRScond01} follows from  eq.~\eqref{TRScond02}. Thus the reversibility of dynamics~\eqref{EDPS02} is equivalent to 
\begin{equation}
\mbd \bD^{-1}\ba = 0 \ .
\label{TRScond03}
\end{equation}
Indeed, if condition~\eqref{TRScond03} is satisfied, a direct computation allows to show that $\Pss[\bphi]\equiv e^{-\mcF[\bphi]}/Z$, where $\mcF$ is obtained by replacing $\bzeta$ by $\bD^{-1}\ba$ in eq.~\eqref{GenericfreeEnergy} and $Z\equiv\int e^{-\mcF[\bphi]}\mcD\bphi$ is the normalizing constant, is the stationary distribution and is such that~\eqref{TRScond01} is fulfilled.

The functional two-form $\mbd \bD^{-1}\ba$ associated with dynamics~\eqref{EDPS02} is the central object of this article. We will denote it by $\bomega$, \ie
\begin{equation}
\bomega \equiv \mbd \bD^{-1}\ba \ ,
\label{vorticity_definition}
\end{equation}
and, just as in~\cite{o2023nonequilibrium}, we will call it the \textit{(functional) vorticity two-form}, or, by analogy with its counterpart in Markov chains~\cite{qian2016entropy}, the \textit{(functional) cycle affinity two-form}.

\paragraph{Vorticity of the AMB.} 
In order to compute the vorticity two-form~\eqref{vorticity_definition} on a concrete example, one first needs to rewrite the considered stochastic field dynamics in the form~\eqref{EDPS02} and, specifically, identify the operator $\bD$ together with the drift $\ba$.
We already did so at the end of section~\ref{subsec:SPDEframework} for our running example, the Active Model B. In particular, using the same notation as in~\eqref{functOneFormsUsingDelta}, we have $\bD^{-1}\ba=-\mu_{\br}\delta^{\br}$.
Then, the definition~\eqref{functExtder01} of the functional exterior derivative -- or equivalently the convenient formula~\eqref{extDer_convenient} -- allows to compute the vorticity two-form~\eqref{vorticity_definition}.
In the case of the AMB dynamics~\eqref{AMB}-\eqref{AMB_chemical_potential}, the latter reads (see~\cite{o2023nonequilibrium} or appendix~\ref{app:AMB_vorticity})
\begin{equation}
\bomega \equiv \mbd \bD^{-1}\ba = -\mbd\left[\mu_{\br}\delta^{\br}\right] = -\int \rmd \br [2\lambda(\rho(\br)) +\kappa'(\rho(\br))]\nabla\rho(\br) \cdot \delta^\br\wedge \nabla\delta^\br \ ,
\label{cycleAffAMB}
\end{equation}
where $\kappa'\equiv\frac{\rmd \kappa}{\rmd \rho}$ and $\nabla\delta^\br$ is the gradient of the Dirac delta at $\br$, whose action on a perturbation $\delta\rho\in T_\rho\mbF$ is $\nabla\delta^\br(\delta\rho)\equiv -\nabla\delta\rho(\br)$. 
Note that the explicit expression~\eqref{cycleAffAMB} is not entirely straightforward for a reason detailed in the introduction of section~\ref{sec:basis} (see~\cite{o2023nonequilibrium} or appendix~\ref{app:AMB_vorticity} for two slightly different step-by-step derivations of~\eqref{cycleAffAMB}).
Dynamics~\eqref{AMB}-\eqref{AMB_chemical_potential} is then reversible iff $\bomega$ is identically zero, the latter condition being fulfilled iff $2\lambda + \kappa'=0$. Whenever this is the case, then the chemical potential $\mu$ is the functional derivative of the free energy $\mcF$ given by eq.~\eqref{GenericfreeEnergy} with $\bzeta\equiv \bD^{-1}\ba$, which also reads
\begin{equation}
\mcF[\rho] = \int \left[ \frac{\alpha}{2}\rho^2 + \frac{\beta}{4}\rho^4 + \frac{\kappa}{2}|\nabla\rho|^2 \right] \rmd \br \ ,
\end{equation}
up to an irrelevant constant.
Interestingly, note that in this example that the ``coordinates'' $\frac{\delta [D^{-1}a]_{i\br}}{\delta \phi^{j\br'}}$ of $\mbd\bD^{-1}\ba$ along the two-forms $\delta^{j\br'}\wedge\delta^{i\br}$ can be generalized functions\footnote{In the case of AMB $\frac{\delta [D^{-1}a]_{\br}}{\delta \rho^{\br'}}=[2\lambda(\rho(\br)) + \kappa'(\rho(\br))]\nabla_\br\rho(\br)\cdot \nabla_{\br'}\delta(\br-\br')$, up to unimportant symmetric (under $\br \leftrightarrow \br'$) terms.} of $\br$ and $\br'$, in which case the non-local ``basis'' $\delta^{j\br'}\wedge\delta^{i\br}$ can be reduced to a local one involving Dirac deltas at $\br$ together with their derivatives at the same point.
Finally, we see that the reversibility of AMB reduces to an integrability constraint over the chemical potential $\mu$. This might be surprising at first sight as it does not seem to enforce any structure of the noise, while it is known that the later should satisfy the fluctuation-dissipation theorem at equilibrium. This constraint is in fact embedded in the single condition~\eqref{TRScond03}, where the noise statistics enters through the operator $\bD$. In the AMB dynamics~\eqref{AMB}, the FDT is enforced by the relation between the drift and noise structure, so that $\bD^{-1}\ba$ reduces to $\mu$, and reversibility of~\eqref{AMB} amounts to the integrability of the latter functional.

\paragraph{Other existing formalisms.}

To conclude this section, we emphasize that the functional exterior calculus introduced in~\cite{o2023nonequilibrium} and further developed here is by no mean the first of its kind. Indeed, other mathematical formalisms exist (see~\cite{anderson1989variational,kriegl1997convenient} and references therein). 
The first advantage of the formalism presented here is its simplicity, as its only involve functional derivatives, Dirac deltas, and the wedge product of distributions.
A second advantage is its striking analogy with the finite-dimensional exterior calculus, which allows us to straightforwardly extend known results from finite dimension, something we will extensively do throughout this article.
Finally, a major inconvenience of the alternative formalism of jet bundles~\cite{anderson1989variational}, which seems to be the most developed and widespread mathematical framework for functional differential geometry, is that it cannot handle non-local differential forms, whereas our  formalism can (see section~\ref{subsubsec:oddSubspace} and appendix~\ref{app:active_Ising}).

\subsection{Two operators associated with $\omega$}
\label{subsec:vorticityOperator}

In this section, we introduce two differential operators, both associated with the two-form $\bomega$ but in different ways. They will respectively be important in the interpretation of $\bomega$ and of the results of section~\ref{sec:basis}.
In order to emphasize the distinction between these operators, they will respectively be called the cycle-affinity operator and the vorticity operator, although $\bomega$ was indifferently called the vorticity two-form and the cycle-affinity two-form.

\subsubsection{The cycle-affinity operator}
First, we define the \textit{cycle-affinity operator}, denoted by $\whomega$, as the $L^2$-representative of $\bomega$, \ie for all $\bphi\in \mbF$, $\whomega_{\bphi}$ is such that
%$\whomega_{\bphi}: T_{\bphi}\mbF\to\ T_{\bphi}\mbF$ is such that
\begin{equation}
\bomega_{\bphi}(\delta\bphi_1,\delta\bphi_2) = \int_{\mbR^{d_1}} \delta\bphi_1 \cdot \whomega_{\bphi}  \delta\bphi_2 d\br \ ,
\label{def_affinity_operator}
\end{equation}
where the dot `$\cdot$' here stands for the canonical scalar product of $\mbR^{d_2}$. Using expression~\eqref{functExtder01}, we see that in the canonical basis of $\mbR^{d_2}$ it acts as
\begin{equation}
\left[\whomega_{\bphi} \delta\bphi\right]_j(\br') = \sum_{i=1}^{d_2}\int \left\{ \frac{\delta[D^{-1}a_\phi]_{i\br}}{\delta\phi^{j\br'}}-\frac{\delta[D^{-1}a_\phi]_{j\br'}}{\delta\phi^{i\br}} \right\} \delta\phi^{i\br} \rmd\br \ .
\end{equation}
It directly follows from the skew-symmetry of $\bomega$ in its arguments that $\whomega$ is skew-symmetric in the $L^2$-sense: 
\begin{equation}
\whomega^\dagger=-\whomega \ .
\end{equation}
In section~\ref{subsec:skewsym_operators}, $\whomega$ will allow us to translate in the more familiar language of (skew-symmetric) differential operators the somewhat abstract result of section~\ref{subsec:basisOf2forms} which is formulated with functional 2-forms.

%\Magenta{
%I should probably clarify how to properly raise the indices to go from $T^*_{\bphi}\mbF$ to $T_{\bphi}\mbF$. I need to decipher what happens to the $L^2$ metric $G_{i\br j\br'}$ is I change coordinates in $\mbR^{d_1}$. If we are in an orthonormal basis of $\mbR^{d_1}$, $G_{i\br j\br'}=g^{euc}_{ij}\delta(\br-\br')$, where $g^{euc}$ is the canonical scalar product of $\mbR^{d_2}$.
%
%Note that everywhere $\rmd\br$ stands for the Lebesgue measure in $\mbR^{d_1}$, which reads in an orthonormal coordinate system: $\rmd\br=\rmd r^1\dots\rmd r^{d_1}$. 
%
%I may want to write an appendix about what happens in a n arbitrary coordinate system in $\mbR^{d_1}$ to the Dirac density $\delta(\br-\br')$, as well as to the functional derivatives $\frac{\delta}{\delta\phi^{i\br}}$ which are a basis for functional vector fields.
%
%Once it is done, I should also modify relation (44) between $\bW$ and $\whomega$.
%}

\subsubsection{The vorticity operator}
\label{subsubsec:vorticityOp}

\paragraph{Definition.}
We now define the \textit{vorticity operator} $\bW$ by formally extending to this field-theoretic context the finite-dimensional construction carried out in section 3.2.3 of our paper I~\cite{o2024geometric}.
In essence, it consist in considering $\bD^{-1}$ as a Riemannian metric over $\mbF$, which allows us to Taylor expand the vector field $\ba$ in a covariant way as:
%\begin{equation}
%\ba[\exp_{\bphi}(\bpsi)]\simeq \tau_{\bphi\to\exp_{\bphi}(\bpsi)}\left\{\ba[\bphi]+\wt{\nabla}_{\bpsi}\ba[\bphi]\right\} \ ,
%\end{equation}
\begin{equation}
\ba_{\exp_{\bphi}(\delta\bphi)}\simeq \tau_{\bphi\to\exp_{\bphi}(\delta\bphi)}\left[\ba_{\bphi}+\wt{\nabla}_{\delta\bphi}\ba_{\bphi}\right] 
\label{Taylor_covariant}
\end{equation}
up to linear order, where $\exp_{\bphi}$ here denotes the Riemannian exponential map\footnote{The Riemannian exponential map $\exp_{\bphi}$ sends a tangent vector $\delta\bphi$ at $\bphi$ to the point $\exp_{\bphi}(\delta\bphi)$ in $\mbF$ reached at time one by the geodesic that passes at time zero through $\bphi$ with speed $\delta\bphi$.}, $\wt{\nabla}$ the Levi-Civita covariant derivative, and $\tau_{\bphi\to\exp_{\bphi}(\delta\bphi)}$ the parallel transport from $\bphi$ to $\exp_{\bphi}(\delta\bphi)$ along the geodesic joining these two points in $\mbF$.
%Finally, just as one would do if $\ba$ were a hydrodynamic velocity field on $\mbR^3$, 
In turn, we define the vorticity operator as the skew-symmetric part, for the metric $\bD^{-1}$, of the linear operator $\wt{\nabla}\ba_{\bphi}:\delta\bphi\in T_{\bphi}\mbF\to \wt{\nabla}_{\delta\bphi}\ba_{\bphi}\in T_{\bphi}\mbF$ appearing in the expansion~\eqref{Taylor_covariant} of the vector field $\ba$: 
\begin{equation}
\bW_{\bphi}\delta\bphi \equiv \wt{\nabla}^A_{\delta\bphi}\ba_{\bphi} \ .
\end{equation}
This construction is similar to the one conducted in hydrodynamics to define the usual vorticity of a velocity field $\bv$, albeit here in a non-Euclidean and functional context. Hence the name chosen for $\bW$.
Note that the $\bD^{-1}$-antisymmetry of $\bW$ reads
\begin{equation} 
\bD\bW^\dagger\bD^{-1} = -\bW \ ,
\end{equation}
or, in coordinates, $
D^{i_1\br_1 i_2\br_2}W^{i_3\br_3}_{i_2\br_2}[D^{-1}]_{i_3\br_3 i_4\br_4} = -W^{i_1\br_1}_{i_4\br_4}$. This symmetry property notably implies that the spectrum of $\bW$ is purely imaginary -- as is that of $\whomega$.
 Most importantly, the vorticity operator can be shown to read (see~\cite{o2024geometric}):
\begin{equation}
W^{i_1\br_1}_{i_2\br_2}=-\frac{1}{2}D^{i_1\br_1 i_3\br_3}\left[ \frac{\delta [D^{-1}a]_{i_2\br_2}}{\delta\phi^{i_3\br_3}}-\frac{\delta [D^{-1}a]_{i_3\br_3}}{\delta\phi^{i_2\br_2}}\right] \ ,
\end{equation}
or, in terms of operators,
\begin{equation}
\bW  = -\frac{1}{2}\bD \whomega \ .
\label{Omega_vs_whomega}
\end{equation}
%\begin{equation}
%\bW_{\bphi} \bpsi = -\frac{1}{2}\bD_{\bphi} \whomega_{\bphi}\bpsi \ .
%\end{equation}
From eq.~\eqref{Omega_vs_whomega} and the invertibility of $\bD$, we conclude that dynamics~\eqref{EDPS02} is reversible iff $\bW=0$. This fact, together with the relation of $\bW$ to the covariant expansion~\eqref{Taylor_covariant} of $\ba$, suggests to interpret the vorticity operator as the irreversible part of dynamics~\eqref{EDPS02} at the linear level and, consequently, to use the following linear dynamics, that we call the \textit{vortex} (or \textit{vorticity}) \textit{dynamics}:
\begin{equation}
\partial_t \delta\bphi = \bW_{\bphi}\delta\bphi \ ,
\label{Vortex_dynamics}
\end{equation}
as a proxy to grasp the purely irreversible behavior of dynamics~\eqref{EDPS02} in the vicinity of the state $\bphi$. We now further justify this approach.

\paragraph{A proxy for the TRS-breaking phenomenology.}
As the theory of time-reversible stochastic time-evolution is now firmly developed, it is tempting to try to build a theory of TRS-breaking dynamics on this foundation, \ie to start from a stochastic field dynamics en route to equilibrium, add a (possibly small) TRS-breaking terms to it, and study how this additional term modifies the initial equilibrium picture.
 To do so, it is natural to try to split the drift $\ba$ of dynamics~\eqref{EDPS02} into two parts, 
\begin{equation}
\ba=\barev +\bairrev \ ,
\label{drift_decomp_rev_irrev}
\end{equation} 
the first one, $\barev$, ``preserving TRS'' and the other, $\bairrev$, ``breaking it''. This would enable to first study the reversible limit $\ba\to\ba^{\rm rev}$ of dynamics~\eqref{EDPS02} using tools from equilibrium field theory. Then one could study, for instance, the deterministic dynamics $\partial_t\bphi=\bairrev$ to get an insight into which kinds of intrinsically irreversible behavior come into play on top of the equilibrium limit in the full irreversible dynamics.

In the attempt of making decomposition~\eqref{drift_decomp_rev_irrev} properly defined, a first natural approach is the following:
on the one hand, ``preserving TRS'' means for $\barev$ to be such that, as soon as $\bairrev=0$, dynamics~\eqref{EDPS02} with drift $\ba=\barev$ is reversible, while on the other hand ``breaking TRS'' means that as soon as $\bairrev\neq 0$, dynamics~\eqref{EDPS02} becomes irreversible.
According to our results  from section~\ref{subsec:RevCondFuncExtDer}, these requirements amount to
%\begin{equation}
%\mbd\bD^{-1}\barev=0 \quad \text{and} \quad \mbd\bD^{-1}\bairrev = \mbd\bD^{-1}\ba \ .
%\label{contraints_arev_airrev_01}
%\end{equation}
\begin{eqnarray}
\mbd\bD^{-1}\barev &=& 0 \ , \label{contraints_arev_airrev_01}\\
 \mbd\bD^{-1}\bairrev &=& \mbd\bD^{-1}\ba \ .\label{contraints_arev_airrev_02}
\end{eqnarray}
Unfortunately, such a decomposition suffers a severe ambiguity.
Indeed the constraints~\eqref{contraints_arev_airrev_01}-\eqref{contraints_arev_airrev_02} are not sufficient to uniquely determine $\barev$ and $\bairrev$ since, starting from adequate candidates $(\barev,\bairrev)$ to~\eqref{drift_decomp_rev_irrev}-\eqref{contraints_arev_airrev_02}, and adding $\bD$ times the differential $\mbd\mcF$ of a non-constant functional $\mcF[\bphi]$ to the first and subtracting it from the second, \ie $(\wt{\barev}, \wt{\bairrev})\equiv (\barev +\bD\mbd\mcF, \bairrev-\bD\mbd\mcF)$, yields another distinct, and equally-valid, pair of candidates.

To remedy this issue,  one can choose $\barev$ and $\bairrev$ to be respectively the symmetric and antisymmetric part of $\ba$ under time-reversal, which we here denote by ${}^S\ba$ and ${}^A\ba$. Once again, the conventions introduced in section~\ref{sec:framework} allow us to straightforwardly generalize the known expressions of ${}^{S,A}\ba$ from finite dimension (see for instance our paper I~\cite{o2024geometric}) to field theory, yielding
\begin{eqnarray}
\barev\equiv {}^S\ba &=& \bD\mbd\ln \Pss \ , \label{arev_TRS-sym}\\
\bairrev\equiv{}^A\ba &=& \ba-\bD\mbd\ln \Pss \ , \label{airrev_TRS-antisym}
\end{eqnarray}
where $\Pss$ is the stationary probability density of dynamics~\eqref{EDPS02}. Clearly, this choice~\eqref{arev_TRS-sym}-\eqref{airrev_TRS-antisym} provide a unique decomposition~\eqref{drift_decomp_rev_irrev} while satisfying the necessary conditions~\eqref{contraints_arev_airrev_01}-\eqref{contraints_arev_airrev_02}. The problem now is that, since the explicit expression of $\Pss$ is generically unknown, those of $\barev$ and $\bairrev$ are also unknown, thus making the choice~\eqref{arev_TRS-sym}-\eqref{airrev_TRS-antisym} generally useless for practical purposes.

Another possibility to lift the degeneracy left by the constraints~\eqref{contraints_arev_airrev_01}-\eqref{contraints_arev_airrev_02}, is to perform a functional Hodge decomposition of the one form $\bD^{-1}\ba$. Since $\mbF$ is assumed to be simply connected, this decomposition should\footnote{As far as we now, this remains unproven for functional spaces.} read $\bD^{-1}\ba=\balpha^{\rm ex} +\balpha^{\rm coex}$,
where $\balpha^{\rm ex}$ and $\balpha^{\rm coex}$ are exact and co-exact one-forms\footnote{This means that there exist a functional $\mcF$ and a functional antisymmetric contravariant tensor of order two $A^{i\br j\br'}$ such that $\balpha^{\rm ex} = \mbd \mcF$ and $\alpha^{\rm coex}_{i\br}=-\frac{\delta A^{i\br j\br'}}{\delta \phi^{j\br'}}$ (the latter formula being valid in orthonormal coordinate systems in both $\mbR^{d_1}$ and $\mbR^{d_2}$).}, respectively. The resulting decomposition~\eqref{drift_decomp_rev_irrev} is such that
\begin{eqnarray}
\barev &\equiv & \bD\balpha^{\rm ex} \ ,\\
\bairrev &\equiv &  \bD\balpha^{\rm coex} \ ,
\end{eqnarray}
and again satisfies the necessary conditions~\eqref{contraints_arev_airrev_01}-\eqref{contraints_arev_airrev_02}. Unfortunately, we are not aware of any explicit formulas to perform such a decomposition in the context of functional one-forms. Thus this alternative to~\eqref{arev_TRS-sym}-\eqref{airrev_TRS-antisym} seems to suffer a similar intractability issue.

In this article, we circumvent this ambiguity of the decomposition~\eqref{drift_decomp_rev_irrev} as follows. 
To get an insight into the reversible part of the dynamics, we suggest to choose parameters in a given dynamics~\eqref{EDPS02} in such a way that $\mbd\bD^{-1}\ba=0$. For instance for AMB, this would amount to choosing $\lambda$ and $\kappa$ such that $2\lambda+\kappa'=0$. In general, the resulting equilibrium limit will be not unique, but if the hydrodynamics under study is simple enough, we do not expect the corresponding phenomenologies to differ too much from one equilibrium limit to the other. In practice, we will not systematically use this proxy of the ``reversible part'' of the dynamics, but rather focus on getting an insight into the ``irreversible part'' of the phenomenology.
We will do this by studying the vorticity dynamics~\eqref{Vortex_dynamics}. The right-hand side of eq.~\eqref{Vortex_dynamics} is 
a ``linear approximation of the irreversible drift of~\eqref{EDPS02}'' in the sense that it is not the full linearisation of the drift,
\begin{equation}
\frac{\delta \ba}{\delta\bphi}\delta \bphi = \bW \delta\bphi + \left[\frac{\delta \ba}{\delta\bphi}-\bW \right]\delta \bphi \ ,
\end{equation}
but rather only the part of it that is responsible for TRS breakdown: the anti-symmetric part of the covariant expansion of $\ba$ for the metric $\bD^{-1}$, $\bW\delta\bphi$.
The advantages of this approach are that it is often amenable to analytical calculation (as we will see later in section~\ref{sec:basis}) and that it does not suffer the ambiguity left by~\eqref{contraints_arev_airrev_01}-\eqref{contraints_arev_airrev_02}.
Indeed, using~\eqref{Omega_vs_whomega}, \eqref{def_affinity_operator} and~\eqref{airrev_TRS-antisym}, for any $\bairrev$ satisfying~\eqref{drift_decomp_rev_irrev}-\eqref{contraints_arev_airrev_02}, we have   
\begin{eqnarray*}
\bW\delta\bphi &=& -\frac{1}{2}\bD \whomega \delta\bphi \\
&=& -\frac{1}{2}\bD\mbd[\bD^{-1}\ba](\cdot,\delta\bphi) \\
&=& -\frac{1}{2}\bD\mbd[\bD^{-1}\bairrev](\cdot,\delta\bphi) \ ,
\end{eqnarray*}
where, for any functional two-form $\bgamma$ and vector field $\delta\bphi$, we denote by $\bgamma(\cdot,\delta\bphi)$ the one-form $\delta\bphi_0\mapsto \bgamma(\delta\bphi_0,\bphi)$.
Considering the vorticity dynamics~\eqref{Vortex_dynamics} lifts the ambiguity discussed above because the set of gauge choices authorized by the requirements~\eqref{drift_decomp_rev_irrev}-\eqref{contraints_arev_airrev_02} is precisely that of exact one-forms and thus coincides with the kernel of the exterior derivative: $\mbd[\bD^{-1}(\bairrev-\bD\mbd\mcF)]=\mbd\bD^{-1}\bairrev$.

A delicate question that arises in our linear approach is the choice of the base-point $\bphi$ in eq.~\eqref{Vortex_dynamics}.
In the spirit of the approach advocated below eq.~\eqref{drift_decomp_rev_irrev}, one could choose $\bphi$ as the average profile of the equilibrium limit. 
Another possibility, if the stationary probability of the full dynamics~\eqref{EDPS02} is concentrated in the vicinity of a profile $\bphi_{\rm ss}$, is to choose the latter as the base-point in eq.~\eqref{Vortex_dynamics}. In this case indeed, dynamics~\eqref{EDPS02} in steady-state fluctuates in the vicinity of $\bphi_{\rm ss}$ and the linear approx~\eqref{Vortex_dynamics} with base-point $\bphi_{\rm ss}$ describes the irreversible behavior of these fluctuations. The effectiveness of our approach in such a case is exemplified below using AMB.

Nevertheless, as we will see in section~\ref{subsec:basisPhenomeno}, our approach is relevant beyond this peculiar situation where the stationary measure is concentrated in the vicinity of a certain profile. 
In this article in particular, the choice of profile $\bphi$ in the vicinity of which dynamics~\eqref{Vortex_dynamics} is analyzed, will not be prominent, as we will study the generic dynamics~\eqref{EDPS02} in two limit-regimes (described below) in which the exact structure of $\bphi$ is not crucial, but only allows for a more precise analysis.
%Loosely speaking, we expect eq.~\eqref{} to be a relevant proxy for the irreversible phenomenology in the limit of validity of linear approximation.

\paragraph{Connection of $\bW$ with the Stratonovitch-averaged dynamics and instantons.}

Directly generalizing a finite-dimensional result~\cite{o2024geometric}, the operator $\bW$ interestingly plays a similar role for the Stratonovitch-averaged dynamics of~\eqref{EDPS02}, which reads $\partial_t\bphi = \bvss[\bphi]$, with $\bvss\equiv \bJss/\Pss=\ba-\bD\mbd\ln\Pss={}^A\ba$, \ie it is the $\bD^{-1}$-antisymmetric part of the covariant linear expansion of $\bvss$.

Besides, it worth noting that the vorticity operator $\bW$ generates the irreversible component of the instanton dynamics associated with~\eqref{EDPS02}.
Indeed, the path-probability density of dynamics~\eqref{EDPS02}, conditioned to a given initial state $\bphi_0$, satisfies
\begin{equation}
\mcP[(\bphi_t)_{t\in\mbT}] \propto e^{-S[(\bphi_t)_{t\in\mbT}]}
\end{equation}
There exist several approaches to build a covariant action $\mcS$ (see~\cite{de2023path} and references therein).
Again generalizing results from finite dimension, the approaches of DeWitt~\cite{dewitt1957dynamical} and Graham~\cite{graham1977path} in particular give actions that read
\begin{equation}
S[(\bphi_t)_{t\in\mbT}] = \int_0^{\mcT} \left[ \frac{1}{4}D_{i\br j\br'} (\dot{\phi}^{i\br} - a^{i\br}_{\blambda})(\dot{\phi}^{j\br'} - a^{j\br'}_{\blambda}) + \frac{1}{2}\DDiv_{\blambda}(\ba_{\blambda}) + \frac{c}{2}R \right] \ ,
\end{equation}
where the dot stands for partial differentiation with respect to time, $c=1/3$ and $1/6$ in the DeWitt and Graham approaches respectively, $\blambda$ is now the volume measure on $\mbF$ associated with the Riemannian metric $\bD^{-1}$, and $\DDiv_{\blambda}=\det(\bD)\frac{\delta}{\delta\phi^{i\br}} \det(\bD^{-1})$ the corresponding divergence operator. Extremizing the action then leads to the instanton dynamics:
\begin{equation}
\ddot{\phi}^{i_1\br_1} + \Gamma^{i_1\br_1}_{i_2\br_2 i_3\br_3}\dot{\phi}^{i_2\br_2}\dot{\phi}^{i_3\br_3} = D^{i_1\br_1 i_2\br_2} \frac{\delta V}{\delta\phi^{i_2\br_2}} + 2W^{i_1\br_1}_{i_2\br_2}\dot{\phi}^{i_2\br_2} \ ,
\label{instanton_dynamics}
\end{equation}
where the $\Gamma^{i_1\br_1}_{i_2\br_2 i_3\br_3}$ are the Christoffel symbols associated to the $\bD^{-1}$-Levi-Civita connection on $\mbF$, and the functional $V[\bphi]$ reads
\begin{equation}
V[\bphi]\equiv \frac{1}{2}\ba_{\blambda}\cdot\bD^{-1}\ba_{\blambda} + \DDiv_{\blambda}(\ba_{\blambda}) + cR \ .
\end{equation}
We see that reversing the time variable in the instanton eq.~\eqref{instanton_dynamics} flips the sign of the last term on the right-hand side. In other word, this last term, which is generated by twice the vorticity operator $\bW$, is the deterministically time-antisymmetric part of the instanton dynamics~\eqref{instanton_dynamics} associated with dynamics~\eqref{EDPS02}. In eq.~\eqref{instanton_dynamics}, $2\bW$ plays the role of a (cross product by a) magnetic field whose infinite-dimensional vector potential is $\bD^{-1}\ba$ (for the geometry given by $\bD^{-1}$). Interestingly, the irreversible part of the instanton dynamics, is not approximated at the linear level by (twice) the flow of $\bW$, but exactly coincides with it.

\if{
\MyPink{
From eq.~\eqref{Omega_vs_whomega} and the reversibility of $\bD$, we conclude that dynamics~\eqref{EDPS02} is reversible iff $\bW=0$. This fact, together with the relation of $\bW$ to the covariant expansion~\eqref{Taylor_covariant} of $\ba$, suggests to interpret the vorticity operator as the irreversible part of dynamics~\eqref{EDPS02} at the linear level and, consequently, to use the linear dynamics
\begin{equation}
\partial_t \delta\bphi = \bW_{\bphi}\delta\bphi 
\label{Vortex_dynamics}
\end{equation}
as a proxy to grasp the purely irreversible behavior of dynamics~\eqref{EDPS02} in the vicinity of the state $\bphi$. 
We emphasize that eq.~\eqref{Vortex_dynamics} is the linear approximation\footnote{To draw a parallel with a more intuitive situation, let us consider a finite-dimensional dynamics in $\mbR^3$ with $\bD$ being the identity matrix. The Helmholtz-Hodge decomposition of the vector field $\ba$ that drives the dynamics reads $\ba=\ba_g+\ba_c$, where $\ba_g$ and $\ba_c$ are pure gradient and curl, respectively. The dynamics would be reversible if $\ba_c$ was identically zero. Otherwise, the effect of this nonequilibrium term on the dynamics can be depicted, in the vicinity of a point $\br_0$, by the flow of the linear vector field $[\nabla\times\ba]|_{\br_0}\times (\br-\br_0)=[\nabla\times\ba_c]|_{\br_0}\times (\br-\br_0)$.} of the irreversible part of the full non-linear dynamics~\eqref{EDPS02}, and not the irreversible part of the linearized dynamics.
%Note that dynamics~\eqref{Vortex_dynamics} is coherently not symmetric under time inversion. 
%Let us elaborate of what we mean by ``grasping the irreversible behavior of~\qref{EDPS02}''. 
Thus, at least in cases where the steady-state solution of the full dynamics~\eqref{EDPS02} is concentrated in the vicinity of a certain profile $\bphi_{\rm ss}$ -- as \eg in AMB --, one can expect the phenomenology of~\eqref{EDPS02} to be heuristically
that of its reversible limit $\bW=0$ (which may be analyzed by standard techniques of equilibrium systems) on top of which fluctuations will roughly follow the ``vortex dynamics''~\eqref{Vortex_dynamics}, the irreversibility of the dynamics being apparent in this last part -- which we note is by itself an irreversible dynamics. 
}

\Magenta{
Modify according to discussion with Mike. In particular, add the relation with the instanton dynamics, whose (deterministic) irreversible term is $\propto\bW \dot{\bphi}$.
}

Besides, let us also mention that, directly generalizing a finite-dimensional result~\cite{o2024geometric}, $\bW$ plays a similar role for Stratonovitch average dynamics of~\eqref{EDPS02}, which reads $\partial_t\bphi = \bvss[\bphi]$, with $\bvss\equiv \bJss/\Pss=\ba-\bD\mbd\ln\Pss$, \ie it is the $\bD^{-1}$-antisymmetric part of the covariant linear expansion of $\bvss$.
%\Magenta{
%Also mention the Stratonovitch average dynamics to justify looking at~\eqref{Vortex_dynamics} ? Something like : it is the flow of the antisymmetric part of the drift $\ba$ under time-reversal. We don't know its explicit form in general. But what we do know is that is vanishes iff the dynamics is TRS iff $\bD^{-1}\ba$ is exact. Hence, it is once again the curl part (in the geometry of $\bD$) of the dynamics that matters. The latter is approximated locally by~\eqref{Vortex_dynamics}.
%}

Below we exemplify using AMB the effectiveness using~\eqref{Vortex_dynamics} as a proxy for the irreversible phenomenology of dynamics~\eqref{EDPS02}. Note that the steady state of AMB is indeed concentrated in the vicinity of a given profile (either uniform or phase-separated depending on the parameter values). 
In such cases, it is not so surprising that our linear approximation~\eqref{Vortex_dynamics} gives a good approximation of the manifestation of TRS breakdown on the system. However, more surprisingly, we will see in section~\ref{subsec:basisPhenomeno} that this approach also provides valuable insights when the steady state of the system is \MyPink{not anymore localized around a fixed profile} \Magenta{[Mike suggests exemplifying using KPZ too.]}.

% a superposition of its reversible limit $\bW=0$ and of the solution of the ``vortex dynamics''~\eqref{Vortex_dynamics}, the irreversible part of the dynamics being described by this last part -- which we note is by itself an irreversible dynamics. 

%Although $\bW_{\bphi}$ entirely encapsulates the local (in $\mbF$) irreversible behavior of the dynamics when $\ba[\bphi]=0$, we will see that this remains qualitatively true even when $\ba[\bphi]\neq 0$.
}\fi

\paragraph{Two limit-regimes amenable to analytical predictions.}

Before turning to the example of AMB, let us first discuss the generic dynamics~\eqref{Vortex_dynamics}. 
We here only consider the case where $\bW_{\bphi}$ is a \textit{local differential operator} for all $\bphi$, in which case it generically reads
%\begin{equation}
%\bW = \sum_{i_1,\dots ,i_{d_1}} \bW^{(i_1,\dots ,i_{d_1})}(\br,[\bphi]) \frac{\partial^{i_1+\dots +i_{d_1}}}{\partial r_1^{i_1}\dots \partial r_{d_1}^{i_{d_1}}} \ ,
%\end{equation}
\begin{equation}
\bW_{\bphi} = \sum_K \bW^K(\br,[\bphi]) \frac{\partial^{|K|}}{\partial r^K} \ ,
\label{generic_vorticity_op}
\end{equation}
where $K=(k_1,\dots,k_{d_1})$ is a multi-index of natural numbers, with the conventions
% $|K|\equiv k_1+\dots +k_{d_1}$,
 $\sum_K\equiv \sum_{k_1,\dots,k_{d_1}}$ (each sum being finite) and 
\begin{equation}
\frac{\partial^{|K|}}{\partial r^K}\equiv \frac{\partial^{k_1+\dots +k_{d_1}}}{\partial r_1^{k_1}\dots \partial r_{d_1}^{k_{d_1}}} \ .
\end{equation}
In eq.~\eqref{generic_vorticity_op}, $\bW^K$ are $d_2\times d_2$ matrices that, for a given $\bphi$, vary with $\br$ on a typical spatial scale $\ell_{W}$, which is that of typical variations in the field $\bphi$ itself, so long as the original dynamics~\eqref{EDPS02} is translation-invariant.
Dynamics~\eqref{Vortex_dynamics} has no reason to be solvable explicitly in general. However, there are two interesting limit regimes where the solution of~\eqref{Vortex_dynamics} can be approximated.
The first one corresponds to the case where the chosen perturbation $\delta\bphi$ varies on a scale $\ell_{\delta\phi}$ that is much larger than $\ell_W$, \ie $\ell_{\delta\phi} \gg \ell_W$. In this case, the dominant term in $\bW_{\bphi} \delta\bphi$ is the zero-order one and the dynamics can be approximated by $\partial_t\delta\bphi(\br,t)\simeq \bW^{(0,\dots ,0)}(\br,[\bphi])\delta\bphi(\br,t)$ and hence explicitly solved. Note that, as soon as $\bW^{(0,\dots ,0)}$ is not purely antisymmetric, then $\delta\bphi$ will tend to increase on some spatial domains and decrease on others in such a way that, after some time, $\ell_{\delta\phi}\sim\ell_W$ and the approximation will no longer be valid.
On the other hand, if $\delta\bphi$ is now such that $\ell_{\delta\phi} \ll \ell_W$, the dominant terms in $\bW_{\bphi}\delta\bphi$ are those of maximal order in derivation, and the corresponding matrices $\bW^{(k_1,\dots ,k_{d_1})}$ can be considered piece-wise constant on the spatial domain. Consequently, on each of these subdomains, the resulting approximation of $\bW$ is independent of $\br$ and thus amenable to Fourier analysis.
The existence of these two limit regimes -- that correspond to diagonal approximations of~\eqref{Vortex_dynamics} in real and Fourier space, respectively -- will be exploited several times in the rest of this article.

\paragraph{Vorticity operator and phenomenology: the example of AMB.} 
In~\cite{o2023nonequilibrium}, the vorticity two-form of AMB was used to unveil the leading-order (in weak noise amplitude) phenomenon that breaks TRS in the phase separated state.
% -- dynamics~\eqref{AMB} breaking TRS whenever $2\lambda+\kappa'\neq 0$, as we have seen in section~\ref{subsec:RevCondFuncExtDer}. 
 The latter corresponds to the permanent excitation of anisotropic modes propagating at the liquid-gas interface, either from the liquid to the gas, or the other way around, depending on the sign of $2\lambda+\kappa'$. This was shown in~\cite{o2023nonequilibrium} through the intermediate step of the analysis, \via $\bomega$, of the structure of the stationary probability current.

Interestingly, this phenomenology can also be uncovered -- and even specified -- by the dynamics~\eqref{Vortex_dynamics} generated by the vorticity operator. In the case of the AMB dynamics~\eqref{AMB}-\eqref{AMB_chemical_potential}, using eqs.~\eqref{cycleAffAMB}, \eqref{def_affinity_operator}, and~\eqref{Omega_vs_whomega}, the latter is readily shown to act as (see appendix~\ref{app:AMB_vorticity} for the detailed derivation):
\begin{equation}
\bW_\rho\delta\rho =\frac{1}{2}M \Delta \left\{ (2\lambda+\kappa')\nabla\rho \cdot \nabla\delta\rho + \nabla\cdot \left[\delta\rho (2\lambda+\kappa')\nabla\rho \right]\right\} \ ,
\label{full_vorticity_op_AMB}
\end{equation}
where $\delta\rho$ is a perturbation around a given profile $\rho$.
In the vicinity of the average phase-separated profile $\rhoss$, we start by focusing on the second of the two limit-regimes mentioned above, \ie we consider a perturbation $\delta\rho$ that varies on much smaller spatial scales than $\rhoss$ does. The action of $\bW_{\rhoss}$ on $\delta\rho$ can then be approximated as:
\begin{equation}
\bW_{\rhoss}\delta\rho \simeq M (2\lambda+\kappa')\nabla\rhoss \cdot \nabla\Delta\delta\rho \ .
\label{approx_vortex_dyn_AMB}
\end{equation}
First, we see that, at least at this linear level, the effect of activity is localized at the liquid-gas interface, where $\nabla\rhoss\neq 0$, as reported in~\cite{nardini2017entropy,o2023nonequilibrium}.
Further, eqs.~\eqref{Vortex_dynamics} and~\eqref{approx_vortex_dyn_AMB} recover the aforementioned propagating-wave phenomenology uncovered in~\cite{o2023nonequilibrium}. This is most readily seen in dimension $d_1=1$ by injecting a harmonic perturbation $\delta\rho\propto e^{i(wt-kx)}$ in eqs.~\eqref{Vortex_dynamics}-\eqref{approx_vortex_dyn_AMB}, leading to the dispersion relation:
\begin{equation}
w = M (2\lambda+\kappa')(\partial_x\rhoss) k^3 \ .
\label{AMB_irrev_dispersion_rel}
\end{equation}
From eq.~\eqref{AMB_irrev_dispersion_rel}, we indeed recover the fact that waves propagate from the gas to the liquid if $2\lambda+\kappa'>0$, and the other way around if $2\lambda+\kappa'<0$, as predicted and measured in~\cite{o2023nonequilibrium}. But it is worth noting that this method -- of analyzing the leading-order irreversible phenomenon through the vorticity operator --
goes further than the one used in~\cite{o2023nonequilibrium} as it also predicts the dispersion relation~\eqref{AMB_irrev_dispersion_rel} of these waves.
%\footnote{Though the real dispersion relation should be measured to confirm the prediction~\eqref{AMB_irrev_dispersion_rel}, as the connection between the latter and the full AMB dynamics is mainly derived heuristically.}.

Let us now examine the opposite limit-regime, \ie when $\delta\rho$ varies on much larger scales than $\rhoss$ does. Assuming $2\lambda+\kappa'$ constant for simplicity, the linear dynamics~\eqref{Vortex_dynamics} becomes:
\begin{equation}
\partial_t\delta\rho= \delta\rho M(2\lambda+\kappa') \Delta^2\rhoss \ ,
\label{real_space_diagonal_limit_equation}
\end{equation}
the solution of which reads
\begin{equation}
\delta\rho(x,t)=\delta\rho(0,x)\exp\left[M(2\lambda+\kappa') \Delta^2\rhoss(x) \right] \ .
\label{real_space_diagonal_limit_solution}
\end{equation}
Consequently, if $2\lambda+\kappa'>0$ for instance, $\delta\rho$ will tend to accumulate on the liquid side of the interface and be depleted on the gaseous side, thus ``compressing'' the liquid droplet (see fig.~\ref{fig:steadyState_and_derivatives}). Conversely, if $2\lambda+\kappa'<0$, the liquid droplet will tend to spread. 
Of course, after some time, the typical spatial scale of $\delta\rho$ will no longer be much greater than that of $\rhoss$ and equation~\eqref{real_space_diagonal_limit_equation} will cease to be a correct approximation, hence preventing $\delta\rho$ from blowing up at long times. 
Moreover, in the full dynamics~\eqref{AMB}-\eqref{AMB_chemical_potential}, the counterpart of the term of the right-hand side of eq.~\eqref{real_space_diagonal_limit_equation}, which pumps mass from or towards the droplet, will be counterbalanced by a reversible term that stabilizes the stationary profile.
These predictions of the influence of activity on the droplet size are in accordance with the binodal shift, caused by activity, in AMB~\cite{o2023nonequilibrium,wittkowski2014scalar}.

%\Magenta{
%Note that in the active model B the state of the field is localized in the vicinity of a certain profile $\rhoss$ in steady state. In this case, it is not so surprising that our linear approximation~\eqref{Vortex_dynamics} gives a good approximation of the effect of activity on the system. However, perhaps more surprisingly, we will see in section~\ref{subsec:basisPhenomeno} that this approach also gives valuable insight when the steady state of the system is not localized around a fixed profile anymore.
%}

We conclude this section by emphasizing that, although we used a linear equation to conduct our analysis, the naive linearization of dynamics~\eqref{AMB}-\eqref{AMB_chemical_potential} alone (in which reversible and irreversible ingredients are entangled) couldn't have been directly used to predict the leading order irreversible phenomenon around a phase-separated profile as we just did.

\begin{figure}[h!]
\begin{center}
\includegraphics[scale=0.5]{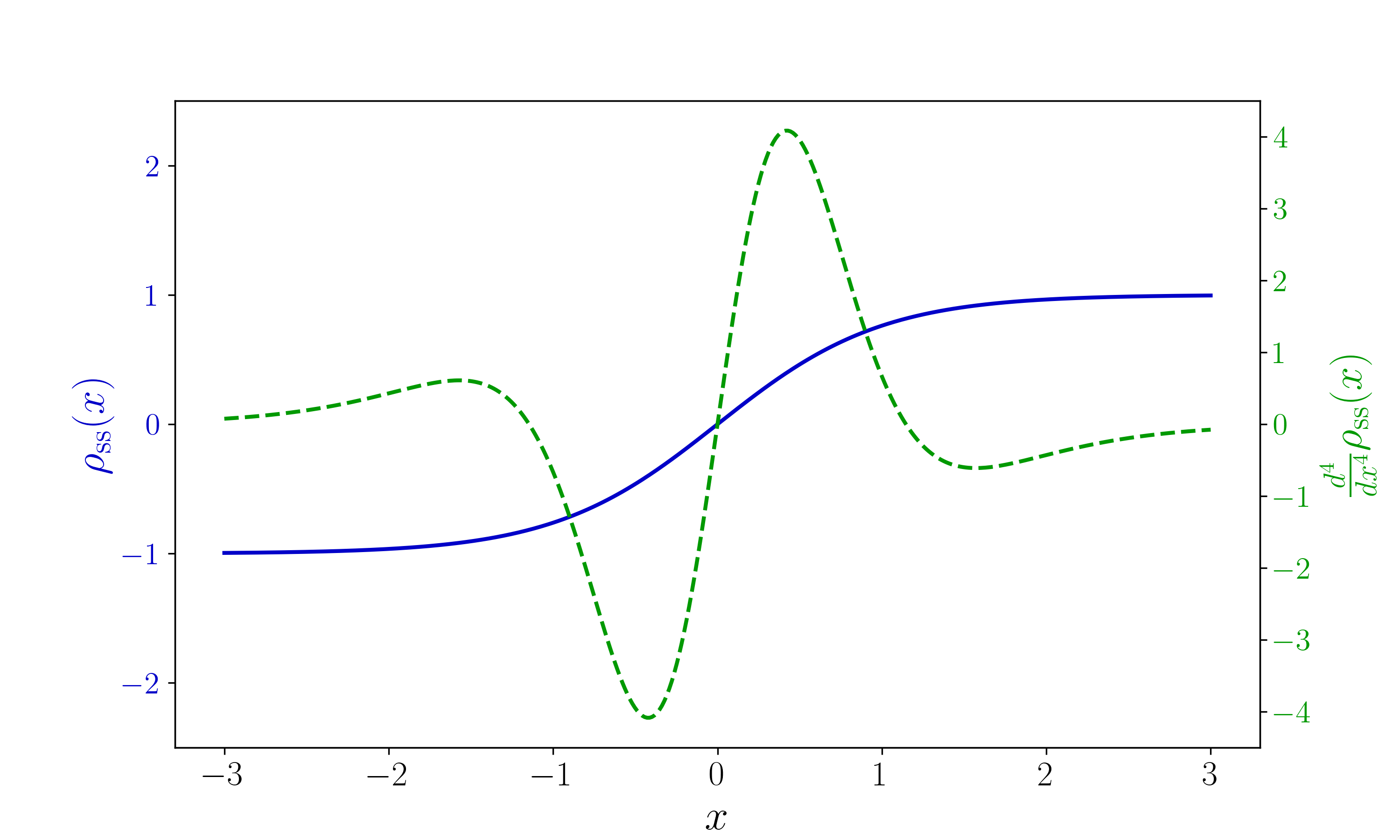}
\caption{\label{fig:steadyState_and_derivatives} 
A schematic picture of the left boundary of the phase-separated, steady-state, average profile $\rhoss$ of AMB and its derivative of order four. 
We see that the fourth derivative $\frac{d^4}{dx^4}\rhoss$, which coincides with $\Delta^2\rhoss$ is $d_1=1$ dimension, displays a positive (respectively negative) peak on the liquid (respectively gaseous) side of the interface. This justifies the qualitative interpretation of eq.~\eqref{real_space_diagonal_limit_solution} above. As this is only for illustrative purpose we took the function $\tanh(x)$ instead of a real estimate of $\rhoss(x)$ (the two curves in fact becoming identical in the limit $2\lambda+\kappa' \to 0$).
%the two curves being qualitatively similar (and the $\rhoss$ of the noiseless model B being a rescaled $\tanh$).
}
\end{center}
\end{figure}

\newpage

\section{From explicit reversibility condition(s) to classification of nonequilibrium field theories}
\label{sec:basis}

As announced in the introduction, the integrability condition~\eqref{intro_Schwarz} for a functional one-form $\bzeta(\br,[\bphi])$, or even its more geometric version $\mbd \bzeta=0$, might not be straightforward to use when looking for the concrete reversibility condition(s) of a given field dynamics. 
This issue was not visible on the AMB example reported in section~\ref{subsec:RevCondFuncExtDer} because we directly gave a convenient expression~\eqref{cycleAffAMB} of its vorticity two-form $\bomega\equiv \mbd \bD^{-1}\ba$. Indeed, the explicit calculation of $\bomega$, detailed in~\cite{o2023nonequilibrium}, first leads to the expression
\begin{equation}
\bomega = -\int \rmd \br \, 2\lambda \nabla\rho \cdot\delta^\br \wedge  \nabla\delta^\br - \int \rmd \br \, \kappa \delta^\br\wedge \Delta\delta^\br \ .
\label{problematic_2form}
\end{equation}
At this stage, it would be wrong to conclude that, for $\bomega$ to vanish, $\lambda(\rho)$ and $\kappa(\rho)$ both need to be identically zero.
This is because the elementary two-forms $\delta^\br \wedge \nabla \delta^\br$ and $\delta^\br \wedge \Delta\delta^\br$ are not \textit{independent}\footnote{What we here mean by ``independent'' will be made more precise in section~\ref{subsec:basisOf2forms}.} as they differ by a total derivative. Indeed, for an arbitrary pair of fluctuations $\delta\rho_1,\delta\rho_2$, 
\begin{eqnarray*}
\delta^\br \wedge \Delta\delta^\br(\delta\rho_1,\delta\rho_2)&\equiv & \delta\rho_1^\br\Delta\delta\rho_2^\br - \delta\rho_2^\br\Delta\delta\rho_1^\br \\
&=& \nabla\cdot \left[ \delta\rho_1^\br\nabla\delta\rho_2^\br - \delta\rho_2^\br\nabla\delta\rho_1^\br \right] \\
&=& -\nabla\cdot \left[ \delta^\br\wedge\nabla\delta^\br(\delta\rho_1,\delta\rho_2)\right] \ ,
\end{eqnarray*}
where the minus sign in the last expression comes from the sign convention of the gradient of the Dirac delta: $\nabla\delta^\br(\delta\rho)=-\nabla\delta\rho^\br$. This relation, that we may summarize as 
\begin{equation}
\delta^\br \wedge \Delta\delta^\br = -\nabla\cdot [\delta^\br \wedge \nabla\delta^\br] \ ,
\label{relation_a_generaliser}
\end{equation}
can then be injected into eq.~\eqref{problematic_2form}. After integrating by parts the factorized gradient in the resulting equation, one gets expression~\eqref{cycleAffAMB} of the vorticity two form, from which one can easily show that $\bomega=0$ (\ie $\bomega_\rho=0$ for all $\rho$) iff $2\lambda+\kappa'=0$, which is the correct integrability condition for $D^{-1}a=-\mu$ and hence the right reversibility condition for Active Model B.

The example of AMB shows that, at least for a field dynamics~\eqref{EDPS02} whose one-form $\bzeta\equiv \bD^{-1}\ba$ is expressed as a gradient expansion in $\bphi$, one needs to express the vorticity $\bomega$ as a sum of integrals against elementary two-forms that are independent -- in the appropriate sense -- in order to determine the concrete reversibility condition(s).
To be even more systematic, we would ideally like to determine a \textit{basis} of 2-forms, \ie a family of independent elementary two-forms on which $\mbd\bzeta$ can always be developed, whatever $\bzeta(\br,[\bphi])$.
In section~\ref{subsec:basisOf2forms}, we completely solve this problem in $d_1=1$ spatial dimension for the case of a one-form $\bzeta(x,[\bphi])$ that is local in $\bphi$.
We then partially extend in~\ref{subsec:higher_dimension_d1} our result to higher dimensions by determining a free family of two-forms. The latter is incomplete in the sense that it does not span the entire space of vorticity two-forms of arbitrary field dynamics. However this family turns out to be sufficient in many practical cases.

To shed light on our notion of a basis of two-forms, we translate it in the more common language of differential operators in section~\ref{subsec:skewsym_operators}. More precisely, we use the notion of the affinity operator~\eqref{def_affinity_operator} introduced in section~\ref{subsec:vorticityOperator} to show that this basis of functional 2-forms is in one-to-one correspondence with a basis of antisymmetric operators.

To get an insight into the phenomenology associated to each basis element, we then analyse in section~\ref{subsec:basisPhenomeno} the linear dynamics~\eqref{Vortex_dynamics} generated by each respective vorticity operator.
This approach will show that these elementary two-forms can be classified into three subfamilies whose respective members have similar phenomenologies.
For each of these subfamilies, we also give concrete examples whose known phenomenology are in good agreement with our analysis based on the 
flow of their vorticity operator as captured by~\eqref{Vortex_dynamics}.

Finally, in section~\ref{subsec:discussion_extension}, we summarize our results, discuss their limitations, and elaborate on their possible extension to more general situations.

\newpage

\subsection{A basis of 2-forms in one spatial dimension}
\label{subsec:basisOf2forms}

\paragraph{Local one-forms and two-forms.}
At the end of section~\ref{sec:framework}, we defined a notion of locality for a map $\mu(\br,[\bphi])$ that is both a function of $\br$ and a functional of $\bphi$. We now extend this notion for one- and two- forms.
From now on, a functional one-form $\bzeta_{\bphi}=\zeta_{i\br}[\bphi]\delta^{i\br}$ is said to be a \textit{local one-form} when its coordinate function(al)s $\zeta_{i\br}[\bphi]$ are local maps. We denote by $\Omega^1_{\rm loc}(\mbF)$ the space of local one-forms over $\mbF$.
We define the notion of local two-forms in a slightly different way: a two-form $\bgamma_{\bphi}=\gamma_{i\br j\br'}[\bphi] \delta^{i\br}\wedge\delta^{j\br'}$ is a \textit{local two-form} when it can be written as 
\begin{equation}
\bgamma=\sum_K\gamma^{(K)}_{ij\br}[\bphi]\delta^{i\br}\wedge \frac{\partial^{|K|}}{\partial r^K}\delta^{j\br} \ ,
\label{generic_local_two-form}
\end{equation} 
where the $\gamma^{(K)}_{ij\br}[\bphi]$ depend smoothly on $\bphi$ and $\br$, but do not have to be local as maps\footnote{We could alternatively require the $\gamma^{(K)}_{ij\br}[\bphi]$ to be local as maps. All the results presented in this article would still be valid. But  as this constraint is not mandatory for our study, we lift it from the definition of local two-forms to make our results slightly more general.} of the variables $\br$ and $\bphi$. 
In other words, $\bgamma$ is local two-form whenever $\bgamma_{\bphi}(\delta\bphi_1,\delta\bphi_2)$ is an integral over $\br$ of a function that depends only on the value of each $\delta\bphi_i$ and its derivatives (up to a finite order) at $\br$, whatever the functional dependence of $\bgamma_{\bphi}$ with respect to $\bphi$. 
In eq.~\eqref{generic_local_two-form}, we use the same notations as in eq.~\eqref{generic_vorticity_op} and we implicitly sum over $i,j$ and integrate over $\br$ (we integrate only once, despite $\br$ appearing twice as an upper index).
Note that the relation between $\gamma_{i\br j\br'}$ and $\gamma^{(K)}_{ij\br}$ reads
\begin{equation}
\gamma_{i\br j\br'}=\sum_K\gamma^{(K)}_{ij\br}(-1)^{|K|}\frac{\partial^{|K|}}{\partial r'^K}\delta(\br-\br') \ .
\end{equation}
We denote by $\Omega^2_{\rm loc}(\mbF)$ the space of local two-forms over $\mbF$.

Before turning to the space of local exact two-forms $\mbd\Omega^1_{\rm loc}(\mbF)$, to which the vorticity $\bomega$ belongs when $\bD^{-1}\ba$ is local, we first study in detail the space of local two-forms $\Omega^2_{\rm loc}(\mbF)$, the latter space containing the former:
\begin{equation}
\mbd\Omega^1_{\rm loc}(\mbF)\subset \Omega^2_{\rm loc}(\mbF) \ .
\label{inclusion_exact_two-forms}
\end{equation}

\paragraph{A basis of the space of local two-forms $\Omega^2_{\rm loc}(\mbF)$.}
Until the end of section~\ref{subsec:basisOf2forms}, we focus on the case of $d_1=1$ spatial dimension, where a local two-form generically reads
\begin{equation}
\bgamma=\sum_{k=0}^q\gamma^{(k)}_{ij x}[\bphi]\delta^{i x}\wedge \partial^k\delta^{j x} \ ,
\label{generic_local_two-form_1d}
\end{equation} 
with the abbreviation $\partial^k\equiv\frac{d^{k}}{d x^k}$ is used from now on to lighten notations.
This expression of a generic local one-form $\bgamma$ suffers a redundancy issue as it involves, for instance, the elementary two-forms $\delta^{i x}\wedge \partial^{2}\delta^{i x}$, which can be ``decomposed'' along $\delta^{i x}\wedge \partial\delta^{i x}$ upon factorizing a spatial derivative~\eqref{relation_a_generaliser} and integrating by parts, as explained in the introduction of section~\ref{sec:basis}. 
We now solve this redundancy problem by determining a family of elementary two-forms, that are independent of each other in the appropriate sense, and on which any local two-form can be uniquely decomposed.
We do this by first showing in appendix~\ref{app:proof_Basis_form} that (the $d_1=1$ version of) relation~\eqref{relation_a_generaliser} generalizes as follows:
\begin{eqnarray}
\delta^i\wedge\partial^{2\ell}\delta^j + \delta^j\wedge\partial^{2\ell}\delta^i  &=& - \sum_{k=0}^{\ell-1}b_{2k+1}^{2\ell}\partial^{2\ell-2k-1}_x\left[ \delta^i\wedge\partial^{2k+1}\delta^j + \delta^j\wedge\partial^{2k+1}\delta^i \right]  \ ,
\label{core:decompo_elementary_2forms_plus} \\
\delta^i\wedge\partial^{2\ell+1}\delta^j - \delta^j\wedge\partial^{2\ell+1}\delta^i  &=& - \sum_{k=0}^{\ell}c_{2k}^{2\ell+1}\partial^{2\ell-2k+1}_x \left[ \delta^i\wedge\partial^{2k}\delta^j - \delta^j\wedge\partial^{2k}\delta^i \right] \ ,
\label{core:decompo_elementary_2forms_minus}
\end{eqnarray}
the coefficients $b^{2\ell}_{2k+1}$ and $c^{2\ell+1}_{2k}$ being given by eqs.~\eqref{expr_coeff_b}-\eqref{expr_coeff_a} of appendix~\ref{app:factorisation_operators}.
In turn, injecting these relations back into eq.~\eqref{generic_local_two-form_1d}, integrating by parts several times, and rearranging the terms (see appendix~\ref{app:proof_Basis_form} for details) allows us to prove that the generic local two-form $\bgamma$ can be re-written as
\begin{eqnarray}
\bgamma &=& \sum_{\ell=0}^{\lfloor q/2 \rfloor} \sum_{1\leq i< j \leq d_2} \int \rmd x \: \alpha_{ij}^{(\ell)}(x,[\bphi]) \left(\delta^{ix}\wedge \partial^{2\ell}\delta^{jx} - \delta^{jx}\wedge \partial^{2\ell}\delta^{ix} \right) \notag\\
& + &  \sum_{\ell=0}^{\lfloor (q-1)/2\rfloor} \Bigg\{ \sum_{1\leq i < j\leq d_2}\int \rmd x \: \beta_{ij}^{(\ell)}(x,[\bphi])\left( \delta^{ix}\wedge\partial^{2\ell+1}\delta^{jx} + \delta^{jx}\wedge\partial^{2\ell+1}\delta^{ix}\right) \label{eq:generalLoc2Form01} \\
& &   \quad \qquad + \sum_{i=1}^{d_2} \int \rmd x \: \beta_{ii}^{(\ell)}(x,[\bphi]) \delta^{ix}\wedge\partial^{2\ell+1}\delta^{ix} \Bigg\} \notag \ ,
\end{eqnarray}
where the ``coordinate functions'' $\alpha_{ij}^{(\ell)}(x,[\bphi])$ and $\beta_{ij}^{(\ell)}(x,[\bphi])$ respectively read
\begin{equation}
 \alpha_{ij}^{(\ell)}= {}^A\gamma^{(2\ell)}_{ij} + \sum_{k=\ell}^{\left \lfloor (q-1)/2 \right \rfloor}c^{2k+1}_{2\ell}\partial_x^{2k+1-2\ell} \left[{}^A\gamma_{ij}^{(2k+1)}\right]
\label{eq:expressionAlpha_two-form}
\end{equation}
and
\begin{equation}
\beta_{ij}^{(\ell)}= {}^S\gamma_{ij}^{(2\ell+1)}+\sum_{k=\ell+1}^{\left \lfloor q/2 \right \rfloor}b^{2k}_{2\ell+1}\partial_x^{2k-2\ell-1}\left[ {}^S\gamma_{ij}^{(2k)} \right] \ ,
 \label{eq:expressionBeta_two-form}
\end{equation}
the matrices ${}^{S/A}\bgamma^{(k)}$ being respectively the symmetric and antisymmetric parts of $\bgamma^{(k)}$, \ie ${}^{S/A}\gamma_{ij}^{(k)}\equiv (\gamma^{(k)}_{ij} \pm \gamma^{(k)}_{ji})/2$.
This result implies that the family of functional 2-forms given by
\begin{equation}
(\underbrace{[\delta^{ix}\wedge \partial^{2\ell}\delta^{jx} - \delta^{jx}\wedge \partial^{2\ell}\delta^{ix}]}_{\rm antisymmetric}, \underbrace{[\delta^{ix}\wedge\partial^{2\ell+1}\delta^{jx} + \delta^{jx}\wedge\partial^{2\ell+1}\delta^{ix}], \delta^{ix}\wedge\partial^{2\ell+1}\delta^{ix}}_{\rm symmetric})_{1 \leq i < j \leq d_2, \ell\in \mbN,x\in\mbR} \ 
\label{1dBasis2Forms}
\end{equation}
is \textit{generative} for the space $\Omega^2_{\rm loc}(\mbF)$, the ``linear combinations'' over this family consisting in  summing and integrating over its discrete and continuous indices, respectively. 
Under eq.~\eqref{1dBasis2Forms}, we attributed names to subfamilies whose meaning will be justified soon. 
%In what follows, the subspace generated by each subfamily inherits its name.

Importantly, in addition to its generative aspect, we further prove in appendix~\ref{subsec:free_family} that the family~\eqref{1dBasis2Forms} is also \textit{free} in the sense that
% $\mbd\bzeta=0$ if and only if all the  $(\alpha_{ij}^{(\ell)})_{1 \leq i < j \leq d_2, \ell\in \mbN}$ and $(\beta_{ij}^{(\ell)})_{1 \leq i \leq j \leq d_2, \ell\in \mbN}$ are identically zero.
\begin{equation}
\bgamma=0 \Leftrightarrow \left[(\alpha_{ij}^{(\ell)})_{1 \leq i < j \leq d_2, \ell\in \mbN}=0 \ \textrm{and} \  (\beta_{ij}^{(\ell)})_{1 \leq i \leq j \leq d_2, \ell\in \mbN}=0 \right]
\label{CNS_vanishing_loc_2form}
\end{equation}
 This family can thus be seen \textit{a basis} of the space of two-forms $\Omega^2_{\rm loc}$.
 In other words, in $d_1=1$ spatial dimension, a local two-form $\gamma\in\Omega^2_{\rm loc}$ can be uniquely decomposed over the family~\eqref{1dBasis2Forms}.

%Consequently, in $d_1=1$ spatial dimension, if the one-form $\bD^{-1}\ba$ associated to dynamics~\eqref{EDPS02} is local, then the corresponding vorticity 2-form $\bomega=\mbd\bD^{-1}\ba$ can be uniquely decomposed over the family~\eqref{1dBasis2Forms}.

Note that eqs.~\eqref{eq:expressionAlpha} and \eqref{eq:expressionBeta} obey the symmetries $\alpha_{ij}^{(\ell)}=-\alpha_{ji}^{(\ell)}$ and $\beta_{ij}^{(\ell)}=\beta_{ji}^{(\ell)}$, respectively. This suggests reformulating the rather lengthy expression~\eqref{eq:generalLoc2Form01} of $\bgamma$ in a more compact form, albeit at the price of making the decomposition over the basis~\eqref{1dBasis2Forms} less visible:
%
%\begin{eqnarray}
%\mbd\bzeta &=& \sum_{\ell=0}^{\lfloor q/2 \rfloor} \sum_{i, j=1}^{d_2} \int \rmd x \: \alpha_{ij}^{(\ell)}(x,[\bphi]) \delta^{ix}\wedge \partial^{2\ell}\delta^{jx}  \notag\\
%& & +  \sum_{\ell=0}^{\lfloor (q-1)/2\rfloor} \sum_{i, j=1}^{d_2}\int \rmd x \: \beta_{ij}^{(\ell)}(x,[\bphi])\delta^{ix}\wedge\partial^{2\ell+1}\delta^{jx} \ ,  \label{eq:generalLoc2Form02} 
%\end{eqnarray}
%where $\balpha^{(\ell)}=(\alpha_{ij}^{(\ell)})_{i,j=1\dots d_2}$ and $\bbeta^{(\ell)}=(\beta_{ij}^{(\ell)})_{i,j=1\dots d_2}$
%
\begin{eqnarray}
\bgamma = \sum_{\ell=0}^{\lfloor q/2 \rfloor} \int \rmd x \: \balpha^{(\ell)}(x,[\bphi]) : \bdelta^{x}\wedge \partial^{2\ell}\bdelta^{x}  \notag  +  \sum_{\ell=0}^{\lfloor (q-1)/2\rfloor} \int \rmd x \: \bbeta^{(\ell)}(x,[\bphi]): \bdelta^{x}\wedge\partial^{2\ell+1}\bdelta^{x} \ ,  \label{eq:generalLoc2Form02} 
\end{eqnarray}
where `$:$' represents the full tensor contraction between the antisymmetric matrix $\balpha^{(\ell)}\equiv(\alpha_{ij}^{(\ell)})_{i,j=1\dots d_2}$ (respectively the symmetric matrix $\bbeta^{(\ell)}\equiv(\beta_{ij}^{(\ell)})_{i,j=1\dots d_2}$) and the $\mbR^{d_2}\otimes\mbR^{d_2}$-valued 2-forms $[\bdelta^x\wedge \partial^k \bdelta^x]^{ij}\equiv \delta^{ix}\wedge \partial^k\delta^{jx}$. 
The symmetry properties of the matrices $\balpha$ and $\bbeta$ justifies the names given in~\eqref{1dBasis2Forms} to each of the two subfamilies: the space spanned by the antisymmetric (respectively symmetric) family of elementary 2-forms is obtained by contracting $\bdelta^{x}\wedge \partial^{2\ell}\bdelta^{x}$ (respectively $\bdelta^{x}\wedge \partial^{2\ell+1}\bdelta^{x}$) with antisymmetric matrices (respectively symmetric matrices).
%\MyPink{
%The distinction between the two-forms corresponding respectively to the diagonal and off-diagonal part of the latter matrices will turn out to be relevant from the phenomenological point of view, as we shall see in section~\ref{subsec:basisPhenomeno}.
%}

\paragraph{Coordinates of an exact local two-form.}

Let us now consider a local 1-form $\bzeta_{\bphi} = \zeta_{ix}[\bphi]\delta^{ix}$ where $\zeta_{ix}[\bphi] = \zeta_i\left(\bphi(x),\partial_x\bphi(x),\dots,\partial^q_x \bphi(x)\right)$ for a given integer $q$.
%
%We denote by $\mcZ_{\rm loc}$ the space of such one-forms over $\mbF$.
Using formula~\eqref{extDer_convenient}, the functional exterior derivative of $\bzeta$ reads:
\begin{eqnarray*}
\mbd \bzeta &=& \frac{\delta \zeta_{i x}}{\delta \phi^{j y}}\delta^{j y}\wedge \delta^{i x} \\
&=& \sum_{k=0}^q\frac{\partial \zeta_{ix}}{\partial(\partial^k\phi^j)}(-1)^k \partial^k_y\delta(x-y) \delta^{j y}\wedge \delta^{i x} \\
&=& \sum_{k=0}^q (-1)^{k+1} \frac{\partial \zeta_{ix}}{\partial(\partial^k\phi^j)} \delta^{ix}\wedge \partial^k\delta^{jx} \ .
\end{eqnarray*}
where the last equality is straightforwardly obtained by evaluating each side on a pair of perturbations $\delta\bphi_1,\delta\bphi_2$.
As announced in~\eqref{inclusion_exact_two-forms}, $\mbd\bzeta$ is a local two-form, hence we can decompose it on the basis~\eqref{1dBasis2Forms}.
When compared to the generic local two-form $\bgamma$, using eq.~\eqref{generic_local_two-form_1d}, we have the relation 
\begin{equation}
\gamma^{(k)}_{ij}\leftrightarrow (-1)^{k+1} \frac{\partial \zeta_{ix}}{\partial(\partial^k\phi^j)}  \ .
\end{equation}
Using this relation, together with eqs.~\eqref{eq:expressionAlpha_two-form} and~\eqref{eq:expressionBeta_two-form}, we deduce the ``coordinate functions'' $\alpha_{ij}^{(\ell)}(x,[\bphi])$ and $\beta_{ij}^{(\ell)}(x,[\bphi])$  of $\mbd\bzeta$, which respectively read
\begin{equation}
 \alpha_{ij}^{(\ell)}=\frac{1}{2}\left[\frac{\partial \zeta_j}{\partial(\partial^{2\ell}\phi^i)}-\frac{\partial \zeta_i}{\partial(\partial^{2\ell}\phi^j)}-\sum_{k=\ell}^{\left \lfloor (q-1)/2 \right \rfloor}c^{2k+1}_{2\ell}\partial_x^{2k+1-2\ell} \Big(\frac{\partial \zeta_j}{\partial(\partial^{2k+1}\phi^i)}-\frac{\partial \zeta_i}{\partial(\partial^{2k+1}\phi^j)}\Big) \right]
\label{eq:expressionAlpha}
\end{equation}
and
\begin{equation}
\beta_{ij}^{(\ell)}=\frac{1}{2}\left[\frac{\partial \zeta_j}{\partial(\partial^{2\ell+1}\phi^i)} +\frac{\partial \zeta_i}{\partial(\partial^{2\ell+1}\phi^j)}-\sum_{k=\ell+1}^{\left \lfloor q/2 \right \rfloor}b^{2k}_{2\ell+1}\partial_x^{2k-2\ell-1}\Big(\frac{\partial \zeta_j}{\partial(\partial^{2k}\phi^i)} + \frac{\partial \zeta_i}{\partial(\partial^{2k}\phi^j)} \Big)\right] \ .
 \label{eq:expressionBeta}
\end{equation}
When $\bzeta$ is the one-form $\bD^{-1}\ba$ associated to dynamics~\eqref{EDPS02}, this allows reformulating the reversibility~\eqref{TRScond03} in an explicit, readily applicable, form:
\begin{equation}
 \bomega\equiv \mbd \bD^{-1}\ba =0 \Leftrightarrow  \left[(\alpha_{ij}^{(\ell)})_{1 \leq i < j \leq d_2, \ell\in \mbN}=0 \ \textrm{and} \  (\beta_{ij}^{(\ell)})_{1 \leq i \leq j \leq d_2, \ell\in \mbN}=0 \right] \ ,
\end{equation}
where the $\balpha^{(\ell)}$ and $\bbeta^{(\ell)}$ are respectively given by~\eqref{eq:expressionAlpha} and~\eqref{eq:expressionBeta} with $\bzeta\equiv \bD^{-1}\ba$.

To illustrate this result, let us show that it directly gives the convenient expression~\eqref{cycleAffAMB} of the vorticity of the one-dimensional AMB and thus, in particular, its reversibility condition as well. Since the density field $\rho$ is scalar valued, the basis~\eqref{1dBasis2Forms} reduces to the subfamily $(\delta^x\wedge\partial^{2\ell+1}\delta^x)_{\ell\in\mbN,x\in\mbR}$. Further, $\zeta=D^{-1}a=-\mu$ is a local functional of $\rho$ of order $q=2$. Hence the only possible term in the vorticity two-form is $\int \beta_{11}^{(0)} \delta^x\wedge \partial_x\delta^x$. A direct application of formula~\eqref{eq:expressionBeta} then gives
\begin{equation}
\beta_{11}^{(0)} = \frac{\partial \zeta}{\partial(\partial_x\rho)}- b_1^2 \partial_x \frac{\partial \zeta}{\partial (\partial_x^2\rho)} = - \left[ 2\lambda \partial_x\rho - \partial_x(-\kappa) \right] = -(2\lambda+\kappa')\partial_x\rho \ ,
\end{equation} 
where we used formula~\eqref{expr_coeff_b} \&~\eqref{expr_coeff_a} to get that $b^2_1=1$. Thus, as expected, we recover the (one-dimensional version of the) convenient expression of the AMB vorticity~\eqref{cycleAffAMB}, from which we can directly conclude that reversibility is equivalent to $2\lambda+\kappa'=0$.

\paragraph{Further decomposition of the space of local two-forms.}
Our partitioning of the basis~\eqref{1dBasis2Forms} between the antisymmetric and symmetric subfamilies already provides a decomposition of $\Omega^2_{\rm loc}(\mbF)$. But, as we will see in section~\ref{subsubsec:oddSubspace}, it is physically meaningful to further decompose the symmetric subfamily as follows.
In general, the field $\bphi$ is not a ``true'' vector field\footnote{If $\bphi$ is a true vector field, then $\bphi(\br)$ is a vector, a geometrical object on its own that can be equally described in any basis. On the other hand, if for instance the field consists in a density field $\rho$ and a polarity field $\bp$ stacked together, $\bphi=(\rho,\bp)$, then all the possible bases of $\mbR^4$ are not equivalent to describe $\bphi(r)$: those mixing the first coordinate with the others are not physically meaningful.} but is rather made out of $n$ distinct physical fields (of mass density and polarity for instance) stacked together, \ie $\bphi= (\bphi_{(1)},\dots, \bphi_{(n)})^\top$.
%, the $k^{th}$ one taking values in $\mbR^{d_2^{(k)}}$. 
The set of coordinate indices $I=\{1,\dots, d_2\}$ of $\bphi=(\phi^i)_{i\in I}$ can then be partitioned as $I=\cup_{k=1}^n I_k$ , where $I_k$ is the set indexing the coordinates of $\bphi_{(k)}$. In turn, this allows partitioning of the symmetric family into two subfamilies.
In the first one, that we call the \textit{self-symmetric} family, the coordinate indices $i$ and $j$ appearing in the expression of a given elementary two-form belong to a common subset $I_k$:
\begin{equation}
([\delta^{ix}\wedge\partial^{2\ell+1}\delta^{jx} + \delta^{jx}\wedge\partial^{2\ell+1}\delta^{ix}], \delta^{ix}\wedge\partial^{2\ell+1}\delta^{ix})_{I(i)= I(j), 1 \leq i < j \leq d_2, \ell\in \mbN,x\in\mbR} \ .
\end{equation}
where $I(i)$ denotes the subset $I_k$ of $I$ to which a given index $i$ belongs.
In the second one, that we call the \textit{inter-symmetric} family, $i$ and $j$ belong to two different subset of $I$:
\begin{equation}
(\delta^{ix}\wedge\partial^{2\ell+1}\delta^{jx} + \delta^{jx}\wedge\partial^{2\ell+1}\delta^{ix})_{I(i)\neq I(j),1 \leq i < j \leq d_2, \ell\in \mbN,x\in\mbR} \ ,
\end{equation}
%\begin{equation}
%\underbrace{[\delta^{ix}\wedge\partial^{2\ell+1}\delta^{jx} + \delta^{jx}\wedge\partial^{2\ell+1}\delta^{ix}]_{I(i)\neq I(j)}}_{\rm inter-symmetric},\underbrace{[\delta^{ix}\wedge\partial^{2\ell+1}\delta^{jx} + \delta^{jx}\wedge\partial^{2\ell+1}\delta^{ix}]_{I(i)= I(j)}, \delta^{ix}\wedge\partial^{2\ell+1}\delta^{ix}}_{\rm self-symmetric})_{1 \leq i < j \leq d_2, \ell\in \mbN,x\in\mbR}
%\end{equation}
The components of $\bgamma$ along the self-symmetric subfamily then correspond to the diagonal blocks\footnote{The $n$ diagonal blocks of $\bbeta^{(\ell)}$ are $(\beta^{(\ell)}_{ij})_{i,j\in I_k}$, $k=1,\dots, n$.} of each $\bbeta^{(\ell)}$ in~\eqref{eq:generalLoc2Form02}, while those along the inter-symmetric family are given by all the other components of each $\bbeta^{(\ell)}$. 
Physically, we will see in section~\ref{subsubsec:oddSubspace} that the elements of the inter-symmetric family in the vorticity $\bomega$ of dyanmics~\eqref{EDPS02} are due to TRS-breaking \textit{interactions} between distinct fields $\bphi_{(i)}\neq\bphi_{(j)}$, while any element of the self-symmetric family originates from a field $\bphi_{(i)}$ whose evolution is irreversible \textit{on its own}.

At the level of vector spaces, our partitioning of the basis~\eqref{1dBasis2Forms} can thus be formulated as the decomposition
\begin{equation}
\Omega^2_{\rm loc}(\mbF) = \mcA \oplus \mcS_{\rm self} \oplus \mcS_{\rm inter} \ ,
\label{Decomposition_space_local_vorticities}
\end{equation}
where $\mcA$, $\mcS_{\rm self}$, and $\mcS_{\rm inter}$ are respectively the antisymmetric, self-symmetric and inter-symmetric subspaces, each one being the linear span\footnote{The linear span of a given family is to be understood here as 
integrating and summing (with respect to discrete and continuous indices, respectively) the elements of this family against function(als) $f_{ij}(x,[\bphi])$.} of the subfamily of~\eqref{1dBasis2Forms} that bears the corresponding name.

\subsection{Partial extension to higher spatial dimension $d_1$}
\label{subsec:higher_dimension_d1}

We now discuss how the results of section~\ref{subsec:basisOf2forms} generalize, at least partially, to higher spatial dimension $d_1>1$.
Let us consider the family of two-forms obtained by grouping together $d_1$ copies\footnote{We exclude from the additional copies the elements that are of order $n=0$ in derivation.} of the family~\eqref{1dBasis2Forms} where each derivative\footnote{Recall that, to lighten notations in the previous section, we denoted the distribution $\partial_x^k\delta^{ix}$ by $\partial^k\delta^{ix}$.} $\partial_x$ is replaced by $\partial_{r^k}$ in the $k^{\rm th}$ copy, \ie
\begin{equation}
([\delta^{i\br}\wedge \partial^{2n}_{r^k}\delta^{j\br} - \delta^{j\br}\wedge \partial^{2n}_{r^k}\delta^{i\br}],[\delta^{i\br}\wedge\partial^{2\ell+1}_{r^k}\delta^{j\br} + \delta^{j\br}\wedge\partial^{2\ell+1}_{r^k}\delta^{i\br}], \delta^{i\br}\wedge\partial^{2\ell+1}_{r^k}\delta^{i\br})_{1\leq k\leq d_1, 1 \leq i < j \leq d_2, n\in \mbN^*, \ell\in \mbN, x\in\mbR^{d_1}} \ .
\label{1dBasis2Forms_higherd1}
\end{equation}
This family turns out to be free in the same sense as its one dimensional counterpart~\eqref{1dBasis2Forms} (see the end of appendix~\ref{subsec:free_family}).
Further, it can be shown\footnote{Indeed, at each step of section~\ref{subsec:basisOf2forms}, this amounts to replace all the spatial derivatives by $\frac{\partial}{\partial r^n}$, and to sum from $n=1$ to $d_1$.} that it is generative for exterior derivatives of
one-forms $\bzeta(\br,[\bphi])$ that locally depend on $\bphi$ but not through any cross derivative, \ie which are of the form
\begin{equation}
\bzeta(\br,[\bphi]) = \bzeta\left(\bphi(\br),\{\partial_{r^k}\bphi(\br),...,\partial_{r^k}^q\bphi(\br)\}_{k=1,\dots,d_1}\right) \ .
\end{equation}
For this particular subspace of exterior derivatives of one-forms, the family~\eqref{1dBasis2Forms_higherd1} is thus a basis.
However, the family~\eqref{1dBasis2Forms_higherd1} can generate neither the exterior derivative of local one-forms whose dependence in $\bphi$ involve cross derivatives, like \eg $\frac{\partial^2\bphi(\br)}{\partial r^1\partial r^2}$, nor consequently the whole space of local two-forms. The family~\eqref{1dBasis2Forms_higherd1} hence generalizes~\eqref{1dBasis2Forms} and its properties only partially to dimensions $d_1>1$. 

%Extending the partition of~\eqref{1dBasis2Forms} of the previous section to~\eqref{1dBasis2Forms_higherd1} by grouping all the copies of the 

We can extend the partition of~\eqref{1dBasis2Forms} into antisymmetric, self-, and inter- symmetric subfamilies to~\eqref{1dBasis2Forms_higherd1}, by the same procedure\footnote{For instance, the antisymmetric subfamily in dimension $d_1>1$ is obtained by gathering the $d_1$ ``copies'' of the one dimensional antisymmetric family, one for each direction of space $r^k$.} through which we defined~\eqref{1dBasis2Forms_higherd1} from~\eqref{1dBasis2Forms}, and still denote the corresponding subspace by $\mcA$, $\mcS_{\rm self}$, and $\mcS_{\rm inter}$, respectively.
In $d_1>1$ spatial dimension, the decomposition~\eqref{Decomposition_space_local_vorticities} thus generalizes to
\begin{equation}
\Omega^2_{\rm loc} (\mbF) = \mcA \oplus \mcS_{\rm self} \oplus \mcS_{\rm inter} \oplus \mcC \ .
\label{Decomposition_space_local_vorticities_higher_d1}
\end{equation}
Here $\mcC$ is a complement of $\mcA\oplus\mcS_{\rm self}\oplus\mcS_{\rm inter}$ in $\Omega^2_{\rm loc} (\mbF)$ and hence allows to decompose, in particular, the exterior derivative of any local one-form $\bzeta$ that depends on cross-derivatives of $\bphi$. Finding a basis of a suitable complement $\mcC$ is an interesting challenge that is left for future work.

Despite its incompleteness in generic situations, the family~\eqref{1dBasis2Forms_higherd1} is sufficient to decompose the vorticity two-form of many celebrated physical models. 
For instance, this is the case for Active Model B in arbitrary spatial dimension $d_1$ since its one-form $\bD^{-1}\ba=-\mu$, where $\mu(\br,[\bphi])$ is given by eq.~\eqref{AMB_chemical_potential}, depends locally on $\rho$ but not through any cross derivative. The resulting vorticity $\bomega=-\mbd \mu$, given by eq.~\eqref{cycleAffAMB}, is generated by the basis elements $(\delta^{\br}\wedge \partial_{r^k}\delta^{\br})_{k=1,\dots,d_1}$ of the self-symmetric family. Another example is given by the KPZ equation, whose vorticity is the same as that of AMB~\cite{o2023nonequilibrium}.

After this short detour through higher spatial dimensions, we will stick to the $d_1=1$ case in the rest of section~\ref{sec:basis} for simplicity, keeping in mind that most of the following results can be generalized from the family~\eqref{1dBasis2Forms} to~\eqref{1dBasis2Forms_higherd1}.

\subsection{Relation between functional two-forms and skew-symmetric differential operators}
\label{subsec:skewsym_operators}

We have seen in section~\ref{subsec:RevCondFuncExtDer} that the question of the reversibility of dynamics~\eqref{EDPS02} is geometrically a question of exactness (or integrability) of the functional one-form $\bD^{-1}\ba$. Consequently, the language of functional differential forms appears to be, mathematically, the most natural one to tackle it. Nevertheless, functional differential forms are not commonly used in physics and even less the corresponding formalism introduced in this article.
Hence, before turning to the study of the phenomenology associated to each element of the basis~\eqref{1dBasis2Forms}, 
we give here an alternative description in the more common language of differential operators.
%
%To do this, we could rely on the vorticity operator $\BOmega$ associated to the generic vorticity two-form $\bomega=\mbd\bzeta$ of eq.~\eqref{eq:generalLoc2Form02}. But it depends on the ``functional diffusion tensor'' $\bD$ 

To this end, let us start by extending to the whole space $\Omega^2(\mbF)$ the map ``\textit{hat}'', defined in section~\ref{subsec:vorticityOperator}, that associates to a vorticity two-form $\bomega\in\mbd\Omega^1(\mbF)$ the corresponding cycle affinity operator $\whomega$. For any two-form $\bgamma\in \Omega^2(\mbF)$, the corresponding ``cycle-affinity operator'' $\whgamma\equiv hat(\gamma)$
is obtained by replacing $\bomega$ and $\whomega$ in definition~\eqref{def_affinity_operator} respectively by $\bgamma$ and $\whgamma$.
We now restrict our attention to two-forms $\bgamma$ that are local, $\bgamma\in\Omega^2_{\rm loc}(\mbF)$. Then, injecting expression~\eqref{eq:generalLoc2Form02} into definition~\eqref{def_affinity_operator} and performing several integrations by parts, we get:
\begin{equation}
[\whgamma(\delta\bphi)]_i = \sum_{\ell=0}^{\lfloor q/2 \rfloor}\sum_{j=1}^{d_2} \left[\alpha_{ij}^{(\ell)} \partial_x^{2\ell} \delta\phi^j + \partial_x^{2\ell}(\alpha_{ij}^{(\ell)}\delta\phi^j) \right]
  -  \sum_{\ell=0}^{\lfloor (q-1)/2\rfloor}\sum_{j=1}^{d_2} \left[ \beta_{ij}^{(\ell)} \partial_x^{2\ell+1} \delta\phi^j + \partial_x^{2\ell+1}(\beta_{ij}^{(\ell)}\delta\phi^j) \right] \ ,
\end{equation}
or, in terms of operators:
%\begin{equation}
%\whomega(\delta\bphi) = \sum_{\ell=0}^{\lfloor q/2 \rfloor} \left[\balpha^{(\ell)} \partial_x^{2\ell} \delta\bphi + \partial_x^{2\ell}(\balpha\delta\bphi) \right]
%  -  \sum_{\ell=0}^{\lfloor (q-1)/2\rfloor} \left[ \bbeta^{(\ell)} \partial_x^{2\ell+1} \delta\bphi + \partial_x^{2\ell+1}(\beta\delta\bphi) \right]
%  \label{genericCycleOperator}
%\end{equation}
\begin{equation}
\whgamma = \sum_{\ell=0}^{\lfloor q/2 \rfloor} \left[\balpha^{(\ell)} \partial_x^{2\ell}  + \partial_x^{2\ell}\balpha^{(\ell)} \right]
  -  \sum_{\ell=0}^{\lfloor (q-1)/2\rfloor} \left[ \bbeta^{(\ell)} \partial_x^{2\ell+1} + \partial_x^{2\ell+1}\bbeta^{(\ell)}\right]
  \label{genericCycleOperator}
\end{equation}
Each differential operator in the first (respectively second) sum on the right-hand of equation~\eqref{genericCycleOperator} is the cycle affinity operator associated to a two-form that belongs to the antisymmetric (respectively symmetric) subspace. Moreover, we see that each one is, up to a factor $1/2$, the skew-symmetric part of an operator that applies an even- (respectively odd-) degree derivative and multiplies by a skew-symmetric (respectively symmetric) matrix.

A natural question one could ask is whether the latter combinations are the only possible ones from which to build skew-symmetric operators. It turns out they are not, but that the missing operators can be written as linear combination of those appearing in~\eqref{genericCycleOperator}. To see this, let us consider a generic differential operator 
%of order $k$: $\mcL=\bL \partial^{(k)}$.
$\mcL=\sum_k\bL_k(x,[\bphi]) \partial_x^{k}$ that, for a given $\bphi\in\mbF$, acts on perturbations $\delta\bphi\in T_{\bphi}\mbF$, the $\bL_k$'s being square matrices that depend on $x$ and $\bphi$.
The skew-symmetric part $\mcL^A$ of $\mcL$ reads
\begin{eqnarray}
\mcL^A =\frac{1}{2} \sum_k\left[\bL_k \partial_x^{k} - (-1)^k \partial_x^{k} \bL_k^\top \right] \ ,
\label{generic_antisym_op}
\end{eqnarray}
where $\bL_k^\top$ is the transpose of $\bL_k$.
Splitting each matrix $\bL_k$ into its symmetric and skew-symmetric parts, $\bL_k=\bL^S_k+\bL^A_k$, and rearranging the terms, we get
\begin{eqnarray}
 \mcL^A  = \frac{1}{2}\sum_k \left\{ \left[ \bL^A_k \partial_x^{k} + (-1)^k \partial_x^{k} \bL^A_k \right] + \left[\bL^S_k\partial_x^{k}-(-1)^k \partial_x^{k}\bL^S_k \right]  \right\} \ .
\end{eqnarray}
Distinguishing derivatives of even and odd degrees, we finally obtain
\begin{eqnarray}
 \mcL^A  &=& \frac{1}{2}\sum_\ell \Big\{\left[\bL^A_k\partial_x^{2\ell}+ \partial_x^{2\ell}\bL^A_k \right] + \left[ \bL^S_k \partial_x^{2\ell} -  \partial_x^{2\ell} \bL^S_k \right] \notag \\
& & + \left[\bL^A_k\partial_x^{2\ell+1}- \partial_x^{2\ell+1}\bL^A_k \right] + \left[ \bL^S_k \partial_x^{2\ell+1} + \partial_x^{2\ell+1} \bL^S_k \right] \Big\}
\label{skewOp_decomp}\ .
\end{eqnarray}
We thus see that any skew-symmetric differential operator can be written as the superposition of 4 types of elementary skew-symmetric operators:
%\begin{equation}
%\left[\bA\partial_x^{2\ell}+ \partial_x^{2\ell}\bA\right] , \ \left[\bS \partial_x^{2\ell} -  \partial_x^{2\ell} \bS\right] , \
%\left[\bA\partial_x^{2\ell+1}- \partial_x^{2\ell+1}\bA\right] ,  \ \left[\bS \partial_x^{2\ell+1} + \partial_x^{2\ell+1} \bS\right]  \ ,
%\label{list_antisym_op}
\begin{gather}
\bA\partial_x^{2\ell}+ \partial_x^{2\ell}\bA\ ,\label{op_type_Ap} \\
\bS \partial_x^{2\ell} -  \partial_x^{2\ell} \bS \ , \label{op_type_Sm}\\
\bA\partial_x^{2\ell+1}- \partial_x^{2\ell+1}\bA \ ,  \label{op_type_Am}\\
\bS \partial_x^{2\ell+1} + \partial_x^{2\ell+1} \bS \ , \label{op_type_Sp} 
\end{gather}
where $\bA$ and $\bS$ are matrix-valued function(al)s of $x$ and $\bphi$ that are antisymmetric and symmetric, respectively.
Let us denote by $\pmcA,\mmcS, \mmcA$, and $\pmcS$ the vector spaces respectively generated by linear combinations of antisymmetric operators of the types~\eqref{op_type_Ap}-\eqref{op_type_Sp}. For instance, the elements of $\pmcA$ are of the form $\sum_\ell \bA_\ell\partial_x^{2\ell}+ \partial_x^{2\ell}\bA_\ell$, where the sum is finite and all the $\bA_\ell$ are antisymmetric matrices that depend on $x$ and $\bphi$.

Note that the operators that make up the generic cycle affinity operator~\eqref{genericCycleOperator} belong either to $\pmcA$ or $\pmcS$.
As the matrices $\balpha^{(\ell)},\bbeta^{(\ell)}$ in the expression~\eqref{eq:generalLoc2Form02} of $\bgamma$ can be chosen arbitrarily in the spaces of antisymmetric and symmetric matrices, respectively, we conclude that, as suggested by the notations, $\pmcA$ and $\pmcS$ are the images under \textit{hat} of $\mcA$ and $\mcS\equiv \mcS_{\rm self}\oplus\mcS_{\rm inter}$, respectively.

%the two types of operators that are on the extreme right and left of~\eqref{list_antisym_op}.
We show in appendix~\ref{appendix:another_decomp_of_operators} the following decomposition formulas for operators of the form~\eqref{op_type_Sm} and~\eqref{op_type_Am}:
%that these two types of operators~\eqref{op_type_Ap} \&~\eqref{op_type_Sp} generate the others~\eqref{op_type_Am} \&~\eqref{op_type_Sm}, \ie $\mmcS\subset \pmcS$ and $\mmcA\subset\pmcA$:
\begin{equation}
\bS \partial_x^{2\ell}-\partial_x^{2\ell} \bS = -\sum_{k=0}^{\ell-1}b^{2\ell}_{2k+1} \left[ \left(\partial_x^{2\ell-2k-1}\bS\right)\partial_x^{2k+1} + \partial_x^{2k+1}\left(\partial_x^{2\ell-2k-1}\bS\right)\right] \ ,
\label{type_2_decompos}
\end{equation}
and
\begin{equation}
\bA \partial_x^{2\ell+1}-\partial_x^{2\ell+1} \bA = -\sum_{k=0}^{\ell}c^{2\ell+1}_{2k} \left[ \left(\partial_x^{2\ell+1-2k}\bA\right)\partial_x^{2k} + \partial_x^{2k}\left(\partial_x^{2\ell+1-2k}\bA\right)\right] \ ,
\label{type_3_decompos}
\end{equation}
with the same coefficients $b_k^\ell$ and $c_k^\ell$ as in eqs.~\eqref{eq:expressionAlpha_two-form}-\eqref{eq:expressionBeta_two-form}.
We thus see that the operators of the form~\eqref{op_type_Sm}(respectively~\eqref{op_type_Am}) can be written as a superposition of operators of the form~\eqref{op_type_Sp} (respectively~\eqref{op_type_Ap}).
In other words, we have the inclusions $\mmcS\subset \pmcS$ and $\mmcA\subset\pmcA$ and it follows that the vector space of antisymmetric differential operators~\eqref{generic_antisym_op}, that we denote by $\mbL^A$, can be decomposed as 
$\mbL^A = \pmcA + \pmcS$.
In particular, this implies that the map \textit{hat} is surjective from $\Omega^2_{\rm loc}$ to $\mbL^A$, \ie that any element in $\mbL^A$ is of the form $\whgamma$, with $\bgamma\in\Omega^2_{\rm loc}$.
But an element $\whgamma\in\mbL^A$ vanishes iff, for all $\delta\bphi_1,\delta\bphi_2$, $\int\delta\bphi_1\cdot \whgamma(\delta\bphi_2) \rmd x = 0$, which also reads $\bgamma(\delta\bphi_1,\delta\bphi_2)=0$, by definition of \textit{hat}. Using~\eqref{CNS_vanishing_loc_2form} we conclude that
\begin{equation}
\whgamma=0 \Leftrightarrow \balpha^{(\ell)}=\bbeta^{(\ell)}=0 \ .
\label{CNS_cancellation_antisym_op}
\end{equation}
and  $\ker(hat)=\{0\}$. Consequently we have the direct-sum decomposition of $\mbL^A$:
\begin{equation}
\mbL^A =\pmcA\oplus \pmcS \ ,
\label{direct_sum_antisym_op}
\end{equation}
and \textit{hat} is a linear isomorphism between $\Omega^2_{\rm loc}$ and $\mbL^A$.

%%In the space of antisymmetric differential operators~\eqref{generic_antisym_op}, which we denote by $\mbL^A$, the two types of operators~\eqref{op_type_Ap} \&~\eqref{op_type_Sp} that constitute the generic cycle affinity operator~\eqref{genericCycleOperator} can thus be seen as ``generative'' (in the sense of eqs.~\eqref{type_2_decompos}-\eqref{type_3_decompos}), which in turn implies the vector-space decomposition $\mbL^A = \pmcA + \pmcS$. 
%
%
%The operators of the form~\eqref{op_type_Ap} \&~\eqref{op_type_Sp} for distinct $\ell$ can also be considered as ``free'' in the sense that any combination of them, which is generically of the form~\eqref{genericCycleOperator} (albeit with arbitrary discrete intervals of summation for the two sums), is zero iff $\int_{\mbR}\delta\bphi_1\cdot \whomega(\delta\bphi_2) = 0$, \ie iff $\bomega(\delta\bphi_1,\delta\bphi_2)=0$, which we have shown\footnote{The proof we give in appendix~\ref{subsec:free_family} is independent of the discrete interval of summation of the two sums in~\eqref{genericCycleOperator}.} to be equivalent to having $\balpha^{(\ell)}=\bbeta^{(\ell)}=0$ for all $\ell$.
%%
%In particular, this implies that $\pmcA \cap \pmcS ={0}$, so that we have the vector-space direct-sum decomposition
%\begin{equation}
%\mbL^A =\pmcA\oplus \pmcS \ .
%\end{equation}
%%
%%Note that the adjectives ``generative'' and ``free'' are used here to draw a parallel with the corresponding notions in vector spaces. Although closely related, they are distinct from the latter.

To make the correspondence between decomposition~\eqref{direct_sum_antisym_op} and its counterpart for two-forms even clearer,
let us consider a generic $\bgamma\in\Omega^2_{\rm loc}(\mbF)$, but start from its expression~\eqref{generic_local_two-form_1d} rather than its decomposition~\eqref{eq:generalLoc2Form02} in the basis~\eqref{1dBasis2Forms}.
On the one hand, we can decompose each matrix $\bgamma^{(k)}={}^S\bgamma^{(k)}+{}^A\bgamma^{(k)}$ into its symmetric and antisymmetric parts, as in section~\ref{subsec:basisOf2forms}. Then, we notice that formulas~\eqref{core:decompo_elementary_2forms_plus} \&~\eqref{core:decompo_elementary_2forms_minus} mean that the components of $\bgamma$ of the form ${}^A\bgamma^{(2\ell+1)}_x:\bdelta^x\wedge\partial^{2\ell+1}\bdelta^x$ and ${}^S\bgamma^{(2\ell)}_x:\bdelta^x\wedge\partial^{2\ell}\bdelta^x$ can be respectively decomposed along the elementary two-forms $(\bdelta^x\wedge\partial^{2k}\bdelta^x)_{k\leq \ell}$ and $(\bdelta^x\wedge\partial^{2k+1}\bdelta^x)_{k\leq\ell}$, leading to the expression~\eqref{eq:generalLoc2Form02}, with coordinates~\eqref{eq:expressionAlpha_two-form}-\eqref{eq:expressionBeta_two-form}.
On the other-hand, applying the \textit{hat} map to the expression~\eqref{generic_local_two-form_1d} of $\bgamma$ gives an alternative formula for $\whgamma$:
\begin{equation}
\whgamma = \sum_k\left\{(-1)^k \bgamma^{(k)} \partial^k - \partial^k[\bgamma^{(k)}]^\top \right\} \ ,
\label{a_last_generic_skew_op}
\end{equation}
which coincides with the generic expression of a skew-symmetric differential operator~\eqref{generic_antisym_op}, with $\bL_k=2(-1)^k \bgamma^{(k)}$. Repeating for $\whgamma$ the reasoning we had above for the operator $\mcL^A$, we end up decomposing the operators $[{}^S\bgamma^{(2\ell)}]\partial^{2\ell} - \partial^{2\ell}[{}^S\bgamma^{(2\ell)}]\in \mmcS$ and $[{}^A\bgamma^{(2\ell+1)}]\partial^{2\ell+1} - \partial^{2\ell+1}[{}^A\bgamma^{(2\ell+1)}]\in \mmcA$ respectively using formula~\eqref{type_2_decompos} and~\eqref{type_3_decompos}, which gives back the expression~\eqref{genericCycleOperator} of $\whgamma$. 
Hence, the decomposition formulas~\eqref{type_2_decompos} \&~\eqref{type_3_decompos} in $\mbL^A$ (and the corresponding direct sum decomposition~\eqref{direct_sum_antisym_op}) mirror formulas~\eqref{core:decompo_elementary_2forms_plus} \&~\eqref{core:decompo_elementary_2forms_minus} in $\Omega^2_{\rm loc}(\mbF)$ (and its corresponding decomposition), this exact correspondence being given by the isomorphism \textit{hat}.
%This decomposition mirrors in $\mbL^A$ that of the two-forms mentioned above eq.~\eqref{a_last_generic_skew_op} in $\Omega^2_{\rm loc}(\mbF)$, this exact correspondence being given by the isomorphism \textit{hat}.

As an example of this correspondence, let us consider the cycle affinity operator $\whomega$ of AMB in $d_1=1$ dimension. Using definition~\eqref{def_affinity_operator} together with expression~\eqref{problematic_2form} of $\bomega$ yields, after some integrations by parts,
%\begin{equation}
%\whomega_{\rho} \delta\rho = \left\{-\nabla\cdot \left[ (2\lambda \nabla\rho)\delta\rho\right] - 2\lambda \nabla\rho\cdot \nabla\delta\rho \right\} + \left\{\kappa\Delta\delta\rho - \Delta(\kappa \delta\rho)\right\} \ .
%\label{whomega_AMB_temp}
%\end{equation}
\begin{equation}
\whomega_{\rho} \delta\rho = \left\{\partial_x \left[ (2\lambda \partial_x\rho)\delta\rho\right] + 2\lambda (\partial_x\rho) \delta\rho \right\} + \left\{-\kappa\partial_x^2\delta\rho + \partial_x^2(\kappa \delta\rho)\right\} \ .
\label{whomega_AMB_temp}
\end{equation}
Let us denote by $\whomega^{(1)}$ and $\whomega^{(2)}$ the two operators within each pair of curly brackets on the right-hand side of eq.~\eqref{whomega_AMB_temp}. We see that $\whomega^{(1)}\in \pmcS$ while $\whomega^{(2)}\in\mmcS$. Consequently, these operators are not independent\footnote{By ``independent'', we here mean that $\whomega=0$ is not equivalent to $\whomega^{(1)}=\whomega^{(2)}=0$, since $\pmcS\cap\mmcS=\mmcS\neq 0$.} and the latter can be re-written as 
%\begin{equation}
%\whomega^{(2)}\equiv \kappa\Delta\delta\rho - \Delta(\kappa \delta\rho) = -\delta\rho \Delta\kappa - 2\nabla \kappa \cdot \nabla \delta\rho = -\nabla\cdot \left[(\nabla\kappa)\delta\rho \right] - \nabla\kappa\cdot\nabla\delta\rho \ ,
%\end{equation}
\begin{equation}
\whomega^{(2)}_\rho \delta\rho  \equiv -\kappa\partial_x^2\delta\rho + \partial_x^2(\kappa \delta\rho) = \delta\rho \partial_x^2\kappa + 2(\partial_x \kappa ) (\partial_x \delta\rho) = \partial_x \left[(\partial_x\kappa)\delta\rho \right] + (\partial_x\kappa)(\partial_x\delta\rho) \ ,
\end{equation}
so that the whole cycle affinity operator also reads
%\begin{equation}
%\whomega_{\rho}\delta\rho = -\nabla\cdot \left[ \delta\rho (2\lambda+\kappa')\nabla\rho \right] - (2\lambda+\kappa')\nabla\rho\cdot \nabla\delta\rho \ . 
%\label{whomega_AMB_final}
%\end{equation}
\begin{equation}
\whomega_{\rho}\delta\rho = \partial_x \left[  (2\lambda+\kappa')(\partial_x\rho)\delta\rho \right] + (2\lambda+\kappa')(\partial_x\rho)\partial_x\delta\rho \ . 
\label{whomega_AMB_final}
\end{equation}
This mirrors, at the level of the skew-symmetric operator $\whomega$, what happens for the vorticity 2-form $\bomega$, where the second term on the right-hand side of the expression~\eqref{problematic_2form} can be decomposed along the same elementary two-form as the first one.
Comparing the form~\eqref{whomega_AMB_final} to the generic expression~\eqref{genericCycleOperator}, $\whomega$ has a single non-zero component: $\beta^{(0)}_{11}=-(2\lambda+\kappa')\partial_x\rho$, along the elementary operator of order $\ell=0$ of the self-symmetric family. The fact that the latter is ``free'' (in the sense of~\eqref{CNS_cancellation_antisym_op}) provides an alternative way to find the reversibility condition of AMB: $2\lambda+\kappa'=0$.

Hence, the results of this section allow the irreversiblity of dynamics~\eqref{EDPS02} -- in $d_1=1$ spatial dimension and with a one-form $\bD^{-1}\ba$ that is local -- to be seen from the point of view of the cycle affinity operator $\whomega$. We see that, from the four different types of differential operators~\eqref{op_type_Ap}-\eqref{op_type_Sp}, two are redundant, while the other two are ``free'' in an appropriate sense~\eqref{CNS_cancellation_antisym_op} which allows us, in particular, to 
directly deduce the explicit reversibility condition(s) of the dynamics. These properties of antisymmetric differential operators mirror (and hopefully illuminate) the corresponding ones found earlier for local two-forms.

\subsection{Phenomenology of each vorticity subspace}
\label{subsec:basisPhenomeno}

In this section, we examine what phenomenology one could expect from a dynamics~\eqref{EDPS02} whose vorticity two-form $\bomega$ belongs exclusively to one of the three subfamilies of~\eqref{1dBasis2Forms} identified in~\ref{subsec:basisOf2forms}.
Of course, determining the precise phenomenology of a field theory only based on the corresponding $\bomega$ is not completely possible, as the latter only contains linearized information about what makes the dynamics irreversible. 
However, we will see that tackling this problem through the vorticity dynamics~\eqref{Vortex_dynamics} -- even with rather generic forms of this operator -- allows us to qualitatively recover celebrated, typically out-of-equilibrium, behaviors, from the flocking state of aligning self-propelled agents to the run-and-chase dynamics of non-reciprocally interacting mixtures, through nonequilibrium pattern formation.

Once again, we here focus our attention only on $d_1=1$ dimensional dynamics obeying~\eqref{EDPS02} whose one-form $\bD^{-1}\ba$ is local in $\bphi$ so that the corresponding vorticity two-form $\bomega\equiv\mbd\bD^{-1}\ba$ can be expended on the basis~\eqref{1dBasis2Forms}.
Even though generically the vorticity has no reason to be generated by only one of the subfamilies identified in~\eqref{1dBasis2Forms}, 
we will assume below that this is the case and examine each subspace successively.
As our analysis is essentially qualitative, we expect the behavior of a superposition of vorticities from distinct subfamilies of~\eqref{1dBasis2Forms} to be in large part the superposition of the corresponding phenomenologies.
Note however that the methods used below to analyze each subfamily independently could easily be adapted to a superposed case.
 
We start by rewriting here for convenience the dynamics~\eqref{Vortex_dynamics} generated by the vorticity operator:
\begin{equation}
\partial_t\delta\bphi = \bW_{\bphi} \delta\bphi \ ,
\label{BOmega_dynamics_1d}
\end{equation} 
which encapsulates the time-irreversible part of dynamics~\eqref{EDPS02} in the vicinity of a given $\bphi$.
Let us recall that, as stated in section~\ref{subsec:vorticityOperator}, the vorticity and cycle affinity operators are related by $\bW=-\bD\whomega/2$. 
 
For simplicity, we assume from now on the diffusion operator $\bD$ to be proportional to the identity, unless otherwise stated explicitly. We can further assume this (positive) constant be equal to one, as any other one could be subsumed in the coordinates of the generic vorticities we will consider.
We will briefly comment below how our results generalize to other situations. In particular, having $\bD\propto\partial^2_x$ (which is often the case for conserved fields) doesn't change the qualitative results found below.

For concreteness, as well as setting $d_1 = 1$ we focus on the case $d_2 = 2$ in this section. (Recall that $d_2$ is the dimension of the order parameter space.)
%Finally, we focus our attention on the $d_2=2$ case in this section.
% \ie that the field $\bphi$ takes values in $\mbR^{2}$.
Nonetheless we will argue below -- notably through the example of non-reciprocal flocking in section~\ref{subsubsec:evenSubspace} -- that, in higher $d_2$ dimensions, 
the typical phenomenology associated to each subspace of~\eqref{1dBasis2Forms} remains very similar \textit{in} $\mbR^{d_2>2}$ as compared to $\mbR^2$. However when $\bphi$ is not a ``true'' $\mbR^{d_2}$-valued vector field but rather a collection of various fields (of \eg number density and polarity; see the example of Active Ising Model at the end of this section) stacked together, the way this phenomenology is observed \textit{in the ``physical space''} $\mbR^{d_1}$ may appear quite different.

\subsubsection{The antisymmetric subspace and non-reciprocal interactions}
\label{subsubsec:evenSubspace}
Let us consider a cycle affinity operator that belongs to the antisymmetric subspace and is of homogeneous degree: $\whomega=\balpha \partial_x^{2\ell}+\partial_x^{2\ell}\balpha$, with $\balpha$ an antisymmetric matrix that depends on\footnote{Throughout section~\ref{subsec:basisPhenomeno}, we consider that $\bphi$ is fixed and we study the linear dynamics~\eqref{BOmega_dynamics_1d} around this point in function space. That is why we consider $\balpha$ as a function of $x$ only.} the space variable $x$. Since $\bD$ is assumed to be the identity, the vorticity operator reads 
\begin{equation}
\bW = -\frac{1}{2}\left[\balpha \partial_x^{2\ell}+\partial_x^{2\ell}\balpha\right] \ .
\label{Omega_shape_evenSpace}
\end{equation}

Let us first consider the case $\ell=0$ which is slightly degenerate. In this case, $\bW$ acts as a multiplication by the antisymmetric matrix $-\balpha$, which generically depends on $x$. The solution of~\eqref{BOmega_dynamics_1d} hence corresponds to a field $\delta\bphi$ that rotates on site at each $x\in\mbR$ -- \ie all $\delta\bphi(x)$ independently rotate in $\mbR^{2}$ -- in a direction (clockwise or counter-clockwise) and with a speed that are given by $-\alpha_{12}(x)=\alpha_{21}(x)$.

Let us now turn to the case $\ell\geq 1$. 
We start by considering the first of the two limit-regimes described in section~\eqref{subsec:vorticityOperator}, where the fluctuation $\delta\bphi$ initially varies on a spatial scale $\ell_{\delta\phi}$ that is much larger than that for spatial variations of the operator $\bW$ (which is itself that of the matrix $\balpha$ and in turn depends on the chosen base-state $\bphi$), $\ell_{\delta\phi}\gg \ell_W$. In such a regime, $\bW$ again acts as multiplication by an antisymmetric matrix, 
which is $-\partial_x^{2\ell}\balpha/2$ in this case. Consequently, the phenomenology is the same as that of the $\ell=0$ case, although the role previously played by $\balpha$ is now taken by $\partial^{2\ell}_x\balpha/2$. Note that if $\balpha$ is uniform, then $\ell_W=\infty$ and this first regime does not exist.
%$[\bW\delta\bphi]_i\simeq -\frac{1}{2}\delta\phi^j[\partial_x^{2\ell}\balpha]_{ij}$. 
%Hence, the solution of~\eqref{BOmega_dynamics_1d} in this regime corresponds to a field $\bphi$ that rotates on site at all $x$, in a direction (\ie clockwise or counter-clockwise) and with a speed that are given $-\partial_x^{2\ell}\alpha_{12}(x)=\partial_x^{2\ell}\alpha_{21}(x)$ and hence which are $x$-dependent in general.

In the opposite limit-regime, where we consider fluctuations $\delta\bphi$ that vary on much smaller scales than $\bW$, $\ell_{\delta\phi}\ll \ell_W$, we partition the physical space $\mbR^{d_1}=\mbR$ into subdomains where $\balpha$ can be considered uniform and, consequently, where the vorticity operator approximately reads $\bW \simeq -\balpha\partial_x^{2\ell}$. In Fourier space, the latter thus acts as a matrix multiplication by:
\begin{equation}
\bW_k = W_k \begin{bmatrix} 0 & -1 \\ 1 & 0 \end{bmatrix} \ , 
\label{Omega_alpha_family}
\end{equation}
where $W_k\equiv (-1)^\ell \alpha_{12}k^{2\ell}$ and $k$ is the wave number.
If we focus on a generic harmonic mode that initially reads
\begin{equation}
\delta\bphi_k(x,t=0) = \begin{bmatrix} \rho_1 \cos(kx+\theta_1) \\  \rho_2 \cos(kx+\theta_2) \end{bmatrix} \ ,
\label{bpsik_initial}
\end{equation}
then the solution at time $t$ is
\begin{eqnarray}
\delta\bphi_k(x,t) &=& e^{t\bW_k} \begin{bmatrix} \rho_1 \cos(kx+\theta_1) \\  \rho_2 \cos(kx+\theta_2) \end{bmatrix} \notag\\
&=& \begin{bmatrix} \rho_1 \cos(kx+\theta_1) \\  \rho_2 \cos(kx+\theta_2) \end{bmatrix} \cos(W_k t) + \begin{bmatrix} -\rho_2 \cos(kx+\theta_2) \\  \rho_1 \cos(kx+\theta_1) \end{bmatrix} \sin(W_k t) \ .
\label{solutionAlphaFam01}
\end{eqnarray}
Compared to the previous limit-regime, $\bW$ can still be seen as a generator of rotations but the rotation speed now increases with the wavenumber $k$. Moreover, the vector $\delta\bphi_k(x,t)\in\mbR^2$ always rotates at the same speed and in the same direction at all $x$ in a given domain within which $\alpha_{12}$ is considered uniform.
%This can also be seen from the standing-waves formulation of~\eqref{solutionAlphaFam01}:
%\begin{equation}
%\delta\bphi_k(x,t) =  \begin{bmatrix} \rho_1 \cos(kx+\theta_1) \\  \rho_2 \cos(kx+\theta_2) \end{bmatrix} \cos(W_k t) + \begin{bmatrix} -\rho_2 \cos(kx+\theta_2) \\  \rho_1 \cos(kx+\theta_1) \end{bmatrix} \sin(W_k t) \ .
%\label{solutionAlphaFam02}
%\end{equation}
%\MyPink{This simple fact indicates a first way TRS breaking manifests in case $\bW$ is of the form~\eqref{Omega_shape_evenSpace}.}
% by looking at the motion of $\bphi(x,t)$ in $\mbR^2$ at a given point $x$.
Interestingly, the solution~\eqref{solutionAlphaFam01} can also be written as the superposition of propagating waves (see appendix~\ref{app:alphaFamily}):
\begin{equation}
\delta\bphi_k(x,t) = \rho_+  \begin{bmatrix} \cos(kx+W_k t+\theta_+) \\  \sin(kx+W_k t+\theta_+) \end{bmatrix} + \rho_-  \begin{bmatrix} \cos(kx-W_k t-\theta_-) \\  -\sin(kx-W_k t-\theta_-) \end{bmatrix} \ .
\label{solutionAlphaFam02}
\end{equation}
In eq.~\eqref{solutionAlphaFam02}, $\rho_\pm$ and $\theta_\pm$ are the polar coordinates of the vectors $(U\pm V)/2 \in \mbR^2$, \ie
\begin{equation}
\frac{U\pm V}{2} = \rho_\pm \begin{bmatrix} \cos \theta_\pm \\ \sin \theta_\pm \end{bmatrix} \ ,
\end{equation}
where
\begin{equation}
U \equiv \begin{bmatrix} \rho_1 \cos(\theta_1) \\  \rho_2 \cos(\theta_2) \end{bmatrix} \quad \text{and} \quad V \equiv \begin{bmatrix} -\rho_2 \sin(\theta_2) \\  \rho_1 \sin(\theta_1) \end{bmatrix} \ .
\end{equation}
We see in eq.~\eqref{solutionAlphaFam02} that the $`+'$ (respectively $`-'$) component of $\delta\bphi_k$ (which are respectively defined as the first and second term on the right-hand side of eq.~\eqref{solutionAlphaFam02}), which is positively (respectively negatively) circularly polarized, propagates on $\mbR$ at speed $W_k/k=(-1)^\ell\alpha_{12}k^{2\ell-1}$ in the direction given by $\textrm{sign}(-W_k)$ (respectively $\textrm{sign}(W_k)$).
To further interpret these facts, let us assume that, for instance, $W_k>0$. Then we see that in both the $`+'$ and $`-'$ component in~\eqref{solutionAlphaFam02}, $\delta\phi_k^2$ has a phase delay of $\pi/2$ with respect to $\delta\phi_k^1$ in the direction of propagation. In other words, if we interpret the components $\delta\phi_{1,2}$ as representing two different ``species'', then everything happens as if $\delta\phi_k^2$ \textit{was chasing} $\delta\phi_k^1$ (see figure~\ref{fig:basis_phenomeno}). 

%\MyPink{
%Note that this run and chase behavior can also be seen at any point $x$. Indeed, eq.~\eqref{solutionAlphaFam01} shows $\delta\phi_k(x)$ to always rotate clockwise, which again means that $\delta\phi_k^2(x,t)$ reaches its maximum value at $x$ with a $\pi/2$ phase delay with respect to $\delta\bphi_k^1(x,t)$. In particular, for $\ell=0$ and a constant perturbation $k=0$, the point-wise chasing behavior in $\mbR^{d_2}$ still exist while the perturbation no longer propagates in the real space.
%Reciprocally, if $W_k<0$, it is now $\delta\phi_k^1$ that chases $\delta\phi_k^2$.
%}

Interestingly, if $\alpha_{12}$ changes its sign over $\mbR$, a given component chases the other on all sub-domains of a given sign of $\alpha_{12}$, while the contrary happens on the sub-domains of opposite sign. Besides, if $\alpha_{12}$ has a constant sign over all $\mbR$, but if the sign of $\partial^{2\ell}_x\alpha_{12}$ is different from that of $\alpha_{12}$, then the attribution of the pray and predator roles depends not on the spatial region but on the spatial scale.
%Among models of nonreciprocal matter recently introduced~\cite{}, we do not know one with this property but it would be most interesting to either find or invent one.

\begin{figure}[h!]
\begin{center}
\begin{tikzpicture}
\def \xpic{3.5} % the pictures are centered as (+-xpic,+-ypic)
\def \ypic{1.5}
\def \xdistar{2} % the shift between an arrow and the center of the corresponding picture
\def \ydistar{0}
\def \l{1} %length of t=each arrow
%the leftward arrows have a tail within the picture (which is not the case for the righwards, meaning the sinusoids are not centered in their respective pics)
\def \dx{0.25} % correction to the aforementioned issue

\path(-\xpic,+\ypic)node[anchor=center]{\includegraphics[scale=0.25]{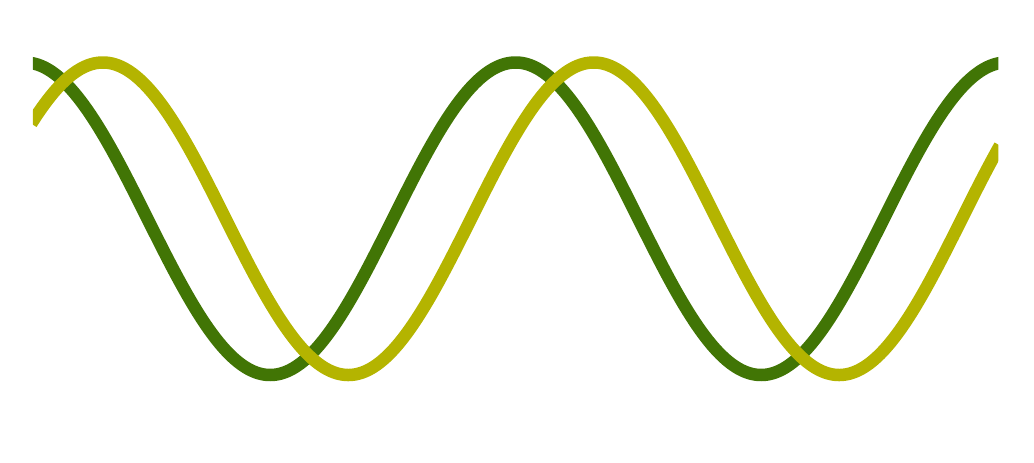}};
\draw[->,ultra thick,colorL] (-\xpic+\xdistar,+\ypic+\ydistar)--(-\xpic+\xdistar+\l,+\ypic+\ydistar);

\path(-\xpic,-\ypic)node[anchor=center]{\includegraphics[scale=0.25]{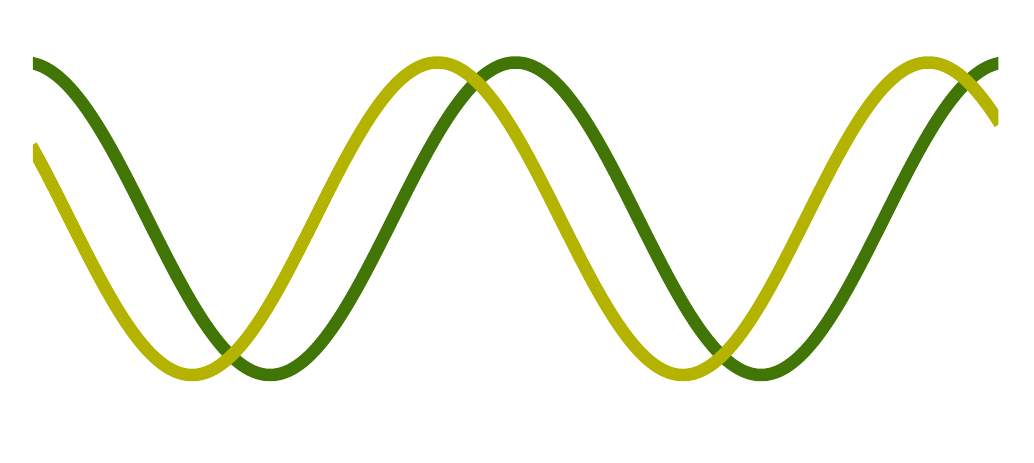}};
\draw[->,ultra thick,colorL] (-\xpic-\xdistar-\dx,-\ypic+\ydistar)--(-\xpic-\xdistar-\l-\dx,-\ypic+\ydistar);

\path(+\xpic,+\ypic)node[anchor=center]{\includegraphics[scale=0.25]{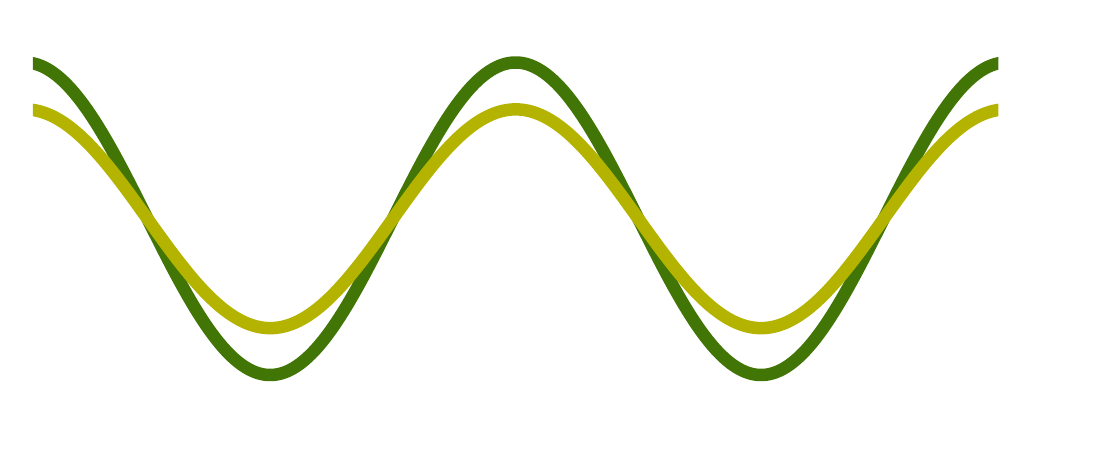}};
\draw[->,ultra thick,colorR] (+\xpic+\xdistar,+\ypic+\ydistar)--(+\xpic+\xdistar+\l,+\ypic+\ydistar);

\path(+\xpic,-\ypic)node[anchor=center]{\includegraphics[scale=0.25]{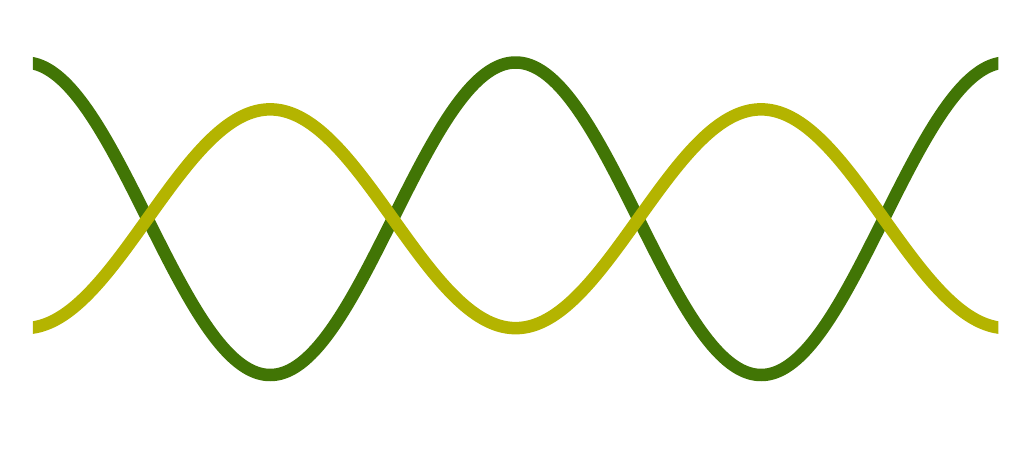}};
\draw[->,ultra thick,colorR] (+\xpic-\xdistar-\dx,-\ypic+\ydistar)--(+\xpic-\xdistar-\l-\dx,-\ypic+\ydistar);

%\draw [ultra thick,dashed] (-7,0)--(7,0);
\def \LX{2.5}
\def \CX{3.5}
\draw [ultra thick,dashed] (-\CX-\LX-\dx/2,0)--(-\CX+\LX-\dx/2,0);
\draw [ultra thick,dashed] (\CX-\LX-\dx/2,0)--(\CX+\LX-\dx/2,0);

\draw [ultra thick] (-\dx/2,2.5)--(-\dx/2,-2.5); % I need to slightly shift this vertical bar from x=0 otherwise it is closer to the arrow of the bottom-right panel than to that of the top-left, because of the dx problem.

\def \X{5.5}
\def \Y{2.5}
\draw [ultra thick] (-\dx/2,2.5)--(-\dx/2+\X,2.5+\Y);
\draw [ultra thick] (-\dx/2,2.5)--(-\dx/2-\X,2.5+\Y);

\def \xx{0.25}
\def \dY{-0.5}
\path(-\dx/2-0.,2.5+\Y+1+\dY)node[anchor=center]{\includegraphics[scale=0.25]{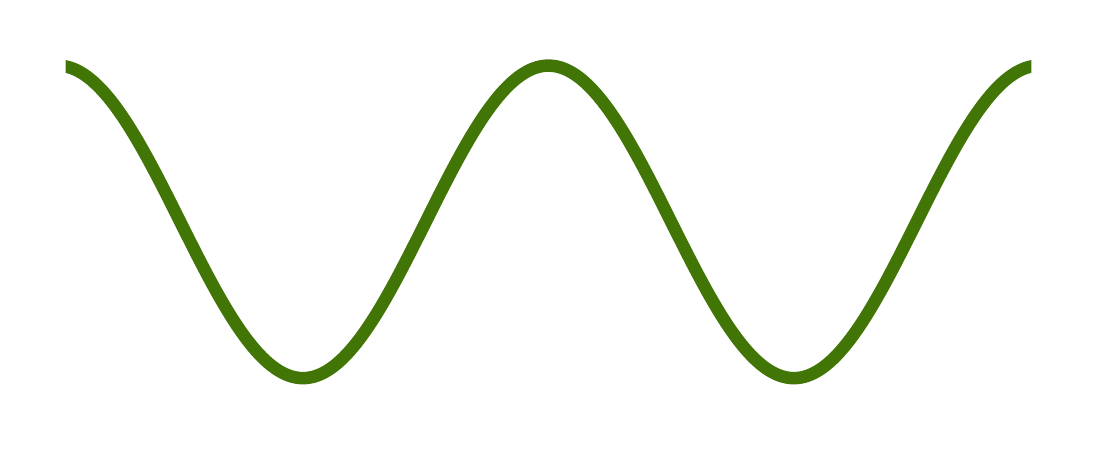}};
\draw[->,ultra thick,colorUp] (-\dx/2-0.15+\xdistar+\xx , 2.5+\Y+1+\ydistar+\dY)--(-\dx/2-0.15+\xdistar+\l+\xx , 2.5+\Y+1+\ydistar+\dY);
\draw[->,ultra thick,colorUp,densely dotted] (-\dx/2-0.15-\xdistar-\dx+\xx , 2.5+\Y+1+\ydistar+\dY)--(-\dx/2-0.15-\xdistar-\dx-\l+\xx , 2.5+\Y+1+\ydistar+\dY);

\def \vy{0.2}
\path(-\dx/2,2.5+\Y-0.8+\vy+\dY)node[anchor=center,draw=colorUp,rounded corners]{\LARGE \textcolor{colorUp}{$\mcS_{\rm self}$}};
\path(-\dx/2-0.15-5,2.5+\Y-0.8-\vy+\dY)node[anchor=center,draw=colorL,rounded corners]{\LARGE \textcolor{colorL}{$\mcA$}};
\path(-\dx/2+0.15+5,2.5+\Y-0.8-\vy+\dY)node[anchor=center,draw=colorR,rounded corners]{\LARGE \textcolor{colorR}{$\mcS_{\rm inter}$}};

\end{tikzpicture}
\caption{\label{fig:basis_phenomeno} 
A schematic picture of the (short-scale) phenomenology we expect from each subspace of the decomposition $\Omega^2_{\rm loc}(\mbF)=\mcA\oplus \mcS_{\rm self} \oplus \mcS_{\rm inter}$. The dark- and light- green curves represent the components $\delta\phi^1$ and $\delta\phi^2$, respectively, while the arrows give their direction of propagation.
The left (respectively right) panel illustrates the behavior of $\delta\phi^{1,2}$ when $\bomega$ belongs to the antisymmetric subspace $\mcA$ (inter-symmetric subspace $\mcS_{\rm inter}$, respectively) for given order $\ell$ and coefficient $\alpha_{12}$ (resp. $\beta_{12}$): the components propagate in a common direction which depends on whether their phase difference $\Delta\theta$ belongs to $(-\pi,0)$ or $(0,\pi)$ (respectively to $(-\pi/2,\pi/2)$ or $(\pi/2,3\pi/2)$). The top panel illustrates the behavior of a component $\delta\phi^i$ when $\bomega$ belongs to the self-symmetric subspace $\mcS_{\rm self}$: it propagates in a direction that solely depends on the order $\ell$ and the coefficient $\beta_{ii}$. 
}
\end{center}
\end{figure}

A lot of attention has been recently devoted to non-reciprocally interactive mixtures, notably for their relevance in biology. When
the interacting fields are scalar, the non-reciprocity of their interactions has been identified as a generic route to oscillating and traveling patterns, from the non-reciprocal Swift-Hohenberg model~\cite{fruchart2021non} (NRSH) to the non-reciprocal Cahn-Hilliard system~\cite{saha2020scalar,you2020nonreciprocity,frohoff2021suppression,suchanek2023entropy} (NRCH).
In the NRCH, the operator $\bb$ is made up of several divergence operators stacked on top of each other (one for each species), \ie $\bb=(\nabla\cdot,\dots,\nabla\cdot)^\top$, while in the Swift-Hohenberg model\footnote{The non-reciprocal Swift-Hohenberg model that appears in~\cite{fruchart2021non} is purely deterministic. However, no field is conserved in this model so the dominant noise term at large scales should be proportional to the identity operator, a case often considered in the literature on the classical (\ie reciprocal) Swift-Hohenberg model~\cite{hohenberg1992effects,vinals1991numerical,mohammed2013modulation,mohammed2015stochastic}.} $\bb$ is usually taken to be the identity (we disregard any multiplicative constant here).
In the various versions of the NRSH and NRCH models cited above, the drift is always of the form $\ba=\bD\bzeta$ where $\bzeta=-\delta\mcF / \delta\bphi - \balpha \bphi$, $\balpha$ being a spatially-uniform matrix that is not symmetric and hence breaks the reciprocity of the interactions. (We here consider that $\balpha$ is antisymmetric, as its symmetric part can always be absorbed in the functional derivative term in $\bzeta$.)
In both cases, $\bomega=\mbd\bzeta=\alpha_{ij}\delta^i\wedge \delta^j$ and $\whomega=2\balpha$ so that they belong to the antisymmetric subspace and are of order $\ell=0$. 
Their respective vorticity operator also belongs to the antisymmetric subspace, with that of the NRSH being of order $\ell=0$ while that of the NRCH is of order $\ell=1$.
%The vorticity operator $\bW$ of the NRSH model also belongs to the antisymmetric subspace and is of order $\ell=0$ while that of the NRCH model belongs to the antisymmetric subspace but is of order $\ell=2$. 
Indeed, while the diffusion operator is the identity in the former model, so that $\bW=-\whomega/2=-\balpha$, it is the identity matrix multiplied by minus the Laplacian operator in the latter, so that $\bW=\Delta\whomega/2=\balpha\Delta$. Note that, as announced in the beginning of section~\ref{subsec:basisPhenomeno}, the fact that $\bD\propto \Delta$ in the NRCH model only changes the order of the derivative and not the subspace to which $\bW$ belongs, as compared to $\bomega$ or $\whomega$.
All the references cited above about these two models have reported oscillatory and propagative behaviors as if one component was chasing the other, which is in accordance with the classification proposed in section~\ref{sec:basis} and its corresponding phenomenological picture. This is, surprisingly, in spite of their respective stationary probability not being concentrated in the vicinity of a given static profile as it is the case \eg for the AMB.
Finally, it is worth emphasizing that the matrix $\balpha$ in these models is always considered uniform, so that the possible exotic dependence on the spatial subdomain or spatial scale of which component is ``chasing'' the other (as mentioned at the end of the previous paragraph) should not appear and does not seem to have been observed. It would thus be most interesting to either find or invent such a model.

Another example of a dynamics whose vorticity two-form belongs to the antisymmetric subspace $\mcA$ is given the following reaction-diffusion dynamics between species $u$ and $v$
\begin{eqnarray}
\partial_t u &=& \Delta u + P(u,v) + \sqrt{D_u}\eta_u \ , \label{turing_u}\\
\partial_t v &=& \Delta v + Q(u,v) + \sqrt{D_v}\eta_v \ , \label{turing_v}
\end{eqnarray}
where the strictly-local functionals $P$ and $Q$ are the reaction terms, $D_u$ and $D_v$ are constants, $\boldeta(x,t)\equiv(\eta_u,\eta_v)^\top$ is a random field with the same statistical properties as in eq.~\eqref{EDPS02}, and the space-time units are chosen such that the diffusion constants of $u$ and $v$ are set to unity. Note that the noise should also contain conservative terms arising from diffusion, but these are typically negligible when compared to non-conserved terms on sufficiently large spatial scales.
The vorticity two-form associated with dynamics~\eqref{turing_u}-\eqref{turing_v} is straightforwardly shown to read
\begin{equation}
\bomega = \left(D_u^{-1} \frac{\partial P_x}{\partial v} - D_v^{-1} \frac{\partial Q_x}{\partial u} \right) \delta^{vx}\wedge \delta^{ux} \ ,
\end{equation}
and hence indeed belongs to the antisymmetric subspace $\mcA$, in accordance with the traveling patterns notoriously observed in out-of-equilibrium reaction-diffusion equations~\cite{cross2009pattern}.

When the field $\bphi$ takes values in an Euclidean space of dimension $d_2>2$, the generic phenomenological picture of the antisymmetric subspace depicted above for $d_2=2$ remains very similar when observed in $\mbR^{d_2}$, although its physical manifestation may appear different at first sight.
Indeed, for any antisymmetric matrix $\balpha$ of size $d_2\times d_2$, there exists a basis in which $\balpha$ can be block-diagonalized, with each block being two-by-two and of the form of the right-hand side of eq.~\eqref{Omega_alpha_family} (in odd dimension, there is an additional column of zeros in the matrix)\cite{gantmacher1998thoery}. The phenomenology of the dynamics of each projection of $\delta\bphi$ onto one of the two-dimensional subspaces of $\mbR^{d_2}$ corresponding to a block is hence the same as the generic two-dimensional phenomenology depicted above. The global phenomenology is then a superposition of that of each two-dimensional projection.
This similarity might be hard to perceive when $\bphi$ represents \eg a pair of two-dimensional vector-field stacked together, in which case one directly observes the collective time-evolution of the two vector fields over $\mbR^2$ rather than that of $\bphi$ in $\mbR^4$. This is the case for instance in the field dynamics presented in~\cite{fruchart2021non} which is the coarse-grained hydrodynamics of a mixture of two species of self-propelled particles where one species tries to align with the second, while the latter tries to anti-align with the former. The authors of this article reported several non-equilibrium phases, namely a chiral phase, a swap phase, and phase that is both chiral and swap.
In appendix~\ref{app:NR_flock}, we study a stochastic version of this model and show its vorticity two-form to belong to the antisymmetric subspace. We then show that the coordinates of this vorticity two-form accounts for the chiral, swap, and chiral-swap phases (at least at linear order).

\subsubsection{The symmetric subspace: phase separation, interface growth, and flocking}
\label{subsubsec:oddSubspace}

We now consider a cycle affinity operator of the symmetric subspace that reads $\whomega=-(\bbeta\partial_x^{2\ell+1} + \partial_x^{2\ell+1} \bbeta)$ where $\ell$ is an integer and $\bbeta$ an $x$-dependent symmetric matrix.
Again, since $\bD$ is assumed to be the identity, the associated vorticity operator is
\begin{equation}
\bW = \frac{1}{2}\left[\bbeta\partial_x^{2\ell+1} + \partial_x^{2\ell+1}\bbeta \right] \ .
\label{oddSubspace_vorticityOperator}
\end{equation}
Below, we separate the analysis of self-symmetric and inter-symmetric terms in the matrix $\bbeta$. This may seem redundant at first sight since $\bbeta$, as a real symmetric matrix, can always be diagonalized in an appropriate basis in which, in turn, $\bomega$ would belong to the self-symmetric subspace. And indeed, if the field $\bphi$ is a proper $\mbR^{d_2}$-valued vector field, then changing the basis amounts to adopting an equivalent representation, and the study boils down to the case where $\bbeta$ is diagonal.
However, if $\bphi$ is not a true vector field, but rather a collection of different fields (\eg of mass and polarity) stacked together, then changing basis in $\mbR^{d_2}$ may mix components of different natures and hence does not make sense from a physical standpoint.
%and one would thus need to keep track of the components of $\bphi$ (or of any perturbation $\bpsi$) in the canonical basis of $\mbR^2$. 
 In the latter case, (block-) diagonal and off-(block-)diagonal terms in $\bbeta$ are drastically different, as they respectively correspond to self- and cross-interactions of the various fields that composed $\bphi$.
We still focus on the $d_2=2$ case and further assume that the two components of $\bphi$ represent different (scalar) fields. Consequently, $\bbeta$ is a two-by-two matrix whose self-symmetric (respectively inter-symmetric) components correspond to the diagonal (respectively off-diagonal) ones.

\paragraph{Self-symmetric subspace.}
In this paragraph, we consider the case where $\bbeta$ is diagonal, \ie $\beta_{12}=\beta_{21}=0$.
Just as we previously did for the antisymmetric subspace, we start by considering the limit-regime where $\ell_{\delta\phi}\gg\ell_W$ in which the vorticity operator acts by multiplication by the symmetric matrix $\partial_x^{2\ell+1}\bbeta/2$: 
\begin{equation}
\bW\simeq [\partial^{2\ell+1}_x\bbeta/2] = \begin{bmatrix}
 \partial^{2\ell+1}_x\beta_{11}/2 & 0 \\
0 &  \partial^{2\ell+1}_x\beta_{22}/2\\
\end{bmatrix} \ .
\label{diagonal_long_waveLength_Omega}
\end{equation} 
If $[\partial^{2\ell+1}_x\bbeta/2]\neq 0$ -- which exclude in particular the case where $\bbeta$ is independent of $x$ -- then the $i^{th}$ component of $\bphi$ either increases or decreases exponentially at rate $\partial^{2\ell+1}_x\beta_{ii}/2$, which typically happens until $\ell_{\delta\phi}\sim\ell_W$, a time at which the approximation~\eqref{diagonal_long_waveLength_Omega} is no longer valid.
Note that, as this regime must hence be transient, its nature is quite different from that of the same limit in the antisymmetric subspace. It does not describe the irreversible evolution of fluctuations in the vicinity a given profile, but rather a shift in the latter profile, as we saw for the AMB example in section~\ref{subsec:vorticityOperator}.

In the opposite limit-regime, when $\ell_{\delta\phi}\ll\ell_W$, the vorticity operator approximately reads 
\begin{equation}
\bW\simeq \bbeta\partial^{2\ell+1}_x = \begin{bmatrix}
\beta_{11} & 0 \\
0 & \beta_{22} \\
\end{bmatrix} \partial_x^{2\ell+1} \ .
\end{equation}
Then, if we focus on a generic harmonic mode perturbation $\delta\bphi_k$ that is initially of the form~\eqref{bpsik_initial}, the solution of dynamics~\eqref{BOmega_dynamics_1d} is given by (see appendix~\eqref{app:betaFamily})
\begin{equation}
\delta\bphi_k(t)= \begin{bmatrix} \rho_1 \cos(kx + W_k^1 t + \theta_1) \\ \rho_2 \cos(kx + W_k^2 t + \theta_2) \end{bmatrix} \ ,
\label{solutionBetaFam01}
\end{equation}
with
\begin{equation}
W_k^j = (-1)^{\ell}k^{2\ell+1} \beta_{jj} \ .
\end{equation}
Hence, each component propagates independently along $\mbR$ at speed and direction imposed by $W^j_k/k=(-1)^{\ell}k^{2\ell} \beta_{jj}$ .
In~\cite{o2023nonequilibrium}, the author showed the AMB and the KPZ equation to display such a phenomenology, and proved that their vorticity two-forms are both generated by the self-symmetric subfamily of the basis~\eqref{1dBasis2Forms}. Note that this fact is directly implied by two properties shared by the AMB and the KPZ dynamics: their respective one-forms $\bD^{-1}\ba$ are both local in the field; moreover both these dynamics describes the time evolution of a sole scalar-valued field, \ie $d_2=1$, a case in which the basis~\eqref{1dBasis2Forms} reduces to the self-symmetric family. This confirms the intuition that run-and-chase behaviour cannot emerge from a single species model. 
Finally, the phenomenological predictions made in~\cite{o2023nonequilibrium} were based on a qualitative analysis of the respective (functional) probability currents of AMB and KPZ. It is worth emphasizing that these predictions can be obtained more straightforwardly and specified by computing the corresponding vorticity operators $\bW$ and analyzing their flow, as we have demonstrated in section~\ref{subsubsec:vorticityOp} for AMB. A similar analysis can be carried out for KPZ dynamics, which reads:
\begin{equation}
\partial_t h = \kappa \Delta h + \lambda |\nabla h |^2 + D\eta \ ,
\end{equation}
where $h(x,t)$ is the height field, $D$ the diffusivity constant, $\eta$ a random field with the same statistics as in eq.~\eqref{EDPS02}, and $\kappa(h(x,t))$ and $\lambda(h(x,t))$ are considered as local functions of $h$ for greater generality.
The vorticity two-form of KPZ equation coincides with that of AMB (eq.~\eqref{cycleAffAMB}), while its vorticity operator is readily found  -- using formula~\eqref{Omega_vs_whomega} -- to be:
\begin{equation}
\bW=-\frac{D}{2}\left\{ (2\lambda+\kappa')\nabla h \cdot \nabla\delta h + \nabla\cdot \left[\delta h (2\lambda+\kappa')\nabla h \right]\right\} \ .
\end{equation}
%where $h$ denotes the height field and $D$ the diffusivity constant of the model.

\paragraph{Inter-symmetric subspace.} Let us now turn to the inter-symmetric subspace, which can be embodied by a matrix $\bbeta$ whose diagonal coefficients are both identically zero.
In the first limit regime, where $\ell_{\delta\phi}\gg\ell_W$, $\bW$ acts as a multiplication by the matrix $[\partial^{2\ell+1}_x\bbeta/2]$:
\begin{equation}
 \bW\simeq [\partial^{2\ell+1}_x\bbeta/2] = (\partial^{2\ell+1}_x\beta_{12}/2]) \begin{bmatrix}
0 & 1 \\
1 & 0 \\
\end{bmatrix} \ .
\label{offDiagonal_long_waveLength_Omega}
\end{equation}
The solution of the vortex dynamics~\eqref{BOmega_dynamics_1d} is then
\begin{equation}
\delta\bphi(\br,t) =  \begin{bmatrix}
 \cosh (\nu t) & \sinh(\nu t) \\
 \sinh(\nu t) & \cosh (\nu t) \\
\end{bmatrix} \delta\bphi(\br,0) \ ,
\end{equation}
with $\nu\equiv \partial^{2\ell+1}_x\beta_{12}/2$.
At time $t\gg\nu^{-1}$, this solution asymptotically becomes
\begin{equation}
\delta\bphi(t) \simeq \frac{e^{|\nu| t}}{2}(\delta\phi^1(0)+\textrm{sign}(\nu)\delta\phi^2(0)) \begin{bmatrix}
1 \\ \textrm{sign}(\nu) \\
\end{bmatrix} \ ,
\end{equation}
\textit{i.e.}, at long time, if $\nu>0$ (respectively $\nu<0$), $\delta\phi^1$ and $\delta\phi^2$ becomes equal (respectively opposite) and diverge exponentially fast at rate $|\nu(x)|$, until $\ell_{\delta\phi}\sim\ell_W$, at which point  approximation~\eqref{offDiagonal_long_waveLength_Omega} is not longer valid. Just as in the self-symmetric case, this regime should not be seen as a heuristic description of the fluctuations' dynamics in steady state around a typical profile, but rather a contribution to the creation of this typical profile.

In the opposite $\ell_{\delta\phi} \ll \delta_W$ regime, the vorticity operator~\eqref{oddSubspace_vorticityOperator} is approximately given by
\begin{equation}
\bW \simeq \bbeta \partial_x^{2\ell+1} =  \beta_{12}
\begin{bmatrix}
0 & 1 \\
1 & 0 \\
\end{bmatrix} \partial_x^{2\ell+1} \ .
\end{equation}
The solution of dynamics~\eqref{BOmega_dynamics_1d} starting from a pure harmonic perturbation $\delta\bphi_k(0)$ given by eq.~\eqref{bpsik_initial} reads:
\begin{equation}
\delta\bphi_k(x,t) =  \rho^+ \cos(kx+W_k^1t+\theta^+)  \begin{bmatrix} 1 \\ 1\end{bmatrix} +  \rho^- \cos(kx-W_k^1t+\theta^-) \begin{bmatrix} 1 \\ -1\end{bmatrix} \ ,
\label{solutionBetaFam02}
\end{equation}
where $W_k^1 = (-1)^\ell k^{2\ell+1} \beta^{(\ell)}_{12}$, and $\rho^\pm,\theta^\pm$ are given in appendix~\ref{app:betaFamily}.
We see that, in this case, the `$+$' and `$-$' components, which are respectively defined as the first and second term of the right-hand side of eq.~\eqref{solutionBetaFam02}, have coordinates $\delta\phi_k^1,\delta\phi_k^2$ that are respectively in phase and phase opposition and propagate on $\mbR$ in the directions $\text{sign}(-W_k)$ and $\text{sign}(W_k)$, respectively (see figure~\ref{fig:basis_phenomeno}).
%
%We see that, in this case, the $`+'$ (respectively $`-'$) component (respectively defined as the first and second term of the right-hand side of eq.~\eqref{solutionBetaFam02}), have coordinates $\delta\phi_k^1,\delta\phi_k^2$ that are in phase (respectively in phase opposition) and propagates in the direction $\text{sign}(-W_k)\be_x$ (respectively $\text{sign}(W_k)\be_x$).
%

In figure~\ref{fig:basis_phenomeno}, we summarize the typical behavior of both the antisymmetric and inter-symmetric subspaces in the $\ell_{\delta\phi}\ll\ell_{W}$ regime. Their main difference is that the direction of propagation depends on whether $\Delta\theta$ belongs to $(-\pi,0)$ or to $(0,\pi)$ in the antisymmetric subspace, while it depends on whether $\Delta\theta$ belongs to $(-\pi/2,\pi/2)$ or to $(\pi/2,3\pi/2)$ in the inter-symmetric subspace, where $\Delta\theta$ denotes the phase difference between $\delta\phi^1$ and $\delta\phi^2$. In other words it depends on which component is ``behind the other'' in the former case, and on whether these components are ``together or apart'' in the latter.

Interestingly, to the best of our knowledge, there is no field theory in the literature describing the collective dynamics of the densities of two active species whose vorticity two-form belongs to the inter-symmetric subspace. In such a case, it could describe a kind of cooperation phenomenon between the two species: if they are co-localized, they propagate in one direction, while if they are demixed, they propagate in the other.

On the other hand, if we now think about the components of $\delta\bphi$ as being respectively a density and a polarization in one spatial dimension, then the inter-symmetric phenomenology depicted above strikingly resembles that of flocks of aligning self-propelled particles.
To confirm this correspondence, we now consider a stochastic (coarse-grained) Active Ising Model (AIM)~\cite{solon2015flocking}, which reads:
\begin{eqnarray}
\partial_t\rho &=& D \partial_x^2\rho - v\partial_x m +\partial_x \sqrt{2D} \eta_\rho \ ,\label{AIM_dynamics_rho}
\\
\partial_t m &=& D\partial_x^2 m - v\partial_x\rho + \gamma_1 m + \gamma_2 m^3 + \sqrt{2D} \eta_m \ ,
\label{AIM_dynamics_m}
\end{eqnarray}
where $D$ is a diffusion constant, $v$ is the self-propulsion speed, and $\gamma_{1,2}$ parametrize the aligning interactions. In the original version of the AIM, $\gamma_1$ is constant whereas $\gamma_2$ depends on $\rho(x)$. This latter fact would generate a term in the vorticity two-form that stems from an irreversibility in the alignment interaction (see appendix~\ref{app:active_Ising}). As we want to focus here on the irreversibility that results from the self-propulsion only, we consider a simplified version of the model where $\gamma_2$ is also constant. 
 In eqs.~\eqref{AIM_dynamics_rho}-\eqref{AIM_dynamics_m}, $\eta_{\rho,m}(\br,t)$ are independent random Gaussian fields whose statistics are the same as the components of $\boldeta$ in eq.~\eqref{EDPS02}. These noise terms are absent in the hydrodynamic equation initially derived from the microscopic AIM~\cite{solon2015flocking}. We add them here to account for the macroscopic fluctuations around the hydrodynamic limit in the simplest possible way\footnote{Note that the structure of the coasre-grained noise was explicitly computed in~\cite{scandolo2023active} for several versions of AIM, not including our constant $\gamma_2$ AIM.} that is compatible with the fact that $\rho$ is conserved and $m$ is not.
Consequently, the resulting dynamics~\eqref{AIM_dynamics_rho}-\eqref{AIM_dynamics_m} can be seen as a stochastic diffusion equation and a model A dynamics that are coupled together by self-propulsion.

In appendix~\ref{app:active_Ising}, we show the vorticity two-form of dynamics~\eqref{AIM_dynamics_rho}-\eqref{AIM_dynamics_m} to read
\begin{equation}
\bomega \equiv \mbd \bD^{-1}\ba = -\frac{v}{D}\delta^{\rho x} \wedge \partial\delta^{m x} + \frac{v}{D} G_{xy} \delta^{\rho x} \wedge \partial\delta^{m y} \ ,
\label{AIM_w_2form}
\end{equation}
where $G_{xy}\equiv G(x-y)$ is the Green function of the Laplacian in one dimension, \ie $\partial_x^2 G(x-y)=\delta(x-y)$.
As anticipated, $\bomega$ has a component along the inter-symmetric subfamily: the first one on the right-hand side of eq.~\eqref{AIM_w_2form}.
Interestingly, $\bomega$ also possesses a non-local component, the second term on the right-hand side of eq.~\eqref{AIM_w_2form}, and is thus not entirely generated by the local vorticity basis~\eqref{1dBasis2Forms}. Note that this does not contradict our general result about the fact that the family~\eqref{1dBasis2Forms} is a basis for the space $\Omega^2_{\rm loc}(\mbF)$ precisely because the one-form $\bD^{-1}\ba$ of dynamics~\eqref{AIM_dynamics_rho}-\eqref{AIM_dynamics_m} is non-local.
Despite the presence of this non-local component, we show in appendix~\ref{app:active_Ising} that the vorticity dynamics~\eqref{Vortex_dynamics} associated to the two-form~\eqref{AIM_w_2form} displays a phenomenology very similar to that of the inter-symmetric subfamily, described above and summarized in Fig.~\ref{fig:basis_phenomeno}. More precisely, a harmonic perturbation whose density and polarity components are in phase propagates in the direction of increasing $x$, while it propagates in the opposite direction when the components are in antiphase. 
This confirms the heuristic correspondence drawn above between the phenomenology one could expect from the inter-symmetric subspace and collective motion.

%\begin{eqnarray}
%\bD^{-1}\ba = \left(-\rho_x + \frac{v}{D}G_{xz}\partial_z m_z\right)\delta^{\rho x}  + \left( \partial^2_x m_x -\frac{v}{D}\partial_x\rho_x + \frac{\gamma_1}{D}m_x +\frac{\gamma_2}{D}m_x^3 \right) \delta^{m x} \ ,
%\end{eqnarray}

In the next subsection, we discuss the results of subsections~\ref{subsec:basisOf2forms}-\ref{subsec:basisPhenomeno} and their limitations, and elaborate on their possible extensions to broader situations.

\subsection{Discussion}
\label{subsec:discussion_extension}

In section~\ref{sec:basis}, we have constructed a basis of the space $\Omega_{\rm loc}^2(\mbF)$ of local two-forms in $d_1=1$ dimension. This basis first provides explicit reversibility condition(s) for a dynamics~\eqref{EDPS02} whose one-form $\bD^{-1}\ba$ is local in $\bphi$.
Using the vortex dynamics~\eqref{Vortex_dynamics}, we then showed that it could be decomposed in three subfamilies, with the elements of each displaying generic phenomenologies.
Surprisingly, this approach exposes many celebrated out-of-equilibrium phenomena of which it provides a unified description.
This viewpoint could also be useful to ``transport a phenomenology'' from an active field theory to another, as we have seen that \eg the flocking behavior of the symmetric family has not been observed yet in mixture of active species, where it could be interpreted as describing some sort of cooperative behavior.

However, we point out that the correspondence between the subspace of $\Omega_{\rm loc}^2(\mbF)$ and the out-of-equilibrium phenomenology described above strictly applies only when the diffusion operator $\bD$ is proportional to the identity, 
since the typical irreversible phenomenology of an out-of-equilibrium dynamics of the form~\eqref{EDPS02} is given by the vorticity operator $\bW=-\bD\whomega/2$ rather than by the cycle-affinity operator (or equivalently the vorticity two-form $\bomega$).
Interestingly, this correspondence remains qualitatively unchanged on small scales $\ell_{\delta\phi}\ll \ell_W$ when 
\begin{equation}
\bD \underset{\ell_{\delta\phi}\ll \ell_W}{\simeq} \sum_p D_p \partial_x^{2p} 
\label{a_nice_D_operator}
\end{equation}
where each $D_p$ is a constant scalar. This is because an element of $\whomega$, belonging to one of the three identified subspaces, remains in that subspace upon multiplication by a derivative of order two.
But a field theory whose diffusion operator is not of the form~\eqref{a_nice_D_operator} on small scales
%than (a combination of) the identity and $\partial_x^2$ (or the Laplacian in higher dimension $d_1$) 
could display an irreversible phenomenology which deviates from those predicted above for the same vorticity two-form.
For such a diffusion operator, the general phenomenological study we conducted in~\ref{subsec:basisPhenomeno} should thus be adapted to describe the class of stochastic dynamics~\eqref{EDPS02} sharing this given $\bD$.

Besides, in spatial dimension $d_1>1$, the phenomenology of the family~\eqref{1dBasis2Forms_higherd1} is very similar to its one-dimensional counterpart~\eqref{1dBasis2Forms}.
Nevertheless, we have seen in section~\ref{subsec:higher_dimension_d1} that, contrary to the $d_1 =1$ case, the family~\eqref{1dBasis2Forms_higherd1} is not a basis of the space of local two-forms $\Omega_{\rm loc}^2(\mbF)$. Thus, this family remains to be completed appropriately, and the phenomenology of the resulting additional subspace $\mcC$ of $\Omega_{\rm loc}^2(\mbF)$ explored, in future work. 

We have also seen, with the example of the AIM, that the one-form $\bD^{-1}\ba$ of dynamics~\eqref{EDPS02} can be non-local in $\bphi$, in which case a basis of local vorticities is not sufficient to decompose the corresponding vorticity $\bomega\equiv \mbd \bD^{-1}\ba$. 
We denote by $\Omega^2(\mbF)$ the space of arbitrary two-forms (\ie which are not necessary local) and $\mcN$ its subspace which is made of purely non-local two-forms\footnote{In general $\bomega= \omega_{ijxy}\delta^{ix}\wedge \delta^{jy}$. We say that a two-form $\bomega$ is purely non local iff $\omega_{ijxy}$ does not contain Dirac $\delta(x-y)$ nor any of its derivatives in any direction.}.
The question of whether any general properties of the elements of $\mcN$ could be exhibited remains to be studied. 

Importantly, we proved the family~\eqref{1dBasis2Forms} to be a basis of two-forms only when the space $M$ on which the field $\bphi$ is defined is the real line, $M=\mbR$. This proof can easily be adapted if $M$ is the circle but, if there are boundaries, it fails and the family~\eqref{1dBasis2Forms} must then be completed with two-forms that are \textit{localized at the boundary $\partial M$ of $M$}\footnote{By ``localized at the boundary of $M$'', we mean that these additional two-forms, applied to a pair $(\delta\bphi_1,\delta\bphi_2)$ of fluctuations around a given $\bphi$, only depend on the values of $\bphi,\delta\bphi_1,\delta\bphi_2$ and their derivatives (up to a finite order) at the boundary of $M$.}. Similarly, if the dimension $d_1$ of the space $M$ is higher than one, the generalized decomposition~\eqref{Decomposition_space_local_vorticities_higher_d1} is valid only when $\partial M = \emptyset$. When that is not the case, this decomposition should be completed with the space of vorticity two-forms that are localized at $M$, which we denote by $\mcB$.
Although these vorticities $\bomega\in\mcB$ are beyond the scope of this article, it is worth noting that they might be good candidates to systematically account for the notoriously diverse and surprising behavior of active systems at their boundaries~\cite{souslov2019topological,ben2022disordered,shankar2022topological}.

All in all, the space of two-forms $\Omega^2(\mbF)$, to which the vorticity of all dynamics of the form~\eqref{EDPS02} belongs, can be decomposed as
\begin{equation}
\Omega^2(\mbF) = \mcA \oplus \mcS_{\rm self} \oplus \mcS_{\rm inter}\oplus \mcC \oplus \mcN \oplus \mcB \ .
\label{general_decomposition_vorticity_space}
\end{equation}
Interestingly, we show in section~\ref{sec:entropy} the entropy production rate of dynamics~\eqref{EDPS02} to be a linear functional of the vorticity $\bomega$. Consequently, ~\eqref{general_decomposition_vorticity_space} can be seen as a \textit{decomposition of independent sources of entropy production}. 
In section~\ref{sec:basis}, we studied the first three components of this decomposition, while the exploration of the others remains an exciting challenge for the future.
 
Beyond the perspective of bringing order to the vast zoo of out-of-equilibrium field theories, decomposition~\eqref{general_decomposition_vorticity_space} could also be used to build a nonequilibrium dynamics with desired properties. To this end, one could start from an equilibrium limit of the dynamics that is sought, then construct a vorticity two-form with the appropriate components in each subspace~\eqref{general_decomposition_vorticity_space}, then integrate it\footnote{Note that to be integrated, the contructed two-forms must be exact. To verify that this is the case, one could try to use the coordinates expression~\eqref{eq:expressionAlpha}-\eqref{eq:expressionBeta} of a generic exact two-form. But a more practical criterion would be to check whether the exterior derivative of this two-form vanishes.} and multiply the resulting $\bD^{-1}\ba$ by $\bD$ to obtain a functional vector field whose addition to the equilibrium drift should produce the desired effects, at least qualitatively.
Finally, we will see in section~\ref{sec:MTreversal} that new sources of entropy production can add up to $\bomega$, either when considering a more general notion of time-reversal, or when the space $\mbF$ is not simply connected.
% -- the new sources in the latter case possibly allowing a systematic account of the notoriously surprising behavior of topological defect out of equilibrium (see~\cite{shankar2022topological} and references therein).

\newpage
\section{Vorticity and entropy production: what is captured and what is missed}
\label{sec:entropy}

\subsection{Entropy production and the vorticity 2-form}
\label{subsec:entropyProd}

To further stress the role or the vorticity two-form $\bomega$ in the possible TRS-breaking of dynamics~\eqref{EDPS02}, we study in this section its relation to entropy production. In particular, we show that $\bomega$ can be interpreted as the entropy production per unit area in the space $\mbF$.
%--the unit of an area in $\mbF$ being having the same physical dimension as the square of $\bphi$.

\subsubsection{Path-wise entropy production}
As stated at the beginning of section~\ref{subsec:RevCondFuncExtDer}, all the components of $\bphi$ are supposed to be even under time-reversal (see section~\ref{sec:MTreversal} for a more general situation, involving odd degrees of freedom). In such a situation, the path-wise entropy production~\cite{seifert2005entropy} -- more precisely the path-wise {\em informatic} entropy production~\cite{fodor2022irreversibility} -- of dynamics~\eqref{EDPS02} along a trajectory $(\bphi_t)_{t\in\mbT}$ is defined as:
\begin{equation}
\whSigma[(\bphi_t)_{t\in\mbT}] \equiv \ln \frac{\mcP[(\bphi_t)_{t\in\mbT}]}{\mcP[(\bphi_{\mcT-t})_{t\in\mbT}]} \ . 
\label{pathWiseEntProdDef}
\end{equation}
A direct computation\footnote{We do not reproduce this now classical computation in this article. But thanks to our geometrical conventions of notation, it can be directly and straightforwardly deduced from the same computation in finite dimension (see for instance our companion paper~\cite{o2024geometric}) by replacing the finite dimensional indices $i,j,\dots$ by their continuous analogs $i\br, j\br',\dots$ and using the generalized Einstein convention introduced in this article.},
using \eg the Onsager Machlup functional, shows $\whSigma$ to obey
\begin{equation}
\whSigma[(\bphi_t)_{t\in\mbT}] = \int_0^\mcT [D^{-1}a]_{i\br}\partial_t\phi^{i\br} \rmd t -\ln \Pss[\bphi_\mcT] + \ln\Pss[\bphi_0] \ ,
\label{pathWiseEntProd01}
\end{equation}
where the stochastic integral is defined in the Stratonovich sense.
We directly read from this equation that, just as in finite dimension, $\whSigma$ interestingly does not depend on the time-parametrization of the oriented path $\mcC\equiv (\bphi_t)_{t\in\mbT}$. This fact further underlines the role of geometry in the study of the behavior of dynamics~\eqref{EDPS02} with respect to time-reversal.

Dynamics~\eqref{EDPS02} is reversible iff $\whSigma$ vanishes for every continuous path in $\mbF$. It turns out that this proposition remains true if the set of continuous path is restricted to the set of continuous loops (this can be seen \eg by comparing eqs.~\eqref{loopWiseEntProd02} \&~\eqref{fieldIEPR02} below). Therefore, let us now consider an oriented path $\mcC\equiv (\bphi_t)_{t\in\mbT}$ that is a loop in $\mbF$, \ie such that $\bphi_0=\bphi_{\mcT}$. 
The boundary terms in the expression~\eqref{pathWiseEntProd01} of $\whSigma$ then vanish and the entropy produced along $\mcC$ reads
\begin{equation}
\whSigma[\mcC] = \int_{\mcC} \bD^{-1}\ba \ .
\label{loopWiseEntProd01}
\end{equation}
In eq.~\eqref{loopWiseEntProd01}, the right-hand side stands for the integral of the one-form $\bD^{-1}\ba$ along $\mcC$, which can be computed explicitly \via any parmetrization $(\bphi_t)_{t\in\mbT}$ of $\mcC$, in which case it takes the form of the integral on the right-hand side of eq.~\eqref{pathWiseEntProd01}.
Having assumed that $\mbF$ is simply connected, there exists an oriented surface $\mcS\subset\mbF$ of which $\mcC$ is the boundary, a property that we denote by $\mcC=\partial\mcS$.
Thanks to our functional Stokes' theorem (see section~\ref{subsec:RevCondFuncExtDer} and appendix~\ref{App:subsec:Stokes}) we deduce that
\begin{equation}
\whSigma[\partial\mcS]=\int_{\partial \mcS} \bD^{-1}\ba = \int_\mcS\mbd \bD^{-1}\ba =\int_\mcS \bomega \ ,
\label{loopWiseEntProd02}
\end{equation}
which is a generalization to functional spaces of a result recently obtained in finite dimension~\cite{yang2021bivectorial}.
In order to interpret what this last equation says about the nature of $\bomega$, we can let the surface $\mcS$ become infinitesimal around a given $\bphi\in\mbF$, $\mcS\to \delta \mcS_{\bphi}$, so that the entropy produced along this oriented infinitesimal surface is $\whSigma[\delta\mcS]=\bomega_{\bphi} \cdot \delta \mcS_{\bphi}$. In this last equation $\bomega_{\bphi} \cdot \delta \mcS_{\bphi}$ is an informal notation for $\bomega_{\bphi}(\delta\bphi_1,\bphi_2) = \omega_{i\br j\br'}[\bphi]\delta\phi^{i\br}_1\delta\phi^{j\br'}_2$, where $\delta\bphi_1,\delta\bphi_2\in T_{\bphi}\mbF$ spans $\delta\mcS_{\bphi}$. Therefore, $\bomega\equiv\mbd\bD^{-1}\ba$ can be interpreted as \textit{the entropy production per unit (oriented) area} in $\mbF$, as claimed at the beginning of this section.

\subsubsection{Entropy production rate}
Along the lines of this interpretation, we can even show that the (global) informational entropy production rate (IEPR) ,
which is defined as
\begin{equation}
\sigma\equiv \lim_{\mcT\to\infty} \frac{1}{\mcT}\whSigma[(\bphi_t)_{t\in\mbT}] \label{def_IEPR} \ ,
\end{equation}
%$\sigma\equiv \lim_{\mcT\to\infty} \frac{1}{\mcT}\whSigma[(\bphi_t)_{t\in\mbT}]$
 is itself a linear functional of $\bomega$. Indeed, a computation that is similar to what is done in finite dimension (see \eg~\cite{o2024geometric}), first leads to the following expression of the IEPR:
\begin{equation}
\sigma = \llangle [D^{-1}a]_{i\br}\partial_t\phi^{i\br} \rrangle = \int_{\mbF}   [D^{-1}a]_{i\br} \Jss^{i\br} \mcD\bphi \ ,
\label{fieldIEPR01}
\end{equation}
where $\llangle \dots \rrangle$ designates the average in steady state and $\bJss$ is the stationary probability current associated to dynamics~\eqref{EDPS02}.
Let us denote by $\DDiv$ the functional divergence operator (with respect to the Lebesgue measure $\mcD\bphi$), which associates to a vector field $\bv[\bphi]$ over $\mbF$ the scalar-valued functional 
\begin{equation}
\DDiv(\bv)\equiv \frac{\delta v^{i\br}}{\delta \phi^{i\br}} \ .
\label{def_div}
\end{equation}
Since $\bJss$ is stationary, it is divergence-free: $\DDiv(\bJss)=0$. Formally extending Hodge-de Rham theory~\cite{warner1983foundations} to this functional context, because $\mbF$ is simply-connected, we can conclude that there exists a functional, antisymmetric, contravariant tensor of order two, $\bC$, such that $\bJss=-\DDiv(\bC)$, \ie $\Jss^{i\br} = -\frac{\delta C^{j\br' i\br}}{\delta \phi^{j\br'}}$.
%
%Then, we show in appendix~\ref{App_sec:functionalHodge} that the usual finite-dimensional Hodge theory can be generalized to this field theoretic context, at least formally. It follows that, $\bJss$ being divergence-free, there exists a skew-symmetric functional contravariant tensor field $\bC$ of order two such that $\Jss^{i\br} = -\frac{\delta C^{j\br' i\br}}{\delta \phi^{j\br'}}$. 
In turn, we can inject the latter expression into formula~\eqref{fieldIEPR01} and, upon (functionally) integrating by parts and using the antisymmetry of $\bC$, we get
\begin{equation}
\sigma = \frac{1}{2}\int_{\mbF}  \bC\cdot \bomega \; \mcD\bphi\ ,
\label{fieldIEPR02}
\end{equation}
where $\bC\cdot \bomega \equiv C^{j\br' i\br} \omega_{j\br' i\br}$. We see that the IEPR is indeed a linear functional of the vorticity two-form $\bomega$. Once again, this generalizes to field theory a result recently obtained for finite dimensional systems~\cite{yang2021bivectorial}. 
Note that, because there is no stationary probability current when $\bomega=0$, \ie when dynamics~\eqref{EDPS02} is reversible, we can conclude that the presence of a non-vanishing $\bJss$ is a consequence of the irreversibility, which is itself driven by a non-zero $\bomega$. Since $\bC$ is simply a ``primitive'' of $\bJss$, we can interpret $\bomega$ as \textit{the source of entropy production} in~\eqref{fieldIEPR02}, while $\bC$ is only a consequence of it.
Further, following our interpretation of $\bomega$ from eq.~\eqref{loopWiseEntProd02}, we can in turn informally interpret $\bC[\bphi]$ in eq.~\eqref{fieldIEPR02} as a weight over all the infinitesimal loops that surrounds point $\bphi\in\mbF$.

%\paragraph{Small noise limit.} \magenta{Here we want to write a short paragraph say what happens in the small noise limit.}

\subsubsection{Loop-wise entropy production and phenomenology of AMB}
\label{subsubsec:loopWiseEP_AMB}

In this section, we show how the loop-wise entropy production formula~\eqref{loopWiseEntProd02} can be used as an alternative way of describing the irreversible phenomenology of AMB, as was first uncovered in~\cite{o2023nonequilibrium} and revisited in section~\ref{subsec:vorticityOperator}. 
%to make a further argument in favour of this phenomenology.
To do this, let us denote by $\rhoss$ the average stationary profile of a phase-separated droplet of liquid surrounded by a gaseous phase. We can then consider a certain oriented loop $\mcC$ around $\rhoss\in\mbF$ and compute the corresponding entropy production $\whSigma[\mcC]$. If the latter is positive, it means that the ``activity'' of the field dynamics~\eqref{AMB} favours the time evolution of the field in the direction corresponding to the orientation of $\mcC$. Conversely, if $\whSigma[\mcC]<0$, it means that the activity favours the opposite evolution.
Since we are going to use formula~\eqref{loopWiseEntProd02}, we directly start from a parametrized surface $\mcS\in\mbF$ such that $\mcC=\partial\mcS$.
To simplify the computations, we choose $d_1=1$, \ie $\rhoss$ is a 1d profile, centered at $x=0$, and we focus on the left boundary of the droplet, for which $x\in \mbR_{<0}$. Note that the cycle-affinity two-form of AMB~\eqref{cycleAffAMB} being spatially additive (it is a spatial integral), if we consider the symmetric evolution on the other side of the droplet, the resulting entropy product is simply doubled.

We consider a surface $\mcS\subset\mbF$ that is parametrized by the map $R:[0,1]\times[0,2\pi/|w|]\to \mbF$ such that
\begin{equation}
R(\tau,t): x\in\mbR_{<0} \mapsto \rho(\tau,t,x)\equiv \rhoss(x)+\tau A(x) \cos(kx-wt) \ ,
\label{AMBsurfaceParam}
\end{equation}
where $k,w\in\mbR$, with $k\geq 0$ and $w\neq 0$, and $A(x)$ is a spatially-varying amplitude that is for now arbitrary.
The boundary of $\mcS\equiv R([0,1]\times[0,2\pi/|w|])$ is then the loop $\partial\mcS = R(1,[0,2\pi/|w|])$ (see figure~\ref{fig:functionalLoop}).
This oriented loop corresponds to an almost harmonic perturbation of $\rhoss$ which propagates over one wavelength $2\pi/k$ (hence eventually closing the loop) either leftward or rightward, depending on the sign of $w$.

We can then explicitly compute the entropy production along $\partial\mcS$ (see appendix~\ref{App:sec:loopwiseEnt}). Assuming for simplicity that $2\lambda+\kappa'$ is a constant (a simplification already made in~\cite{o2023nonequilibrium}), we get
\begin{equation}
\whSigma[\partial \mcS] = \mathrm{sign}(w)(2\lambda+\kappa') \pi  k \int_{\mbR_{<0}} A^2\partial_x\rhoss \, \rmd x \ .
\label{loopwiseAMB}
\end{equation}
We see that, as long as $A(x)$ does not vanish at the interface, since $\partial_x\rhoss\geq 0$ on the left-hand side of the droplet, the entropy production is positive iff $w$ and $2\lambda+\kappa'$ have the same sign. This means that, for instance if $2\lambda+\kappa'>0$, the activity of dynamics~\eqref{AMB} favours a propagation of the perturbation $\rho-\rhoss$ from the gas to the liquid, and vice versa if $2\lambda+\kappa'<0$, as predicted in~\cite{o2023nonequilibrium}. Another interesting point is, because $\partial_x\rhoss$ vanishes in the liquid and gas bulks, the activity only influences perturbations at the interface. In particular, if $A(x)$ vanishes at the interface, then $\whSigma=0$, which means that the activity does not favour any propagation.
On the contrary, for instance if $A$ is constant at the interface but vanishes deep in the bulks, we get
\begin{equation}
\whSigma[\partial \mcS] = \mathrm{sign}(w) \pi A^2 k(2\lambda+\kappa')(\rho_L - \rho_G) \ .
\end{equation}

\begin{figure}[h!]
%\begin{center}
\hspace{-1cm}
\begin{tikzpicture}[scale=1]
%Image
\draw(0,0)node{\includegraphics[width=0.35\textwidth]{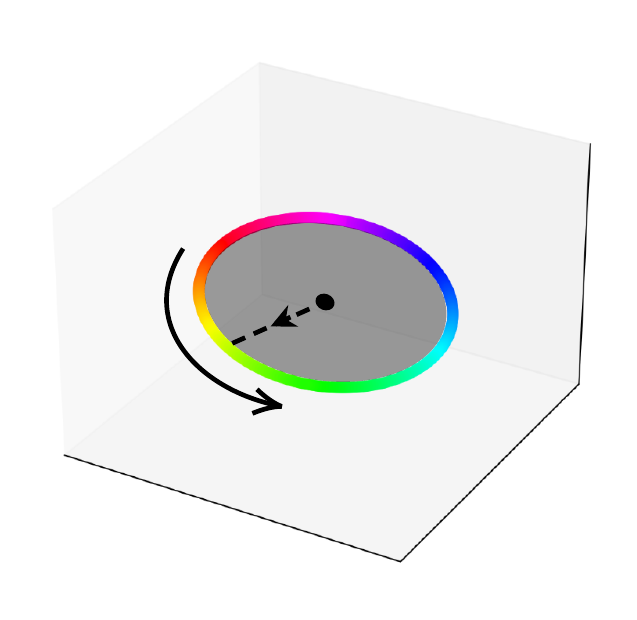}}; 
\draw(8,0)node{\includegraphics[width=0.75\textwidth]{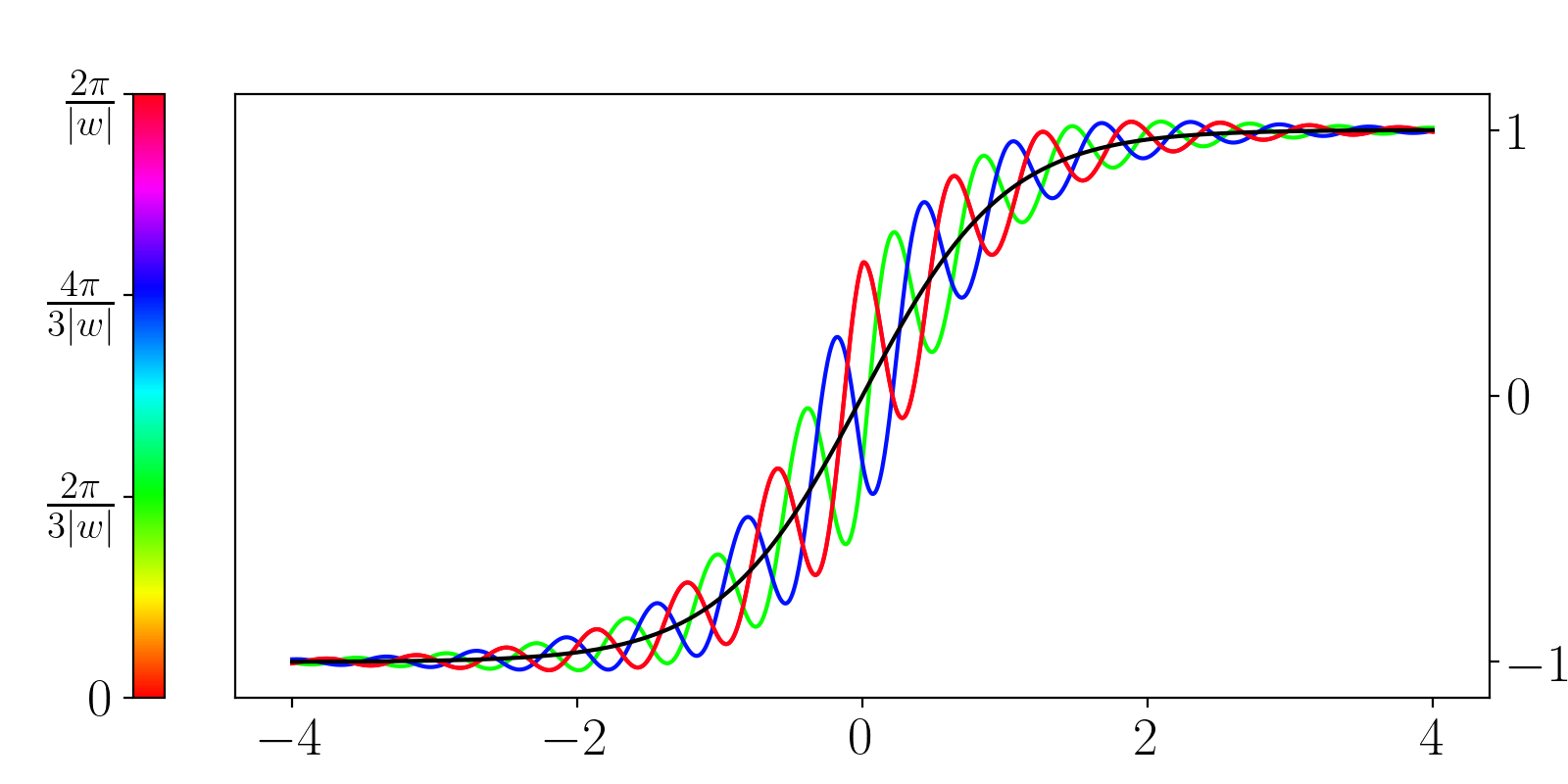}};

\draw(1.9,1.05)node{\large $\mbF$};
\draw(-0.35,0.18)node{\large $\tau$};
\draw(-0.4,-1.1)node{\large $t$};
\draw(3.7,2.4)node{\large $t$};

\draw[->,ultra thick](6.8,-0.5)--(8,1);
\end{tikzpicture}
\caption{\label{fig:functionalLoop} 
A schematic representation of a functional loop. The grey box on the left-hand side represents the functional space $\mbF$. The dark disk in it stands for the surface $\mcS$ and the multicolored loop for its boundary $\partial\mcS$. The color encodes the variable $t\in[0,2\pi]$, with $R(1,t)$ parameterizing $\partial\mcS$, while the whole disk is parameterized by $R(\tau,t)$. On the right-hand side, the black curve is a schematic of the left liquid-gas boundary of the average stationary profile $\rhoss$ (the latter being represented by the black dot at the center of the disk on the left-hand side). The colored curves on top of it  correspond to $R(1,t)$ for $|w|t=0,2\pi/3,4\pi/3,2\pi$, with $R(1,0)=R(1,2\pi/|w|)$. The black arrow above the graphs represents the propagation of the wave as $t$ increases. Note that, as in Fig.~\ref{fig:steadyState_and_derivatives}, we pictorially represent $\rhoss$ by the graph of the function $\tanh(x)$.}
%\end{center}
\end{figure}

\subsection{Hidden current and hidden entropy production}
\label{subsec:hiddenIrrev}
Many fields of interest in hydrodynamics and continuum mechanics describe conserved quantities, as \eg mass, momentum or energy. 
%The current associated to such a conserved field $\bphi$ can possess a divergence-free component that is then \textit{hidden} from the space $\mbF$ to which the field belongs as it cannot be inferred from the sole observation of $\bphi$ and its variations.
When studying the stochastic dynamics of such a field, there exists a notion of entropy production rate that is more discriminating than~\eqref{def_IEPR}. It consists in observing not only the field $\bphi$, but also its associated (real space) current $\bj$, and comparing their forward and backward joint statistics. We denote this \textit{resolved} IEPR by $\sigma_{\bphi, \bj}$, while the one defined in~\eqref{def_IEPR} is from now on denoted by $\sigma_{\bphi}$ and called the \textit{bare} IEPR. 
It follows from the properties of the Kullback-Leibler divergence that we always have $\sigma_{\bphi,\bj} \geq \sigma_{\bphi}$. 
The difference between the two, that we call the \textit{hidden entropy production rate}, is generated by a divergence-free component of the current that we call the \textit{hidden current}. The choice of the names ``hidden current'' and ``hidden IEPR'' is motivated by the fact that they cannot be inferred from observations solely of $\bphi$ and its variations, \ie they are hidden from the space $\mbF$ to which the field $\bphi$ belongs~\cite{li2021steady}.

As can be directly deduced from a geometrical argument, these hidden current and IEPR are also invisible from the vorticity two-form, hence displaying a possible blind spot of the latter in the study of the TRS-status of dynamics~\eqref{EDPS02}. Indeed, the space of current at a given $\bphi$ can be seen as a resolved version of the tangent space $T_{\bphi}\mbF$ that we project down on the latter when taking the divergence, erasing the invisible current in doing so. But $\bomega$ being a two-form over $\mbF$, it can only access tangential data and hence will necessarily be blind to the hidden current and, consequently, to the hidden IEPR as well. 

%When the difference between the two, that we call the \textit{hidden entropy production rate}, is non-zero, it means the deterministic part of the stochastic (Stratonovitch) current $\bj$ possesses a divergence-free component. 
%The latter is a current that is also \textit{hidden} from the space $\mbF$ in the sense that it cannot be inferred from the sole observation of $\bphi$ and its variations. This \textit{hidden current} is responsible for an entropy production that is, in turn, neither observable from $\mbF$: the aforementioned hidden entropy production.
We start by presenting our general results in section~\ref{subsubsec:hidden_general_case} before illustrating them on the example of a conserved scalar field in section~\ref{subsubsec:ex_conserved_scalar}. Finally, in section~\ref{subsubsec:micro_vs_hidden_IEPR}, we discuss the relation between the resolved and microscopic IEPR.

\subsubsection{General case}
\label{subsubsec:hidden_general_case}

We expect the existence of the hidden current and IEPR to be allowed whenever the operator $\bb_{\bphi}$ of dynamics~\eqref{EDPS02} has a non-trivial kernel. However, we here restrict our analysis to the case where this operator can be decomposed as
\begin{equation}
\bb_{\bphi}=\bB\bbb_{\bphi} \ ,
\label{b_decomp}
\end{equation}
where $\bB$ and $\bbb_{\bphi}$ are linear operators, the latter being such that $\bbb_{\bphi}\bbb_{\bphi}^\dagger$ is invertible and the former -- which is typically minus a divergence operator or several divergence operators stacked one on top of the other -- being independent of $\bphi$ and having a non-empty kernel.
The operator $\bbb$ turns a $\mbR^{d_3}$-valued field over $\mbR^{d_1}$ into a $\mbR^{d_4}$-valued one over $\mbR^{d_1}$ and $\bB$ a $\mbR^{d_4}$-valued field into a $\mbR^{d_2}$-valued field, both also over $\mbR^{d_1}$. We denote by $\mbJ_{\bphi}$ the image of $\bbb_{\bphi}$.
%(with appropriate integrability and regularity conditions), the decomposition~\eqref{b_decomp} is summarized by the following diagram:
%\begin{equation}
%\Ant  \overset{\bbb_{\bphi}}{\longrightarrow} \mbJ_{\bphi}  \overset{\bB}{\longrightarrow} T_{\bphi}\mbF \ .
%\end{equation}

As the drift $\ba_{\bphi}$ of dynamics~\eqref{EDPS02} belongs to the image of $\bb$ (this is necessary for $\bD^{-1}\ba$ to be well defined), for all $\bphi\in\mbF$ there exists $\ant_{\bphi}$, a $\mbR^{d_3}$-valued field over $\mbR^{d_1}$,  such that $\ba_{\bphi}=\bb_{\bphi}\ant_{\bphi}$.
We now define the \textit{deterministic current}: $\bJ_{\bphi}\equiv \bbb_{\bphi} \ant_{\bphi}\in\mbJ_{\bphi}$, as well as the $\bphi$-dependent fields
\begin{eqnarray}
\mks^{i_1 \br_1} &\equiv & \bbb^{i_1\br_1}_{i_3\br_3}\frac{\delta}{\delta \phi^{i_2\br_2}}  b^{i_2\br_2}_{i_3\br_3}, \label{spurious_current}\\
\mkh_{\blambda}^{i_1\br_1} &\equiv & \bbb^{i_1\br_1}_{i_3\br_3} b^{i_2\br_2}_{i_3\br_3}\frac{\delta}{\delta \phi^{i_2\br_2}} \lambda \ ,
\end{eqnarray}
where $\blambda$ is the reference measure.
 It follows from these definitions that $\ba=\bB\bJ$, $\bh_{\blambda}=\bB \mkh_{\blambda}$, and $\bs_{(1/2)}=\bB\mks$ so that Stratonovitch prescription of dynamics~\eqref{EDPS02} can be rewritten as
\begin{eqnarray}
\partial_t\bphi^{\br} &=& \bB^{\br}_{\br'} \bj^{\br'}[\bphi] \label{EDPS_current01}\\
\bj^{\br}[\bphi] &=& \bJ^{\br}[\bphi] + \mks^{\br}[\bphi] +\mkh^{\br}_{\blambda}[\bphi] + \bbb^{\br}_{\br'}[\bphi]\cdot \boldeta^{\br'} \label{EDPS_current02} \ ,
\end{eqnarray}
where $\bj$ is the \textit{Stratonovitch random current}.
The resolved entropy production rate is now defined as:
\begin{equation}
\sigma_{\bphi,\bj} \equiv \lim_{\mcT\to\infty} \frac{1}{\mcT}\ln \frac{\mcP[(\bphi_t,\bj_t)_{t\in\mbT}]}{\mcP[(\bphi_{\mcT-t},-\bj_{\mcT-t})_{t\in\mbT}]} \ . 
\end{equation}
We then show this resolved IEPR, that takes into account the statistics of both the field and the Stratonovitch current, to be given by (see appendix~\ref{appsubsec:jointIEPR}):
\begin{equation}
\sigma_{\bphi,\bj}= \llangle \int  \bj \left[\bbb \bbb^\dagger\right]^{-1} \bJ \rmd \br \rrangle \ .
\end{equation}

On the other hand, we already showed (see eq.~\eqref{fieldIEPR01}) the bare IEPR to read
\begin{equation}
\sigma_{\bphi} \equiv \lim_{\mcT\to\infty} \frac{1}{\mcT}\ln \frac{\mcP[(\bphi_t)_{t\in\mbT}]}{\mcP[(\bphi_{\mcT-t})_{t\in\mbT}]} = \llangle \int \dot{\bphi} [\bb\bb^\dagger]^{-1}\ba \rrangle \ ,
\end{equation}
where the integral runs over $\mbR^{d_1}$.

To connect the resolved and bare IEPRs, we now assume that the space of currents (at a given $\bphi$) admits the following decomposition (see appendix~\ref{appsubsec:hiddenCurrent}):
\begin{equation}
\mbJ_{\bphi} = \IM ( \bbb_{\bphi}\bbb^\dagger_{\bphi} \bB^\dagger ) \oplus \ker(\bB) \ .
\label{target_split}
\end{equation}
It follows that the deterministic current splits as
\begin{equation}
\bJ_{\bphi}=-\bbb_{\bphi}\bbb_{\bphi}^\dagger \bB^\dagger \bmu + \bJ_{\bH} \ ,
\end{equation}
where $\bJ_{\bH}$ is the \textit{hidden current}.
Using eqs.~\eqref{b_decomp} and~\eqref{EDPS_current01} together with $\ba=\bB\bJ = -\bB\bbb\bbb^\dagger\bB^\dagger\bmu$ the bare IEPR reads:
\begin{eqnarray}
\sigma_{\bphi} = -\llangle \int \dot{\bphi} \bmu \rrangle =  \llangle \int\bj \left[\bbb\bbb^\dagger\right]^{-1} \left[-\bbb\bbb^\dagger \bB^\dagger \bmu\right] \rrangle \ .
\end{eqnarray}
We can finally conclude that the resolved IEPR linearly splits as
\begin{equation}
\sigma_{\bphi, \bj} = \sigma_{\bphi}+\sigma_{\bH} \ ,
\end{equation}
where the \textit{hidden entropy production rate} $\sigma_{\bH}$ is given by
\begin{equation}
\sigma_{\bH}\equiv \llangle \int \bj  \left[\bbb\bbb^\dagger\right]^{-1} \bJ_{\bH} \rrangle \ .
\end{equation}
Since the Kullback-Leibler divergence is non-decreasing upon marginalization, the hidden IEPR is always non-negative, $\sigma_{\bH}\geq 0$.
In some situations, the hidden current $\bJ_{\bH}$ and the corresponding hidden IEPR $\sigma_{\bH}$ can vanish, as is the case \eg for the AMB dynamics~\eqref{AMB}-\eqref{AMB_chemical_potential}, so that $\sigma_{\bphi,\, \bj}=\sigma_{\bphi}$.
Another possibility is that $\sigma_{\bphi}=0$ while $\sigma_{\bphi,\bj}\neq 0$, in which case dynamics~\eqref{EDPS_current01}-\eqref{EDPS_current02} appears time-reversible from the statistical properties of the sole field $\bphi$, while it is not as soon as the current $\bj$ is also taken into account~\cite{li2021steady}.

\subsubsection{The example of a conserved scalar field}
\label{subsubsec:ex_conserved_scalar}
To illustrate the notions of hidden current and IEPR on a more concrete example, let us consider a conserved scalar field $\rho(\br,t)$ whose dynamics is given by:
\begin{eqnarray}
\partial_t \rho (\br,t) &=& -\nabla\cdot \bj(\br,t)  \ , \label{Ex_conserved_rho} \\
\bj(\br,t)&=&\bJ(\br,[\rho_t]) +\mks(\br,[\rho_t])+ \bM^{1/2}(\br,[\rho_t])\cdot \boldeta(\br,t) \ , \label{Ex_conserved_j}
\end{eqnarray}
where $\boldeta$ is a noise with the same statics as in eq.~\eqref{EDPS01}, and $\bM$ a (positive definite) mobility matrix.
In the notations of section~\ref{subsubsec:hidden_general_case}, $\bB=-\nabla\cdot$, $\bbb=\bM^{1/2}$, and $\mks$ is given by eq.~\eqref{spurious_current}.
On the other hand, in the notations of eq.~\eqref{EDPS02}, the drift is $\ba=-\nabla\cdot \bJ$, the operator $\bb=\nabla\cdot \bM^{1/2}$, and the diffusion operator $\bD\equiv\bb\bb^\dagger$ reads $\bD = -\nabla\cdot \bM\nabla$.
Finally, we choose the ``functional Lebesgue measure'' as the gauge $\blambda$ so that, as detailed in section~\ref{sec:framework}, $h_{\blambda}=\mkh_{\blambda}=0$.

The bare IEPR of dynamics~\eqref{Ex_conserved_rho}-\eqref{Ex_conserved_j} reads
\begin{equation}
\sigma_{\rho}\equiv \lim_{\mcT\to\infty}\frac{1}{\mcT}\ln \frac{P\left[ (\rho_t)_{t\in[0,\mcT]} \right]}{P\left[ (\rho_{\mcT-t})_{t\in[0,\mcT]} \right]} = \llangle \dot{\rho}^{\br}[D^{-1}a]_{\br} \rrangle  = \llangle \int \rmd \br \,\dot{\rho} [\nabla\cdot \bM \nabla]^{-1} \nabla\cdot \bJ \rrangle \ .
\label{IEPR_rhoSpace}
\end{equation}
while the resolved IEPR is
\begin{equation}
\sigma_{\rho,\, \bj}\equiv \lim_{\mcT\to\infty}\frac{1}{\mcT}\ln \frac{P\left[ (\rho_t,\bj_t)_{t\in[0,\mcT]} \right]}{P\left[ (\rho_{\mcT-t},-\bj_{\mcT-t})_{t\in[0,\mcT]} \right]} = \llangle \int \rmd \br \, \bj \cdot \bM^{-1} \bJ \rrangle  \ . \label{Ex_rhoJ_IEPR}
\end{equation}

In this particular case, the decomposition~\eqref{target_split} assumed in the general case, can be proven, as shown in appendix~\ref{appsubsec:hiddenCurrent}.
It leads to the following decomposition of the Stratonovitch deterministic current:
\begin{equation}
\bJ(\br,[\rho]) = -\bM(\br,[\rho])\nabla\mu(\br,[\rho]) + \bJ_{\bH}(\br,[\rho]) \ , \label{Ex_decomp_current}
\end{equation}
where, for a given $\rho$, $\bJ_{\bH}$ is a divergence free vector-field over $\mbR^{d_1}$. This is the hidden current, which cannot be estimated from the time-varying density field $\rho(\br,t)$ alone, hence being hidden from the point of view of the sole space $\mbF$. 
Using eqs.~\eqref{IEPR_rhoSpace}, \eqref{Ex_decomp_current}, and~\eqref{Ex_conserved_rho} and integrating by parts gives
\begin{equation}
\sigma_\rho = \llangle \int \rmd \br \, \bj \cdot \bM^{-1} \left[ -\bM\nabla\mu\right] \rrangle \ . \label{Ex_rho_IEPR}
\end{equation}
Comparing eqs.~\eqref{Ex_rhoJ_IEPR} and~\eqref{Ex_rho_IEPR} while taking into account the decomposition of the current~\eqref{Ex_decomp_current}, we conclude that the resolved IEPR can indeed be decomposed as the superposition of the bare and the hidden IEPRs, $\sigma_{\rho, \,\bj} = \sigma_\rho + \sigma_{\bH}$, where the latter reads
\begin{equation}
\sigma_{\bH} \equiv \llangle \int \rmd \br \, \bj \cdot \bM^{-1} \bJ_{\bH} \rrangle \ .
\end{equation}
Note that, while the hidden current is always divergence free, it depends on the matrix $\bM$: if the latter is varied with a fixed $\bJ$, then $\bJ_{\bH}$ changes as well.

\subsubsection{Relation to microscopic entropy production}
\label{subsubsec:micro_vs_hidden_IEPR}
Since we have been considering two distinct notions of IEPR so far, one could wonder what is the relation between them and the ``microscopic'' entropy production, in the case where the field dynamics~\eqref{EDPS_current01}-\eqref{EDPS_current02} is explicitly derived from a microscopic one.
To answer this question, we consider an assembly of $N$ interacting particles whose stochastic dynamics reads
\begin{equation}
\dot{\bx}_i = \tba(\bx_i,[\hrho])+\tbh_{\tblambda}+\tbs_{(\varepsilon)}(\bx_i,[\hrho]) + \tbb_\alpha(\bx_i,[\hrho])\teta_i^\alpha \ ,
\label{EDS_collective}
\end{equation}
where $\bx_i\in\mbR^{d_1}$ represents the state of the $i^{th}$ particle, $\hrho(\bx)\equiv \sum_{i=1}^N \delta(\bx-\bx_i)$ is the empirical measure on the state space $\mbR^{d_1}$, the $\tilde{\boldeta}_i$'s are independent Gaussian white noises of zero mean and variance $\langle\teta_i^\alpha(t)\teta_j^\beta(t')\rangle=2\delta_{ij}\delta^{\alpha\beta}\delta(t-t')$, $\tba$ and the $\tbb_\alpha$'s are vector fields
%\footnote{Note that a function $f_i(\bx_1,\dots,\bx_N)$ can be written as $\tilde{f}(\bx_i,[\hrho])$ iff $f$ depends on $\bx_i$ in a peculiar way, and then on all the $\bx_k's$ in a permutation-invariant way.}
 over $\mbR^{d_1}$, and $\tbh_{\tblambda}$ and $\tbs_{(\varepsilon)}$ the corresponding $\tblambda$-gauge drift and spurious drift. 
(A tilde has been added to every microscopic quantity to avoid any confusion with their field theoretic counterparts.) Note that a function of $N+1$ $\mbR^{d_1}$-variables, $f(\bx_i,\bx_1,\dots,\bx_N)$, can be re-written as $\bar{f}(\bx_i,[\hrho])$ iff it is invariant under any permutation of its last $N$ variables. Physically, this means that the particles in dynamics~\eqref{EDS_collective} are identical and that each of them interact with all the others -- itself included -- independently of their label $i=1,\dots, N$.
The microscopic IEPR of dynamics~\eqref{EDS_collective} reads~\cite{o2024geometric}:
\begin{equation}
\tilde{\sigma}=\llangle \sum_{i=1}^N \dot{\bx}_i\cdot \tbD^{-1}\tba(\bx_i,[\hrho]) \rrangle  \ ,
\label{micro_IEPR}
\end{equation}
where $\tbD\equiv \tbb\tbb^\dagger$ is the microscopic diffusion tensor.

Following standard methods~\cite{dean1996langevin,solon2015active}, dynamics~\eqref{EDS_collective} can be shown to give rise to the stochastic field dynamics~\eqref{Ex_conserved_rho}-\eqref{Ex_conserved_j}, where the mobility reads $\bM=\tbD \hrho$
%\begin{equation}
% \bM=\tbD \hrho \ ,
%\end{equation}
while the Stratonovitch deterministic current is given by $\bJ=\tba \hrho - \tbD \nabla\hrho$.
%\begin{equation}
%\bJ=\tba \hrho - \tbD \nabla\hrho \ .
%\end{equation}
Injecting these last two equalities into the expression of the resolved IEPR~\eqref{Ex_rhoJ_IEPR} gives
\begin{equation}
\sigma_{\hrho,\hat{\bj}}= \llangle\int \hat{\bj} \cdot \left[ \tbD^{-1}\tba-\nabla\ln \hrho\right] \rrangle \ .
\label{emp_entropy_interm}
\end{equation}
Using $\partial_t\hrho = -\nabla\cdot \hbj$, the second term on the right-hand side of this last equation reads:
\begin{equation}
-\llangle \int \hbj \cdot \nabla\ln \hrho \rrangle =-\frac{d}{d t} \llangle \int \hrho\ln\hrho \rrangle \ ,
\end{equation}
and hence vanishes.
Finally, as the Stratonovitch empirical current is $\hat{\bj}\equiv \sum_{i=1}^N \dot{\bx}_i \delta(\bx-\bx_i)$, we have the equality
\begin{equation}
\sigma_{\hrho,\hat{\bj}}=\tilde{\sigma} \ .
\end{equation}
Thus, for the case considered here, the resolved IEPR~\eqref{emp_entropy_interm} of the dynamics of the state-space empirical measure coincides with the IEPR~\eqref{micro_IEPR} of the corresponding microscopic dynamics.

\newpage
\section{Beyond T-reversal: EMT-reversal}
\label{sec:MTreversal}
In sections~\ref{sec:quantifyingTRSbreaking}-\ref{sec:entropy}, we only focused on T-reversal, \ie on a time reversal that solely amounts to reversing the time variable. 
But as explained in the introduction, time reversal of the dynamics of certain physical systems also requires to flip some degrees of freedom (MT-reversal), or even directly transform the force-field and the diffusion operator (EMT-reversal).
In~\cite{o2024geometric}, we gave explicit conditions for a finite-dimensional dynamics to be EMT-reversible. In this section, we show how the geometric notations introduced in section~\eqref{sec:framework} allow us to generalize these conditions to stochastic field dynamics of the form~\eqref{EDPS02}.

To this end, let us first introduce a ``mirror map'' $\mkm[\bphi]$ that flips some degrees of freedom. It must be an involution, \ie such that $\mkm[\mkm[\bphi]]=\bphi$. We focus here on maps $\mkm$ that leave the functional Lebesgue measure $\blambda$ invariant. 
When that is not the case, the reversibility conditions given below can be extended and we let the curious reader adapt for themselves corresponding EMT-reversibility conditions given in~\cite{o2024geometric} for finite-dimensional dynamics to the current field theoretic setting.
We also introduce an ``extension map'' $\mke[\ba,\bD]\equiv(\mke_{\ba}[\ba],\mke_{\bD}[\bD])$ -- which is also an involution -- that directly transforms the drift and diffusion operator.

Denoting by $\mcP_{(\ba,\bD)}$ the stationary path-probability of dynamics~\eqref{EDPS02}, we can now define the \textit{$\mke\mkm\mkt$-reversed process} of~\eqref{EDPS02}: it is the stochastic partial differential equation whose path-probability measure in steady state, $\bar{\mcP}_{(\ba,\bD)}$, is given by
\begin{equation}
\bar{\mcP}_{(\ba,\bD)}\left[(\bphi_t)_{t\in\mbT}\right] \equiv \mcP_{\mke[\ba,\bD]}\left[(\bphi_{\mcT-t})_{t\in\mbT}\right] \ .
\end{equation}
Dynamics~\eqref{EDPS02} is then said to be \textit{$\mke\mkm\mkt$-reversible} whenever $\bar{\mcP}_{(\ba,\bD)}=\mcP_{(\ba,\bD)}$.

We now introduce notations that will be useful below.
As in finite dimension, the mirror map $\mkm$ acts by ``pushforward'' on the drift $\ba$ and diffusion operator $\bD$ respectively as
\begin{equation}
(\mkm_*\ba)^{i\br}[\bphi] \equiv \left(\frac{\delta \mkm^{i\br}}{\delta \phi^{j\br'}}a^{j\br'}\right)[\mkm(\bphi)]
\end{equation}
and 
\begin{equation}
(\mkm_*\bD)^{i_1\br_1i_2\br_2}[\bphi] \equiv \left(\frac{\delta \mkm^{i_1\br_1}}{\delta \phi^{i_3\br_3}}\frac{\delta \mkm^{i_2\br_2}}{\delta \phi^{i_4\br_4}}D^{i_3\br_3i_4\br_4}\right)[\mkm(\bphi)] \ .
\end{equation}
We can then define the symmetric and antisymmetric parts of $\ba$ and $\bD$ under the map $\mkm_*\mke$: 
%$\ba^{S/A}\equiv (\ba\pm \mkm_*\mke_{\ba}[\ba])/2$ and $\bD^{S/A}\equiv (\bD\pm \mkm_*\mke_{\bD}[\bD])/2$
\begin{equation}
\ba^{S/A}\equiv \frac{1}{2}\left(\ba\pm \mkm_*\mke_{\ba}[\ba]\right)
\end{equation}
and 
\begin{equation}
\bD^{S/A}\equiv \frac{1}{2}\left(\bD\pm \mkm_*\mke_{\bD}[\bD]\right) \ .
\end{equation}
In the next section, we give three sets of necessary and sufficient conditions for dynamics~\eqref{EDPS02} to be $\mke\mkm\mkt$-reversible\footnote{As a reminder, the map $\mkt$ simply transforms a trajectory as $\mkm :(\bphi_t)_{t\in\mbT}\mapsto(\bphi_{\mcT-t})_{t\in\mbT}$.}. 
These sets of conditions become progressively less general (for instance the last set of conditions requires $\bD$ to be invertible while the first does not), but easier to verify in practice.
%They are of decreasing generality but the ease to practically verify them is increasing.

\subsection{EMT-reversibility conditions}
Let us thus start by translating to field-level the most general of the sets of reversibility conditions found for finite dimensions in~\cite{o2024geometric}.
Dynamics~\eqref{EDPS02} is $\mke\mkm\mkt$-reversible iff there exist a positive functional $\tPss:\mbF\to\mbR$ such that 
%\begin{eqnarray}
%\bD^A &=& 0 \ , \label{EMT_cond_D_01} \\
%\ba^S &=& \bD  \mbd \ln \tPss \ , \label{EMT_cond_a_01} \\
%\DDiv(\ba \tPss - \bD \mbd \ln \tPss) &=& 0 \ , \label{EMT_cond_div_01}
%\end{eqnarray}
\begin{gather}
\bD^A = 0 \ , \label{EMT_cond_D_01} \\
\ba^S = \bD  \mbd \ln \tPss \ , \label{EMT_cond_a_01} \\
\DDiv[\tPss(\ba  - \bD \mbd \ln \tPss)] = 0 \ , \label{EMT_cond_div_01}
\end{gather}
where $\bD \mbd \ln \tPss$ is the vector field over $\mbF$ with coordinates $D^{i\br j\br'} \frac{\delta \ln \tPss}{\delta \phi^{j\br'}}$ and $\DDiv$ stands for the functional divergence operator defined in~\eqref{def_div}.
%that associates to a vector field $\bv[\bphi]$ over $\mbF$ the scalar valued functional $\DDiv (\bv)=\frac{\delta v^{i\br}}{\delta \phi^{i\br}}$. 
When conditions~\eqref{EMT_cond_D_01}-\eqref{EMT_cond_div_01} are satisfied, then the functional $\tPss$ is a\footnote{
In section~\ref{sec:framework}, we assumed existence and uniqueness of the stationary probability measure of dynamics~\eqref{EDPS02}. All the reversibility conditions given in this section turn out to be valid even under the weaker assumption that the kernel of the Fokker-Planck operator, $P\mapsto -\DDiv[\ba P - \bD \mbd P ]$, is one-dimensional. Thus, it also include cases where all the stationary measures are proportional but none of them is normalized. Note that conditions~\eqref{EMT_cond_D_01}-\eqref{EMT_cond_div_01} are independent of this (positive) proportionality constant between the various stationary measures.} stationary density (with respect to $\blambda\equiv \mcD \bphi$) of dynamics~\eqref{EDPS02}.
%\foonote{We only assumed that the kernel of the Fokker-Planck operator is one-dimensional and that the functional $\tPss$ is positive. Consequently, if conditions~\eqref{EMT_cond_D_01}-\eqref{EMT_cond_div_01} are fulfilled then the dynamics is reversible and $\tPss$ belongs to the kernel of the Fokker-Planck operator. If we further assume that dynamics~\eqref{EDPS02} has a unique probability measure (\ie that is normalized) and that $\tPss$ is normalized, then eqs.~\eqref{EMT_cond_D_01}-\eqref{EMT_cond_div_01} imply that the dynamics is reversible and that $\tPss$ is the unique stationary measure.} 

Let us now consider the case where the space $\mbF$ can be decomposed as $\mbF=\mbF_1\times\mbF_2$ in such a way that the diffusion operator reads
\begin{equation}
\bD = \begin{bmatrix}
0 & 0 \\
0 & \bD_2 
\end{bmatrix} \ ,
\end{equation}
where $\bD_2$ is invertible on $\mbF_2$.
This can apply for instance when the components of $\bphi=(\bphi_1,\bphi_2)$ correspond to mass density and velocity, and the mass is conserved.
In this case, the conditions~\eqref{EMT_cond_D_01}-\eqref{EMT_cond_div_01} for dynamics~\eqref{EDPS02} to be $\mke\mkm\mkt$-reversible can be reformulated as follows: there exists a positive functional $\tPss$on $\mbF$ such that
\begin{gather}
\bD^A_2 = 0 \ , \label{EMT_cond_D_02} \\
\ba^S_1 = 0 \ , \label{EMT_cond_a_02} \\
\mbd_2\bD_2^{-1}\ba^S_2 = 0 \ , \label{EMT_cond_a_02bis} \\
\ba_1\cdot \mbd_1 \ln \tPss + \DDiv_1 (\ba_1) + \DDiv_2 (\ba_2^A) + \ba^A_2\cdot \bD_2^{-1}\ba_2^S  = 0 \ , \label{EMT_cond_div_02}
\end{gather}
where $\ba_i$ is the component of $\ba$ along the factor $\mbF_i$, $\mbd_i$ and $\DDiv_i$ respectively the exterior derivative and the divergence operator on $\mbF_i$.
In eq.~\eqref{EMT_cond_div_02}, `$\cdot$' stands for the contraction between functional contravariant and covariant tensors over $\mbF$, \ie $\bu\cdot\bv \equiv u_{i\br} v^{i\br}$, where $\bu$ and $\bv$ are covariant and contravariant tensors, respectively\footnote{We did not write the `$\cdot$' explicitly between $\bD_2^{-1}$ and $\ba^S_2$ to lighten notations, having also in mind that this abuse of notation draws an analogy with matrix multiplication.}. 

Finally, if we consider the case where $\bD$ is invertible on the whole $\mbF$, which is equivalent to the first factor $\mbF_1$ above being  reduced to zero, $\mbF_1=\{0\}$, then the $\mke\mkm\mkt$-reversibility conditions~\eqref{EMT_cond_D_02}-\eqref{EMT_cond_div_02} boil down to
\begin{gather}
\bD^A = 0 \ , \label{EMT_cond_D_03} \\
\mbd\bD^{-1}\ba^S = 0 \ , \label{EMT_cond_a_03} \\
\DDiv (\ba^A) + \ba^A\cdot \bD^{-1}\ba^S  = 0 \ . \label{EMT_cond_div_03}
\end{gather}
As in finite dimension, by analogy between the probability and thermodynamical currents~\cite{graham1971fluctuations}, condition~\eqref{EMT_cond_D_03}, which can be reformulated as $\mkm_*\mke[\bD]=\bD$, can be seen as an \textit{Onsager-Casimir symmetry}.
Then, upon adapting the definition~\eqref{vorticity_definition} of the ``$\mkt$-vorticity'' two-form to define the ``$\mke\mkm\mkt$-vorticity'' two-form as 
\begin{equation}
\bomega\equiv \mbd \bD^{-1}\ba^S \ ,
\label{vorticity_2form_EMTrev}
\end{equation}
condition~\eqref{EMT_cond_a_03} requires the cancellation of $\bomega$, similarly to the T-reversal case. Note that, although they both belong to $\Omega^2(\mbF)$, these $\mkt$- and $\mke\mkm\mkt$-vorticities are distinct in general.
Finally, in the case of condition~\eqref{EMT_cond_div_03}, a satisfying physical interpretation remains elusive, as is also the case for its finite-dimensional counterpart~\cite{o2024geometric}.

As compared to the two others, the third set of reversibility conditions~\eqref{EMT_cond_D_03}-\eqref{EMT_cond_div_03} requires the strongest assumption: the invertibility of $\bD$ on the whole $\mbF$. But, as in finite dimension, it remarkably does not refer to any unknown functional $\tPss$, as opposed to conditions~\eqref{EMT_cond_D_01}-\eqref{EMT_cond_div_01} or~\eqref{EMT_cond_D_02}-\eqref{EMT_cond_div_02}.	On the other hand, contrary to the finite-dimensional setting, it does not imply that conditions~\eqref{EMT_cond_D_03}-\eqref{EMT_cond_div_03} can always be easily checked in practice, as they require either the explicit knowledge of the inverse of the operator $\bD$, or the symmetric drift to be decomposed as $\ba^S=\bD\balpha^S$ with $\balpha^S$ explicitly known. When it is not the case, $\bD^{-1}\ba^S$ is not explicitly known and, in particular, a major part of the agenda established in Part I is consequently blocked.
Finally, an additional difficulty of these three sets of reversibility conditions, inherent to our functional setting, is that they involve the functional divergence operator $\DDiv$ that often produces singularities.

\subsection{New sources of entropy production and topology}
In this section, we focus on the case where $\bD$ is invertible on $\mbF$.
For given maps $\mke$ \& $\mkm$, the (bare) $\mke\mkm\mkt$-IEPR, \ie the IEPR that quantifies the breakdown or otherwise of $\mke\mkm\mkt$-reversal of dynamics~\eqref{EDPS02}, is defined as
\begin{equation}
\sigma \equiv \lim_{\mcT\to\infty} \frac{1}{\mcT}\ln \frac{\mcP_{(\ba,\bD)}\left[(\bphi_t)_{t\in\mbT}\right]}{\bar{\mcP}_{(\ba,\bD)}\left[(\bphi_t)_{t\in\mbT}\right]} \ .
\end{equation}
When the Onsager-Casimir condition~\eqref{EMT_cond_D_03} is satisfied, directly generalizing a result of~\cite{o2024geometric} (and eq.~\eqref{fieldIEPR02}), we conclude that this ($\mke\mkm\mkt$-) IEPR reads:
\begin{equation}
\sigma = \frac{1}{2}\int_{\mbF} \bC\cdot \bomega  \;\mcD \bphi - \int_{\mbF}  \left[\DDiv (\ba^A) + \ba^A\cdot \bD^{-1}\ba^S \right] \Pss \mcD\bphi \ ,
\label{IEPR_EMT-reversal}
\end{equation}
where $\bomega$ is the $\mke\mkm\mkt$-vorticity two-form~\eqref{vorticity_2form_EMTrev}, $\Pss\mcD\bphi$ the stationary measure, and $\bC\cdot \bomega \equiv C^{i\br j\br'}\omega_{i\br j\br'} = C^{i\br j\br'} [\mbd \bD^{-1}\ba^S]_{i\br j\br'}$, with $\bC$ a functional, contravariant, antisymmetric tensor of order two whose divergence equals the negative of the stationary probability current, $\DDiv(\bC)=-\bJss$.
Because EMT-reversibility is both equivalent to conditions~\eqref{EMT_cond_D_03}-\eqref{EMT_cond_div_03} and to $\sigma=0$, we conclude that dynamics~\eqref{EDPS02} is reversible iff the two integrals in~\eqref{IEPR_EMT-reversal} vanish independently. 
Consequently, the left-hand side of eq.~\eqref{EMT_cond_div_03} and that of eq.~\eqref{EMT_cond_a_03} (the $\mke\mkm\mkt$-vorticity two-form) can be seen as \textit{two independent and additive sources of entropy production}. Further, these sources are different in their mathematical nature are they are respectively a scalar field and a two-form over $\mbF$. Note that this differs from the simplest case of T-reversal, where there can be an entropy source of only one type: the (T-) vorticity two-form~\eqref{vorticity_definition} -- the latter being generically distinct from the $\mke\mkm\mkt$-vorticity two-form~\eqref{vorticity_2form_EMTrev} for non trivial mirror ($\mkm$) and extension ($\mke$) maps. 
It is also worth noting that going from a pair of maps $(\mkm, \mke)$ to another one changes the decomposition $\ba=\ba^S+\ba^A$ by transferring terms between symmetric and antisymmetric components, and hence modifies the IEPR by altering the two integrals in eq.~\eqref{IEPR_EMT-reversal}.
%
%Consequently, the left-hand side of eq.~\eqref{EMT_cond_div_03} can be seen as a \textit{source of entropy production that is independent of and additive to the vorticity two-form}~\eqref{vorticity_2form_EMTrev}.
%% of which the latter is the unique source for T-irreversibility, as opposed to EMT-reversibility.
%%Thus, for the detailed study of the left-hand side of~\eqref{EMT_cond_div_03} could allow 
%Importantly, this (EMT-) vorticity two-form~\eqref{vorticity_2form_EMTrev} is generically different of the one corresponding to the sole T-symmetry~\eqref{vorticity_definition}, as $\ba\neq \ba^S$ in general. Hence, the left-hand side of eq.~\eqref{EMT_cond_div_03}, which is a scalar field over $\mbF$, must be seen as an additional \textit{type} of source of entropy production, that is absent for T-reversal, where the only source is the (T-) vorticity, which is a two-form over $\mbF$.

Finally note that condition~\eqref{EMT_cond_a_03} is equivalent to the functional one-form being closed and therefore exact since we have been assuming the space $\mbF$ to be simply connected throughout this article. When this topological assumption does not hold,
% which is the case for instance when dynamics~\eqref{EDPS02} involve an order parameter that takes values in a non-simplyconnected space,
condition~\eqref{EMT_cond_a_03} (and similarly condition~\eqref{EMT_cond_a_02bis}) must be augmented with topological constraints, as in finite dimension~\cite{o2024geometric}. More precisely, when $\mbF$ is not simply connected, condition~\eqref{EMT_cond_a_03} should be replaced by:
\begin{equation}
\mbd\bD^{-1}\ba^S = 0 \quad \text{and,} \ \forall i=1,\dots,\dim( H_1(\mbF)), \ \int_{\mcC_i} \bD^{-1}\ba^S = 0 \ ,
\label{EMT_cond_a_04}
\end{equation}
where $\{\mcC_i\}_i$ is a set of loops in $\mbF$, generating its first homology group $H_1(\mbF)$, along which we integrate the functional one-form $\bD^{-1}\ba^S$. Intuitively, the first part of condition~\eqref{EMT_cond_a_04} forbids the one-form $\bD^{-1}\ba^S$ to rotate around any point in $\mbF$, while its second part prevents rotations around any holes of $\mbF$.
When $H_1(\mbF)$ is not trivial, \ie when $\mbF$ is not simply-connected, additional entropy production terms, that accounts for this topological obstruction to EMT-reversibility, must be added to the right-hand side of eq.~\eqref{IEPR_EMT-reversal}. 
Assuming that Hodge-de Rham theory~\cite{warner1983foundations} can be extended to infinite dimensional manifolds (which as far as we know remains unproven), the stationary probability current, which is divergence-free, can be written $\bJss=-\DDiv \bC + \sum_{\alpha=1}^{\dim( H_1(\mbF))}j^\alpha \bgamma_\alpha$, where $(\bgamma_{\alpha})_{\alpha=1,\dots,\dim( H_1(\mbF))}$ is a basis of harmonic vector fields on $\mbF$. It follows (see~\cite{o2024geometric} for details in the finite-dimensional counterpart) that the IEPR reads:
%As we are not aware of the existence of Hodge-de Rham theory in infinite dimension, we will not try to explicitly write down how these terms look like, but an intuition can be garnered from the finite-dimensional case~\cite{o2024geometric}.
\begin{equation}
\sigma = \frac{1}{2}\int_{\mbF} \bC\cdot \bomega  \;\mcD \bphi + \sum_{\alpha=1}^{\dim( H_1(\mbF))}j^{\alpha}\int\bgamma_{\alpha}\cdot \bD^{-1}\ba^S \mcD\bphi  - \int_{\mbF}  \left[\DDiv (\ba^A) + \ba^A\cdot \bD^{-1}\ba^S \right] \Pss \mcD\bphi \ ,
\label{IEPR_EMT-reversal_topol}
\end{equation}
with $\bgamma_{\alpha}\cdot \bD^{-1}\ba^S\equiv \gamma_\alpha^{i\br} [D^{-1}a^S]_{i\br}$. Once again, and for the same reason as above, these topological terms in the expression~\eqref{IEPR_EMT-reversal_topol} of $\sigma$ can be interpreted as additional independent sources of entropy production.

%Interestingly, the space $\mbF$ is not simply-connected for instance when dynamics~\eqref{EDPS02} involves an order parameter that takes values in a space that is itself not simply-connected. Thus, these additional topological sources of entropy production could be relevant for instance to account for the remarkable behavior of \textit{topological defects} in active matter (see~\cite{shankar2022topological} and references therein).  
%

If the target space $\mcN$ of $\bphi$ is a vector space, then $\mbF$ is also a vector space. In this case, $\mbF$ is simply connected, and there cannot be any topological obstruction to the reversibility of~\eqref{EDPS02}. However, as soon as $H_1(\mcN)$ is non-trivial -- which is often the case when the field $\bphi$ is a Goldstone mode associated with a continuous symmetry spontaneously broken, for instance -- one can expect $H_1(\mbF)$ to also be non-trivial. In such a case, if the system is out of equilibrium, there may exist a “force” that tends to continuously drive the field along loops in $H_1(\mcN)$, thereby creating a topological source of irreversibility.

Examples of systems where such a topological obstruction to reversibility can arise are abundant in statistical physics, ranging from continuous versions of the Kuramoto model~\cite{kuramoto1975international,acebron2005kuramoto,ermentrout1980large,hutt2007generalization}, a paradigmatic model of synchronization between coupled oscillators, to chiral active systems~\cite{liebchen2017collective,liebchen2022chiral,markovich2019chiral,workamp2018symmetry,kole2021layered}, as well as pulsatile active matter~\cite{zhang2023pulsating,banerjee2024hydrodynamics,liu2024collective}.

\newpage
\section{Conclusion}
In this article we introduced a functional geometric formalism that allowed us to generalize to stochastic field dynamics various results that were previously only known for finite dimensional systems (notably in sections~\ref{sec:quantifyingTRSbreaking}, \ref{subsec:entropyProd}, and~\ref{sec:MTreversal}).
In particular, for stochastic field theories governed by~\eqref{EDPS02} we gave fully general reversibility conditions for both standard time reversal symmetry (T-reversibility) and its extended counterpart (EMT-reversibility). In each case we showed how these yield simpler conditions under the assumption of partial or complete invertibility of the diffusion operator $\bD$. These results generalise those we presented in Part I~\cite{o2024geometric} for finite dimensional dynamics, and, to the best of our knowledge, were unknown previously for fields even in the simplest case of T-reversibility.

% In particular, we gave generic reversibility conditions that was unknown before for EMT-reversibility or even for the T-reversibility of the generic dynamics~\eqref{EDPS02}.

Most of the work presented in this article focused on the latter, simplest notion of time reversal. Building on the insights first presented in~\cite{o2023nonequilibrium}, we showed T-reversibility to be equivalent to the cancellation of the vorticity two-form $\bomega$. 
Contrary to its finite dimensional counterpart, this T-reversibility criterion is distributional (\ie it amounts to the cancellation of a distribution) and can thus be hard to make use of. We fully solved this problem in the local, one-spatial-dimension case by constructing a basis of two-forms that turns the distributional condition into an easily applicable algebraic one.

Further, when the dynamics is irreversible, we showed how one can gain insight into the corresponding nonequilibrium phenomenology through the flow~\eqref{Vortex_dynamics} of the vorticity operator, the latter being dual to the vorticity two-form.
By examining in this way the phenomenology associated to the elements of the aforementioned basis, we found that they can be divided into three subfamilies, with the elements of each subfamily exhibiting closely related, characteristic phenomenologies. We demonstrated that many well-known, typically out-of-equilibrium phenomena, such as flocking and those induced by non-reciprocal interactions, can be classified in this manner, their respective vorticity two-forms each belonging to a specified subspace of the decomposition~\eqref{Decomposition_space_local_vorticities} of the space of two-forms $\Omega^2(\mbF)$.

At the end of section~\ref{sec:basis}, we discussed in detail how this decomposition of $\Omega^2(\mbF)$ extends to more general contexts and take the form~\eqref{general_decomposition_vorticity_space}, the additional factors of the decomposition presumably accounting for \eg the notoriously rich boundary effects of active systems. We also discussed how this decomposition might be used to build field theories with desired properties.

In section~\ref{sec:entropy}, we studied the relation between entropy production and vorticity. 
In particular we showed the loop-wise entropy production and informatic entropy production rate (IEPR) to both be linear functionals of $\bomega$. While the relation with the loop-wise entropy production gives another connection between $\bomega$ and the generic nonequilibrium phenomenology of the dynamics, as demonstrated in section~\ref{subsubsec:loopWiseEP_AMB}, the link we establish to the IEPR allows the elements of the decomposition~\eqref{general_decomposition_vorticity_space} of $\Omega^2(\mbF)$ to be interpreted as independent sources of entropy production. We also showed that hidden currents and a resulting contribution to IEPR can arise, typically when the field $\bphi$ is conserved. They are hidden from the space $\mbF$ in the sense that there presence is only revealed by considering the joint statistics of the field and its corresponding current. 

Finally, in section~\ref{sec:MTreversal}, we gave reversibility conditions for arbitrary EMT-reversal rather than the simple T-reversal considered up to that point. We showed that considering nontrivial mirror ($\mkm$) and extension ($\mke$) maps -- which respectively flip the odd parts of the degrees of freedom, drift, and diffusion operator under time-reversal -- leads to a corresponding $\mke\mkm\mkt$-IEPR that is the superposition of several independent sources of entropy production, including the ($\mke\mkm\mkt$-)vorticity two-form, while the (T-)voticity two-form is the unique source for T-irreversibility. 
%gives additional entropy production sources, independent of the one(s) associated to the vorticity. 
Similarly, additional sources arise if the space $\mbF$ to which $\bphi$ belongs is not simply connected.
Studying all these additional irreversibility sources in a way that is similar to what we did for the vorticity space in this article is an exciting challenge for the future that could move towards a more complete classification of out-of-equilibrium field theories 
by precisely characterizing their irreversibility and showing how it links to qualitatively as well as quantitatively observable features.
%
%As discussed in this paper such linkages include [give examples as discussed previously for AMB, chaser-chased, KPZ etc..]

\section*{Acknowledgments}
The authors thank Julien Tailleur for useful discussions and Balázs Németh for a careful reading of the manuscript and for greatly simplifying the argument that appears in Appendix~\ref{app:factorisation_polynomials}.
%Ronojoy Adhikari and Bal\'asz N\'emeth for their careful comments on the manuscript.
This work was partially funded by the European Research Council under the Horizon 2020 Programme (ERC Grant Agreement No. 740269) and the Horizon Europe Programme (ERC Synergy Grant \textit{Shapincellfate}).

\appendix

\newpage

\section{An ill--defined spurious drift}
\label{AppIllDefSpurious}
In this appendix, we compute the spurious drift
\begin{equation}
s^{\br_1}_{(\varepsilon)} \equiv \frac{\delta}{\delta \phi^{\br_3}} b^{\br_1}_{i\br_2} b^{\br_3}_{i\br_2} - 2\varepsilon b^{\br_3}_{i\br_2} \frac{\delta}{\delta \phi^{\br_3}} b^{\br_1}_{i\br_2} \ ,
\end{equation}
associated to the operator 
\begin{equation}
b^{\br}_{j\br'} =\sqrt{M(\bphi(\br'))}\partial_{r'_j}\delta(\br-\br') \ .
\end{equation}

The first term of the above expression of $s^{\br_1}_{(\varepsilon)}$ reads
\begin{eqnarray*}
\frac{\delta}{\delta \phi^{\br_3}} b^{\br_1}_{i\br_2} b^{\br_3}_{i\br_2} &=& \sum_i \int \rmd\br_2\rmd\br_3  \left[ \partial_{r_2^i} \delta(\br_1-\br_2) \right]  \left[ \partial_{r_2^i} \delta(\br_3-\br_2) \right]  \frac{\delta}{\delta\phi^{\br_3}} M(\phi^{\br_2}) \\
&=& \sum_i \int \rmd\br_2\rmd\br_3  \left[ \partial_{r_2^i} \delta(\br_1-\br_2) \right]  \left[ -\partial_{r_3^i} \delta(\br_3-\br_2) \right]  \frac{\delta}{\delta\phi^{\br_3}} M(\phi^{\br_2}) \\
&=& \sum_i \int \rmd\br_2 \left[ \partial_{r_2^i} \delta(\br_1-\br_2) \right]    \left.\left[\partial_{r_3^i}\frac{\delta M(\phi^{\br_2})}{\delta\phi^{\br_3}}  \right] \right|_{\br_3=\br_2} \ ,
\end{eqnarray*}
where we have used that $\partial_{r_2^i} \delta(\br_3-\br_2) = -\partial_{r_3^i} \delta(\br_3-\br_2)$ to go from the first to the second line, and did an integration by parts to get to the last line.

Similarly, the second term in the expression of $s_{(\varepsilon)}$ reads
\begin{eqnarray*}
-2\varepsilon b^{\br_3}_{i\br_2} \frac{\delta}{\delta \phi^{\br_3}} b^{\br_1}_{i\br_2} &=& - 2\varepsilon \sum_i \int \rmd\br_2\rmd\br_3  \left[ \partial_{r_2^i} \delta(\br_1-\br_2) \right]  \left[ \partial_{r_2^i} \delta(\br_3-\br_2) \right]  \sqrt{M(\phi^{\br_2})}	\frac{\delta}{\delta\phi^{\br_3}} \sqrt{M(\phi^{\br_2})} \\
&=& -\varepsilon \sum_i \int \rmd\br_2\rmd\br_3  \left[ \partial_{r_2^i} \delta(\br_1-\br_2) \right]  \left[ \partial_{r_2^i} \delta(\br_3-\br_2) \right]  \frac{\delta}{\delta\phi^{\br_3}} M(\phi^{\br_2}) \ ,
\end{eqnarray*}
which is $-\varepsilon$ times the first term.
The total spurious drift is thus
\begin{equation}
s^{\br_1}_{(\varepsilon)}  =  (1-\varepsilon)\sum_i \int \rmd\br_2 \left[ \partial_{r_2^i} \delta(\br_1-\br_2) \right]    \left.\left[\partial_{r_3^i}\frac{\delta M(\phi^{\br_2})}{\delta\phi^{\br_3}}  \right] \right|_{\br_3=\br_2} \ .
\end{equation}
This is ill-defined as described in the main text (eq.~\eqref{spurious_temp}).

\section{Functional Stokes theorem}
\label{App:subsec:Stokes}

Let $S\subset\mbF$ be a smooth, connected, and oriented surface in the function space $\mbF$, whose (smooth oriented) boundary is denoted by $\partial S$. Let $\bzeta$ be a functional one--form over $\mbF$.
The objective of this appendix is to show that Stokes' theorem\footnote{We here only give a proof for one-forms and their functional exterior derivatives, but it can be easily adapted to show that the functional Stokes' theorem actually holds for functional differential forms of any order.} still holds in the functional space $\mbF$, namely that we have
\begin{equation}
\int_{\partial S} \bzeta = \int_S \mbd \bzeta \ . 
\label{App:functionalStokes}
\end{equation}
To prove this, we can show that the functional exterior derivative $\mbd$ is well behaved with respect to the pullback. We will then only need to use the finite--dimensional Stokes' theorem.
\\

We start by assuming that we can parametrize the suface $S$ by another smooth, connected, and oriented surface $s\subset\mbR^2$, through a diffeomorphism $\Phi:\mbR^2\to\mbF$, \ie $\Phi(s)=S$.

On the one hand, the change of variable formula\footnote{Note that formula~\eqref{App:pullbackOneForm} could actually be used as the definition of the integral of a functional one--form $\bzeta$ on a smooth line in $\mbF$.} first gives
\begin{equation}
\int_{\partial S} \bzeta = \int_{\partial s} \Phi^*\bzeta \ ,
\label{App:pullbackOneForm}
\end{equation}
where $\Phi^*$ stands for the pullback with respect to the map $\Phi$ (whose definition is recalled below). 
But $\Phi^*\bzeta$ is then nothing but a usual one--form on $\mbR^2$, to which we can apply the (finite dimensional) Stokes' theorem, namely
\begin{equation}
\int_{\partial s} \Phi^*\bzeta = \int_s \rmd (\Phi^*\bzeta) \ ,
\end{equation}
where $\rmd$ stands for the finite dimensional exterior derivative.

On the other hand, the change of variables formula similarly gives for two--forms:
\begin{equation}
\int_{S} \mbd \bzeta = \int_s \Phi^*(\mbd\bzeta) \ .
\end{equation}
Hence, if we can show that the pullback commutes with the exterior derivatives (the usual one, and the functional one), \ie that
\begin{equation}
\int_s \rmd (\Phi^*\bzeta) = \int_s \Phi^*(\mbd\bzeta) \ ,
\label{App:PullExtDerCommut}
\end{equation}
then we will be able to conclude from eqs.~\eqref{App:pullbackOneForm}--\eqref{App:PullExtDerCommut} that the functional Stokes' theorem~\eqref{App:functionalStokes} does hold.
\\

We thus turn to the proof of eq.~\eqref{App:PullExtDerCommut}.
In order to avoid any confusion with the spatial variable $\br$ of the fields $\phi(\br)$ that belong to $\mbF$, let us denote by $(t,\tau)$ the canonical coordinates of $\mbR^2$, rather than by $(x,y)$.
%We then pick a pair of vectors fields $\bu=u^t\partial_t + u^\tau\partial_\tau$ and $\bv=v^t\partial_t + v^\tau\partial_\tau$ on $s\subset\mbR^2$.

We first evaluate the pullback by $\Phi$ of the functional exterior derivative $\mbd\bzeta_\phi = \frac{\delta\zeta_{j\br'}}{\delta \phi^{i\br}}[\phi] \delta^{i\br}\wedge \delta^{j\br'}$, of $\bzeta_\phi=\zeta_{i\br}[\phi]\delta^{i\br}$ on a pair of vectors $\bu=u^t\partial_t + u^\tau\partial_\tau$ and $\bv=v^t\partial_t + v^\tau\partial_\tau$ at an arbitrary point $(t,\tau)\in s$:
\begin{eqnarray}
[\Phi^*\mbd\bzeta]_{(t,\tau)}(\bu,\bv) &=& \frac{\delta\zeta_{j\br'}}{\delta \phi^{i\br}}[\Phi(t,\tau)]\delta^{i\br}\wedge\delta^{j\br'} \left( \rmd \Phi(\bu), \rmd \Phi(\bv) \right) \ .
\end{eqnarray}
Using the expression of the differential $\rmd\Phi = \partial_t\Phi \rmd t + \partial_\tau \Phi \rmd\tau$ of $\Phi$ together with the bilinearity of the skew--symmetry map $\delta^{i\br}\wedge\delta^{j\br'}$, we get
\begin{eqnarray}
[\Phi^*\mbd\bzeta]_{(t,\tau)}(\bu,\bv) &=& \frac{\delta\zeta_{j\br'}}{\delta \phi^{i\br}}[\Phi(t,\tau)]\Bigg\{ 
\Big[\partial_t\Phi^{i\br} u^t + \partial_\tau \Phi^{i\br} u^\tau \Big] \left[\partial_t\Phi^{j\br'} v^t + \partial_\tau \Phi^{j\br'} v^\tau \right] \\
& & - \left[\partial_t\Phi^{j\br'} u^t + \partial_\tau \Phi^{j\br'} u^\tau \right]\Big[\partial_t\Phi^{i\br} v^t + \partial_\tau \Phi^{i\br} v^\tau \Big] \Bigg\} \\
&=& \frac{\delta\zeta_{j\br'}}{\delta \phi^{i\br}}[\Phi(t,\tau)] \Big[ (\partial_t\Phi^{i\br}) (\partial_\tau \Phi^{j\br'}) - (\partial_t\Phi^{j\br'}) (\partial_\tau \Phi^{i\br})\Big] \left[ u^tv^\tau - u^\tau v^t \right] \ .
\end{eqnarray}
In terms of two--forms, this also reads
\begin{equation}
[\Phi^*\mbd\bzeta]_{(t,\tau)} = \frac{\delta\zeta_{j\br'}}{\delta \phi^{i\br}}[\Phi(t,\tau)] \Big[ (\partial_t\Phi^{i\br}) (\partial_\tau \Phi^{j\br'}) - (\partial_t\Phi^{j\br'}) (\partial_\tau \Phi^{i\br})\Big] \rmd t \wedge \rmd \tau \ .
\label{App:pullbackFunctExtDer}
\end{equation}

Let us now turn to the evaluation on $(\bu,\bv)$ of the (finite--dimensional) exterior derivative of $\Phi^*\bzeta$.
We first have that the pullback $\Phi^*\bzeta$ applied to $\bu$ reads
\begin{equation}
[\Phi^*\bzeta]_{(t,\tau)} (\bu) = \zeta_{i\br} [\Phi(t,\tau)] \delta^{i\br} \left[ \rmd \Phi(\bu) \right] =\zeta_{i\br} [\Phi(t,\tau)] \left[ u^t\partial_t\Phi^{i\br} + u^\tau \partial_\tau \Phi^{i\br} \right] \ , 
\end{equation}
where we just used the linearity of $\delta^{i\br}$.
In terms of one--forms, it reads
\begin{equation}
[\Phi^*\bzeta]_{(t,\tau)} =\zeta_{i\br} [\Phi(t,\tau)] \left[ \partial_t\Phi^{i\br}\rmd t + \partial_\tau \Phi^{i\br}\rmd \tau \right] \ .
\end{equation}
We can then easily compute the exterior derivative
\begin{equation}
[\rmd (\Phi^*\bzeta)]_{(t,\tau)} =  \partial_\tau \left\{\zeta_{i\br} [\Phi(t,\tau)] \partial_t\Phi^{i\br} \right\} \rmd\tau \wedge \rmd t + \partial_t \left\{\zeta_{i\br} [\Phi(t,\tau)] \partial_\tau \Phi^{i\br} \right\} \rmd t\wedge \rmd \tau \ .
\end{equation}
Using the chain rule together with the skew--symmetry property $\rmd \tau \wedge \rmd t = -\rmd t\wedge \rmd \tau$, we get
\begin{equation}
[\rmd (\Phi^*\bzeta)]_{(t,\tau)} = \frac{\delta \zeta_{i\br}}{\delta \phi^{j\br'}} \left[\partial_t \Phi^{j\br'} \partial_\tau\Phi^{i\br} -  \partial_\tau \Phi^{j\br'} \partial_t\Phi^{i\br} \right] \rmd t\wedge \rmd \tau \ ,
\end{equation}
which coincides with the right--hand side of eq.~\eqref{App:pullbackFunctExtDer}. 
Thus, we showed that $\Phi^*\mbd\bzeta = \rmd (\Phi^*\bzeta)$, consequently that eq.~\eqref{App:PullExtDerCommut} holds, and finally that the functional Stokes' theorem~\eqref{App:functionalStokes} follows.

\section{Vorticity of the Active Model B}
\label{app:AMB_vorticity}

The vorticity of AMB~\eqref{AMB}-\eqref{AMB_chemical_potential} was previously computed by one of us in~\cite{o2023nonequilibrium} using a less general version of definition~\eqref{functExtder01}.
In this appendix, for illustrative purposes, we rederive expression~\eqref{cycleAffAMB} for $\bomega$ using a slightly different approach -- namely, formula~\eqref{extDer_convenient} -- which allows us to compute the exterior derivative directly at the level of functional forms, without introducing any pair of perturbations $(\delta\rho_1,\delta\rho_2)$. For completeness, we also derive expression~\eqref{full_vorticity_op_AMB} for the corresponding vorticity operator.

The following calculation involves two integration variables, denoted $\br$ and $\br'$, and the integrals are  written out explicitly for clarity -- that is, without using the generalized Einstein summation convention introduced  in section~\ref{subsec:SPDEframework}. 
Although not written explicitly to keep the notation light, the field $\rho$ and its derivatives are always evaluated at point $\br$, not at $\br'$, and the functions $\lambda$ and $\kappa$ are evaluated at $\rho(\br)$.

As shown at the end of section~\ref{subsec:SPDEframework} and further discussed in section~\ref{subsec:RevCondFuncExtDer}, in the case of AMB: 
\begin{equation}
\bD^{-1}\ba=-\int \rmd \br \mu_\br\delta^\br \ .
\end{equation}
Hence, using eq.~\eqref{extDer_convenient}, we get
\begin{eqnarray}
\bomega \equiv \mbd \bD^{-1}\ba = -\mbd \left[ \int \rmd \br \mu_\br \delta^\br \right] = - \int \rmd \br \; \mbd[\mu_\br]\wedge \delta^\br = - \int \rmd \br \rmd \br' \frac{\delta\mu_\br}{\delta\rho^{\br'}}\delta^{\br'}\wedge \delta^\br 
\label{AMB_vort_calc_01}
\end{eqnarray}
%\begin{eqnarray}
%\frac{\delta \mu_\br}{\delta\rho^{\br'}} =\left[ a+2 b \rho(\br) + \lambda' |\nabla\rho(\br)|^2 -\kappa'\Delta\rho(\br) \right] \delta(\br-\br') + 2\lambda\nabla\rho(\br) \cdot \nabla_\br\delta(\br-\br') - \kappa \Delta_\br\delta(\br-\br')
%\end{eqnarray}
Now, the functional derivative of $\mu$ reads
\begin{eqnarray*}
\frac{\delta \mu_\br}{\delta\rho^{\br'}} &=& \left[ a+2 b \rho + \lambda' |\nabla\rho|^2 -\kappa'\Delta\rho \right] \delta(\br-\br') + 2\lambda\nabla\rho \cdot \nabla_\br\delta(\br-\br') - \kappa \Delta_\br\delta(\br-\br') \\
&=& \left[ a+2 b \rho + \lambda' |\nabla\rho|^2 -\kappa'\Delta\rho \right] \delta(\br-\br') - 2\lambda\nabla\rho \cdot \nabla_{\br'}\delta(\br-\br') - \kappa \Delta_{\br'}\delta(\br-\br') \ ,
\end{eqnarray*}
where we used $\nabla_\br\delta(\br-\br')=-\nabla_{\br'}\delta(\br-\br')$ to go from the first line to the second.
Injecting this in the expression~\eqref{AMB_vort_calc_01} of the vorticity yields
\begin{eqnarray}
\bomega &=& -\int \rmd \br \left[a+2 b \rho + \lambda' |\nabla\rho|^2 -\kappa'\Delta\rho \right] \delta^\br\wedge \delta^\br \notag\\
 & & + \int \rmd \br \rmd \br' \left[ 2\lambda \nabla\rho \cdot \nabla_{\br'}\delta(\br-\br') + \kappa \Delta_{\br'} \delta(\br-\br') \right] \delta^{\br'}\wedge \delta^\br \ .
\label{AMB_vort_calc_02}
\end{eqnarray}
Since the wedge product is antisymmetric, $\delta^\br\wedge\delta^\br=0$, and the first line on the right-hand side of eq.~\eqref{AMB_vort_calc_02} vanishes.
Integrating by parts 
then gives 
%while paying attention to the sign convention on the gradients of Dirac's deltas, we get
\begin{eqnarray*}
\bomega &=&  - \int \rmd \br \rmd \br' \delta(\br-\br')\left[ 2\lambda \nabla\rho \cdot \nabla_{\br'}\left(\delta^{\br'}\wedge \delta^\br\right) - \kappa \Delta_{\br'} \left( \delta^{\br'}\wedge \delta^\br\right) \right] \\
&=&  - \int \rmd \br \rmd \br' \delta(\br-\br')\left[ 2\lambda \nabla\rho \cdot \left(-\nabla\delta^{\br'}\right) \wedge \delta^\br- \kappa \left(\Delta  \delta^{\br'}\right)\wedge \delta^\br \right]  \ ,
\end{eqnarray*}
where the second line is due to the convention $\nabla_{\br'}\left[\delta^{\br'}\wedge \delta^\br\right](\delta\rho_1,\delta\rho_2)\equiv \nabla_{\br'} \left[\delta\rho_1^{\br'}\delta\rho_2^\br-\delta\rho_1^{\br}\delta\rho_2^{\br'}\right]=\delta\rho_2^\br \nabla\delta\rho_1^{\br'}-\delta\rho_1^{\br}\nabla\delta\rho_2^{\br'}$, while $\left[\nabla\delta^{\br'}\right]\wedge\delta^\br(\delta\rho_1,\delta\rho_2)\equiv \left[\nabla\delta^{\br'}(\delta\rho_1)\right] \delta^{\br}(\delta\rho_2)-\left[\nabla\delta^{\br'}(\delta\rho_2)\right] \delta^{\br}(\delta\rho_1)=\left[-\nabla\delta\rho_1^{\br'}\right]\delta\rho_2^\br -\left[-\nabla\delta\rho_2^{\br'}\right]\delta\rho_1^\br= \delta\rho_1^\br\nabla\delta\rho_2^{\br'} - \delta\rho_2^\br\nabla\delta\rho_1^{\br'}$, the sign difference arising from the definition of the Dirac-delta gradient.
Integrating out $\br'$, and using the skew-symmetry of the wedge product, the vorticity reads
\begin{eqnarray*}
\bomega &=& - \int \rmd \br \left[ 2\lambda \nabla\rho \cdot \left(-\nabla\delta^{\br}\right) \wedge \delta^\br- \kappa \left(\Delta  \delta^{\br}\right)\wedge \delta^\br \right] \\
&=& - \int \rmd \br \left[ 2\lambda \nabla\rho \cdot \delta^\br \wedge \nabla\delta^{\br} + \kappa \delta^\br \wedge \Delta  \delta^{\br} \right] \ .
\end{eqnarray*}
Finally, using eq.~\eqref{relation_a_generaliser}, and integrating by parts yields the formula we wanted to prove:
\begin{equation}
\bomega = - \int \rmd \br \left[ 2\lambda+\kappa'\right]\nabla\rho \cdot \delta^\br \wedge \nabla\delta^\br \ .
\label{append_vort_form_AMB}
\end{equation}

Let us now turn to the vorticity operator $\bW$ and show that it is given by eq.~\eqref{full_vorticity_op_AMB} in the case of AMB.
We start by deriving the expression for the corresponding cycle affinity operator $\whomega$. To do so, we first apply the vorticity 2-form~\eqref{append_vort_form_AMB} to a pair of fluctuations $(\delta\rho_1,\delta\rho_2)$ and integrate by parts to factorize $\delta\rho_1$ on the left-hand side of the integral:
\begin{eqnarray*}
\bomega_{\rho}(\delta\rho_1,\delta\rho_2) &=& \int \rmd \br \, (2\lambda+\kappa')\nabla\rho \cdot \left[\delta\rho_1\nabla\delta\rho_2  -\delta\rho_2\nabla\delta\rho_1 \right] \\
&=& \int \rmd \br \, \delta\rho_1 \Big\{ (2\lambda+\kappa')\nabla\rho \cdot \nabla\delta\rho_2 + \nabla\cdot \left[\delta\rho_2(2\lambda+\kappa')\nabla\rho\right] \Big\} \ .
\end{eqnarray*}  
Using definition~\eqref{def_affinity_operator}, we then identify the curly bracket in this last integral as $\whomega_\rho \delta\rho_2$. Applying formula~\eqref{Omega_vs_whomega}, with $\bD=-M\Delta$ for AMB, finally yield the desired expression for the vorticity operator:
\begin{equation}
\bW_{\rho} \delta\rho =\frac{M}{2} \Delta\Big\{ (2\lambda+\kappa')\nabla\rho \cdot \nabla\delta\rho + \nabla\cdot \left[\delta\rho(2\lambda+\kappa')\nabla\rho\right] \Big\} \ .
\end{equation}

\newpage

\section{Intermediate results about polynomials and antisymmetric differential operators}
\label{app:factorisation_operators}
The objective of this appendix is to show that the operators
\begin{center}
$\begin{array}{ccccccccccc}
& \mcD^A_k  : & \mbH\times \mbH & \longrightarrow & \mbH & \quad \text{and} \quad & \mcD^S_k : & \mbH\times \mbH & \longrightarrow & \mbH \\\
& & (\psi_1,\psi_2) & \longmapsto & \psi_2\partial^k\psi_1 - \psi_1\partial^k\psi_2 & \quad \quad &   & (\psi_1,\psi_2) & \longmapsto & \psi_2\partial^k\psi_1 + \psi_1\partial^k\psi_2 \\
\end{array}$
\end{center}
which are, up to a factor 2, respectively the skewsymmetric and symmetric part of the operators
\begin{center}
$\begin{array}{ccccc}
& \mcD_k  : & \mbH\times \mbH & \longrightarrow & \mbH \\
& & (\psi_1,\psi_2) & \longmapsto & \psi_2(x)\partial^k\psi_1(x) \\
\end{array}$
\end{center}
where $\mbH$ is the space of smooth functions from $\mbR$ to itself, can be decomposed in the following way:
\begin{equation}
\forall j \hspace{0,3cm} \mcD^A_{2j}=\sum_{i=0}^{j-1} b^{2j}_{2i+1}\partial^{(2j-2i-1)}\mcD^A_{2i+1} \ ,
\end{equation}
\begin{equation}
\forall j \hspace{0,3cm} \mcD^S_{2j+1}=\sum_{i=0}^jc^{2j+1}_{2i}\partial^{(2j+1-2i)}\mcD^S_{2i} \ ,
\end{equation}
where the coefficients $b^{2j}_{2i+1}$ and $c^{2j+1}_{2i}$ and recursively defined through:
\begin{equation}
\forall j\in \mathbb{N}^*, \forall i<j, \hspace{0,3cm}  b^{2j}_{2i+1}=a^{2j}_{2i+1} + \sum_{k=i+1}^{j-1} a^{2j}_{2k}b^{2k}_{2i+1} \ ,
\end{equation}
\begin{equation}
\text{and}\hspace{0,3cm}\forall j\in \mathbb{N}, \forall i\leq j, \hspace{0,3cm}  c^{2j+1}_{2i}=a^{2j+1}_{2i} + \sum_{k=i}^{j-1} \ , a^{2j+1}_{2k+1}c^{2k+1}_{2i}
\end{equation}
\begin{equation}
\text{with}\hspace{0,3cm}a^{n}_k = \frac{(-1)^{n-1-k}}{2}\binom{n}{k} \ .
\end{equation}
To this end, we first show in section~\ref{app:factorisation_polynomials} a similar factorisation for some polynomials in $\mbR[X,Y]$. Then, thanks to an algebra isomorphism between $\mbR[X,Y]$ and an appropriate algebra of operators, we deduce in section~\ref{app:algebra_isomorphism} the above formula.

\subsection{Factorisation of polynomials}
\label{app:factorisation_polynomials}
The objective of this section is to prove lemma~\ref{lemma3} below. We will first prove the preliminary lemma~\ref{lemma2}.

\begin{Lemme}
\label{lemma2}
For any integer $n$,

\noindent -- if $n$ is even: 
 \begin{equation}
 X^n-Y^n = \sum_{k=0}^{n-1}a^{n}_k(X+Y)^{n-k}(X^{k}-Y^{k}) \ ,
 \end{equation}
-- if $n$ is odd:
  \begin{equation}
  X^n+Y^n = \sum_{k=0}^{n-1}a^{n}_k(X+Y)^{n-k}(X^{k}+Y^{k}) \ ,
  \end{equation}
  with 
  \begin{equation}
  a^{n}_k \equiv\frac{(-1)^{n+k+1}}{2}\binom{n}{k} \ .
  \label{def_coeff_ank}
  \end{equation}
  \end{Lemme}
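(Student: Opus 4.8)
The plan is to prove Lemma~\ref{lemma2} by induction on $n$, treating the even and odd cases in parallel since they share the same coefficients $a_k^n$. Before setting up the induction, I would first verify the base cases by hand: for small $n$ (say $n=0,1,2$) the claimed identities should reduce to trivial statements, which simultaneously checks that the normalisation $a_k^n = \tfrac{(-1)^{n+k+1}}{2}\binom{n}{k}$ is the correct one.

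The cleanest approach I would take exploits the symmetry between $X$ and $Y$ directly, rather than running a laborious induction. The key observation is that the quantities $X^n \mp Y^n$ are determined by how the map $(X,Y)\mapsto X^n \mp Y^n$ behaves under the substitution $(X,Y)\to(Y,X)$: it is antisymmetric when $n$ is even (giving $X^n-Y^n$) and symmetric when $n$ is odd (giving $X^n+Y^n$). So I would introduce the new variables $S \equiv X+Y$ and $\Delta \equiv X-Y$, and rewrite everything in terms of $S$ and $\Delta$. Note that $X = (S+\Delta)/2$ and $Y=(S-\Delta)/2$, so
\begin{equation}
X^n \mp Y^n = \frac{1}{2^n}\left[(S+\Delta)^n \mp (S-\Delta)^n\right] \ .
\end{equation}
Expanding both powers by the binomial theorem, the terms that survive in $(S+\Delta)^n-(S-\Delta)^n$ (for $n$ even) are those with an odd power of $\Delta$, and in $(S+\Delta)^n+(S-\Delta)^n$ (for $n$ odd) those with an even power of $\Delta$; in each case one collects $2\sum S^{n-k}\Delta^{k}\binom{n}{k}$ over the appropriate parity of $k$. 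I would then recognise that $X^k \mp Y^k$ has exactly the same binomial expansion structure in $S,\Delta$, and match coefficients to read off the $a_k^n$.

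The main obstacle, and the step I would be most careful about, is the bookkeeping of signs and parities when converting the binomial expansion in $\Delta$ back into the target form $\sum_k a_k^n (X+Y)^{n-k}(X^k\mp Y^k)$. Because the right-hand side is itself a sum of products $S^{n-k}\cdot(X^k\mp Y^k)$, and each factor $X^k\mp Y^k$ again expands into several $\Delta$-monomials, matching coefficients is not a single-step identification but requires either a triangular/recursive argument in $k$ or a generating-function manipulation. I expect the acknowledgment in the paper -- thanking Balázs Németh ``for greatly simplifying the argument that appears in Appendix~\ref{app:factorisation_polynomials}'' -- signals precisely that a naive coefficient-matching is messy, and that the slick route is to work in the $S,\Delta$ variables (or equivalently to argue via the parity decomposition directly) so that the factor $\tfrac{(-1)^{n+k+1}}{2}$ emerges transparently. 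Once Lemma~\ref{lemma2} is established, Lemma~\ref{lemma3} and the operator factorisation formulas for $\mcD_k^A,\mcD_k^S$ would follow by transporting the polynomial identity through the algebra isomorphism between $\mbR[X,Y]$ and the relevant operator algebra, as anticipated in the appendix's introduction.
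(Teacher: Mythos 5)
There is a genuine gap at precisely the step you flag as ``the main obstacle'': in the variables $S=X+Y$, $\Delta=X-Y$ the coefficient matching you propose cannot ``read off'' the $a_k^n$, because the family of products $(X+Y)^{n-k}(X^k\mp Y^k)$ is \emph{not} linearly independent. Every factor $X^k-Y^k=\frac{1}{2^{k-1}}\sum_{j\;\mathrm{odd}}\binom{k}{j}S^{k-j}\Delta^j$ expands over the same odd $\Delta$-monomials regardless of the parity of $k$ (and similarly $X^k+Y^k$ over even ones), so for $n$ even you get only $n/2$ matching equations (one per odd $j\le n-1$) for $n-1$ unknowns $a_1^n,\dots,a_{n-1}^n$: the system is underdetermined and admits infinitely many solutions --- this redundancy is exactly what formulas~\eqref{core:decompo_elementary_2forms_plus}--\eqref{core:decompo_elementary_2forms_minus} of the paper exploit later. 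At best your route lets you \emph{verify} that the specific coefficients $a_k^n=\frac{(-1)^{n+k+1}}{2}\binom{n}{k}$ solve the system, but that verification is itself a nontrivial combinatorial identity,
\begin{equation}
\sum_{k=j}^{n-1}(-1)^{n+k+1}\binom{n}{k}\binom{k}{j}\,2^{-k} \;=\; 2^{1-n}\binom{n}{j} \ ,
\end{equation}
which requires $\binom{n}{k}\binom{k}{j}=\binom{n}{j}\binom{n-j}{k-j}$ together with the truncated expansion of $(1-\tfrac12)^{n-j}$, and your proposal never establishes it. Your opening suggestion of an induction on $n$ is likewise abandoned rather than carried out, so no complete argument is actually on the table.

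For comparison, the paper's proof is a one-line trick that sidesteps all of this and is different from the $(S,\Delta)$ substitution you conjectured to be the simplification: write $X^n\pm Y^n=[(X+Y)-Y]^n\pm[(X+Y)-X]^n$, expand each bracket by the binomial theorem in powers of $(X+Y)$, observe that the $k=n$ term of the expansion is $(-1)^n(X^n\pm Y^n)$, and move it to the left-hand side; for the appropriate parity of $n$ the left-hand side becomes $2(X^n\pm Y^n)$ and dividing by $2$ yields exactly the coefficients $a_k^n=\frac{(-1)^{n+k+1}}{2}\binom{n}{k}$ with no system of equations to solve. If you want to salvage your approach, the missing ingredient is the displayed binomial identity above; but the paper's self-referential substitution is both shorter and makes the factor $\frac{(-1)^{n+k+1}}{2}$ appear immediately.
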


\begin{proof}
Let us consider an arbitrary natural number $n$, and two unknown variables $X$ and $Y$. Using the binomial formula, we get :
\begin{eqnarray*}
X^n+Y^n &=& [(X+Y)-Y]^n+[(X+Y)-X]^n \\
&=& \sum_{k=0}^n \binom{n}{k} (X+Y)^{n-k}(-1)^k(X^k+Y^k) \ .
\end{eqnarray*}
Moving the $k=n$ term to the right hand side of the equality leads to
\begin{equation}
(X^n+Y^n)(1-(-1)^n) = \sum_{k=0}^{n-1} \binom{n}{k} (X+Y)^{n-k}(-1)^k(X^k+Y^k) \ .
\end{equation}
Hence, for $n$ odd, we get
\begin{equation}
X^{n} + Y^{n} = \sum_{k=0}^{n-1} \frac{(-1)^k}{2}\binom{n}{k} (X+Y)^{n-k}(X^k+Y^k) \ .
\end{equation}
Similarly, we have
\begin{eqnarray*}
X^n-Y^n &=& [(X+Y)-Y]^n-[(X+Y)-X]^n \\
&=& \sum_{k=0}^n \binom{n}{k} (X+Y)^{n-k}(-1)^{k+1}(X^k-Y^k) \ ,
\end{eqnarray*}
\ie
\begin{equation}
(X^n-Y^n)(1-(-1)^{n+1}) = \sum_{k=0}^{n-1} (-1)^{k+1}\binom{n}{k} (X+Y)^{n-k}(X^k-Y^k) \ .
\end{equation}
Thus, when $n$ is even
\begin{equation}
(X^n-Y^n) = \sum_{k=0}^{n-1} \frac{(-1)^{k+1}}{2}\binom{n}{k} (X+Y)^{n-k}(X^k-Y^k) \ .
\end{equation}
Introducing the coefficient $a^n_k$, defined by eq.~\eqref{def_coeff_ank}, allows to recover the two formulas we had to prove.
\end{proof}

%
%  \begin{center}
%\fcolorbox{blue}{white}{
%\begin{minipage}{1\textwidth}
\begin{Lemme} 
\label{lemma3}
For any $j\in\mbN$, we have the factorisations:
\begin{equation}
\hspace{0,3cm} X^{2j}-Y^{2j}=\sum_{i=0}^{j-1} b^{2j}_{2i+1}\big(X+Y\Big)^{2j-2i-1}\Big(X^{2i+1}-Y^{2i+1}\Big) \ ,
\label{lemma3_1}
\end{equation}
\begin{equation}
\text{and} \hspace{0,3cm} X^{2j+1} + Y^{2j+1}=\sum_{i=0}^jc^{2j+1}_{2i}\Big(X+Y\Big)^{2j+1-2i}\Big( X^{2i} + Y^{2i}\Big) \ ,
\label{lemma3_2}
\end{equation}
where the coefficients $b^{2j}_{2i+1}$ and $c^{2j+1}_{2i}$ are recursively defined through:
\begin{equation}
\forall j\in \mathbb{N}^*, \forall i<j, \hspace{0,3cm}  b^{2j}_{2i+1}\equiv a^{2j}_{2i+1} + \sum_{k=i+1}^{j-1} a^{2j}_{2k}b^{2k}_{2i+1} \ ,
\label{expr_coeff_b}
\end{equation}
\begin{equation}
\text{and}\hspace{0,3cm}\forall j\in \mathbb{N}, \forall i\leq j, \hspace{0,3cm}  c^{2j+1}_{2i} \equiv a^{2j+1}_{2i} + \sum_{k=i}^{j-1} a^{2j+1}_{2k+1}c^{2k+1}_{2i} \ ,
\label{expr_coeff_c}
\end{equation}
 \begin{equation}
  \text{with}\hspace{0,3cm}a^{n}_k =\frac{(-1)^{n+k+1}}{2}\binom{n}{k} \ .
  \label{expr_coeff_a}
  \end{equation}
  \end{Lemme}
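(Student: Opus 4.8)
The plan is to prove the two factorisations~\eqref{lemma3_1} and~\eqref{lemma3_2} by strong induction on $j$, taking Lemma~\ref{lemma2} as the starting identity and using the recursive definitions~\eqref{expr_coeff_b}--\eqref{expr_coeff_c} of the coefficients as the bookkeeping device that organises the induction. The key observation is that Lemma~\ref{lemma2} already expresses $X^n\mp Y^n$ as a combination of the terms $(X+Y)^{n-k}(X^k\mp Y^k)$ for \emph{all} $k$, whereas the target identities keep only the ``same-signature'' terms (odd differences $X^{2i+1}-Y^{2i+1}$ in the even case, even sums $X^{2i}+Y^{2i}$ in the odd case); the induction is precisely the mechanism that recursively eliminates the wrong-parity terms.

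First I would treat the even case~\eqref{lemma3_1}. Applying Lemma~\ref{lemma2} at $n=2j$ gives $X^{2j}-Y^{2j}=\sum_{k=0}^{2j-1}a^{2j}_k(X+Y)^{2j-k}(X^k-Y^k)$, which I split according to the parity of $k$. The odd-$k$ part, $k=2i+1$ with $0\le i\le j-1$, is already in the desired form and contributes the standalone coefficient $a^{2j}_{2i+1}$. The even-$k$ part, $k=2m$ with $1\le m\le j-1$ (the $k=0$ term vanishes since $X^0-Y^0=0$), still carries even differences $X^{2m}-Y^{2m}$, to which — since $m<j$ — the induction hypothesis~\eqref{lemma3_1} applies. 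Substituting that expansion, collapsing the powers via $(X+Y)^{2j-2m}(X+Y)^{2m-2i-1}=(X+Y)^{2j-2i-1}$, and exchanging the order of the resulting double sum (so that $i$ runs from $0$ to $j-2$ and $m$ from $i+1$ to $j-1$), I collect the total coefficient of $(X+Y)^{2j-2i-1}(X^{2i+1}-Y^{2i+1})$. It equals $a^{2j}_{2i+1}+\sum_{m=i+1}^{j-1}a^{2j}_{2m}b^{2m}_{2i+1}$, which is exactly the defining recursion~\eqref{expr_coeff_b} for $b^{2j}_{2i+1}$; this closes the inductive step.

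The odd case~\eqref{lemma3_2} is the mirror-image argument: from Lemma~\ref{lemma2} at $n=2j+1$, the even-$k$ terms ($k=2i$, $0\le i\le j$) are already of the wanted form and supply $a^{2j+1}_{2i}$, while the odd-$k$ terms ($k=2m+1$, $0\le m\le j-1$) carry sums $X^{2m+1}+Y^{2m+1}$ that the hypothesis~\eqref{lemma3_2} expands; the identical power-collapse and index-swap reproduce the recursion~\eqref{expr_coeff_c} defining $c^{2j+1}_{2i}$. In both steps, the part directly inherited from Lemma~\ref{lemma2} supplies the standalone coefficient ($a^{2j}_{2i+1}$, resp. $a^{2j+1}_{2i}$) and the recursively expanded part supplies the summation term of~\eqref{expr_coeff_b} (resp.~\eqref{expr_coeff_c}). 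The base cases are immediate: for~\eqref{lemma3_1}, $j=0$ gives $X^0-Y^0=0$ against an empty sum (and $j=1$ checks $b^2_1=a^2_1=1$), while for~\eqref{lemma3_2}, $j=0$ gives $X+Y=2c^1_0(X+Y)$ with $c^1_0=a^1_0=1/2$. Note that each identity's inductive step invokes only the same identity at smaller $j$, so an ordinary (rather than coupled/simultaneous) strong induction suffices for each.

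I do not expect any deep obstacle here, since the content is elementary algebra in $\mathbb{R}[X,Y]$. The only point demanding genuine care is the combinatorial bookkeeping of the double sums: one must verify that, after exchanging the order of summation, the inner index ranges coincide \emph{exactly} with those in~\eqref{expr_coeff_b}--\eqref{expr_coeff_c}, and that the boundary terms are correctly handled — namely the vanishing $k=0$ even difference, and the empty inner sums occurring at $i=j-1$ in the even case and at $i=j$ in the odd case, where the standalone coefficient must stand alone with no accompanying summation contribution.
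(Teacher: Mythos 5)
Your proof is correct and takes essentially the same route as the paper's: a strong induction on $j$ in which Lemma~\ref{lemma2} provides the expansion, the wrong-parity terms are re-expanded via the induction hypothesis, and the exchanged double sum reproduces exactly the recursions~\eqref{expr_coeff_b}--\eqref{expr_coeff_c}. The only difference is presentational — you write out both parity cases, the base checks, and the boundary/empty-sum bookkeeping explicitly, where the paper details one case and states that the other is proved in the same way.
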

%\end{minipage}
%}
%\end{center}
\begin{proof}
Again, we only prove the odd case, the even case being proved in the same way.
We make the proof through strong induction. The first step $j=0$ gives $0=0$, which is true. We then choose $j\in\mbN$. We assume the lemma to be true $\forall i<j$. We prove it for $j$, using lemma~\ref{lemma2} and then the recurrence hypothesis:

\begin{eqnarray*}
 X^{2j}-Y^{2j} &=& \sum_{i=0}^{2j-1}a^{2j}_i(X+Y)^{2j-i}(X^{i}-Y^{i})\\
&=&  \sum_{k=0}^{j-1}a^{2j}_{2k}(X+Y)^{2j-2k}(X^{2k}-Y^{2k}) +  \sum_{k=0}^{j-1}a^{2j}_{2k+1}(X+Y)^{2j-(2k+1)}(X^{2k+1}-Y^{2k+1})\\
&=&  \sum_{k=0}^{j-1}a^{2j}_{2k}(X+Y)^{2j-2k}\sum_{i=0}^{k-1} b^{2k}_{2i+1}(X+Y)^{2k-2i-1}(X^{2i+1}-Y^{2i+1})\\ & & +  \sum_{i=0}^{j-1}a^{2j}_{2i+1}(X+Y)^{2j-(2i+1)}(X^{2i+1}-Y^{2i+1})\\
&=& \sum_{i=0}^{j-1}  \Big( \sum_{k=i+1}^{j-1} a^{2j}_{2k} b^{2k}_{2i+1} \Big) (X+Y)^{2j-2i-1} (X^{2i+1} -Y^{2i+1})\\ & & +  \sum_{i=0}^{j-1}a^{2j}_{2i+1}(X+Y)^{2j-2i-1}(X^{2i+1}-Y^{2i+1})\\
&=& \sum_{i=0}^{j-1}  \Big( a^{2j}_{2i+1} + \sum_{k=i+1}^{j-1} a^{2j}_{2k} b^{2k}_{2i+1} \Big) (X+Y)^{2j-2i-1} (X^{2i+1} -Y^{2i+1})
\end{eqnarray*}
\end{proof}

\subsection{Algebra of operators and algebra isomorphism}
\label{app:algebra_isomorphism}
An important problem that we face from the beginning is that if we treat the operators
\begin{center}
 $\begin{array}{ccccc}
\mcD_k & : & \mbH\times \mbH & \longrightarrow & \mbH \\
 & & (\psi_1, \psi_2) & \longmapsto & \psi_2 (x)\partial^{k}\psi_1 (x)\\
\end{array}$
\end{center} 
directly, we won't be able to compose them, which is not ideal to solve a factorisation problem. Hence, we are going to consider first the lifted operators:
 \begin{center}
 $\begin{array}{ccccc}
\tilde{\mcD}_k & : & \mbH\times \mbH & \longrightarrow & \mbH\otimes \mbH \\
 & & (\psi_1, \psi_2) & \longmapsto & \partial^{k}\psi_1 \otimes \psi_2\\
\end{array}$
\end{center}
where $\psi_2 \otimes \partial^{k}\psi_1:(x,y)\mapsto\psi_2(x) \partial^{k}\psi_1 (y)$ is the tensor product
%~\footnote{The tensor product used here to define $\mbH\otimes\mbH$ is simply the algebraic one, \ie that for which the elements of $\mbH\otimes\mbH$ are only finite linear superpositions of elements of the form $\psi_1\otimes\psi_2$. There is no need for any topological completion since the Schwarz condition of integrability only concerns pairs $(\psi_1,\psi_2)$ of function from $\mbS^1$ to $\mbR$, and not all the functions from $\mbS^1\times\mbS^1$ to $\mbR$.}
of functions $\psi_2$ and $\partial^{k}\psi_1$. We recover $\mcD_k$ from $\tilde{\mcD}_k$ using the linear projection:
 \begin{center}
 $\begin{array}{ccccc}
\pi_\mbH & : & \mbH\otimes \mbH & \longrightarrow & \mbH \\
 & & f(x,y) & \longmapsto & \pi_\mbH(f)(x)\equiv f(x,x) \ . \\
\end{array}$
\end{center}
Indeed 
\begin{equation}
\mcD_k = \pi_\mbH\circ\tilde{\mcD}_k \ ,
\end{equation}
where $\circ$ denotes the composition.
We note that, since the operators $\tilde{\mcD}_k$ are bilinear, they factorize through $\otimes$ in a unique way. In other words, the fundamental property of the tensor product ensures the existence, for all integer $k$, of a unique operator $\bar{\mcD}_k$
 \begin{center}
 $\begin{array}{ccccc}
\bar{\mcD}_k & : & \mbH\otimes \mbH & \longrightarrow & \mbH\otimes \mbH \\
 & & f(x,y) & \longmapsto &\partial^{k}_xf(x,y)\\
\end{array}$
\end{center}
We can now work with the operators $\bar{\mcD}_k$ that are endomorphisms of $\mbH\otimes\mbH$, \ie $\bar{\mcD}_k\in \mathrm{End}(\mbH \otimes \mbH )$. Moreover, they can be written as
\begin{equation}
\bar{\mcD}_k= \partial^k \otimes id_\mbH \ ,
\end{equation}
where $id_\mbH$ is the identity of $\mathrm{End}(\mbH)$. The transitions between the different spaces is summarized in the following commutative diagram.
\begin{large} 
\begin{displaymath}
\xymatrix @ !=2cm{
    H\times H \ar[r]^{\otimes} \ar[d]_{ \mcD_k} \ar[rd]^{\tilde{\mcD}_k} & H\otimes H \ar[d]^{\bar{\mcD}_k} \\
     H & H\otimes H   \ar[l]^{\pi_\mbH}
  }
  \end{displaymath}
\end{large}
The other crucial point of this change of space is that the derivation operator $\partial \equiv \frac{\rmd}{\rmd x}$ of $\mbH$ corresponds in $\mbH\otimes\mbH$ to the operator $\partial\otimes id_\mbH + id_\mbH \otimes \partial$ in the sense that
\begin{equation}
\frac{\rmd}{\rmd x} = \pi_\mbH \circ (\partial\otimes id_\mbH + id_\mbH \otimes \partial) \ .
\end{equation}
Indeed
\begin{eqnarray*}
\left[\pi_\mbH \circ (\partial\otimes id_\mbH + id_\mbH \otimes \partial)\right] (\psi_1 \otimes \psi_2)(x) &=& \int \left[ \partial_x\psi_1(x)\psi_2(y) + \psi_1(x) \partial_y \psi_2(y) \right] \delta(x-y) \rmd y \qquad \\
&=&  \psi_2(x)\partial_x\psi_1(x) + \psi_1(x) \partial_x \psi_2(x)  \\
&=& \frac{\rmd}{\rmd x} \left( \psi_1(x)\psi_2(x)\right) \ .
\end{eqnarray*}
More generally, it can be shown that
\begin{equation}
\frac{\rmd^k}{\rmd x^k} = \pi_\mbH \circ (\partial\otimes id_\mbH + id_\mbH \otimes \partial)^k \ .
\end{equation}

We can now prove the factorisation announced  at the beginning of this appendix by using the factorisation of polynomials shown in section~\ref{app:factorisation_polynomials}.
To do so, we first defined the operators:
%\begin{center}
%$\begin{array}{cccccccccc}
%& \bar{\mcD}^A_k  : & \mbH\otimes \mbH & \longrightarrow & \mbH\otimes \mbH & \quad \text{and} \quad & \bar{\mcD}^S_k : & \mbH\otimes \mbH & \longrightarrow & \mbH\otimes \mbH \\\
%& & \psi_1\otimes \psi_2 & \longmapsto & \partial^k\psi_1\otimes\psi_2 - \psi_1\otimes \partial^k\psi_2 & \quad \quad &  & \psi_1\otimes\psi_2 & \longmapsto & \partial^k\psi_1\otimes\psi_2 + \psi_1\otimes\partial^k\psi_2 \\
%\end{array}$
%\end{center}
\begin{center}
$\begin{array}{cccccccccc}
& \bar{\mcD}^A_k  : & \mbH\otimes \mbH & \longrightarrow & \mbH\otimes \mbH  \\\
& & \psi_1\otimes \psi_2 & \longmapsto & \partial^k\psi_1\otimes\psi_2 - \psi_1\otimes \partial^k\psi_2  \\
\end{array}$
\end{center}
and
\begin{center}
$\begin{array}{cccccccccc}
& \bar{\mcD}^S_k : & \mbH\otimes \mbH & \longrightarrow & \mbH\otimes \mbH \\\
&  & \psi_1\otimes\psi_2 & \longmapsto & \partial^k\psi_1\otimes\psi_2 + \psi_1\otimes\partial^k\psi_2 \\
\end{array}$
\end{center}
or in other words
\begin{equation}
\bar{\mcD}^A_k = \partial^k\otimes id_\mbH - id_\mbH\otimes \partial^k \quad \text{and} \quad \bar{\mcD}^S_k = \partial^k\otimes id_\mbH + id_\mbH\otimes \partial^k \ .
\end{equation}
Let us denote by $\mbA$ the subalgebra of $\text{End}(\mbH\otimes\mbH)$ generated by the families $(\mcD^A_k,\mcD_k^S)_{k\in\mbN}$.
The algebra $(\mbA, +, \cdot , \circ)$,  where $+$, $\cdot$ and $\circ$ are respectively the sum of operators, the product by a real number, and the composition of operators, is a commutative algebra made out of elements of the form
\begin{equation}
\sum_{k,n} \lambda_{k,n} \partial^k\otimes\partial^n \ ,
\end{equation}
where the sum is over a finite set of natural integers and $\lambda_{k,n}\in\mbR$.
We then define the map 
\begin{equation}
\Lambda : \mbA \to \mbR[X,Y]
\end{equation}
such that for any pair of natural numbers $(k,n)$
\begin{equation}
\Lambda((\partial^k\otimes id_\mbH)\circ(id_\mbH\otimes\partial^n))=\Lambda(\partial^k\otimes\partial^n) \equiv X^k Y^n
\end{equation}
that we extend by linearity to the full algebra $\mbA$.
Note that the first equality in the definition of $\Lambda$ has been explicitly written down to insist on the fact that it is the composition in $\mbA$ that corresponds to the product in $\mbR[X,Y]$ and not the tensor product. 

The important point here is that $\Lambda$ is an algebra isomorphism since it is a morphism that sends a basis of $\mbA$ (the family ($\partial^k\otimes\partial^n)_{k,n\in\mbN}$) to the canonical basis of $\mbR[X,Y]$. 
Thus applying the map $\pi_\mbH\circ\Lambda^{-1}$ to the factorisation formula of lemma~\ref{lemma3} allows us to complete the proof.

Indeed, by first applying $\Lambda^{-1}$ to eqs.~\eqref{lemma3_1} and~\eqref{lemma3_2} of lemma~\ref{lemma3}, we get that for all natural numbers $j$: 
\begin{equation}
\partial^{2j}\otimes id_\mbH - id_\mbH\otimes\partial^{2j}=\sum_{i=0}^{j-1} b^{2j}_{2i+1}\left[ (\partial\otimes id_\mbH) + (id_\mbH\otimes\partial) \right]^{2j-2i-1} \circ \left[ \partial^{2i+1}\otimes id_\mbH - id_\mbH\otimes\partial^{2i+1}\right]
\label{eq:factorisation-operateur-avant-projection-antisym}
\end{equation}
and 
\begin{equation}
\partial^{2j+1}\otimes id_\mbH + id_\mbH\otimes\partial^{2j+1}=\sum_{i=0}^jc^{2j+1}_{2i}\left[ (\partial\otimes id_\mbH) + (id_\mbH\otimes\partial) \right]^{2j+1-2i} \circ \left[ \partial^{2i}\otimes id_\mbH + id_\mbH\otimes\partial^{2i}\right] \ .
\label{eq:factorisation-operateur-avant-projection-sym}
\end{equation}
Besides, for any $k$ and $n$:
\begin{eqnarray*}
& &\left[ \pi_\mbH \circ \left[ (\partial\otimes id_\mbH) + (id_\mbH\otimes\partial) \right]^k \circ \left[ \partial^n\otimes id_\mbH \pm id_\mbH\otimes\partial^n\right]\right] (\psi_1,\psi_2)(x) \\
&=& \left[ \pi_\mbH \circ \left[ (\partial\otimes id_\mbH) + (id_\mbH\otimes\partial) \right]^k  \right] \left(\partial^n\psi_1\otimes \psi_2 \pm \psi_1\otimes\partial^n\psi_2\right)(x) \\
&=& \frac{\rmd^k}{\rmd x^k} \left( \psi_2(x) \partial^n\psi_1(x) \pm \psi_1(x) \partial^n\psi_2(x)\right) \\
&=& \frac{\rmd^k}{\rmd x^k} \mcD_n^{S,A}(\psi_1,\psi_2)(x) \ .
\end{eqnarray*}
Thus, applying $\pi_\mbH$ to formula~\eqref{eq:factorisation-operateur-avant-projection-antisym} and~\eqref{eq:factorisation-operateur-avant-projection-sym} finally yields:
\begin{equation}
\mcD^A_{2j}=\sum_{i=0}^{j-1} b^{2j}_{2i+1}\frac{\rmd^{(2j-2i-1)}}{\rmd x^{(2j-2i-1)}}\mcD^A_{2i+1}
\label{lala_decomp_antisym}
\end{equation}
and
\begin{equation}
\mcD^S_{2j+1}=\sum_{i=0}^jc^{2j+1}_{2i}\frac{\rmd^{(2j+1-2i)}}{\rmd x^{(2j+1-2i)}}\mcD^S_{2i}
\label{lala_decomp_sym}
\end{equation}
where the coefficients $b_k^n$ and $c_k^n$ are those of lemma~\ref{lemma3}.

\newpage
\section{A basis of 2-forms in $d_1=1$ spatial dimension}
\label{app:proof_Basis_form}

The objective of this appendix is to show that the family~\eqref{1dBasis2Forms} is a basis of the space $\Omega^2_{\rm loc}(\mbF)$ in $d_1=1$ spatial dimension. To lighten notations, we will rather denote a fluctuation $\delta\bphi$ by $\bpsi$.

\subsection{A generative family of local two-forms}
\label{app:generative_family}

In this subsection, we show that the family~\eqref{1dBasis2Forms} generates the space $\Omega^2_{\rm loc}(\mbF)$.
To this end, let us consider a local two-form $\bgamma\in\Omega^2_{\rm loc}(\mbF)$. It generically reads
\begin{equation}
\bgamma=\sum_{k=0}^q \gamma^{(k)}_{ijx}\delta^{ix}\wedge\partial^k\delta^{jx} \ ,
\label{app:generic_2-form}
\end{equation}
for some natural number $q$, where we implicitly sum over $i,j\in\{1,\dots,d_2\}$ and integrate over $x\in\mbR$. We introduce the symmetric and antisymmetric parts of the matrices $\bgamma^{(k)}(x,[\bphi])$, which are given by ${}^{S/A}\gamma^{(k)}_{ij}\equiv (\gamma_{ij}^{(k)}\pm\gamma^{(k)}_{ji})/2$. Distinguishing symmetric from antisymmetric parts of each $\bgamma^{(k)}$ and separating odd and even degree derivatives in~\eqref{app:generic_2-form} gives:
\begin{eqnarray}
\bgamma = \sum_{\ell=0}^{\left \lfloor q/2 \right \rfloor} \left[{}^S\gamma^{(2\ell)}_{ijx}+{}^A\gamma^{(2\ell)}_{ijx} \right]\delta^{ix}\wedge\partial^{2\ell}\delta^{jx} + \sum_{\ell=0}^{\left \lfloor (q-1)/2 \right \rfloor} \left[{}^S\gamma^{(2\ell+1)}_{ijx}+{}^A\gamma^{(2\ell+1)}_{ijx} \right]\delta^{ix}\wedge\partial^{2\ell+1}\delta^{jx} \ .
\end{eqnarray}
For each implicit sum over $i,j=\{1,\dots,d_2\}$ above, we add a copy of it where the dummy indices are exchanged $i\leftrightarrow j$, and divide everything by two so that the overall quantity is unchanged.
Using the (anti)symmetry of ${}^{A/S}\bgamma^{(k)}$, we get
%\begin{eqnarray}
%\bgamma &=& \frac{1}{2}\sum_{\ell=0}^{\left \lfloor q/2 \right \rfloor} \left\{{}^S\gamma^{(2\ell)}_{ijx}\left[\delta^{ix}\wedge\partial^{2\ell}\delta^{jx}+\delta^{jx}\wedge\partial^{2\ell}\delta^{ix} \right]
%+{}^A\gamma^{(2\ell)}_{ijx} \left[\delta^{ix}\wedge\partial^{2\ell}\delta^{jx}-\delta^{jx}\wedge\partial^{2\ell}\delta^{ix}\right] \right\} \label{app:decompo_basis_01} \\
%&+&\frac{1}{2}\sum_{\ell=0}^{\left \lfloor (q-1)/2 \right \rfloor} \left\{{}^S\gamma^{(2\ell+1)}_{ijx}\left[\delta^{ix}\wedge\partial^{2\ell+1}\delta^{jx}+\delta^{jx}\wedge\partial^{2\ell+1}\delta^{ix} \right]
%+{}^A\gamma^{(2\ell+1)}_{ijx} \left[\delta^{ix}\wedge\partial^{2\ell+1}\delta^{jx}-\delta^{jx}\wedge\partial^{2\ell+1}\delta^{ix}\right] \right\} \notag \ .
%\end{eqnarray}
\begin{eqnarray}
\bgamma &=&\frac{1}{2}\sum_{\ell=0}^{\left \lfloor q/2 \right \rfloor} \left\{{}^S\gamma^{(2\ell)}_{ijx}\left[\delta^{ix}\wedge\partial^{2\ell}\delta^{jx}+\delta^{jx}\wedge\partial^{2\ell}\delta^{ix} \right]
+{}^A\gamma^{(2\ell)}_{ijx} \left[\delta^{ix}\wedge\partial^{2\ell}\delta^{jx}-\delta^{jx}\wedge\partial^{2\ell}\delta^{ix}\right] \right\} \notag\\
&+& \frac{1}{2}\sum_{\ell=0}^{\left \lfloor (q-1)/2 \right \rfloor} \Big\{{}^S\gamma^{(2\ell+1)}_{ijx}\left[\delta^{ix}\wedge\partial^{2\ell+1}\delta^{jx}+\delta^{jx}\wedge\partial^{2\ell+1}\delta^{ix} \right] \label{app:decompo_basis_01}  \\
 &+& {}^A\gamma^{(2\ell+1)}_{ijx} \left[\delta^{ix}\wedge\partial^{2\ell+1}\delta^{jx}-\delta^{jx}\wedge\partial^{2\ell+1}\delta^{ix}\right] \Big\} \notag \ .
\end{eqnarray}
We see that the terms proportional to ${}^A\gamma^{(2\ell)}_{ijx}$ and ${}^S\gamma^{(2\ell+1)}_{ijx}$ are already decomposed in the basis~\eqref{1dBasis2Forms}.
In what follows, we show that the other terms can be decomposed as well in that basis.
Let us start by relating the elementary two-forms in~\eqref{app:decompo_basis_01} that do not belong to the basis~\eqref{1dBasis2Forms} to the differential operator $\mcD^{S/A}_{k}$ defined in appendix~\ref{app:factorisation_operators}.
For any pair $\bphi_1,\bpsi_2$ of perturbations and any natural number $k$, we have
\begin{eqnarray*}
\left[\delta^i\wedge\partial^k\delta^j \pm \delta^j\wedge\partial^k\delta^i  \right](\bpsi_1,\bpsi_2) &=& (-1)^k \left\{ [\psi_1^i\partial^k\psi_2^j-\psi_2^i\partial^k\psi_1^j] \pm [\psi_1^j\partial^k\psi_2^i-\psi_2^j\partial^k\psi_1^i] \right\} \\
&=& (-1)^k \left\{ [\psi_1^i\partial^k\psi_2^j\mp\psi_2^j\partial^k\psi_1^i] \pm [\psi_1^j\partial^k\psi_2^i \mp \psi_2^i\partial^k\psi_1^j] \right\} \ ,
\end{eqnarray*}
where we simply rearranged the terms.
Consequently, using the definitions of $\mcD^{S/A}_{k}$ from appendix~\ref{app:factorisation_operators}, we have the relation
\begin{equation}
\left[\delta^i\wedge\partial^k\delta^j \pm \delta^j\wedge\partial^k\delta^i  \right](\bpsi_1,\bpsi_2) = (-1)^k \left\{ \mcD^{A/S}_k(\psi_2^j,\psi_1^i) \pm  \mcD^{A/S}_k(\psi_2^i,\psi_1^j)\right\} \ .
\label{app:relation_2forms_interm_op}
\end{equation}
Using this relation together with decomposition~\eqref{lala_decomp_antisym} on the one hand, we have:
\begin{eqnarray*}
\left[\delta^i\wedge\partial^{2\ell}\delta^j + \delta^j\wedge\partial^{2\ell}\delta^i  \right](\bpsi_1,\bpsi_2) &=& \mcD^{A}_{2\ell}(\psi_2^j,\psi_1^i) +  \mcD^{A}_{2\ell}(\psi_2^i,\psi_1^j) \\
&=& \sum_{k=0}^{\ell-1}b_{2k+1}^{2\ell}\partial^{2\ell-2k-1}_x\left[\mcD^{A}_{2k+1}(\psi_2^j,\psi_1^i) +  \mcD^{A}_{2k+1}(\psi_2^i,\psi_1^j) \right] \\
&=& - \sum_{k=0}^{\ell-1}b_{2k+1}^{2\ell}\partial^{2\ell-2k-1}_x \left\{ \left[ \delta^i\wedge\partial^{2k+1}\delta^j + \delta^j\wedge\partial^{2k+1}\delta^i \right] (\bpsi_1,\bpsi_2)\right\} \ ,
\end{eqnarray*}
\ie 
\begin{equation}
\delta^i\wedge\partial^{2\ell}\delta^j + \delta^j\wedge\partial^{2\ell}\delta^i  = - \sum_{k=0}^{\ell-1}b_{2k+1}^{2\ell}\partial^{2\ell-2k-1}_x\left[ \delta^i\wedge\partial^{2k+1}\delta^j + \delta^j\wedge\partial^{2k+1}\delta^i \right]  \ .
\label{decompo_elementary_2forms_plus}
\end{equation}
On the other hand, using~\eqref{app:relation_2forms_interm_op} and~\eqref{lala_decomp_sym} gives:
\begin{eqnarray*}
\left[\delta^i\wedge\partial^{2\ell+1}\delta^j - \delta^j\wedge\partial^{2\ell+1}\delta^i  \right](\bpsi_1,\bpsi_2) &=& -\mcD^{S}_{2\ell+1}(\psi_2^j,\psi_1^i) +  \mcD^{S}_{2\ell+1}(\psi_2^i,\psi_1^j) \\
&=& \sum_{k=0}^{\ell}c_{2k}^{2\ell+1}\partial^{2\ell-2k+1}_x\left[-\mcD^{S}_{2k}(\psi_2^j,\psi_1^i) +  \mcD^{S}_{2k}(\psi_2^i,\psi_1^j) \right] \\
&=& - \sum_{k=0}^{\ell}c_{2k}^{2\ell+1}\partial^{2\ell-2k+1}_x \left\{\left[ \delta^i\wedge\partial^{2k}\delta^j - \delta^j\wedge\partial^{2k}\delta^i \right] (\bpsi_1,\bpsi_2) \right\} \ ,
\end{eqnarray*}
\ie 
\begin{equation}
\delta^i\wedge\partial^{2\ell+1}\delta^j - \delta^j\wedge\partial^{2\ell+1}\delta^i  = - \sum_{k=0}^{\ell}c_{2k}^{2\ell+1}\partial^{2\ell-2k+1}_x \left[ \delta^i\wedge\partial^{2k}\delta^j - \delta^j\wedge\partial^{2k}\delta^i \right] \ .
\label{decompo_elementary_2forms_minus}
\end{equation}
Eqs.~\eqref{decompo_elementary_2forms_plus} and~\eqref{decompo_elementary_2forms_minus} are the generalization of the ($d_1=1$ dimensional version of) eq.~\eqref{relation_a_generaliser} to higher dimensions $d_2$ and orders in derivation.

We can now decompose on the basis~\eqref{1dBasis2Forms} the sum over the $ {}^S\gamma^{(2\ell)}_{ijx}$ in eq.~\eqref{app:decompo_basis_01} and rearrange the terms as follows:
\begin{eqnarray*}
& & \sum_{\ell=0}^{\left \lfloor q/2 \right \rfloor} {}^S\gamma^{(2\ell)}_{ijx}\left[\delta^{ix}\wedge\partial^{2\ell}\delta^{jx}+\delta^{jx}\wedge\partial^{2\ell}\delta^{ix} \right] \\
&=& \sum_{\ell=0}^{\left \lfloor q/2 \right \rfloor} \sum_{k=0}^{\ell-1}b_{2k+1}^{2\ell}\left[\partial^{2\ell-2k-1}_x \left({}^S\gamma^{(2\ell)}_{ijx}\right) \right]\left[ \delta^i\wedge\partial^{2k+1}\delta^j + \delta^j\wedge\partial^{2k+1}\delta^i \right] \\
&=& \sum_{k=0}^{\left \lfloor q/2 \right \rfloor -1} \sum_{\ell=k+1}^{\left \lfloor q/2 \right \rfloor}b_{2k+1}^{2\ell}\left[\partial^{2\ell-2k-1}_x \left({}^S\gamma^{(2\ell)}_{ijx}\right) \right]\left[ \delta^i\wedge\partial^{2k+1}\delta^j + \delta^j\wedge\partial^{2k+1}\delta^i \right] \\
&=& \sum_{\ell=0}^{\left \lfloor (q-1)/2 \right \rfloor } \sum_{k=\ell+1}^{\left \lfloor q/2 \right \rfloor}b_{2\ell+1}^{2k}\left[\partial^{2k-2\ell-1}_x \left({}^S\gamma^{(2k)}_{ijx} \right) \right]\left[ \delta^i\wedge\partial^{2\ell+1}\delta^j + \delta^j\wedge\partial^{2\ell+1}\delta^i \right] \ .
\end{eqnarray*}
To get the first equality above, we used eq.~\eqref{decompo_elementary_2forms_plus} and integrated by parts several times. The second equality is then obtained by permuting the two sums, while the third one comes from exchanging the dummy indices $k\leftrightarrow\ell$ and replacing the upper bound of the first sum by $\left \lfloor (q-1)/2 \right \rfloor\geq \left \lfloor q/2 \right \rfloor -1$, which is harmless since, when this inequality is strict, the second sum is empty.
Using eq.~\eqref{decompo_elementary_2forms_minus}, we can perform the same calculation with the sum over the $ {}^A\gamma^{(2\ell+1)}_{ijx}$ in eq.~\eqref{app:decompo_basis_01}:
\begin{eqnarray*}
& & \sum_{\ell=0}^{\left \lfloor (q-1)/2 \right \rfloor} {}^A\gamma^{(2\ell+1)}_{ijx} \left[\delta^{ix}\wedge\partial^{2\ell+1}\delta^{jx}-\delta^{jx}\wedge\partial^{2\ell+1}\delta^{ix}\right] \\
&=& \sum_{\ell=0}^{\left \lfloor (q-1)/2 \right \rfloor} \sum_{k=0}^{\ell}c_{2k}^{2\ell+1} \left[\partial^{2\ell-2k+1}_x \left( {}^A\gamma^{(2\ell+1)}_{ijx}\right)\right] \left[ \delta^i\wedge\partial^{2k}\delta^j - \delta^j\wedge\partial^{2k}\delta^i \right] \\
&=& \sum_{k=0}^{\left \lfloor (q-1)/2 \right \rfloor} \sum_{k=\ell}^{\left \lfloor (q-1)/2 \right \rfloor}c_{2k}^{2\ell+1} \left[\partial^{2\ell-2k+1}_x \left( {}^A\gamma^{(2\ell+1)}_{ijx}\right)\right] \left[ \delta^i\wedge\partial^{2k}\delta^j - \delta^j\wedge\partial^{2k}\delta^i \right] \\
&=& \sum_{\ell=0}^{\left \lfloor q/2 \right \rfloor} \sum_{\ell=k}^{\left \lfloor (q-1)/2 \right \rfloor}c_{2\ell}^{2k+1} \left[\partial^{2k-2\ell+1}_x \left( {}^A\gamma^{(2k+1)}_{ijx}\right)\right] \left[ \delta^i\wedge\partial^{2\ell}\delta^j - \delta^j\wedge\partial^{2\ell}\delta^i \right]
\end{eqnarray*}
Injecting these expressions back into eq.~\eqref{app:decompo_basis_01} and rearranging the (implicit) sums over $i,j=1,\dots,d_2$ using the symmetry of the corresponding terms, we get formulas~\eqref{eq:generalLoc2Form01}-\eqref{eq:expressionBeta_two-form}, from which we deduce that the family~\eqref{1dBasis2Forms} indeed generates the space $\Omega^2_{\rm loc}(\mbF)$.

\subsection{A free family of local 2-forms}
\label{subsec:free_family}

The objective of this subsection is to show that the family~\eqref{1dBasis2Forms} of functional 2--forms 
%\begin{equation}
%\left([\delta^{ix}\wedge\partial^{2k}\delta^{jx} - \delta^{jx}\wedge\partial^{2k}\delta^{ix}]_{1\leq i<j\leq d_2 , k\in\mbN, x\in\mbR} , [\delta^{ix}\wedge\partial^{2k+1}\delta^{jx} + \delta^{jx}\wedge\partial^{2k+1}\delta^{ix}]_{1\leq i\leq j\leq d_2 , k\in\mbN, x\in\mbR} \right)
%\end{equation}
is free in the following sense:
if there exist smooth functions $\alpha_{ij}^{(k)}(x)$ and $\beta_{ij}^{(k)}(x)$ on $\mbR$ such that
\begin{eqnarray}
& &\sum_{1\leq i < j\leq d_2} \sum_k \int \rmd x \; \alpha_{ij}^{(k)}(x)  \left[\delta^{ix}\wedge\partial^{2k}\delta^{jx} - \delta^{jx}\wedge\partial^{2k}\delta^{ix} \right] \notag \\ 
& & + \sum_{1\leq i \leq j \leq d_2} \sum_k \int \rmd x \;  \beta_{ij}^{(k)}(x)\left[\delta^{ix}\wedge\partial^{2k+1}\delta^{jx} + \delta^{jx}\wedge\partial^{2k+1}\delta^{ix} \right] = 0 \ ,
\label{cancel_free_two_forms_01}
\end{eqnarray}
where the sums over $k$ are finite, then the functions $\alpha_{ij}^{(k)}$ and $\beta_{ij}^{(k)}$ are all identically zero.
Note that our notions of ``generative'' and ``free'' for a family of vectors (here elements of $\Omega^2(\mbF)$) slightly vary from the usual ones in which linear combinations are always finite, whereas here the sums over discrete indices are finite, but we also integrate (\ie ``sum over infinitely many terms'') over the continuous variable $x$.
\\

Let us start by denoting by $\bomega$ the 2--form on the left--hand side of eq.~\eqref{cancel_free_two_forms_01}.
The functions $\alpha_{ij}^{(k)}$ (respectively $\beta{ij}^{(k)}$) are only defined for $1\leq i < j \leq d_2$ ( $1\leq i \leq j \leq d_2$ respectively). For $1\leq j \leq i \leq d_2$, we define $\alpha_{ij}^{(k)}\equiv-\alpha_{ji}^{(k)}$ and $\beta_{ij}^{(k)}\equiv \beta_{ji}^{(k)}$.
Equation~\eqref{cancel_free_two_forms_01} then means that for any pair of functions $(\bpsi_1,\bpsi_2)$ from $\mbR$ to $\mbR^{d_2}$:\begin{eqnarray}
\bomega(\bpsi_1,\bpsi_2) = 0 \ ,
\label{cancel_free_two_forms_02}
\end{eqnarray}
with
\begin{eqnarray}
\bomega(\bpsi_1,\bpsi_2) &=& \sum_{i,j=1}^{d_2} \sum_k \int \rmd x \; \alpha_{ij}^{(k)}(x)  \left[ \psi_1^{ix}\partial^{2k}\psi_2^{jx} - \psi_2^{ix}\partial^{2k}\psi_1^{jx} \right]   \\ 
&-&\sum_{i,j=1}^{d_2} \sum_k \int \rmd x \;  \beta_{ij}^{(k)}(x) \left[ \psi_1^{ix}\partial^{2k+1}\psi_2^{jx} - \psi_2^{ix}\partial^{2k+1}\psi_1^{jx} \right]  \ .\notag
\end{eqnarray}
Let us denote by $\hat{\bpsi}_\ell$ (respectively by $\halpha_{ij}^{(k)}$ and $\hbeta_{ij}^{(k)}$) the Fourier transform of $\bpsi_\ell$ (respectively of $\alpha_{ij}^{(k)}$ and $\beta_{ij}^{(k)}$) with the following convention:
\begin{equation}
\psi^i_\ell(x) = \int \hat{\psi}_\ell^i(p)e^{2i\pi px} \rmd p \ ,
\end{equation}
and similarly for the $\alpha_{ij}^{(k)}$ and $\beta_{ij}^{(k)}$.
We can then rewrite $\bomega(\bpsi_1,\bpsi_2)$ as follows
\begin{eqnarray}
\bomega(\bpsi_1,\bpsi_2) &=& \sum_{i,j=1}^{d_2} \sum_k \int \rmd x \rmd p \rmd q  \; \alpha_{ij}^{(k)}(x) e^{2i\pi(p+q)x}  \left[ \hpsi_1^{ip}\hpsi_2^{jq} (2i\pi q)^{2k} - \hpsi_2^{iq}\hpsi_1^{jp}(2i\pi p)^{2k} \right]  \\ 
&-&\sum_{i,j=1}^{d_2} \sum_k \int \rmd x \rmd p \rmd q   \;  \beta_{ij}^{(k)}(x)  e^{2i\pi(p+q)x} \left[ \hpsi_1^{ip}\hpsi_2^{jq}(2i\pi q)^{2k+1} - \hpsi_2^{iq}\hpsi_1^{jp}(2i\pi p)^{2k+1} \right]  \ . \notag
\end{eqnarray}
We now choose $\bpsi_1,\bpsi_2$ such that $\hpsi_1^i(p)=\delta^{i,m}\delta(p-p_0)$ and $\hpsi_2^i(q)=\delta^{i,n}\delta(q-q_0)$. We thus have
\begin{eqnarray}
\bomega(\bpsi_1,\bpsi_2) &=&  \sum_k (2i\pi)^{2k}   \;   \left[  p_0^{2k} + q_0^{2k} \right] \halpha_{ij}^{(k)}(-p_0-q_0)   \notag\\ 
&+& \sum_k (2i\pi)^{2k+1} \;   \left[p_0^{2k+1} - q_0^{2k+1} \right] \hbeta_{ij}^{(k)}(-p_0-q_0)    \ .
\label{cancel_free_two_forms_03}
\end{eqnarray}
Upon defining
\begin{equation}
r \equiv p_0+q_0 \ ,  \quad s\equiv p_0-q_0 \ ,
\end{equation}
together with the polynomials
\begin{equation}
P_{k,r}(s)\equiv  (r+s)^{2k} + (r-s)^{2k} \quad \text{and} \quad Q_{k,r}(s)\equiv  (r+s)^{2k+1} - (r-s)^{2k+1} \ ,
\end{equation}
eq.~\eqref{cancel_free_two_forms_03} reads
\begin{eqnarray}
\bomega(\bpsi_1,\bpsi_2) &=&  \sum_k (i\pi)^{2k}   \;   \halpha_{ij}^{(k)}(-r) P_{k,r}(s) + \sum_k (i\pi)^{2k+1} \;   \hbeta_{ij}^{(k)}(-r) Q_{k,r}(s)   \ .
\label{freedom_polynomials}
\end{eqnarray}
The assumption~\eqref{cancel_free_two_forms_02} implies that the right--hand side of eq.~\eqref{freedom_polynomials} vanishes for all $r,s\in \mbR$. But the family of polynomials (in the $s$ variable) $(P_{k,r}(s),Q_{k,r}(s))_{k\in\mbN}$ is clearly free (since it is staggered in degree). Thus, we have 
\begin{equation}
\forall r\in \mbR \ , \ \ \halpha_{ij}^{(k)}(-r)= \hbeta{ij}^{(k)}(-r) = 0 \ ,
\end{equation}
\ie the Fourier transforms of the $\alpha_{ij}^{(k)}$'s and $\beta{ij}^{(k)}$'s are identically zero, hence these functions are themselves identically vanishing.

Note that this result can be straightforwardly generalized to the family~\eqref{1dBasis2Forms_higherd1} in dimension $d_1>1$. Indeed, repeating the same steps as above, we get to an equation similar to~\eqref{freedom_polynomials} where the polynomials $P_{k,r}(s)$ and $Q_{k,r}(s)$ are respectively replaced by polynomials $P_{k,r_n}(s_n)$ and $Q_{k,r_n}(s_n)$, with a sum over $n=1,\dots,d_1$. We can eventually conclude by the same argument since the variables $s_n$, $n=1,\dots,d_1$ are independent.

\section{Decomposition of antisymmetric operators}
\label{appendix:another_decomp_of_operators}

In this appendix we prove formulas~\eqref{type_2_decompos} \&~\eqref{type_3_decompos} of section~\ref{subsec:skewsym_operators} in the main text.
To this end, let us associate to any $x$-dependent square matrix $\bM(x)=(M_{ij})_{i,j=1,\dots,d_2}$ the differential operators $\mcP_k^{A}[\bM]$ and $\mcP_k^S[\bM]$, for any $k\in\mbN$, that each map an $\mbR^{d_2}$-valued function $\bphi(x)$ over $\mbR$ to another one, respectively given by
\begin{eqnarray}
[\mcP_k^A[\bM] \bphi]_{i} \equiv \sum_{j=1}^{d_2} \left(M_{ij}\partial^k_x \phi_j - (-1)^k \partial_x^k M_{ij}\phi_j \right) \ ,
\end{eqnarray}
and
\begin{eqnarray}
[\mcP_k^S[\bM] \bphi]_{i} \equiv \sum_{j=1}^{d_2} \left(M_{ij}\partial^k_x \phi_j + (-1)^k \partial_x^k M_{ij}\phi_j \right) \ .
\end{eqnarray}
Taking the $L^2$-scalar product of $\mcP_k^{S/A}[\bM]\bphi$ with another $\mbR^{d_2}$-valued function $\bpsi(x)$ and performing $k$ integrations by parts gives the relations
\begin{equation}
\int \bpsi \cdot \mcP^{S/A}_k[\bM]\bphi \equiv \sum_i\int \psi_i(x) [\mcP_k^{S/A}[\bM] \bphi]_{i}(x)  \rmd x  = \sum_{i,j}\int M_{ij}(x) \mcD_k^{S/A}(\phi_j,\psi_i)(x) \rmd x \ ,
\label{une_de_plus}
\end{equation}
where $\mcD^S_k$ is defined in appendix~\ref{app:factorisation_operators}.
Hence, using relation~\eqref{une_de_plus} together with decompositions~\eqref{lala_decomp_antisym} \&~\eqref{lala_decomp_sym} and performing integrations by parts, we get on the one hand
\begin{eqnarray*}
\int \bpsi \cdot\mcP_{2k}^A[\bM] \bphi &=& \sum_{i,j}\int M_{ij}\sum_{\ell=0}^{k-1} b_{2\ell+1}^{2k} \partial_x^{2k-2\ell-1} \mcD^A_{2\ell+1}(\phi_j,\psi_i) \\
&=&\sum_{i,j} \sum_{\ell=0}^{k-1}  b_{2\ell+1}^{2k}(-1)^{2k-2\ell-1} \int \left(\partial_x^{2k-2\ell-1}M_{ij}\right) \mcD^A_{2\ell+1}(\phi_j,\psi_i) \\
&=& -\int \bpsi \cdot \left\{ \sum_{\ell=0}^{k-1} b_{2\ell+1}^{2k}  \mcP_{2\ell+1}^A[\partial_x^{2k-2\ell-1}\bM] \right\} \ \bphi
\end{eqnarray*}
and on the other hand
\begin{eqnarray*}
\int \bpsi \cdot\mcP_{2k+1}^S[\bM] \bphi &=& \sum_{i,j}\int M_{ij}\sum_{\ell=0}^k c_{2\ell}^{2k+1} \partial_x^{2k+1-2\ell} \mcD^S_{2\ell}(\phi_j,\psi_i) \\
&=&\sum_{i,j} \sum_{\ell=0}^{k} c_{2\ell}^{2k+1}(-1)^{2k+1-2\ell} \int \left(\partial_x^{2k+1-2\ell}M_{ij}\right) \mcD^S_{2\ell}(\phi_j,\psi_i) \\
&=& -\int \bpsi \cdot \left\{ \sum_{\ell=0}^{k} c_{2\ell}^{2k+1}  \mcP_{2\ell}^S[\partial_x^{2k+1-2\ell}\bM] \right\} \bphi \ .
\end{eqnarray*}
Since these relations are valid for all functions $\bphi,\bpsi$, we conclude that
\begin{equation}
 \mcP_{2k}^A[\bM]= -\sum_{\ell=0}^{k-1} b_{2\ell+1}^{2k}  \mcP_{2\ell+1}^A[\partial_x^{2k-2\ell-1}\bM]
\end{equation}
and 
\begin{equation}
\mcP_{2k+1}^S[\bM] = -\sum_{\ell=0}^{k} c_{2\ell}^{2k+1}  \mcP_{2\ell}^S[\partial_x^{2k+1-2\ell}\bM] \ ,
\end{equation}
from which we deduce, in turn, formulas~\eqref{type_2_decompos} \&~\eqref{type_3_decompos} that we wanted to prove.

\newpage
\section{Phenomenology of the basis of vorticities for $d_1=1$}
 In this appendix we successively study the phenomenology associated with elements of the antisymmetric and symmetric subfamilies by using the vorticity dynamics~\eqref{Vortex_dynamics}, restricting to the case where the diffusion operator $\bD$ is the identity.

\subsection{The antisymmetric family}
\label{app:alphaFamily}

We first define
\begin{equation}
U \equiv \begin{bmatrix} \rho_1 \cos(\theta_1) \\  \rho_2 \cos(\theta_2) \end{bmatrix} \quad \text{and} \quad V \equiv \begin{bmatrix} -\rho_2 \sin(\theta_2) \\  \rho_1 \sin(\theta_1) \end{bmatrix} \ .
\end{equation}
In what follows, for any vector $Z=(Z_1,Z_2)^\top \in\mbR^2$, we denote by $Z^\perp$ the vector $Z$ rotated by $\pi/2$ clockwise, \ie $Z^\perp=(-Z_2,Z_1)^\top$. Then, using trigonometric identities, we decompose $\delta\bphi_k(t=0)$ as a superposition of circularly polarised harmonics:
\begin{eqnarray}
\delta\bphi_k(0) &\equiv &  \begin{bmatrix} \rho_1 \cos(kx+\theta_1) \\  \rho_2 \cos(kx+\theta_2) \end{bmatrix} \\
&=& \begin{bmatrix} \rho_1(\cos kx \cos \theta_1 - \sin kx\sin \theta_1) \\ \rho_2(\cos kx \cos \theta_2 - \sin kx\sin \theta_2)  \end{bmatrix} \\
&=& U\cos kx + V^\perp \sin kx \\
%&=& \frac{1}{2}(U\cos kx + U^\perp \sin kx) + \frac{1}{2} (U\cos kx - U^\perp \sin kx) \\
%& & + \frac{1}{2}(V\cos kx + V^\perp \sin kx) - \frac{1}{2} (V\cos kx - V^\perp \sin kx) \\
&=& \frac{U+V}{2}\cos kx + \left(\frac{U+V}{2}\right)^\perp\sin kx \\
& & + \frac{U-V}{2}\cos kx - \left(\frac{U-V}{2}\right)^\perp\sin kx \ .
\end{eqnarray}
But for $\bW_k$ given by eq.~\eqref{Omega_alpha_family}, for any vector $Z\in\mbR^2$, we have:
\begin{eqnarray}
e^{t\bW_k}(Z\cos kx \pm Z^\perp \sin kx) &=&  (\cos W_k t \cos kx \mp \sin W_k t \sin kx) Z \\
& & + (\sin W_kt \cos kx \pm \cos W_kt \sin kx) Z^\perp \\ 
&=& \cos(kx \pm W_kt) W \pm \sin(kx\pm W_kt)Z^\perp \ ,
\end{eqnarray}
\ie the operator $t\to e^{t\bW_k}$ translates in opposite directions circular waves that are polarised in opposite senses.

Hence
\begin{eqnarray}
\delta\bphi_k(t) &=& e^{t\bW_k} \delta\bphi_k(0) \\
&=& \frac{U+V}{2}\cos (kx+W_kt) + \left(\frac{U+V}{2}\right)^\perp\sin (kx+W_kt) \\
& & + \frac{U-V}{2}\cos (kx-W_kt) - \left(\frac{U-V}{2}\right)^\perp\sin (kx-W_kt) \ .
\end{eqnarray}
Finally, upon defining $\rho_\pm$ and $\theta_\pm$ the norm and angle (with respect to $\be_1=(1,0)^\top$) of $(U\pm V)/2$ respectively, the solution finally reads in the canonical basis:
\begin{eqnarray}
\delta\bphi_k(t) = \rho_+  \begin{bmatrix} \cos(kx+W_k t+\theta_+) \\  \sin(kx+W_k t+\theta_+) \end{bmatrix} + \rho_-  \begin{bmatrix} \cos(kx-W_k t-\theta_-) \\  -\sin(kx-W_k t-\theta_-) \end{bmatrix} \ ,
\end{eqnarray}
where, in particular, $\rho_\pm$ are given by
\begin{eqnarray}
\rho_\pm = \frac{1}{2}\left[ \rho_1^2 +\rho_2^2 \pm 2\rho_1\rho_2\sin(\theta_1-\theta_2)\right]^{1/2} \ .
\end{eqnarray}

\subsection{The symmetric family}
\label{app:betaFamily}
We first set 
\begin{equation}
\tbphi_k(t)\equiv \bc(t) e^{ikx} \ ,
\end{equation}
where $\bc(t) = (c_1(t),c_2(t))^\top$ and $(\be_1,\be_2)$ denotes the canonical basis of $\mbR^2$. Upon denoting by $(\lambda_1,\lambda_2)$ the eigenvalues of $\bbeta^{(\ell)}$ and $(\bb_1,\bb_2)$ the associated orthonormal eigenvectors, the solution of 
dynamics $\partial_t\tbphi_k = \bW \tbphi_k$, with $\bW=\bbeta^{(\ell)}\partial^{2\ell+1}$, reads
\begin{equation}
\tbphi_k(x,t) = e^{i(kx+W_k^jt)}(\bc(0)\cdot\bb_j)\bb_j \ , \quad \text{with} \quad W_k^j \equiv (-1)^\ell k^{2\ell+1}\lambda_j \ . 
\end{equation}
Now if we denote by $R$ the transition matrix between basis $\bb$ and $\be$, \ie $\bb_i = R_{ji}\be_j$, and assume that $c_j(0)=\rho_je^{i\theta_j}$, then 
\begin{equation}
\tbphi_k(x,t) = \rho_m e^{i(kx+W_k^jt+\theta_m)}R_{mj}\bb_j \ .
\label{beta_complex_solution}
\end{equation}
Taking the real part finally leads to the solution
\begin{equation}
\delta\bphi_k(x,t) = \rho_m \cos(kx+W_k^jt+\theta_m)R_{mj}\bb_j \ ,
\end{equation}
with initial condition $\delta\bphi_k(x,0)= [ \rho_1\cos(kx+\theta_1) ,\rho_2 \cos (kx+\theta_2) ]^\top$.

We now assume that $\bbeta^{(\ell)}$ reads
\begin{equation}
\bbeta^{(\ell)} = \beta_{12}^{(\ell)}\begin{bmatrix} 0 & 1 \\ 1  & 0 \end{bmatrix} \ .
\end{equation}
Its eigenvalues are $\lambda_1=\beta_{12}^{(\ell)},\lambda_2=-\beta_{12}^{(\ell)}$ and respective eigenvectors are $\bb_1=(1,1)^\top/\sqrt{2}$ and $\bb_2=(1,-1)^\top/\sqrt{2}$.
In this case, solution~\eqref{beta_complex_solution} is thus
\begin{eqnarray}
\tbphi_k(x,t) &=& \frac{1}{2}\left\{ \rho_1 e^{i(kx+W_k^1t+\theta_1)} + \rho_2 e^{i(kx+W_k^1t+\theta_2)} \right\} \begin{bmatrix} 1 \\ 1\end{bmatrix} \\
& & + \frac{1}{2}\left\{ \rho_1 e^{i(kx+W_k^2t+\theta_1)} - \rho_2 e^{i(kx+W_k^2t+\theta_2)} \right\} \begin{bmatrix} 1 \\ -1\end{bmatrix} \ ,
\end{eqnarray}
\ie 
\begin{eqnarray}
\tbphi_k(x,t) =  \rho^+ e^{i(kx+W_k^1t+\theta^+)}  \begin{bmatrix} 1 \\ 1\end{bmatrix} +  \rho^- e^{i(kx+W_k^2t+\theta^-)} \begin{bmatrix} 1 \\ -1\end{bmatrix} \ ,
\end{eqnarray}
with moduli $\rho^\pm$ and phases $\theta^\pm$ such that
\begin{equation}
\rho^\pm e^{i\theta^\pm} \equiv \frac{\rho_1e^{i\theta_1}\pm\rho_2e^{i\theta_2}}{2} \ .
\end{equation}
Noting that $W_k^2=-W_k^1$ and taking the real part then gives the solution
\begin{equation}
\delta\bphi_k(x,t) =  \rho^+ \cos(kx+W_k^1t+\theta^+)  \begin{bmatrix} 1 \\ 1\end{bmatrix} +  \rho^- \cos(kx-W_k^1t+\theta^-) \begin{bmatrix} 1 \\ -1\end{bmatrix} \ ,
\end{equation}
with initial condition $\delta\bphi_k(x,0)= [ \rho_1\cos(kx+\theta_1) ,\rho_2 \cos (kx+\theta_2) ]^\top$, and
where, in particular, $\rho^\pm$ are given by
\begin{equation}
\rho^\pm=\frac{1}{2} \left[ \rho_1^2 + \rho_2^2 \pm 2\rho_1\rho_2 \cos(\theta_1-\theta_2) \right]^{1/2} \ .
\end{equation}

\newpage
\section{Non-reciprocal flocking}
\label{app:NR_flock}

\paragraph{Active polar mixture.}
In~\cite{fruchart2021non}, the authors study a mixture of two species $(A,B)$ of active Brownian particles undergoing (anti-)aligning interactions in two dimensions. Starting from a microscopic dynamics, they then perform an explicit coarse-graining to obtain a hydrodynamic description for the density and polarization fields of each species $\rho_{A,B}, \bP_{A,B}$. They show that the system can be in several phases: disordered, aligned, antialigned, chiral, swap, or a mix between the latter two that they call the chiral$+$swap phase.
In order to analyse the transitions between these phases, they focus on the case where all the order-parameter fields are homogeneous in space. Their hydrodynamics then reduces to two equations over the polarity fields:
\begin{eqnarray}
\partial_t\bP_A &=& \left[ j_{AA}\rho_A -\eta - \frac{1}{2\eta}\|j_{AA}\bP_A + j_{AB}\bP_B\|^2 \right] \bP_A + j_{AB}\rho_A\bP_B  \\
\partial_t\bP_B &=& \left[ j_{BB}\rho_B -\eta - \frac{1}{2\eta}\|j_{BB}\bP_B + j_{BA}\bP_A\|^2 \right] \bP_B + j_{BA}\rho_B\bP_A \ , 
\end{eqnarray}
where $\rho_A$ and $\rho_B$ are the respective homogeneous densities of each species, $\eta$ the squared amplitude of the microscopic angular noise, and $(j_{\alpha\beta})$ the matrix of the (renormalized) alignment constants.
Still neglecting the gradient terms in the hydrodynamics originally derived in~\cite{fruchart2021non}, we add random fields to the above equations to account for fluctuations in a simple way:
\begin{eqnarray}
\partial_t\bP_A &=& \left[ j_{AA}\rho_A -\eta - \frac{1}{2\eta}\|j_{AA}\bP_A + j_{AB}\bP_B\|^2 \right] \bP_A + j_{AB}\rho_A\bP_B + \bLambda_A \label{chasingSpinsMeanFieldDyn_A} \\
\partial_t\bP_B &=& \left[ j_{BB}\rho_B -\eta - \frac{1}{2\eta}\|j_{BB}\bP_B + j_{BA}\bP_A\|^2 \right] \bP_B + j_{BA}\rho_B\bP_A + \bLambda_B \ , 
\label{chasingSpinsMeanFieldDyn_B}
\end{eqnarray}
where $\bLambda_A(\br,t)$ and $\bLambda_B(\br,t)$ are independent Gaussian fields, each having the same statistics as $\boldeta$ in eq.~\eqref{EDPS02}. Note that another possibility to implement fluctuations, (which is perhaps a little more coherent with the fact that we neglected the gradient terms of the original hydrodynamics) would have been to consider uniform noise fields, in which case the problem would have been reduced to a finite-dimensional one (see our paper I~\cite{o2024geometric} for the corresponding tools). Interestingly, the resulting finite-dimensional vorticity $\bar{\bomega}$ is related to the one studied below for non-uniform random field, $\bomega$, as follows: 
\begin{equation}
\bomega(\delta\bP_A,\delta\bP_B)=\int \bar{\bomega}(\delta\bP_A(\br),\delta\bP_B(\br))\rmd \br \ . 
\end{equation}
Since we here aim at illustrating the use of our field-theoretic methods, below we stick to the case of the non-uniform random fields of eqs.~\eqref{chasingSpinsMeanFieldDyn_A}-\eqref{chasingSpinsMeanFieldDyn_B}.

%Note that these noise terms were not present in the original hydrodynamics derived in~\cite{fruchart2021non}. We add them here to account for fluctuations in the simplest possible way.

As shown in~\cite{fruchart2021non}, when $j_{AB}\neq j_{BA}$ and $\eta$ is sufficiently small, the polarizations of the two species undergo an ``angular run and chase'' dynamics, that can either take the form of a chiral phase, a swap phase, or a chiral$+$swap phase. Here we want to show that these phenomenologies can be heuristically inferred from the vorticity dynamics~\eqref{Vortex_dynamics} associated to eqs.~\eqref{chasingSpinsMeanFieldDyn_A}-\eqref{chasingSpinsMeanFieldDyn_B}.

%In order to avoid cumbersome notations like $\delta^{P_A^i\br}$ we will rather use $\delta P_A^i$ to denote the map that associates $P_A^i(\br)$ to a fluctuation $\bpsi=(\rho_A,\rho_B,\bP_A,\bP_B)^\top$--hence also avoiding to explicit the point $\br$ when there is no ambiguity.
As the operator that is denoted by $\bb$ in the generic dynamics~\eqref{EDPS02} is here the identity, the diffusion operator $\bD=\bb\bb^\dagger$ is also the identity operator.
Let us now compute the vorticity two-form $\bomega \equiv\mbd \bD^{-1}\ba$ of the dynamics, where $\ba(\br,[\bP_A,\bP_B])$ is a
 $\mbR^{4}$-valued field over $\mbR^{2}$ obtained by stacking together the deterministic parts on the right-hand side of
  eqs.~\eqref{chasingSpinsMeanFieldDyn_A} and~\eqref{chasingSpinsMeanFieldDyn_B}.

\begin{eqnarray*}
\bomega &=& \sum_{\alpha,\gamma=A,B} \sum_{i,k=1}^2 \int \rmd \br \rmd \br' \Bigg\{ \Big[ j_{\alpha\alpha}\rho_\alpha-\eta-\frac{1}{2\eta}\Big| \sum_{\beta=A,B} j_{\alpha\beta} \bP_\beta^\br\Big|^2 \Big] \frac{\delta P_\alpha^{i\br}}{\delta P_\gamma^{k\br'}}\delta^{\gamma k \br'} \wedge \delta^{\alpha i \br} \\
& & -\frac{1}{\eta} \sum_{l=1}^2\left[ \sum_{\beta=A,B} j_{\alpha\beta} P_\beta^{l\br} \right] \left[ \sum_{\mu=A,B} j_{\alpha\mu} \frac{\delta P_\mu^{l\br}}{\delta P_\gamma^{k\br'}}\right] P^{i\br}_\alpha \delta^{\gamma k\br'}\wedge \delta^{\alpha i \br} \Bigg\} \\
& & + \sum_{\gamma=A,B}\sum_{i,k=1}^2 \int \rmd \br \rmd \br'\left[ j_{AB}\rho_A \frac{\delta P_B^{i\br}}{\delta P_\gamma^{k\br'}}\delta^{\gamma k \br'} \wedge \delta^{A i\br} + j_{BA}\rho_B \frac{\delta P_A^{i\br}}{\delta P_\gamma^{k\br'}}\delta^{\gamma k\br'}\wedge \delta^{B i\br}\right] \\
&=& -\frac{1}{\eta} \sum_{\alpha,\gamma} \sum_{i,l} \int \rmd \br \left[ \sum_\beta j_{\alpha\beta} P_\beta^{l\br} \right]  j_{\alpha\gamma}  P^{i\br}_\alpha \delta^{\gamma l\br}\wedge \delta^{\alpha i \br}  \\
& & + \sum_{i} \int \rmd \br \left[ j_{AB}\rho_A \delta^{B i \br} \wedge \delta^{A i\br} + j_{BA}\rho_B \delta^{A i\br}\wedge \delta^{B i\br}\right]
\end{eqnarray*}
where we used $\frac{\delta P_\alpha^{i\br}}{\delta P_\gamma^{k\br'}}=\delta_{\alpha,\gamma}\delta^{i,k}\delta(\br-\br')$ together with 
$\delta^{\alpha i \br} \wedge \delta^{\alpha i \br}=0$, the latter stemming from the antisymmetry of the wedge product.
As we now have integrals over $\br$ only, we can safely drop the variable $\br$ everywhere to lighten notations.
Using again the antisymmetry of the wedge product, we get
\begin{eqnarray*}
\sum_{\alpha,\gamma}\sum_{i,l} j_{\alpha\beta}j_{\alpha\gamma} P_\beta^l P_\alpha^i \delta^{\gamma l} \wedge \delta^{\alpha i} = \frac{1}{2}\sum_{\alpha,\gamma}\sum_{i,l} \left[j_{\alpha\beta}j_{\alpha\gamma} P_\beta^l P_\alpha^i- j_{\gamma\beta}j_{\gamma\alpha} P_\beta^i P_\gamma^l \right] \delta^{\gamma l} \wedge \delta^{\alpha i} \ ,
\end{eqnarray*} 
so that the vorticity two-form reads
\begin{eqnarray*}
\bomega &=& -\frac{1}{2\eta} \sum_{\alpha,\gamma,\beta} \sum_{i,l} \int \left[j_{\alpha\beta}j_{\alpha\gamma} P_\beta^l P_\alpha^i- j_{\gamma\beta}j_{\gamma\alpha} P_\beta^i P_\gamma^l \right] \delta^{\gamma l}\wedge \delta^{\alpha i }  \\
& & + \sum_{i} \int  \left[ j_{AB}\rho_A \delta^{B i \br} \wedge \delta^{A i\br} + j_{BA}\rho_B \delta^{A i}\wedge \delta^{B i}\right] \ .
\end{eqnarray*}
Expanding the sums over $\alpha$ and $\gamma$, grouping the terms where $\alpha=\gamma$ and those where $\alpha\neq \gamma$, and using the antisymmetry of the wedge product, we get, after some rearrangement:
\begin{eqnarray*}
\bomega &=&\frac{1}{\eta}\int \Det(\bP_A,\bP_B) \left[ j_{AB}j_{AA}\delta^{A1}\wedge\delta^{A2} -j_{BA}j_{BB}\delta^{B1}\wedge\delta^{B2} \right] \\
& & +\frac{1}{\eta} \sum_{i, l}\int  \left[j_{AA}j_{AB}P_A^i P_A^l  +(j_{AB}^2 - j_{BA}^2) P_A^i P_B^l  - j_{BB} j_{BA} P_B^i P_B^l \right] \delta^{Ai}\wedge \delta^{B l} \\
& & - \sum_i \int \left[ j_{AB}\rho_A - j_{BA}\rho_B\right] \delta^{Ai}\wedge\delta^{Bi}
\end{eqnarray*}
We first note that the vorticity two-form of dynamics~\eqref{chasingSpinsMeanFieldDyn_A}-\eqref{chasingSpinsMeanFieldDyn_B} is entirely generated by the antisymmetric subfamily of~\eqref{1dBasis2Forms} of order $\ell=0$.
To simplify the analysis of the phenomenology associated with each component, we now assume that both self-interaction terms are equal, $j_{AA}=j_{BB}=j_0$, that the cross interaction is purely antisymmetric, $j_{AB}=-j_{BA}=j_-$, and that the species densities are equal, $\rho_A=\rho_B=\rho_0$. The vorticity then reduces to :
\begin{eqnarray}
\bomega &=&\frac{1}{\eta}\int \Det(\bP_A,\bP_B) j_0 j_- \left[ \delta^{A1}\wedge\delta^{A2} +\delta^{B1}\wedge\delta^{B2} \right] \notag \\
& & +\frac{1}{\eta} \sum_{i, l}\int   j_0 j_-\left[P_A^i P_A^l +  P_B^i P_B^l \right] \delta^{Ai}\wedge \delta^{B l} \notag \\
& & - \sum_i \int 2\rho_0 j_- \delta^{Ai}\wedge\delta^{Bi} \ .
\label{NRflock_vorticityForm}
\end{eqnarray}
Let us decompose the vorticity two-form as $\bomega=\bomega_1+\bomega_2+\bomega_3$, where $\bomega_i$ stands for the $i^{th}$ line on the right-hand side of eq.~\eqref{NRflock_vorticityForm}.
Using the definition~\eqref{def_affinity_operator} of $\whomega$ together with relation~\eqref{Omega_vs_whomega}, $\bW=-\bD\whomega/2$, the vorticity operator, which acts on $\delta\bphi\equiv (\delta P_A^1,\delta P_A^2,\delta P_B^1,\delta P_B^2)^\top$ by multiplication, is readily shown to read $\bW=\bW_1+\bW_2+\bW_3$, with:
%\begin{gather}
% \bW = \frac{j_0 j_-}{2\eta}\Det(\bP_A,\bP_B)
%\begin{bmatrix}
%0 & -1 & 0 & 0 \\
%1 & 0 & 0 & 0 \\
%0 & 0 & 0 & -1 \\
%0 & 0 & 1 & 0 \\
%\end{bmatrix}
%\\
%\bW = \frac{j_0 j_-}{2\eta}
%\begin{bmatrix}
%0 & -(\bP_A\bP_A^\top + \bP_B\bP_B^\top ) \\
%\bP_A\bP_A^\top + \bP_B\bP_B^\top & 0  \\
%\end{bmatrix}
%\\
%\bW = j_-
%\begin{bmatrix}
%0 & 0 & 1 & 0 \\
%0 & 0 & 0 & 1 \\
%-1 & 0 & 0 & 0 \\
%0 & -1 & 0 & 0 \\
%\end{bmatrix}
%\end{gather}

\begin{eqnarray}
\bW_1 = \frac{j_0 j_-}{2\eta}\Det(\bP_A,\bP_B)
\begin{bmatrix}
0 & -1 & 0 & 0 \\
1 & 0 & 0 & 0 \\
0 & 0 & 0 & -1 \\
0 & 0 & 1 & 0 \\
\end{bmatrix} \ ,
\end{eqnarray}
\begin{eqnarray}
\bW_2 = \frac{j_0 j_-}{2\eta}
\begin{bmatrix}
0 & -(\bP_A\bP_A^\top + \bP_B\bP_B^\top ) \\
\bP_A\bP_A^\top + \bP_B\bP_B^\top & 0  \\
\end{bmatrix} \ ,
\end{eqnarray}
\begin{eqnarray}
\bW_3 = \rho_0j_-
\begin{bmatrix}
0 & 0 & 1 & 0 \\
0 & 0 & 0 & 1 \\
-1 & 0 & 0 & 0 \\
0 & -1 & 0 & 0 \\
\end{bmatrix} \ ,
\end{eqnarray}
the operator $\bW_2$ being defined by two-by-two blocks.

The contribution of $\bW_1$ in the vortex dynamics~\eqref{Vortex_dynamics} is clearly responsible for the \textit{chiral phase} reported in~\cite{fruchart2021non}: for instance if $j_->0$ then the polarization of species A ``runs'' after that of species B which itself flees away, hence spontaneously breaking spatial parity in $\mbR^2$ with an orientation that depends on the initial angle between the two polarities. This collective rotation of $\bP_A$ and $\bP_B$ would occur at the angular speed $j_0  j_-\Det(\bP_A,\bP_B)/2\eta$, if $\bW_3$ was the only term in the vorticity.

To interpret $\bW_1$ and $\bW_2$, let us consider a vector $\bu$ in $\mbR^2$, and build from it the two orthogonal vectors of $\mbR^4$ defined as $\bU \equiv (\bu,\bu)^\top$ and $\bV \equiv (-\bu,\bu)^\top$.	
The operator that generates rotations in the plane $(\bU,\bV)$, at angular speed $|\bU||\bV|=2|\bu|^2$, and in the direction $\bU\to\bV$, reads
%$\bR\equiv \bV\bU^\top - \bU\bV^\top$
\begin{equation}
\bR\equiv \bV\bU^\top - \bU\bV^\top = 2
\begin{bmatrix}
0 & -\bu \bu^\top \\
\bu \bu^\top & 0 \\
\end{bmatrix} \ .
\end{equation}
The dynamics $\partial_t\delta\bphi = \bR \delta\bphi$, where $\bphi=(\delta\bP_A^\top,\delta\bP_B^\top)$, follows a circular orbit in $\mbR^4$ which start at $\delta\bphi(0)$ and passes, at a time equal to a quarter of its period, by the point $\delta\bphi(T/4)$. In the latter, the projections of $\delta\bP_A(T/4)$ and $\delta\bP_B(T/4)$ along $u^\perp$ are the same as initially (they are invariant under the dynamics) while the their projections on $\bu$ are given by $-\delta\bP_B(0)$ and $\delta\bP_A(0)$, respectively. This phenomenology closely resembles that of the \textit{swap phase} reported in~\cite{fruchart2021non}, and the operator $\bW_2$ (respectively $\bW_3$) is a superposition of two such rotation generators respectively along $\bP_A$ and $\bP_B$ (along $(1,0)^\top$ and $(0,1)^\top$, respectively).

\section{Active Ising model}
\label{app:active_Ising}

In this appendix, we compute the vorticity two-form and operator of the active Ising model (AIM).

We introduce slightly different notation here: the functional one-forms $\partial^k \delta^{mx}$ are now denoted by $(-1)^k\partial^k\delta m^x$, and similarly for the field $\rho$. This allows getting rid of the powers of $-1$ that appear when using derivatives of Dirac deltas. Indeed, if $\delta\bphi\equiv(\delta \rho,\delta m)$ is a perturbation, we simply have  
\begin{equation}
\partial^k \delta m^x (\delta\bphi) = \partial^k \delta m^x \ ,
\end{equation}
rather than
\begin{equation}
\partial^k\delta^{m x} (\delta\bphi) = (-1)^k \partial^k \delta m^x \ .
\end{equation}
The main risk is now to make the confusion between functional forms and perturbations. For instance $\delta\rho^x\wedge\delta m^x$ is not some sort of cross product between the two components of a perturbation $\delta\bphi=(\delta\rho,\delta m)$, but a bilinear map that associates to a pair of fluctuations $\delta\bphi_1=(\delta\rho_1,\delta m_1)$, $\delta\bphi_2=(\delta\rho_2,\delta m_2)$ the number 
$\delta\rho_1^x \delta m_2^x - \delta\rho_2^x \delta m_1^x$.
Finally, we adopt the generalized \textit{Einstein convention} that consist in integrating over repeated continuous indices, \textit{with any subscript of a partial symbol $\partial$ being excluded from this convention}, as we only keep them in the calculation to remember with respect to which variable we take derivatives.

Let us now turn to the AIM~\eqref{AIM_dynamics_rho}-\eqref{AIM_dynamics_m}. The noise being additive, the spurious drift vanishes, so that the drift $\ba$ of the generic dynamics~\eqref{EDPS02} reads in this case
\begin{equation}
\ba = 
\begin{bmatrix}
D\partial_x^2 \rho - v\partial_x m \\
D \partial_x^2 m -v\partial_x \rho + \gamma_1 m + \gamma_2 m^3
\end{bmatrix} \ ,
\end{equation}
while the diffusion operator is
\begin{equation}
\bD =
\begin{bmatrix}
-D\partial_x^2 & 0 \\
0 & D \\
\end{bmatrix} \ .
\end{equation}
Denoting by $G_{xy}\equiv G(x-y)$ the Green function of the 1d Laplacian, \ie $\partial_x^2 G(x-y)=\delta(x-y)$, the one form $\bD^{-1}\ba$ reads
\begin{eqnarray}
\bD^{-1}\ba = \left(-\rho_x + \frac{v}{D}G_{xz}\partial_z m_z\right)\delta\rho^x  + \left( \partial^2_x m_x -\frac{v}{D}\partial_x\rho_x + \frac{\gamma_1}{D}m_x +\frac{\gamma_2}{D}m_x^3 \right) \delta m^x \ .
\end{eqnarray}
We now apply formula~\eqref{extDer_convenient} to compute the vorticity two-form $\bomega\equiv \bD^{-1}\ba$ of the AIM:
\begin{eqnarray*}
\bomega &=& \mbd\left(-\rho_x + \frac{v}{D}G_{xz}\partial_z m_z\right)\wedge \delta\rho^x  + \mbd\left( \partial^2_x m_x -\frac{v}{D}\partial_x\rho_x + \frac{\gamma_1}{D}m_x +\frac{\gamma_2}{D}m_x^3\right)\wedge \delta m^x \\
&=& \frac{\delta}{\delta\rho^y}\left(-\rho_x + \frac{v}{D}G_{xz}\partial_z m_z\right)\delta\rho^y \wedge \delta\rho^x  + \frac{\delta}{\delta m^y}\left(-\rho_x + \frac{v}{D}G_{xz}\partial_z m_z\right)\delta m^y \wedge\delta\rho^x \\
& & + \frac{\delta}{\delta\rho_y}\left( \partial^2_x m_x -\frac{v}{D}\partial_x\rho_x + \frac{\gamma_1}{D}m_x +\frac{\gamma_2}{D} m_x^3 \right)\delta\rho^y\wedge \delta m^x \\
& & + \frac{\delta}{\delta m^y}\left( \partial^2_x m_x -\frac{v}{D}\partial_x\rho_x + \frac{\gamma_1}{D}m_x +\frac{\gamma_2}{D}m_x^3 \right)\delta m^y\wedge \delta m^x \ .
\end{eqnarray*}
Computing the functional derivatives and using the fact that $\delta\rho^x\wedge \delta\rho^x =\delta m^x \wedge \delta m^x =0$ because the wedge product is antisymmetric, we get
\begin{eqnarray}
\bomega &=& \frac{v}{D}G_{xz}\partial_z\delta(y-z) \delta m^y \wedge \delta \rho^x + [\partial_x^2\delta(x-y)]\delta m^y\wedge \delta m^x \\
& & -\frac{v}{D}[\partial_x\delta(x-y)] \delta\rho^y \wedge \delta m^x +\frac{\gamma_2'}{D} m_x^3\delta(x-y) \delta\rho^y\wedge \delta m^x \ ,
\end{eqnarray}
where $\gamma_2'\equiv \frac{d}{d\rho}\gamma_2$.
Proceeding to several integrations by parts and integrating out the Dirac deltas, we finally get:
\begin{eqnarray}
\bomega = \frac{v}{D}\delta\rho^x \wedge \partial\delta m^x - \frac{v}{D} G_{xy} \delta\rho^x \wedge \partial\delta m^y +\frac{\gamma_2'}{D}m_x^3 \delta\rho^x \wedge \delta m^x \ .
\label{appdx_AIM_vort2form}
\end{eqnarray} 
In order to get the associated vorticity operator $\bW$, we first determine the cycle affinity operator $\whomega$, defined as~\eqref{def_affinity_operator}.
To this purpose, we apply $\bomega$ to a pair $\delta\bphi_1,\delta\bphi_2$ of perturbations, integrate by parts every derivative that applies to $\delta\bphi_1$, and change integration variables. We can then factorise $\delta\phi_1$ and identify $\whomega$ as the resulting operator that acts on $\delta\bphi_2$ only:
\begin{eqnarray*}
\bomega(\delta\bphi_1,\delta\bphi_2) &=& \frac{v}{D}\left(\delta\rho_1^x \partial\delta m_2^x - \delta\rho_2^x \partial\delta m_1^x \right) - \frac{v}{D} G_{xy} \left(\delta\rho_1^x \partial\delta m_2^y - \delta\rho_2^x \partial\delta m_1^y \right)  +\frac{\gamma_2'}{D}m_x^3 \left[\delta\rho^x_1 \delta m^x_2 - \delta\rho^x_2  \delta m^x_1 \right] \\
&=& \delta\rho^x_1 \left[ \frac{v}{D}(\partial\delta m_2^x -G_{xy}\partial m_2^y) +\frac{\gamma_2'}{D}m_x^3 \delta m^x_2 \right] \\
& & + \delta m_1^x \left[\frac{v}{D}\partial\delta\rho_2^x - \frac{v}{D}\partial_x G_{yx}\delta\rho_2^y -\frac{\gamma_2'}{D}m_x^3 \delta\rho_2^x \right] \ .
\end{eqnarray*}
Hence, the cycle-affinity operator, that applies to a perturbation $\delta\bphi(y)$, reads
\begin{equation}
\whomega = 
\begin{bmatrix}
0 & \frac{v}{D}\left(\delta(x-y) - G_{xy}\right)\partial_y +\frac{\gamma_2'}{D}m_x^3\delta(x-y) \\
\frac{v}{D}\partial_x\delta(x-y) - \frac{v}{D}\partial_x G_{yx} -\frac{\gamma_2'}{D}m_x^3\delta(x-y)  & 0 \\
\end{bmatrix} 
\end{equation}
and is antisymmetric as expected.
Consequently, the vorticity operator, which satisfies $\bW=-\bD\whomega/2$, is given by
\begin{equation}
\bW = 
\frac{1}{2}\begin{bmatrix}
0 & v\delta(x-y)\left(\partial_y^3 -\partial_y\right) + \partial_x^2\gamma_2'm_x^3\delta(x-y) \\
-v\partial_x\delta(x-y) + v\partial_x G_{yx} +\gamma_2'm_x^3\delta(x-y)  & 0 \\
\end{bmatrix} \ ,
\end{equation}
where we used the fact that $\partial_x^2 G_{xy} = \delta(x-y)$.

In the original coarse-grained Active Ising Model, $\gamma_2$ depends on $\rho(x)$~\cite{solon2015flocking}. We see that this dependence generates a term in $\bomega$ along the antisymmetric family (the last one on the right-hand side of~\eqref{appdx_AIM_vort2form}), which does not come from the self-propulsion but from the aligning mechanism. As we are currently focusing on the symmetric family and self-propulsion effects, we assume from now on that $\gamma_2$ is instead a constant.
The vorticity operator thus reduces to
\begin{equation}
\bW = 
\frac{1}{2}\begin{bmatrix}
0 & v\delta(x-y)\left(\partial_y^3 -\partial_y\right)  \\
-v\partial_x\delta(x-y) + v\partial_x G_{yx}   & 0 \\
\end{bmatrix} \ ,
\end{equation}
%
%The vorticity operator can be seen as the superposition of two operators, $\bW=\bW_v + \bW_{\gamma_2}$, where $\bW_v$ and $\bW_{\gamma_2}$ contain the part of $\bW$ that is proportional to $v$ and $\gamma_2$, respectively.
%Although, not skew-symmetric, $\bW_{\gamma_2}$ can be analyzed in the two limit regimes where the typical spatial variation of $\delta\bphi$
%is either much smaller or much bigger that that of $\bW_{\gamma_2}$. 
%We hence now focus our attention to $\bW_v$, which is the part of $\bW$ that encodes the irreversible effect that comes from the self-propulsion alone.
In Fourier space, $\bW$ acts by multiplication by the matrix $\bW_k$ given by:
\begin{equation}
\bW_k = \frac{iv}{2}\begin{bmatrix}
0 &  k^3 + k \\
k + 1/k & 0 \\
\end{bmatrix}
\end{equation}
where we used the facts that the Fourier transform\footnote{We choose the convention $FT[f](k)=\hat{f}(k)=\int^\infty_{-\infty} f(x)e^{ikx}d x$ and $FT^{-1}[\hat{f}](x) = \frac{1}{2\pi}\int^\infty_{-\infty}\hat{f}(k)e^{-ikx}dk$, where $FT$ and $FT^{-1}$ respectively stand for the Fourier transform and its inverse.} turns $G(x)$ into $G_k = -1/k^2$ and a convolution into an usual product.
The vectors $\bv_{\pm}\equiv (\pm k/\sqrt{1+k^2},1\sqrt{1+k^2})^\top$ are eigenvectors of $\bW_k$ for the eigenvalues $\lambda_{\pm}=\pm iv(1+k^2)/2$. It follows that the solution of the vorticity dynamics~\eqref{Vortex_dynamics} in Fourier space is given by
\begin{eqnarray}
\delta\bphi_k(t) = \langle \delta\bphi_k(0)|\bv_+\rangle e^{\lambda_+ t} \bv_+ + \langle \delta\bphi_k(0)|\bv_-\rangle e^{\lambda_- t} \bv_- \ ,
\end{eqnarray}
where $\langle\cdot |\cdot \rangle$ denotes the canonical scalar product of $\mathbb{C}^2$. Injecting the expressions of $\bv_{\pm}$ and $\lambda_\pm$ and rearranging the terms, we get
\begin{equation}
\delta \bphi_k(t) = 
\begin{bmatrix}
\frac{2 k^2}{1+k^2}\cos\left(\frac{vt(1+k^2)}{2}\right)\delta\rho_k(0) + \frac{2 i k}{1+k^2}\sin\left(\frac{vt(1+k^2)}{2}\right)\delta m_k(0) \\
\frac{2 ik}{1+k^2}\sin\left(\frac{vt(1+k^2)}{2}\right)\delta\rho_k(0) + \frac{2}{1+k^2}\cos\left(\frac{vt(1+k^2)}{2}\right)\delta m_k(0) 
\end{bmatrix} \ .
\end{equation}
We now take the inverse Fourier transform to get the real space solution of dynamics~\eqref{Vortex_dynamics}:
\begin{equation}
\delta\bphi(t) = 
\begin{bmatrix}
-K'' \ast f_{CF} \ast \delta\rho(0) -K' \ast f_{SF}(t) \ast \delta m(0) \\
-K' \ast f_{SF} \ast \delta\rho(0) +K \ast f_{CF}(t) \ast \delta m(0)
\end{bmatrix} \ ,
\label{solution_vortex_AIM}
\end{equation}
where $\ast$ stands for the convolution and the functions $f_{CF}(x,t)$ and $f_{SF}(x,t)$, which are defined as
\begin{equation}
f_{CF}(x,t) \equiv \frac{1}{\sqrt{2\pi v t}}\sin \left(\frac{x^2}{2vt}-\frac{vt}{2}+\frac{\pi}{4}\right)
\end{equation}
and 
\begin{equation}
f_{SF}(x,t) \equiv \frac{1}{\sqrt{2\pi v t}}\cos \left(\frac{x^2}{2vt}-\frac{vt}{2}+\frac{\pi}{4}\right) \ ,
\end{equation}
are the inverse Fourier transform of $\cos(vt[1+k^2]/2)$ and $\sin(vt[1+k^2]/2)$, respectively.
In equation~\eqref{solution_vortex_AIM}, $K(x)\equiv e^{-|x|}$ is the inverse Fourier transform of $\frac{2}{1+k^2}$ (\ie twice the Green function of the Screened Poisson equation, with unit screening constant), and $K'$ and $K''$ are its derivatives.

%As we are mainly interested in the self-propulsion mechanism embedded in the AIM dynamics~\eqref{AIM_dynamics_rho}-\eqref{AIM_dynamics_m}, let us consider an initial perturbation $\delta\bphi(x,t=0) = (0,\delta(x))$. The resulting density perturbation $\delta\rho(x,t)$ consists in two non-linear damped waves, the first has $\rho>0$ propagating in the positive direction, while the other has $\rho<0$ progates in the other direction. As we are considering fluctuations, the former corresponds to the real density waves, while the latter consists in the mass depletion that is necessary to ensure mass conservation. 

As our main concern is whether the phenomenology of this dynamics, despite having a non-local vorticity operator, is akin to that of the symmetric family for the local case as defined in section~\ref{subsubsec:oddSubspace}, we consider an initial perturbation 
 whose components are harmonic waves of the same spatial frequency $k$ which are either in phase or in phase opposition :
\begin{equation}
\delta\bphi(t=0)=\begin{bmatrix}
\delta \rho (0) \\
\delta m(0)
\end{bmatrix}=
A \cos (kx +\theta)
\begin{bmatrix}
1 \\
\varepsilon
\end{bmatrix} \ ,
\end{equation}
where $\varepsilon=1$ ($\varepsilon=-1$) when the components are in phase (in phase opposition).
In this case, the solution~\eqref{solution_vortex_AIM} of the vorticity dynamics can be shown to read
\begin{equation}
\delta\bphi(x,t) = \delta\bphi_{\rm prop}(x,t) + \delta\bphi_{\rm stat}(x,t) \ ,
\end{equation} 
with
\begin{eqnarray}
\delta\bphi_{\rm prop}(x,t) = \frac{2}{1+k^2}\cos\left(kx + \varepsilon \frac{v[1+k^2]}{2}t + \theta \right)
\begin{bmatrix}
\varepsilon k \\
1
\end{bmatrix} \ ,
\end{eqnarray} 
and
\begin{eqnarray}
\delta\bphi_{\rm stat}(x,t) = \frac{2(k-1)\varepsilon}{1+k^2}
\begin{bmatrix}
k \cos\left(\frac{v[1+k^2]}{2}t\right)\cos\left(kx+\theta\right) \\
\sin\left(\frac{v[1+k^2]}{2}t\right)\sin\left(kx+\theta\right)
\end{bmatrix} \ .
\end{eqnarray}
The component $\delta\bphi_{\rm stat}$ of $\delta\bphi(x,t)$ is always a standing wave, while $\delta\bphi_{\rm prop}$ propagates either leftward or rightward depending on $\varepsilon=\pm 1$, \ie on whether the initial density and polarity perturbations are in phase or in antiphase. If we were to play a video of the time evolution of the graph of $\delta\bphi(x,t)$, we would see a traveling wave, whose direction of propagation is the same as that of $\delta\bphi_{\rm prop}$, and whose amplitude and speed seems to be modulated by its stationary component $\delta\bphi_{\rm stat}$. Thus, although non-local, the vorticity operator generates a flow which is phenomenologically very close to that of the symmetric family.

%
%
%
%Since $K$ is symmetric with respect to spatial parity $x\to-x$, so is $K''$, while $K'$ is antisymmetric. As $f_{CF}$ and $f_{SH}$ are both parity-symmetric, and eq.~\eqref{solution_vortex_AIM} is linear in $\delta m$, applying the transformation $m\to-m$ and $\x\to-x$ to a solution gives another solution. 

\newpage
\section{Loop--wise entropy production in the phase--separated AMB}
\label{App:sec:loopwiseEnt}

In this appendix, we show that the entropy produced along the oriented loop $\mcC=\partial\mcS\subset\mbF$, where $\mcS$ is parametrized by $R:[0,1]\times[0,2\pi/|w|]\to \mbF$ whose expression is given in eq.~\eqref{AMBsurfaceParam}, takes the form~\eqref{loopwiseAMB}.
Let us denote $s=[0,1]\times[0,2\pi/|w|]$. We then use eq.~\eqref{loopWiseEntProd02} to compute the entropy production of dynamics~\eqref{AMB} along $\partial\mcS$ \via the integral of $\bomega$ over $\mcS=R(s)$:
\begin{eqnarray}
\whSigma[\partial\mcS]=\int_\mcS \bomega = \int_{s} R^*\bomega = \int_{s}  \bomega_{R(\tau,t)} (\partial_\tau R,\partial_t R) \rmd \tau \rmd t 
\end{eqnarray}
where $R^*$ stands for the pullback by $R$ (\ie change of variables) and the last integral is a usual two-dimensional integral on $s$. Using the expression~\eqref{cycleAffAMB} of the cycle affinity of AMB, we get
\begin{eqnarray*}
\whSigma[\partial\mcS] &=& \int_{s\times\mbR_{<0}} \rmd x \rmd \tau \rmd t \left[(2\lambda+\kappa')\partial_x\rho \right]\,\partial_x\delta^x\wedge\delta^x(\partial_\tau \rho,\partial_t \rho) \\
&=&\int_{s\times\mbR_{<0}} \rmd x \rmd \tau \rmd t \,(2\lambda+\kappa')\partial_x\rho \left[-(\partial_x\partial_\tau\rho)(\partial_t \rho)  + (\partial_\tau \rho)(\partial_x\partial_t \rho) \right]  \ .
\end{eqnarray*}
But the derivatives of $\rho(\tau,t,x)$ with respect to $\tau$ and $t$ respectively read
\begin{equation}
\partial_\tau\rho(\tau,t,x)= A(x)\cos(kx-wt) \ ,
\end{equation}
and 
\begin{equation}
\partial_t\rho(\tau,t,x)=A(x)\tau w\sin(kx-wt) \ .
\end{equation}
In turn, the entropy production along $\partial\mcS$ is
\begin{eqnarray*}
\whSigma[\partial\mcS] &=& \int_{s\times\mbR_{<0}} \rmd x \rmd \tau \rmd t \,(2\lambda+\kappa')\Big[ \partial_x\rhoss - A\tau k\sin(kx-wt) + A'\tau\cos(kx-wt) \Big] \\
& & \times \Big[ -A'A\tau w \cos(kx-wt)\sin(kx-wt) + A^2\tau w k \sin^2(kx-wt) \\
& & + A'A\tau w \cos(kx-wt)\sin(kx-wt)+A^2\tau w k \cos^2(kx-wt) \Big]  \\
&=& \int_{s\times\mbR_{<0}} \rmd x \rmd \tau \rmd t \,(2\lambda+\kappa')\Big[ \partial_x\rhoss - A\tau k\sin(kx-wt) + A'\tau\cos(kx-wt) \Big] A^2 \tau w k   \\
&=& wk\frac{2\pi}{|w|}\frac{1}{2}\int_{\mbR_{<0}} \rmd x (2\lambda+\kappa')A^2\partial_x\rhoss  - \int_{[0,1]\times \mbR_{<0}}A^3\tau^2 w k^2 (2\lambda+\kappa')\partial_x\rhoss \int_0^{2\pi/|w|} \rmd t \sin(kx-wt) \\
& & + \int_{[0,1]\times\mbR_{<0}} A'A^2\tau^2 w k \int_0^{2\pi/|w|}\rmd t \cos(kx-wt) \\
&=& \mathrm{sgn}(w)k\pi\int_{\mbR_{<0}} \rmd x (2\lambda+\kappa')A^2\partial_x\rhoss \ ,
\end{eqnarray*}
where we have just expanded the various products and integrated the sinusoidal functions over one period $2\pi/|w|$ of the variable $t$.
If we now assume that $(2\lambda+\kappa')$ is constant, we finally get
\begin{equation}
\whSigma[\partial\mcS]=(2\lambda+\kappa')\mathrm{sgn}(w)k\pi\int_{\mbR_{<0}} \rmd x A^2\partial_x\rhoss \ .
\end{equation}
This is the result used in section~\eqref{subsubsec:loopWiseEP_AMB}.

\newpage
\section{Decomposition of the current, hidden current, and hidden IEPR}
\label{app:hidden_current}

\subsection{Joint IEPR of the field and its current}
\label{appsubsec:jointIEPR}
For a given solution $(\bphi_t,\bj_t)_t$ of dynamics~\eqref{EDPS_current01}-\eqref{EDPS_current02}, we introduce the $\mbR^{d_4}$-valued field:
\begin{equation}
\bq_t\equiv \bq_0 + \int_0^t \bj_s \rmd s \ ,
\end{equation}
with $\bq_0$ such that $\bB\bq_0=\bphi_0$. Time integrating eq.~\eqref{EDPS_current01} gives 
\begin{equation}
\bphi_t =\bB \bq_t \ ,
\label{phi_vs_Bq}
\end{equation}
which allows to see $\bj, \bJ, \mks$, and $\mkh$ directly as functionals of $\bq$. In turn, this allows the rewriting of eq.~\eqref{EDPS_current01}-\eqref{EDPS_current02} as a closed random dynamics over the field $\bq$:
\begin{equation}
\partial_t \bq = \bJ + \mks + \mkh + \bbb \boldeta \ .
\label{EDPS_q}
\end{equation}
%First, the ``functional Lebesgue measure'' being chosen as the reference measure $\blambda$ throughout this article, $\mkh_{\blambda}$ vanishes just as $\bh_{\blambda}$. 
The crucial point is now that $\mks$ and $\mkh_{\blambda}$ respectively coincide with the Stratonovitch-spurious drift and the $\blambda$-gauge associated with the dynamics of~\eqref{EDPS_q}. Indeed, adapting the general formula~\eqref{functionalSpuriousDrift} to dynamics~\eqref{EDPS_q} gives the Stratonovitch spurious drift:
\begin{eqnarray*}
\bbb^{i_1\br_1}_{i_3\br_3} \frac{\delta}{\delta q^{i_2\br_2}} \bbb^{i_2\br_2}_{i_3\br_3} &=& \bbb^{i_1\br_1}_{i_3\br_3} \frac{\delta \phi^{i_4\br_4}}{\delta q^{i_2\br_2}}\frac{\delta}{\delta \phi^{i_4\br_4}} \bbb^{i_2\br_2}_{i_3\br_3} \\
&=& \bbb^{i_1\br_1}_{i_3\br_3} B^{i_4\br_4}_{i_2\br_2}\frac{\delta}{\delta \phi^{i_4\br_4}} \bbb^{i_2\br_2}_{i_3\br_3} \\
&=& \bbb^{i_1\br_1}_{i_3\br_3} \frac{\delta}{\delta \phi^{i_4\br_4}} b^{i_4\br_4}_{i_3\br_3}  \\
&=& \mks^{i_1\br_1} \ ,
\end{eqnarray*}
where we used the chain rule to get the first equality, relation~\eqref{phi_vs_Bq} to go from the first to the second line, relation~\eqref{b_decomp} together with the fact that $\bB$ is independent of $\bphi$ to get to the third, and the definition of $\mks$ to conclude. A very similar calculation also shows the coincidence between $\mkh_{\blambda}$ and the $\blambda$-gauge drift of dynamics~\eqref{EDPS_q}.

We can now deduce that the joint (Stratonovitch) path-probability $\mcP[(\bphi_t,\bq_t)_{t\in\mbT}]$ reads
\begin{eqnarray*}
\mcP[(\bphi_t,\bq_t)_t] &=& \mcP[(\bphi_t)_t | (\bq_t)_t] \mcP[(\bq_t)_t] \\
&\propto & \delta[(\bphi_t-\bB \bq_t)_t]\exp\left\{\frac{-1}{4}\int_0^\mcT (\partial_t\bq_t -\bJ_t)\left[\bbb \bbb^\dagger\right]^{-1}  (\partial_t\bq_t -\bJ_t) \rmd t \right\} \ ,
\end{eqnarray*}
up to a multiplicative factor that will be irrelevant to compute the IEPR. Note that, in the last equation, the dagger stands for the adjoint with respect to the canonical $L^2$ scalar product. Making the change of variable $(\bq_t)_{t\in\mbT}\to(\bj_t=\partial_t\bq_t)_{t\in\mbT}$ in the path integral gives the joint path-probability\footnote{The rigorous calculation requires a temporal discretization, and the change of variables is then between the variables $(\bj_{ndt})_n$ and $(\bq_{(n+1/2)dt})_n$, with $j_{ndt}\equiv (q_{(n+1/2)dt}-q_{(n-1/2)dt})/dt$. This change of variables make a Jacobian (that is a power of $dt$) and boundary terms appear, all of which vanish when considering the ratio with the time-reversed path and the long-time limit in the calculation of $\sigma_{\bphi,\bj}$.}
\begin{equation}
\mcP[(\bphi_t,\bj_t)_{t\in\mbT}] \propto  \delta[(\partial_t\bphi_t-\bB \bj_t)_t]\exp\left\{\frac{-1}{4}\int_0^\mcT (\bj_t -\bJ_t)\left[\bbb \bbb^\dagger\right]^{-1}  (\bj_t -\bJ_t) \rmd t \right\} \ .
\end{equation}

We can finally deduce the IEPR associated to the pair $(\bphi,\bj)$ of variables:
\begin{equation}
\sigma_{\bphi,\bj} \equiv \lim_{\mcT\to\infty} \frac{1}{\mcT}\ln \frac{\mcP[(\bphi_t,\bj_t)_{t\in\mbT}]}{\mcP[(\bphi_{\mcT-t},-\bj_{\mcT-t})_{t\in\mbT}]} = \llangle \int  \bj \left[\bbb \bbb^\dagger\right]^{-1} \bJ \rmd \br \rrangle \ .
\end{equation}

%\Magenta{
%A word about why $\bbb\bbb^\dagger$ is invertible on $\mbJ_{\bphi}$: First, $\ker \bbb \cap \IM \bbb^\dagger=\{0\}$. Then $\ker \bbb_{\bphi}^\dagger=\{0\}$ since it is orthogonal to the image of $\bbb_{\bphi}$, the latter spanning the whole $\mbJ_{\bphi}$ by defintion.
%}

\subsection{Hidden current}
\label{appsubsec:hiddenCurrent}

In this section we prove the decomposition~\eqref{Ex_decomp_current} of the space of currents in the case of section~\ref{subsubsec:ex_conserved_scalar}:
\begin{equation}
\mbJ_{\rho} = \IM ( -\bM\nabla ) \oplus \ker(\nabla\cdot) \ ,
\label{target_decomposition}
\end{equation}
before briefly discussing the general decomposition~\eqref{target_split} of section~\ref{subsubsec:hidden_general_case}.

Let us start by recalling that, according to Helmholtz-Hodge theory~\cite{warner1983foundations}, if $g_{ij}$ is a Riemannian metric on $\mbR^{d_1}$, then the space of (square-integrable) vector fields over this space (that we here assimilate to $\mbJ_\rho$, for a fixed $\rho$) can be decomposed as the superposition of the image of the Riemannian gradient and the kernel of the corresponding divergence. This means that, for a given vector field $\bu(\br)$ over $\mbR^{d_1}$, there exists a scalar potential $\mu(\br)$ and a vector field $\bgamma(\br)$ whose Riemannian divergence vanishes, \ie $\nu^{-1}\partial_i ( \nu \gamma^i)=0$, with $\nu=\sqrt{\mathrm{det}(g_{ij})}$ the corresponding Riemannian volume element, which are such that
\begin{equation}
u^i = -g^{ij}\partial_j\mu + \gamma^i \ ,
\label{intermed_Hodge_decomp}
\end{equation}
where $g^{ij}$ is the dual metric, satisfying $g^{ij}g_{jk}=\delta^i_k$.

Let us now suppose that $d_1\neq 2$ and define the metric tensor 
\begin{equation}
g_{ij}= (\Det \bM)^{1/(d_1-2)} [M^{-1}]_{ij} \ .
\label{Riemann_trick}
\end{equation}
Then, for a given current $\bJ(\br)$ over $\mbR^{d_1}$, we set 
\begin{equation}
\bu\equiv \bJ/\nu \ ,
\label{intermed_vector_field}
\end{equation}
where $\nu$ is the Riemaniann volume element associated to the metric~\eqref{Riemann_trick}.
Hence, there exist a scalar potential $\mu$ and a divergence-free vector field $\bgamma$ such that decomposition~\eqref{intermed_Hodge_decomp} holds for $\bu$ given by~\eqref{intermed_vector_field}.
Then, note that 
\begin{eqnarray*}
\nu &\equiv & \sqrt{\Det g_{ij}}  \\
&=& \left\{ \Det\left[ (\Det \bM)^{1/(d_1-2)} [M^{-1}]_{ij} \right] \right\}^{1/2} \\
&=& \left\{ (\Det \bM)^{d_1/(d_1-2)} (\Det \bM)^{-1} \right\}^{1/2} \\
&=& (\Det \bM)^{1/(d_1-2)} \ ,
\end{eqnarray*}
while the dual metric reads
\begin{equation}
g^{ij}= (\Det \bM)^{-1/(d_1-2)} M^{ij} \ .
\end{equation}
Thus $\bJ = \nu \bu$ splits as
\begin{equation}
J^i  = -M^{ij}\partial_j \mu + \tilde{\gamma}^i \ ,
\end{equation}
where $\tilde{\bgamma}\equiv \nu \bgamma$ satisfies $\partial_i \tilde{\gamma}^i = \partial_i(\nu\gamma^i) = 0$, which is what we aimed at proving.

Let us now briefly discuss the decomposition
\begin{equation}
\mbJ_{\bphi} = \IM ( \bbb_{\bphi}\bbb^\dagger_{\bphi} \bB^\dagger ) \oplus \ker(\bB) \ ,
\label{target_decomposition}
\end{equation}
that we assumed to hold in the general setting of section~\ref{subsubsec:hidden_general_case}.
We first denote by $(\cdot | \cdot)^0_{T_{\bphi}\mbF}$ and $(\cdot | \cdot)^0_{\mbJ_{\bphi}}$ the canonical $L^2$-scalar product on $T_{\bphi}\mbF$ and $\mbJ_{\bphi}$, respectively. Then we define another scalar product on the latter space:
\begin{equation}
\left(\bj_1|\bj_2\right)^1_{\mbJ_{\bphi}} \equiv \Big(\bj_1 \left| \left[ \bbb_{\bphi}\bbb_{\bphi}^\dagger\right]^{-1} \right. \bj_2\Big)^0_{\mbJ_{\bphi}} = \int  \bj_1\left[ \bbb_{\bphi}\bbb_{\bphi}^\dagger\right]^{-1} \bj_2 \rmd \br \ .
\end{equation}
Then, it turns out $\bbb_{\bphi}\bbb_{\bphi}^\dagger\bB^\dagger$ coincides with the adjoint of $\bB$ with respect to the scalar products $(\cdot | \cdot)^0_{T_{\bphi}\mbF}$ in $T_{\bphi}\mbF$ and $(\cdot | \cdot)^1_{\mbJ_{\bphi}}$ in $\mbJ_{\bphi}$, which we denote by $\bB^\ast$. Indeed, for any $\bj\in\mbJ_{\bphi}$ and $\delta\bphi\in T_{\bphi}\mbF$, using the definition of an adjoint operator, we have
\begin{equation}
(\bj | \bB^\ast \delta\bphi )_{\mbJ_{\bphi}}^1 = (\bB\bj | \delta\bphi )^0_{T_{\bphi}\mbF}
= (\bj | \bB^\dagger \delta\bphi )_{\mbJ_{\bphi}}^0 
= (\bj | \bbb_{\bphi}\bbb_{\bphi}^\dagger \bB^\dagger \delta\bphi )_{\mbJ_{\bphi}}^1 \ .
\end{equation}
Thus, the splitting~\eqref{target_decomposition} is simply a decomposition of a space into the superposition of the kernel of an operator and the image of its adjoint, and can consequently be expected to occur rather generically, although a formal exploration of the conditions for it lies beyond our scope here.

\bibliographystyle{ieeetr}
\bibliography{../../../Biblio}

%%%Using Biber
%\printbibliography

\end{document}